\documentclass[11pt]{article}

\usepackage[square,sort,comma,numbers]{natbib}
\usepackage{amsmath}
\usepackage{amsthm}
\usepackage{amssymb}
\usepackage{thmtools}
\usepackage{thm-restate}

\usepackage{algorithm}

\usepackage{algpseudocode}

\usepackage{color}
\usepackage{xcolor}

\usepackage{babel}

\usepackage{graphicx}
\usepackage{caption}
\usepackage{subcaption}
\usepackage{tikz}
\usepackage{wrapfig,epsfig}
\usepackage{psfrag}
\usepackage{epstopdf}

\usepackage{comment}

\usepackage[bookmarksnumbered=true]{hyperref}

\usepackage[margin=1in, letterpaper]{geometry}

\usepackage{setspace}
\usepackage{tabularx}
\usepackage{longtable}

\newcommand{\wh}[1]{\widehat{#1}}

\newcommand{\dist}{\operatorname{dist}}

\DeclareMathOperator*{\E}{\mathbb{E}}
\newcommand{\poly}{\mathsf{poly}}
\newcommand{\supp}{\mathsf{supp}}

\newcommand{\rect}{\mathsf{rect}}
\newcommand{\sinc}{\mathsf{sinc}}
\newcommand{\comb}{\mathsf{Comb}}

\DeclareMathOperator*{\argmin}{arg\,min}

\newcommand{\R}{\mathbb{R}}

\newcommand{\Z}{\mathbb{Z}}

\newcommand{\cC}{\mathcal{C}}

\newcommand{\bi}{\mathbf{i}}

\newtheorem{theorem}{Theorem}[section]

\newtheorem{fact}[theorem]{Fact}
\newtheorem{lemma}[theorem]{Lemma}
\newtheorem{claim}[theorem]{Claim}

\newtheorem{conjecture}[theorem]{Conjecture}
\newtheorem{corollary}[theorem]{Corollary}

\newenvironment{proofof}[1]{\bigskip \noindent {\it Proof of #1.}\quad }
{\qed\par\vskip 4mm\par}

\title{Improved Algorithms for Learning Fourier-sparse Signals}
\author{Dongrun Cai\thanks{\tt{cdr@mail.ustc.edu.cn}, University of Science and Technology of China, Hefei 230026, China.} \and Xue Chen\thanks{\tt{xuechen1989@ustc.edu.cn}, University of Science and Technology of China, Hefei 230026, China and Hefei National Laboratory, Hefei 230088, China. Supported by NSFC 62372424 and Quantum Science and Technology-National Science and Technology Major Project 2021ZD0302901.}
\and Xiaowei Shao \thanks{\tt{shaoxiaowei@mail.ustc.edu.cn}, University of Science and Technology of China, Hefei 230026, China.}
\and Yile Wang\thanks{\tt{xortrue@mail.ustc.edu.cn}, University of Science and Technology of China, Hefei 230026, China.}
} \date{}

\begin{document}

\maketitle

\begin{abstract}
    A classical problem in sparse Fourier transforms, which dates back to the work by Prony in 1795 at least, is to learn a $k$-Fourier-sparse signal $x(t):=\sum_{j=1}^k \alpha_j e^{2 \pi \bi f_j t}$ with arbitrary frequencies $f_1,\ldots,f_k$. We study this problem of learning $x(t)$ in a fixed time window $[-T,T]$ under adversarial noise with bounded $\ell_2$ norm, where the frequencies $f_1,\ldots,f_k$ may be ``off-grid'' --- arbitrarily located in a given bandlimit $[-F,F]$. In particular, our goal is to output a sparse interpolation $\tilde{x}$ such that $\tilde{x}(t) \approx x(t)$ in the time window $[-T,T]$.
    
    \begin{enumerate}
        \item Our first result shows that the sample complexity of interpolation is $k^2 \cdot O(\log \frac{k FT}{\epsilon})^2$. While its running time is $(\frac{k FT}{\epsilon})^{O(k)}$, this improves the previous upper bound $k^{4} \cdot (\log FT)^{O(1)}$ on the sample complexity substantially and leaves a gap of about $k$ to the lower bound $\Omega(k \log FT)$. 

        \item Our second result provides efficient algorithms to interpolate $x(t)$. The first algorithm takes $m=k^{3.75} \cdot (\log FT)^{O(1)}$ samples and $m^{\omega+o(1)}$ time ($\omega$ is the matrix multiplication exponent). Assuming that the growth of any $k$-Fourier-sparse signal cannot be significantly larger than the growth of the degree-$(k-1)$ Chebyshev polynomial  --- specifically, $x(t) \le e^{k \cdot O\big( \sqrt{\frac{|t|}{T}-1} \big)} \cdot \underset{s \in [-1,1]}{\max} |x(s)|$ for any $t \notin [-T,T]$, the second algorithm further improves the sample complexity to $m'=k^{3} \cdot (\log FT)^{O(1)}$ and the time complexity to $(m')^{\omega+o(1)}$. Both algorithms improve the sample complexity $k^{4} \cdot (\log FT)^{O(1)}$ and time complexity $k^{4 \omega} \cdot (\log FT)^{O(1)}$ of the best known result by Song, Sun, Weinstein, and Zhang (FOCS'2023).
    \end{enumerate}

    Technically, we improve several tools and analyses in previous works by Chen, Kane, Price, and Song (FOCS'2016) and Song, Sun, Weinstein, and Zhang (FOCS'2023). Our technical contributions include an optimal bound on the relative error of shifting one frequency in Fourier-sparse signals and a new analysis to improve the error of coarse estimates of $f_1,\ldots,f_k$.
\end{abstract}

\thispagestyle{empty}
\clearpage

\pagenumbering{arabic}
\section{Introduction}
We consider the classical problem of learning a Fourier-sparse signal under noise in the continuous setting. Let $x(t):=\sum_{j=1}^k \alpha_j \cdot e^{2 \pi \bi f_j t}$ denote a $k$-Fourier-sparse signal with frequencies $f_1,\ldots,f_k$ and amplitudes $\alpha_1,\ldots,\alpha_k$. In this work, these $k$ frequencies $f_1,\ldots,f_k$ could be arbitrary real numbers in a given bandlimit $[-F,F]$. At the same time, many problems and applications of the continuous Fourier transform in engineering and computer science consider a fixed time window $[-T,T]$ rather than $(-\infty,+\infty)$. Thus, the basic problem is to learn $x(t)$ from a noisy observation $y(t):=x(t)+\eta(t)$ in the time window $[-T,T]$.

In this work, we study the recovery of $x(t)$ under \emph{adversarial} noise $\eta(t)$ with a bounded $\ell_2$ norm in the time window. Formally, let $\|z\|_{[-T,T]}:=(\int_{-T}^T |z(t)|^2 \mathrm{d} t)^{1/2}$ denote the $\ell_2$ norm in the window $[-T,T]$. This work assume that the adversarial noise satisfies  $\|\eta\|^2_{[-T,T]} \le \epsilon \cdot \|x\|_{[-T,T]}^2$ for a small fixed constant $\epsilon$. 

If the $k$ frequencies $f_1,\ldots,f_k$ in $x(t)$ are located on the discrete grid $\mathbb{Z}/2T$, then $x(t)$ is periodic of length $2T$. A long line of research has studied efficient algorithms to learn the signal $x(t)$ in this setting, including \cite{Man92,GGIMS,AGS,GMS,HIKP,IK}). 

If the frequencies are arbitrary real numbers that are not multiples of $1/2T$ (are ``off-grid''), the problem becomes much more challenging. In the noiseless setting, several methods (including Prony's classical method \cite{Prony}, Reed-Solomon decoding \cite{m69} and the matrix pencil algorithm \cite{BM86}) can still identify the frequencies. However, these algorithms are not robust to noise. For example, $e^{2 \pi \bi f t}$ and $e^{2 \pi \bi (f+\epsilon/T) t}$ are $O(\epsilon)$-close to each other, which become indistinguishable under adversarial noise. Moitra \cite{Moitra} further proved that the noise has to be exponentially small in $k$, in order to learn $k$ frequencies whose gap $\underset{i\neq j}{\min} |f_i-f_j|<\frac{1}{2T}$. When $k$ frequencies are separated by a gap $\ge \frac{1}{2T}$, Moitra \cite{Moitra} showed efficient algorithms to recover them under polynomially small noise. In fact, assuming that the frequency gap is $\frac{(\log k)^{\Omega(1)}}{2T}$, a variety of robust and efficient sparse Fourier transform algorithms have been developed to recover frequencies \cite{BCGLS,PS15,SSWZ22,JLS23}. In summary, a frequency gap $\frac{1}{2T}$ is necessary to learn each frequency accurately under noise \cite{Moitra}.

For arbitrary frequencies without any gap, Chen, Kane, Price, and Zhao \cite{CKPS17} showed efficient and robust algorithms for learning $x(t)$ as a whole in the time window $[-T,T]$. In another word, while it is impossible to learn these frequencies without a frequency gap, their result shows how to learn the signal in the time window.
Specifically, given a noisy observation of $x(t)$ with $k$ \emph{arbitrary} frequencies, their algorithms output a sparse representation $\tilde{x}(t) \approx x(t)$ in the time window $[-T,T]$. In particular, they call $\tilde{x}$ an \emph{interpolation} of $x$ because it is a combination of low-degree polynomials and coarse frequency estimates, instead of accurate estimates of each frequency in $x$. 
Subsequent works \cite{CP19_colt, CP19_ICALP,  SSWZ23} have improved the sample complexity of \cite{CKPS17} to $m=k^4 \cdot (\log kFT)^{O(1)}$ and achieved time complexity $\tilde{O}(m^{\omega})$, where $\omega<2.371339$ \cite{alman2025more} is the matrix multiplication constant. Furthermore, algorithms and techniques developed for learning $k$-Fourier-sparse signals without a frequency gap have found applications in reconstructing signals with simple Fourier spectra \cite{AKMMVZ18}.

However, many open questions remain in the study of learning Fourier-sparse signals without a frequency gap. The most immediate question is about the sample complexity of learning $x(t)$. The state of the art is $k^4 \cdot (\log kFT)^{O(1)}$ \cite{CP19_colt, SSWZ23}, ignoring the running time. This leaves a large gap to the lower bound $\Omega(k \log FT)$. On the other hand, in the discrete setting, both the restricted isometry property (RIP) \cite{RV,HR16} and sparse discrete Fourier transforms \cite{HIKP,IK,VZZ19} showed that the sample complexity is $k \cdot (\log N)^{O(1)}$ for any discrete domain of size $N$. In the continuous setting, A natural question is to close the gap between the upper bound $\tilde{O}(k^4)$\footnote{In the rest of this work, we use $\tilde{}$ to omit $(\log kFT/\epsilon)^{O(1)}$ factors.} and the lower bound $\tilde{O}(k)$. In particular, are $\tilde{O}(k)$ samples sufficient to interpolate a $k$-Fourier-sparse signal $x(t)$?

\subsection{Our Results}
In this work, we continue the study of interpolating Fourier-sparse signals and  make progress on the above question. We show several algorithms that improve the sample complexity $\tilde{O}(k^4)$ of previous results \cite{CP19_colt,SSWZ23}. For ease of exposition, this work focuses on the sample complexity, denoted by $m$, for learning $x(t)$. This is because (1) in many applications of sparse Fourier transforms, taking a sample is more expensive than computation; and (2) the running time of many algorithms, including some of our algorithms, is $m^{O(1)}$ (actually $m^{\omega+o(1)}$ with the matrix multiplication constant $\omega<2.371339$).

Our first result shows that the sample complexity of interpolating $k$-Fourier-sparse signals is $\tilde{O}(k^2)$. Recall that $[-T,T]$ is the time window and $\|y\|_{[-T,T]}^2:=\int_{-T}^T |y(t)|^2 \mathrm{d} t$.
\begin{theorem}\label{inform:sample_complexity}[Informal version of Corollary~\ref{cor:query_complexity_learning}]
    Given any $k$, $F$, $T$, and $\epsilon$, let $y(t):=x(t)+\eta(t)$, where $x(t):=\sum_{j=1}^k \alpha_j e^{2 \pi \bi f_j t}$ has $k$ arbitrary frequencies $f_1,\ldots,f_k \in [-F,F]$ and $\|\eta(t)\|_{[-T,T]}^2 \le \epsilon \cdot \|x(t)\|_{[-T,T]}^2$. There exists an algorithm that takes $\tilde{O}(k^2)$ samples to output  $\tilde{x}$ such that $\|\tilde{x}-x\|_{[-T,T]}^2=O(\epsilon) \cdot \|x\|_{[-T,T]}^2$.
\end{theorem}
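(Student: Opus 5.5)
Our plan is to use importance sampling against a distribution tailored to the family of all $k$-Fourier-sparse signals, and then fit the samples by brute force over a net of frequency tuples. This explains the running time $(kFT/\epsilon)^{O(k)}$.

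\textbf{Step 1: condition number bound.} First we bound the pointwise energy ratio
\[
\sup_{x}\frac{|x(t)|^2}{\frac{1}{2T}\|x\|_{[-T,T]}^2}\le K(t),
\]
where the supremum is over all $k$-Fourier-sparse $x$. The target is a function $K$ with $\frac{1}{2T}\int_{-T}^T K(t)\,\mathrm{d}t=\tilde O(k)$, which improves the $\tilde O(k^2)$-type bounds of earlier work. We would follow the approach of Chen--Price: write $x(t)$ as a linear combination of shifts $x(t+j\tau)$ with small coefficients, using the lemma on shifting frequencies. We would then plug in the optimal relative-error bound for shifting one frequency advertised in the abstract, so that the resulting $K(t)$ behaves like $\frac{k\log k}{1-|t|/T}$ near the boundary, up to polylogarithmic factors. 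We take the sampling density $D(t)\propto K(t)$, truncated near $\pm T$ so that a negligible amount of mass is lost.

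\textbf{Step 2: net over signals.} Next we build a net. We discretize $[-F,F]$ with step $\epsilon/(T\,\mathrm{poly}(k))$, obtaining a set $N$ of size $\mathrm{poly}(kFT/\epsilon)$. By the frequency-rounding lemma, every $k$-sparse $x$ is $\epsilon$-close in $\|\cdot\|_{[-T,T]}$ to a $k$-sparse signal with frequencies in $N$. For each $k$-subset $S\subseteq N$, the signals supported on $S$ form a $k$-dimensional subspace $V_S$. The differences $x-\tilde x$ we must control lie in $V_{S\cup S'}$, which has dimension at most $2k$. There are $|N|^{2k}=(kFT/\epsilon)^{O(k)}$ such pairs.

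\textbf{Step 3: subspace embedding and union bound.} Now we take $m$ samples $t_i\sim D$ with weights $w_i=\frac{1}{2T\,m\,D(t_i)}$. For a fixed $2k$-dimensional subspace, matrix Chernoff shows that
\[
\sum_i w_i|z(t_i)|^2\approx\|z\|_{[-T,T]}^2\quad\text{for all }z\text{ in the subspace}
\]
with failure probability $\exp(-m/(\tilde O(k)\cdot\log))$. The relevant parameter is $\sup_t K(t)/(2T D(t))=\tilde O(k)$. A union bound over $(kFT/\epsilon)^{O(k)}$ subspaces then requires
\[
m=\tilde O(k)\cdot O\!\left(k\log\tfrac{kFT}{\epsilon}\right)=k^2\cdot O\!\left(\log\tfrac{kFT}{\epsilon}\right)^2 .
\]
The noise term is handled by Markov's inequality: $\mathbb{E}\sum_i w_i|\eta(t_i)|^2=\|\eta\|^2$.

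\textbf{Step 4: the algorithm.} The algorithm outputs
\[
\tilde x=\argmin_{S,\ z\in V_S}\sum_i w_i|y(t_i)-z(t_i)|^2 .
\]
Let $x'$ be the net approximation of $x$. The embedding property applied to $V_S\cup V_{S'}$, together with the triangle inequality, gives
\[
\|\tilde x-x'\|\lesssim\|y-x'\|_w\lesssim\|\eta\|+\epsilon\|x\| .
\]
One subtlety is that the embedding must hold for the rounding error $x-x'$ as well. We handle this by choosing the net fine enough that the error is small pointwise (not just in $\ell_2$), so that it is controlled uniformly at the sample points.

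\textbf{Main obstacle.} We expect Step 1 to be the hardest part: obtaining a condition number that integrates to $\tilde O(k)$ rather than $\tilde O(k^2)$ or worse. This is exactly where the improved shifting bound is needed. If that bound only gives $\tilde O(k^2)$, the final sample complexity degrades accordingly.
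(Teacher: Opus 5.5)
Your skeleton (importance sampling with an $\tilde O(k)$ condition number, a polynomial-size frequency net, a Chernoff/union bound over the $(kFT/\epsilon)^{O(k)}$ subspaces of dimension at most $2k$, brute-force weighted least squares, and Markov for the noise) is exactly the structure of the paper's proof of Corollary~\ref{cor:query_complexity_learning}. Your Step~3 arithmetic is also right. However, you have put the new ingredient in the wrong step, and the mechanism you give for Step~1 does not work.

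Lemma~\ref{lem:shift_one_freq} is an $L^2$ statement about replacing $f_k$ by a nearby $f_{k+1}$, and it gives no pointwise control of $|x(t)|$. The Chen--Price identity $x(t)=\sum_j C_j\, x(t+j\tau)$ is a \emph{time}-shift (linear-prediction) identity coming from an annihilating polynomial with small coefficients; it has nothing to do with moving a frequency.

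Step~1 needs nothing new. The paper uses the density $D(t)=\frac{c}{(1-|t|)\log k}$ for $|t|\le 1-\frac1k$ and $D(t)=ck$ near $\pm1$. It gets $|x(t)|^2/D(t)\le O(k\log k)\,\|x\|_{[-1,1]}^2$ directly from Erd\'elyi's bounds $|x(t)|\le\frac{\pi k}{2}\|x\|_{[-1,1]}$ and $|x(t)|\le\sqrt{2k/(1-|t|)}\,\|x\|_{[-1,1]}$ (Properties~1--2 of Lemma~\ref{lemma:bounds_Fourier_sparse_signals}). The cap near the endpoints comes from the uniform bound, not from discarding mass. An $\tilde O(k)$ condition number was already available in \cite{CP19_colt}, so it is not what separates $\tilde O(k^2)$ from the earlier $\tilde O(k^4)$.

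The real content is the ``frequency-rounding lemma'' you invoke as a black box in Step~2. With the old shifting bound $k^{O(k^2)}|f_k-f_{k+1}|$ of \cite{CKPS17}, the net step is $k^{-O(k^2)}$. Then $\log(|N|^{2k})=\Theta(k^3\log k+k\log FT)$, and your Step~3 reproduces exactly the old $\tilde O(k^4)$.

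It is Lemma~\ref{lem:shift_one_freq} (relative error $O(|f_k-f_{k+1}|)$, proved via the integral operator $S$ and Poincar\'e's inequality) that gives Theorem~\ref{thm:net_frequency}. Round $f_1<\cdots<f_k$ rightward to $\eta\mathbb{Z}$ with $\eta=\epsilon/(Ck^2)$, keeping consecutive gaps at least $\eta$, so each frequency moves by at most $k\eta$. Then replace the frequencies one at a time, for a total relative error $(1+O(k\eta))^k-1\le\epsilon$. So your ``main obstacle'' is misdiagnosed: Step~1 is standard, and Step~2 is where the proof has to do work.

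Finally, the subtlety in Step~4 needs no pointwise control. The paper absorbs $x-x'$ into the residual $r=\eta+(x-x')$, which is fixed before sampling, so $\E\|r\|_{S,\omega}^2=\|r\|_{[-1,1]}^2$ and Markov suffices. Your pointwise route also works (the sup bound for $2k$-sparse differences, weights with $1/D=O(\log k)$, and a $\poly(k)$-finer net), but it is unnecessary.
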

This improves the previous upper bound $\tilde{O}(k^4)$ to $\tilde{O}(k^2)$, which leaves a gap of $\tilde{O}(k)$ to the lower bound $\Omega(k \log FT)$. The key technique behind Theorem~\ref{inform:sample_complexity} is an almost-optimal bound on the relative error when we shift one frequency, say $f_k$, to another frequency $f_{k+1}$. Specifically, Lemma~\ref{lem:shift_one_freq} in Section~\ref{sec:overview} shows that the relative error is only $O(|f_k-f_{k+1}|)$, instead of $k^{O(k^2)} \cdot |f_k-f_{k+1}|$ from \cite{CKPS17}. An important corollary is that we can round arbitrary frequencies in $[-F,F]$ to a finite frequency net $\mathcal{N}:=\frac{\mathbb{Z}}{C k^2} \cap [-F,F]$. We note that this reduces the size of the previous frequency net $\mathcal{N}:=\frac{\mathbb{Z}}{k^{C k^2}} \cap [-F,F]$ in \cite{CKPS17} from an exponential in $k$ to a polynomial. The tight bound on the error of shifting a frequency in $x$ and the new frequency net $\mathcal{N}:=\frac{\mathbb{Z}}{C k^2} \cap [-F,F]$ may be of independent interest, given the wide applications of sparse Fourier transforms.

Because the algorithm in Theorem~\ref{inform:sample_complexity} does not run in time $\tilde{O}(k)^{O(1)}$, our next results are two efficient algorithms with sample complexity $m=\tilde{o}(k^4)$ and running time $m^{\omega+o(1)}=\tilde{O}(k)^{O(1)}$. For convenience, we call a learning algorithm efficient only if its time complexity is $\tilde{O}(k)^{O(1)}$.

\begin{theorem}\label{inform:fast_algorithm}
    Given any $k$, $F$, $T$, and a small constant $\epsilon>0$, let $y(t):=x(t)+\eta(t)$ for $x(t):=\sum_{j=1}^k \alpha_j e^{2 \pi \bi f_j t}$ with $k$ arbitrary frequencies $f_1,\ldots,f_k \in [-F,F]$ and $\|\eta(t)\|_{[-T,T]}^2 \le \epsilon \cdot \|x(t)\|_{[-T,T]}^2$. There exists an algorithm that takes $m=\tilde{O}(k^{3.75})$ samples and $\tilde{O}(m^{\omega})$ time to output $\tilde{x}$ with $\|\tilde{x}-x\|_{[-T,T]}^2 \le O(\epsilon) \cdot \|x\|_{[-T,T]}^2$.    

    In particular, for some $\ell=O(k)$, $\tilde{x}(t):=\sum_{j=1}^{\ell} e^{2 \pi \bi \tilde{f}_j t} \cdot q_j(t)$ with $\ell$ frequency estimates  $\tilde{f}_1,\ldots,\tilde{f}_\ell$ and $\ell$ polynomials $q_1,\ldots,q_{\ell}$ of degree $\tilde{O}(k^{2.75})$.
\end{theorem}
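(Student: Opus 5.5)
The plan follows the framework of \cite{CKPS17} and \cite{SSWZ23}. First we obtain coarse frequency estimates $\tilde{f}_1,\ldots,\tilde{f}_\ell$ with $\ell=O(k)$ by hashing the spectrum with filter functions. Then we fit $y$ on the candidate space $\{\sum_j e^{2\pi\bi \tilde{f}_j t} q_j(t): \deg q_j \le d\}$. This space has dimension $D=\ell(d+1)$, and the fit is a weighted least-squares regression on samples drawn from a well-conditioned (leverage-score) distribution.

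\textbf{Step 1: approximation by the candidate space.} For each cluster of true frequencies $f_i$ within distance $\Delta$ of some $\tilde{f}_j$, we write $e^{2\pi\bi f_i t}=e^{2\pi\bi \tilde{f}_j t}\, e^{2\pi\bi (f_i-\tilde{f}_j)t}$. The second factor is approximated on $[-T,T]$ by its Taylor polynomial of degree $d=\tilde{O}(\Delta T + \log(1/\epsilon))$. The danger is that the amplitudes $\alpha_i$ inside a cluster can be huge and cancel each other. Here I would invoke the shift lemma (Lemma~\ref{lem:shift_one_freq}): moving a frequency perturbs $x$ only by relative error $O(|f_k-f_{k+1}|)$. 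Hence the clustered part of the signal, as a whole, is $O(\epsilon)$-close to the candidate space with coefficients bounded by $\poly(k)\|x\|_{[-T,T]}$, rather than the $k^{O(k^2)}$ loss of \cite{CKPS17}.

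\textbf{Step 2: coarse estimates with $\Delta T=\tilde{O}(k^{2.75})$.} This is the step I expect to be the main obstacle. Following \cite{SSWZ23}, we use a filter $(H,G)$ whose time-domain support is a $1/k$ fraction of the window, and we apply the one-cluster recovery repeatedly. In that analysis the estimation error is driven by the ratio between $\|x\cdot H\|$ and the energy concentrated near each heavy frequency. Naively this gives $\Delta T=\tilde{O}(k^{3})$ and degree cost $k\cdot k^3=k^4$. I would improve the energy bound by a new analysis of the filtered signal, exploiting the growth bound $\max_{[-T,T]}|x|^2\le \tilde{O}(k^2)\|x\|^2_{[-T,T]}/T$ together with the tighter shift lemma. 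The aim is to show that each heavy cluster is localized to width $\tilde{O}(k^{2.75})/T$ while the light clusters contribute only $O(\epsilon)$ energy. Concretely, I would balance the filter width parameter against the cluster variance bound so that the two error terms meet at exponent $2.75$.

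\textbf{Step 3: sampling and regression.} The candidate space consists of $\ell$ modulated polynomials of degree $d$. Its leverage function is bounded by the polynomial leverage function, roughly $\frac{d}{T\sqrt{1-(t/T)^2}}$ summed over the $\ell$ blocks, so the condition number is $\tilde{O}(\ell d)$. Sampling $m=\tilde{O}(\ell d)=\tilde{O}(k\cdot k^{2.75})=\tilde{O}(k^{3.75})$ points from the Chebyshev-type density gives an $\epsilon$-subspace embedding with constant probability. For that embedding we use matrix Chernoff for $\ell_2$ and approximate matrix multiplication for the noise. The noise $\eta$ plus the approximation residual from Step 1 then contributes $O(\epsilon)\|x\|^2$ after the weighted least-squares solve. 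The samples used in Step 2 are $\tilde{O}(k^{2})$ per hashing round, which is dominated by this count. The regression is a $m\times D$ least-squares problem, solvable in $\tilde{O}(m^{\omega})$ time, which gives the stated running time and the output form $\tilde{x}=\sum_{j\le\ell} e^{2\pi\bi\tilde{f}_j t}q_j(t)$ with $\deg q_j=\tilde{O}(k^{2.75})$.
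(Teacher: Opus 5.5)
Your Steps 1 and 3 follow the paper's outline: Lemma~\ref{lem:low_deg_approx} for the degree, then regression over the family $\{e^{2\pi\bi\tilde f_j t}t^d\}$ with $\tilde O(\ell D)$ samples. \textbf{The gap is Step 2.} It is the main new ingredient of this theorem, it contains no actual argument, and the two tools you propose do not address the obstacle.

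The list $L$ itself comes from running \textsc{FrequencyEstimationX} of \cite{SSWZ23} on $H_k\cdot y$ with $\Delta=\Delta_k$. The difficulty is to show that most of the energy of $x$ lies in clusters containing a frequency $f$ that satisfies \eqref{eq:condition_freq_recover}. This has to hold despite interference between groups of frequencies whose filtered spectra overlap. Chaining frequencies at distance $\le 2\Delta_k$ yields clusters of length $k\Delta_k=\tilde O(k^3)$, which is where the old bound comes from.

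Neither of your tools controls these cross terms. The pointwise bound $|x(t)|^2\le O(k^2)\|x\|_{[-1,1]}^2$ concerns the time domain. The shift lemma is used in the paper only for the degree bound. Your plan to ``balance the filter width against a cluster variance bound'' never specifies what the two error terms are.

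The missing idea is Claim~\ref{clm:almost_orthogonal_clusters}. It says that $H_k w$ and $H_k z$, of sparsities $\ell\le r$, are $\delta$-almost orthogonal once their frequencies are $\tilde\Omega(\ell^2 r/\delta^2)$ apart, even though $\wh{H_k w}$ and $\wh{H_k z}$ overlap. The paper then builds on it as follows.
\begin{enumerate}
\item It clusters frequencies by Algorithm~\ref{alg:cluster1}, using the size-dependent threshold $\min\{d_{min}\min\{|\cC_i|^2,|\cC_j|^2\},2\Delta_k\}$.
\item The value $d_{min}=\tilde\Theta(k^{1.5}/\epsilon^2)$ is forced by applying the claim with $r\le k$ and $\delta\approx\epsilon k^{-1/4}$.
\item It proves that the pairwise correlations have row sums $\sum_j\delta_{i,j}=O(\epsilon)$ (Lemmas~\ref{lem:bounding_on_D_weighted}--\ref{lem:total_energy_weighted_row_sum}), which gives the near-orthogonality of Theorem~\ref{thm:total_energy}.
\end{enumerate}
Only with Theorem~\ref{thm:total_energy} can one do the rest:
\begin{enumerate}
\item discard light clusters, with heaviness measured relative to $|\cC_i|$;
\item re-filter small clusters by $H_{\theta_s,\epsilon^2/k}$ so their supports become disjoint;
\item merge the supports into disjoint intervals $J_a$, and show that every noise-good $J_a$ contains a frequency with enough energy.
\end{enumerate}
The covering radius is then $\sum_i|I_i|+O(\Delta_k)$. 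The exponent $2.75$ is $k\sqrt{\Delta_k d_{min}}$ from the range recurrence in Claim~\ref{clm:bound_range_clusters}, matched by $\Delta_k k/\theta_s$ with $\theta_s=\tilde\Theta(k^{1/4})$.

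\textbf{Step 1 also has an error in its justification.} Moving frequencies to the net does not make the amplitudes $\poly(k)\|x\|$. Take $\eta=\epsilon/(Ck^2)$ and frequencies $0,\eta,\ldots,(k-1)\eta$. Then $x=(e^{2\pi\bi\eta t}-1)^{k-1}$ has $\|x\|_{[-1,1]}\le\sqrt2(2\pi\eta)^{k-1}$, while its amplitudes are $\binom{k-1}{j}$.

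What the net actually provides is a frequency gap of $\epsilon/(Ck^2)$. Via Lemma~8.7 of \cite{CKPS17}, this gives degree $O(\Delta+k^2\log k/\epsilon)$. The additive term is harmless against $k^{2.75}$, so your conclusion survives but your justification does not.

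Moreover, Step 1 cannot be applied to all of $x$. Only the clusters in the covered set $\mathcal{R}$ of Theorem~\ref{thm:covering_radius_heavy_region} are near $L$. Dropping the rest costs only $O(\epsilon)\|x\|_{[-1,1]}^2$ because of Theorem~\ref{thm:total_energy}, which is again the piece missing from Step 2.
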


While the algorithm of Theorem~\ref{inform:fast_algorithm} follows the same framework as the previous algorithms \cite{CKPS17,CP19_ICALP,SSWZ23}, its analysis is more involved. Our main technical contribution is an improved bound on the error of the estimates $\tilde{f}_1,\ldots,\tilde{f}_\ell$. Specifically, we prove that these estimates cover most frequencies in $x(t)$ within a covering radius $D:=\tilde{O}(k^{2.75}/T)$. This improves the previous bound $\tilde{O}(k^3/T)$ \cite{CKPS17,SSWZ23}. 

Our analysis is based on a pair of filter functions $(H,\wh{H})$ constructed by Chen and Price \cite{CP19_ICALP}, where $H(t)$ acts like a box function on the time window $[-T,T]$ and its Fourier transform $\wh{H}(f)$ is compact in $[-\tilde{O}(k^2/T),\tilde{O}(k^2/T)]$. This pair allows the learning algorithm to consider $H \cdot y$ over $\mathbb{R}$ intead of $[-T,T]$ and apply  the continuous Fourier transforms to obtain $\wh{H \cdot y}$. This leads to an efficient algorithm for one-cluster recovery \cite{CP19_ICALP}: if $k$ frequencies lie in a small cluster, say each $f_i \in [f-O(k^2/T),f+O(k^2/T)]$ for some $f \in [-F,F]$, it finds $\tilde{f}$ with estimation error $|f-\tilde{f}|=O(|\text{support size of } \wh{H}|)=\tilde{O}(k^2/T)$. Moreover, this is tight \cite{CP19_ICALP}: by sending $f_1,\ldots,f_k$ to $0$ and taking a Taylor expansion, $k$-Fourier-sparse signals can get arbitrarily close to any polynomial of degree $k-1$ on any interval. Then the extreme concentration of the Chebyshev polynomials implies that the estimation error of $\tilde{f}$ is $\tilde{\Omega}(k^2)$.

However, for $k$ arbitrary frequencies, previous analyses in \cite{CKPS17,SSWZ23} lose an extra factor of $k$ on the error of $\tilde{f}_1,\ldots,\tilde{f}_{\ell}$, compared with \cite{CP19_ICALP}. In this work, we showed that the error is $\tilde{O}(k^{2.75}/T)$ which improves previous analyses by a factor of $k^{0.25}$. Our approach is based on an algorithm partitioning $f_1,\ldots,f_k$ into clusters and a rigorous analysis that shows these clusters are almost orthogonal. We refer to Algorithm~\ref{alg:cluster1} for this partition algorithm and Theorem~\ref{thm:total_energy} for its guarantee.

An intriguing problem is to improve the error of these estimates $\tilde{f}_1,\ldots,\tilde{f}_{\ell}$ to $\tilde{O}(k^2)$, which matches the lower bound $\tilde{\Omega}(k^2)$ demonstrated by the Chebyshev polynomial \cite{CP19_ICALP}. Our last result shows that this is plausible if the growth rate of the Chebyshev polynomials outside $[-T,T]$ is \emph{asymptotically} the largest among all $k$-Fourier-sparse signals.

For ease of exposition, we discuss this part by fixing $T=1$ and the time window to be $[-1,1]$. As mentioned earlier, $k$-Fourier-sparse signals can get arbitrarily close to any polynomial of degree $k-1$ on any interval such as $[-2,2]$. From \cite{CKPS17,CP19_ICALP}, the error of frequency estimates depends on the magnitude of $x(t)$ just outside the interval $[-1,1]$. 
In particular, previous result \cite{CP19_ICALP} bounded 
\[
|x(t)| \le k^{O(1)} \cdot \underset{s \in [-1,1]}{\max} \{|x(s)|\} \cdot \min\{e^{\tilde{O}(k^2) \cdot (|t|-1)}, O(|t|)^k\} \text{ for }t \notin [-1,1].
\] The term $e^{\tilde{O}(k^2) \cdot (|t|-1)}$ turns out to be extremely useful in bounding the error of frequency estimates for one cluster in \cite{CP19_ICALP}. We show that if one can improve this term to $e^{O(k) \cdot \sqrt{|t|-1}}$\footnote{In fact, $e^{\tilde{O}(k) \cdot \sqrt{|t|-1}}$ is sufficient for our improvement. But we use $e^{O(k) \cdot \sqrt{|t|-1}}$ for ease of exposition.}, matching the Chebyshev polynomial of degree $k-1$ at $t=1+\delta$ for any $\delta \in (0,0.1)$, then the error of  $\tilde{f}_1,\ldots,\tilde{f}_\ell$ is $\tilde{O}(k^2)$ instead of $\tilde{O}(k^{2.75})$. We provide a formal statement of this conjecture\footnote{After submitting this work, we realized Zhang provided a proof of this conjecture in \cite{zhang2026optimalextrapolationboundssparse} during the preparation of this work.}.

\begin{conjecture}\label{conj:growth_rate}
 For any $x(t):=\sum_{j=1}^k \alpha_j e^{2 \pi \bi f_j t}$ with $k$ arbitrary frequencies $f_1,\ldots,f_k$, $|x(t)| \le k^{O(1)} \cdot \max_{s \in [-1,1]} {|x(s)|} \cdot \min\{e^{\sqrt{|t|-1} \cdot O(k)}, O(|t|)^k\}$ for any $t \notin [-1,1]$.
\end{conjecture}
Note that the Chebyshev polynomial of degree $k-1$ satisfies $q(t)=e^{\Theta(k \cdot \sqrt{t-1})}$ for $t \in (1,1.1)$ and $|q(t)| \le 1$ for $t \in [-1,1]$. Conjecture~\ref{conj:growth_rate} indicates that this polynomial has the largest growth asymptotically. Assuming this, we present a learning algorithm with sample complexity $m=\tilde{O}(k^3)$.

\begin{theorem}\label{inform:algorithm_conjecture}
    Given any $k$, $F$, and a small constant $\epsilon>0$, let $y(t):=x(t)+\eta(t)$ for $x(t):=\sum_{j=1}^k \alpha_j e^{2 \pi \bi f_j t}$ with $k$ arbitrary frequencies $f_1,\ldots,f_k \in [-F,F]$ and $\|\eta(t)\|_{[-1,1]}^2 \le \epsilon \cdot \|x(t)\|_{[-1,1]}^2$. If Conjecture~\ref{conj:growth_rate} holds for any $k$-Fourier sparse signals, there exists an efficient algorithm that takes $m=\tilde{O}(k^{3})$ samples and $\tilde{O}(m^{\omega})$ time to output $\tilde{x}$ with $\|\tilde{x}-x\|_{[-1,1]}^2 \le O(\epsilon) \cdot \|x\|_{[-1,1]}^2$.    
\end{theorem}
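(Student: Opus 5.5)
Under Conjecture~\ref{conj:growth_rate}, we keep the framework used for Theorem~\ref{inform:fast_algorithm} (following \cite{CKPS17,CP19_ICALP,SSWZ23}) and only sharpen the covering radius of the coarse frequency estimates from $\tilde{O}(k^{2.75})$ to $\tilde{O}(k^2)$. The pipeline has four steps. First, apply the Chen--Price filter pair $(H,\wh{H})$, with $\wh{H}$ supported in $[-\tilde{O}(k^2),\tilde{O}(k^2)]$, to obtain $H\cdot y$ on $\mathbb{R}$. Second, hash frequencies into buckets of width $\Delta=\tilde{O}(k^2)$ and estimate one frequency per heavy bucket. This uses $\tilde{O}(k)$ samples per estimation round and $\tilde{O}(k)$ rounds. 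Third, obtain $\ell=O(k)$ estimates $\tilde{f}_1,\ldots,\tilde{f}_\ell$ that cover all but an $O(\epsilon)$-fraction of the energy of $x$ within radius $D$. Fourth, regress $y$ onto the span of $\{e^{2\pi\bi\tilde{f}_j t}t^i\}$ with $j\le\ell$ and $i\le d$, using a well-conditioned sampling distribution for degree-$d$ polynomials times $\ell$ exponentials. This last step needs $\tilde{O}(\ell d)$ samples and $\tilde{O}(m^\omega)$ time. The plan is therefore to show $D=\tilde{O}(k^2)$. Then $d=\tilde{O}(D)=\tilde{O}(k^2)$ suffices, by approximating each cluster $\sum_{f_i\in\text{cluster}}\alpha_ie^{2\pi\bi(f_i-\tilde{f}_j)t}$ by a Taylor polynomial of degree $O(D+\log(1/\epsilon))$ on $[-1,1]$. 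This gives $m=\tilde{O}(k\cdot k^2)=\tilde{O}(k^3)$.

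The key step is the clustering analysis of Theorem~\ref{thm:total_energy} with Algorithm~\ref{alg:cluster1}, redone under the conjectured growth bound. In that analysis, the covering radius is dictated by how much each cluster $x_S(t)$, restricted to frequencies in a window, can leak through $H$. This leakage is governed by $|x_S(t)|$ just outside $[-1,1]$, where $H$ transitions from $1$ to $0$ over a region of length $\delta$. The bound $e^{\tilde{O}(k^2)(|t|-1)}$ forces $\delta\approx 1/\tilde{O}(k^2)$ and hence $\wh{H}$ of width $\tilde{O}(k^2)$. On the other hand, it only controls the interaction between clusters up to an extra loss, which is the source of the $k^{0.75}$ gap. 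Replacing the growth term by $e^{O(k)\sqrt{|t|-1}}$ lets us take $\delta=\tilde{\Theta}(1/k^2)$, so that $e^{O(k)\sqrt{\delta}}=k^{O(1)}$ remains polynomial. We then recheck each inequality in the proof of Theorem~\ref{thm:total_energy}: (i) $\|H\cdot x_S\|_{\mathbb{R}}\approx\|x_S\|_{[-1,1]}$ up to $k^{O(1)}$ factors, which the conjecture gives directly; (ii) near-orthogonality of distinct clusters separated by more than $\tilde{O}(k^2)$ in frequency, via $\wh{H}$'s compact support; (iii) a charging argument showing that frequencies left uncovered carry small energy.

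The main obstacle is (iii). We must show that a cluster of diameter up to $\tilde{O}(k^2)$, possibly containing many frequencies, is still found within radius $\tilde{O}(k^2)$ rather than $\tilde{O}(k^2)\cdot k^{0.75}$. The previous loss arises because clusters are merged when their filtered versions overlap, and a chain of up to $k$ overlapping windows inflates the radius. First, I would use the conjectured $\sqrt{\cdot}$-growth to prove a stronger per-cluster energy lower bound. The aim is that any sub-cluster with $k'$ frequencies retains a $1/\poly(k)$ fraction of its energy after filtering at scale $\tilde{O}(k'^2)\le\tilde{O}(k^2)$. Next, I would run the partition of Algorithm~\ref{alg:cluster1} with a threshold at scale $\tilde{O}(k^2)$ and bound the merged-cluster diameter by $\sum_i \tilde{O}(k_i^2)\le\tilde{O}(k^2)$, using $\sum k_i\le k$. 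This convexity step is what removes the extra factor. With $D=\tilde{O}(k^2)$ established, the regression step and the final error bound $O(\epsilon)\|x\|_{[-1,1]}^2$ follow exactly as in Theorem~\ref{inform:fast_algorithm}.
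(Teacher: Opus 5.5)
There are two genuine gaps, one in each half of your plan.

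\textbf{Covering radius.} Your step (ii) gives orthogonality only for clusters more than $2\Delta_k$ apart, i.e.\ those with disjoint supports after convolving with $\wh{H_k}$. A partition resting on that alone is exactly the chaining of \cite{CKPS17,SSWZ23} that produces $\tilde O(k^3)$.

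Step (iii) does not repair this. With a uniform merging threshold at scale $\tilde O(k^2)$, $k$ singletons spaced $\tilde O(k^2)$ apart still merge into one cluster of diameter $\tilde O(k^3)$. So the bound $\sum_i\tilde O(k_i^2)\le\tilde O(k^2)$ does not follow. A convexity bound of that kind needs the merge cost to be governed by the \emph{smaller} cluster. The real work is then to show that clusters closer than $2\Delta_k$ are still almost orthogonal \emph{in aggregate}: each cluster's correlations with its correlated neighbours must sum to $O(\epsilon)$. Your proposal has no ingredient for this. Moreover, ``$1/\poly(k)$ fraction retained'' or ``up to $k^{O(1)}$ factors'' is far too weak, because the good/bad-interval argument against adversarial noise needs $(1\pm O(\epsilon))$ energy accounting.

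You also misplace the conjecture. A transition width $\tilde\Theta(1/k^2)$ and support $\tilde\Theta(k^2)$ for $H_k$ already follow from Property~\ref{item:exponential_bound_outside_interval} of Lemma~\ref{lemma:bounds_Fourier_sparse_signals}, and the paper leaves $H_k$ unchanged. The paper uses the conjecture as follows.
\begin{enumerate}
\item It builds a square-root localizer $M'$ with tail $e^{-\Omega(Cr\sqrt{|t|-1})}$.
\item $M'$ proves Claim~\ref{clm:almost_orthogonal_clusters_under_conjecture}: signals of sparsity $\ell\le r$ separated by $\tilde O(\ell r/\delta+\ell^2\log(r/\delta)/\delta^2)$ are $\delta$-orthogonal.
\item This justifies the finer threshold $d'_{min}\min\{|\cC_i|,|\cC_j|\}$, with $d'_{min}=\tilde O(k/\epsilon^2)$, in Algorithm~\ref{alg:cluster2}. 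Together with the weighted row-sum lemmas it gives Theorem~\ref{thm:total_energy_conjecture}.
\item The recursion $r_\ell\lesssim d'_{min}\,\ell\log\ell$ gives total range $\tilde O(k^2/\epsilon^2)$.
\item Small clusters are localized with $H_{|\cC_i|,\epsilon^2|\cC_i|/k}$. Their supports, each $\tilde O(|\cC_i|k/\epsilon^2)$, sum to $\tilde O(k^2/\epsilon^2)$, while the errors sum to $\epsilon$ by Cauchy--Schwarz.
\end{enumerate}
Localizing at scale $\tilde O(k'^2)$ with constant accuracy, as you suggest, would let the errors accumulate over up to $k$ clusters.

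\textbf{Polynomial degree.} Taylor-expanding $\sum_i\alpha_ie^{2\pi\bi(f_i-\tilde f_j)t}$ to degree $O(D+\log(1/\epsilon))$ bounds the error by a multiple of $\sum_i|\alpha_i|$. That quantity can be arbitrarily larger than $\|x\|_{[-1,1]}$ when frequencies nearly coincide, so this step fails.

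The paper uses Lemma~\ref{lem:low_deg_approx} instead. It rounds to the net of Theorem~\ref{thm:net_frequency}, which rests on the new shifting Lemma~\ref{lem:shift_one_freq}, and then applies Lemma~8.7 of \cite{CKPS17}, giving degree $O(D+k^2\log k/\epsilon)$. With the old $k^{-O(k^2)}$ net this additive term would be $k^3\log k/\epsilon$. That would give $\tilde O(k^4)$ samples even with $D=\tilde O(k^2)$. So the improved shifting lemma is an essential ingredient that your plan omits.
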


The last remark is that the learning algorithms in Theorem~\ref{inform:fast_algorithm} and Theorem~\ref{inform:algorithm_conjecture} apply linear regression to find the best fitting representation from the noisy samples. Essentially, the time complexity $\tilde{O}(m^{\omega})$ is the time complexity of applying linear regression to $m$ samples \cite{CP19_colt}.

\begin{table}[h]
\centering
\begin{tabular}{|c|c|c|}
     \hline
     Results &  sample complexity & time complexity\\
     \hline
     \cite{CP19_colt} &  $\tilde{O}(k^4)$ & $ (k^{O(k^2)} \cdot FT)^{O(k)}$ \\
     \cite{SSWZ23} & $\tilde{O}(k^4)$ & $\tilde{O}(k^{4\omega})$  \\
     Theorem~\ref{inform:sample_complexity} & $\tilde{O}(k^2)$ & $(k \cdot FT)^{O(k)}$ \\
     Theorem~\ref{inform:fast_algorithm} & $\tilde{O}(k^{3.75})$ & $\tilde{O}(k^{3.75 \omega})$ \\
     Theorem~\ref{inform:algorithm_conjecture} under Conjecture~\ref{conj:growth_rate} & $\tilde{O}(k^{3})$ & $\tilde{O}(k^{3 \omega})$ \\
     \hline
\end{tabular}
\caption{Summary on the sample complexity and time complexity of learning $k$-Fourier-sparse signals with a frequency gap.}\label{tab:comparison}
\end{table}
We summarize our results with previous bounds in Table~\ref{tab:comparison}. In the rest of this work, we assume that the relative error $\epsilon$ is a small constant and $T=1$.

\subsection{Related Works}
\paragraph{Sparse Fourier transforms in the discrete setting.} Sparse discrete Fourier transforms have a large literature, with rich connections to cryptography \cite{GL89} and coding theory \cite{AGS}. Its results can be separated into two lines. The first line carefully chooses samples (measurements) to allow sublinear time recovery (to name a few \cite{GGIMS,GMS,HIKP,Iw13,IKP,K16}). Our result is closely related to this line. Another line of research considers randomly chosen samples (measurements) and gives generic recovery algorithms such as $\ell_1$ minimization under the restricted isometry property \cite{RV,HR16}. While the first line has better sample complexity and running time, the second line has smaller failure probabilities. For a discrete domain of size $N$, the best known results achieve $O(k\log N)$ samples \cite{IK} or $O(k\log^2 N)$ time \cite{HIKP} separately.

However, algorithms in the discrete setting cannot be applied directly to the continuous problem studied in this work. If the continuous problem has frequencies ``off-the-grid'', the discrete approximation becomes $k/\epsilon$-sparse. More importantly, this approximation requires all frequencies to be well separated.

\paragraph{Sparse Fourier transforms in the continuous setting.} A line of research \cite{BCGLS,HK15,PS15,SSWZ22} has constructed sparse Fourier transform algorithms in the continuous setting directly. These algorithms learn each frequency $f_j$ and its amplitudes $\alpha_j$ in time $k \cdot (\log k FT)^{O(1)}$ like the discrete algorithms. However, all these algorithms require that the $k$ frequencies in $x$ have a gap $\min_{i \neq j}|f_i-f_j| \ge \frac{(\log k)^{\Omega(1)}}{T}$.

\paragraph{Super-resolution.} Learning $k$-Fourier-sparse signals is closely related to a fundamental task in imaging, called super-resolution. The task is to recover frequencies and amplitudes in $x(t):=\sum_{j=1}^{k} \alpha_j e^{2 \pi \bi f_j t}$. There are a variety of methods that work in the noiseless setting for $m=k$ samples, including Prony's method \cite{Prony}, Reed-Solomon decoding \cite{m69}, and the matrix pencil method \cite{BM86} (see more references in \cite{KATZ2024101687}). However, for exponentially small noise in the time window $[-T,T]$, Moitra \cite{Moitra} showed that it is impossible to recover each frequency accurately when the gap between frequencies is $<\frac{1}{2T}$. At the same time, Moitra provided an algorithm with $O(T)$ samples that tolerates polynomially small noise, when the gap between frequencies is at least $\frac{1}{2T}$. Various algorithms (to name a few \cite{FL12,TBSR,CF14,YX15}) based on convex optimization and compressed sensing have been developed in the last two decades. However, all these algorithms require the gap between frequencies to be at least $\frac{1}{2T}$ in order to recover frequencies.

While both sparse Fourier transform in the continuous setting and super-resolution study algorithms for recovering frequencies, their foci are different. The goal of sparse Fourier transform is to optimize the running time (and sample complexity). On the other hand, super-resolution is concerned with how the gap between frequencies affects other parameters such as the sample complexity, robustness, and the length of the time window.

\paragraph{Interpolating Fourier-sparse signals.} Chen, Kane, Price, and Song \cite{CKPS17} showed that the gap between frequencies is not necessary for learning the whole signal. Their algorithm provides an interpolation with sparsity $(k \log FT)^{O(1)}$ in $(k \log FT)^{O(1)}$ time. This result has been improved significantly by subsequent works \cite{CP19_colt,CP19_ICALP,SSWZ23}. Specifically, Chen and Price proposed a weighted sampling distribution to reduce the sample complexity in \cite{CP19_colt} and improved the construction of filter functions and sampling algorithms in \cite{CP19_ICALP}. Song, Sun, Weinstein, and Zhang \cite{SSWZ23} extended these techniques to provide an efficient interpolation in $m=\tilde{O}(k)^{4}$ samples and 0$m^{\omega+o(1)}$ time. Our algorithms are based on the techniques developed in these works.

Moreover, the techniques developed in these works have found applications beyond sparse Fourier transforms. The authors of \cite{AKMMVZ18} showed how to reconstruct signals with simple Fourier spectra. While their result provides almost optimal sample complexity to guarantee $\tilde{x}(t) \approx x(t)$ in the time window, it assumes that the positions of the spectra are given.

Two recent works \cite{LLM22, CDHNSY25} studied different approaches to interpolate Fourier-sparse signals. Li, Liu, and Moitra \cite{LLM22} showed how to efficiently interpolate $x(t)$ in an interval smaller than the time window using $\tilde{O}(k)$-sparse interpolations, which improves the interpolation sparsity of \cite{CKPS17,SSWZ23}. The authors of \cite{CDHNSY25} proposed an algorithm with running time $e^{\tilde{O}(k)}$ to reconstruct the Fourier spectrum of $\wh{x}$ with respect to the Wasserstein distance.


\paragraph{Exponential Sums.} Various properties of Fourier-sparse signals have been studied in approximation theory and Fourier analysis \cite{TuranBook} in terms of inequalities of exponential sums. In particular, Erd\'{e}lyi \cite{erdelyi2016inequalities} proved tight bounds on $\underset{k\text{-Fourier-sparse } x}{\max} \frac{|x(t)|}{\|x\|_{[-1,1]}}$. These bounds are extremely useful in designing the sampling distribution for samples from $[-1,1]$ (although weaker bounds were used in \cite{CKPS17,CP19_colt,CP19_ICALP}).
Moreover, for Conjecture~\ref{conj:growth_rate}, Borwein and Erd\'{e}lyi \cite{Remez_Bor_Erd} proved a stronger upper bound for a different family: $g(t):=\sum_{j=1}^k \alpha_j e^{\lambda_j t}$ with $\lambda_1,\ldots,\lambda_k \in \mathbb{R}$ satisfies $|g(t)| \le e^{O(k (|t|-1))} \cdot \underset{s \in [-1,1]}{\max} |g(s)|$ for $t \notin [-1,1]$. 


\subsection{Discussion}
In this work, we improve the sample complexity of learning $k$-Fourier-sparse signals. We show that the information theoretic upper bound is $m=\tilde{O}(k^2)$ and give efficient algorithms with $m=\tilde{O}(k^{3.75})$ samples and with $m=\tilde{O}(k^3)$ samples under Conjecture~\ref{conj:growth_rate}. Our work leaves many intriguing open questions, and we list some of them here.

\begin{enumerate}
    \item Essentially, the information theoretic upper bound $m=\tilde{O}(k^2)$ comes from the union bound over all possible choices of $k$ frequencies in the net 
    $\mathcal{N}:=[-F,F] \cap \frac{\mathbb{Z}}{C \cdot k^2}$. Can we apply the chaining arguments of the restricted isometry property from \cite{RV,HR16} to improve the union bound and reduce the sample complexity to $m=\tilde{O}(k)$?   

    \item Is Conjecture~\ref{conj:growth_rate} true? Moreover, how to use it to obtain efficient learning algorithms within $\tilde{o}(k^3)$ samples?
    
    \item Previous efficient learning algorithms \cite{CKPS17,SSWZ23}, including ours (for sparse Fourier transforms without a frequency gap), use combinations of low-degree polynomials and frequency estimates to interpolate $x(t)$ in the time window. Are there more efficient methods to interpolate $x(t)$?
    For example, Li, Liu, and Moitra \cite{LLM22} showed how interpolate $x(t)$ on a smaller interval $[-(1-c)T,(1-c)T]$ with sparser $\tilde{x}$; and very recent work by the authors of \cite{CDHNSY25} proposed an intriguing approach to reconstruct $x$ without learning each frequency accurately --- its output $x'$ has a small Wasserstein distance between $\wh{x'}$ and $\wh{x}$. 

    \item How to apply techniques developed for learning Fourier-sparse signals to learning signals with simple Fourier spectra \cite{AKMMVZ18}?
\end{enumerate}

\paragraph{Organization.}
The rest of this work is organized as follows. 
We introduce notations and properties of Fourier-sparse signals in Section~\ref{sec:prel}. 
We provide an overview of our algorithms in Section~\ref{sec:overview}.
Then we prove Theorem~\ref{inform:sample_complexity} using an improved frequency-shifting lemma in Section~\ref{sec:net_freqs}.
Next, Section~\ref{sec:heavy_region} proves the guaranty for the frequency estimates used in Theorem~\ref{inform:fast_algorithm}, while Section~\ref{sec:heavy-region-clustering-conjecture} strengthens this guaranty under Conjecture~\ref{conj:growth_rate} for Theorem~\ref{inform:algorithm_conjecture}. Finally, we combine these ingredients to prove Theorem~\ref{inform:fast_algorithm} and Theorem~\ref{inform:algorithm_conjecture} in Section~\ref{sec:main_proof}.

\section{Preliminaries} \label{sec:prel}
For ease of exposition, we fix the time window to be $[-1,1]$ and the bandlimit to be $[-F,F]$ in the rest of this work. We always treat the error $\epsilon$ as a fixed small constant, while we use $C$ to denote various constants in the proof. Also, we use $a=b \pm c$ to indicate $a \in [b-c,b+c]$.

For an interval $I \subset (-\infty,+\infty)$, let $\mathbf{1}_I$ denote the indicator function of interval $I$. So for a signal $x:\mathbb{R}\rightarrow \mathbb{C}$, $\mathbf{1}_I \cdot x$ denotes the truncation of $x$ in the interval $I$. 

Let $\|z\|_2:=(\int |z(t)|^2 \mathrm{d} t)^{1/2}$ and $\|z\|_I:=(\int_I |z(t)|^2 \mathrm{d} t)^{1/2}$ for any signal $z$. For convenience, we call $\|z\|_2^2$ the energy of $z$ and $\|z\|_{I}^2$ the energy of $z$ in $I$. For two integrable functions $x$ and $y$, we define the corresponding inner product $\langle x,y \rangle=\int x(t) \bar{y}(t) \mathrm{d} t$ and $\langle x , y \rangle_I = \int_I x(t) \bar{y}(t) \mathrm{d} t$. 

We review several facts about the Fourier transform. The Fourier transform $\wh{g}(f)$ of an integrable function $g$ is 
\[
\wh{g}(f)=\int_{-\infty}^{+\infty} g(t) e^{-2\pi \bi ft} \mathrm{d} t.
\]
We recall the classical Plancherel and Parseval identities for the inner product $\langle\cdot,\cdot\rangle$.
\begin{theorem}\label{thm:Plancherel_Parseval}
    For any integrable function $x$, $\|x\|_2=\|\wh{x}\|_2$. For any two integrable functions $x$ and $y$, $\langle x,y \rangle=\langle \wh{x}, \wh{y} \rangle$.
\end{theorem}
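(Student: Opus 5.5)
The plan is the classical route: establish the identities first on a dense, well-behaved class of functions where every integral converges absolutely, and then extend by density and continuity. Throughout, "integrable" is read as $x \in L^1(\mathbb{R}) \cap L^2(\mathbb{R})$, so that $\wh{x}$ is given pointwise by the defining integral and $\|x\|_2$ is finite.

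\textbf{Step 1: the Schwartz class.} For Schwartz functions $x,y$, Fourier inversion holds: $y(t)=\int \wh{y}(f) e^{2\pi \bi f t}\,\mathrm{d}f$. Substituting this into $\langle x,y\rangle=\int x(t)\bar{y}(t)\,\mathrm{d}t$ gives
\[
\langle x,y\rangle=\int\!\!\int x(t)\,\overline{\wh{y}(f)}\,e^{-2\pi \bi f t}\,\mathrm{d}f\,\mathrm{d}t .
\]
Both $x$ and $\wh{y}$ are absolutely integrable, so Fubini lets me swap the order of integration. Doing the $t$-integral first yields $\int \wh{x}(f)\overline{\wh{y}(f)}\,\mathrm{d}f=\langle \wh{x},\wh{y}\rangle$. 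Setting $y=x$ gives $\|x\|_2=\|\wh{x}\|_2$ on the Schwartz class.

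The inversion formula is the only real input here. I would prove it via the Gaussian regularization $e^{-\pi\delta^2 f^2}$: its Fourier transform is an approximate identity, and letting $\delta\to 0$ recovers $y$ by dominated convergence.

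\textbf{Step 2: extension by density.} The Schwartz class is dense in $L^1\cap L^2$ in the norm $\|\cdot\|_{L^1}+\|\cdot\|_{L^2}$, for example by mollifying and then truncating. Given $x\in L^1\cap L^2$, I take Schwartz functions $x_n\to x$ in this norm.

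\begin{itemize}
\item Since $\|\wh{g}\|_\infty\le \|g\|_{L^1}$, the transforms satisfy $\wh{x_n}\to\wh{x}$ uniformly.
\item By Step 1 and linearity, $\|\wh{x_n}-\wh{x_m}\|_2=\|x_n-x_m\|_2$, so $(\wh{x_n})$ is Cauchy in $L^2$ and converges in $L^2$ to some limit $g$.
\item A subsequence converges to $g$ almost everywhere, while the whole sequence converges to $\wh{x}$ pointwise. Hence $g=\wh{x}$ a.e.
\end{itemize}

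It follows that $\|\wh{x}\|_2=\lim_n\|\wh{x_n}\|_2=\lim_n\|x_n\|_2=\|x\|_2$.

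\textbf{Step 3: Parseval.} The inner-product identity follows from the norm identity by polarization,
\[
4\langle x,y\rangle=\sum_{r=0}^{3}\bi^r\,\|x+\bi^r y\|_2^2 ,
\]
since the Fourier transform is linear and the same polarization identity holds for $\langle\wh{x},\wh{y}\rangle$.

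\textbf{Main obstacle.} The main difficulty is Step 2: identifying the $L^2$ limit of $\wh{x_n}$ with the pointwise-defined $\wh{x}$. The uniform-convergence bound $\|\wh{g}\|_\infty\le\|g\|_{L^1}$ together with the a.e.-convergent subsequence argument resolves it. The remaining steps are routine.
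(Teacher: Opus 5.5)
The paper gives no proof of this statement. It is recalled in the preliminaries as the classical Plancherel--Parseval theorem and used as a black box, so there is no argument to compare yours against.

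Your argument is the standard textbook proof, and it is correct. It runs through inversion plus Fubini on the Schwartz class, then density in $L^1\cap L^2$ for $\|\cdot\|_{L^1}+\|\cdot\|_{L^2}$. Next you identify the $L^2$-limit of $\wh{x_n}$ with the pointwise-defined $\wh{x}$ via $\|\wh{g}\|_\infty\le\|g\|_{L^1}$ and an a.e.\ convergent subsequence. Finally you use polarization; your identity $4\langle x,y\rangle=\sum_{r=0}^{3}\bi^r\|x+\bi^r y\|_2^2$ is the right one for the paper's convention $\langle x,y\rangle=\int x\bar{y}$.

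There are two minor points. First, in Step 2 you should truncate before mollifying (or truncate with a smooth cutoff), so that the approximants stay smooth. Second, you may want the norm identity for every $x\in L^1$, with both sides in $[0,\infty]$, to match the literal wording ``integrable''. For that, observe that $\wh{x}\in L^2$ forces $x\in L^2$. Take a Gaussian approximate identity $\phi_\varepsilon$; your result gives $\|x*\phi_\varepsilon\|_2=\|\wh{x}\,\wh{\phi_\varepsilon}\|_2\le\|\wh{x}\|_2$. Fatou along an a.e.\ convergent subsequence then gives $\|x\|_2\le\|\wh{x}\|_2$.

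One later use in the paper lies outside the $L^1\cap L^2$ setting you prove. In the proof of Lemma~\ref{lem:orthogonality_by_localizer}, $\wh{Mw}$ is paired with $\wh{z}$ for a Fourier-sparse $z\notin L^2(\mathbb{R})$. There the vanishing follows directly by expanding $\int_{\mathbb{R}} M w\bar{z}$ into the values $\wh{M}(f_j-f'_{j'})$, which are zero by the support bound on $\wh{M}$, rather than from Plancherel.
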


We use $g\cdot h$ to denote the point-wise dot product $g(t) \cdot h(t)$ and $g^k$ to denote $g(t) \cdot g(t) \cdots g(t) \cdot g(t)$. Similarly, $g*h$ denotes the convolution of $g$ and $h$: $\int g(x)h(t-x) \mathrm{d} x$ and $g^{*k}$ denotes the convolution $g*g*\cdots * g$. For a function $h$ whose Fourier transform $\wh{h}$ has a compact support, we define the Fourier support of $h$ as the set $\{f:\wh{h}(f) \neq 0\}$. 

\paragraph{Properties of Fourier-sparse signals}
We state several useful bounds for Fourier-sparse signals from \cite{erdelyi2016inequalities,CP19_ICALP}.
\begin{lemma} \label{lemma:bounds_Fourier_sparse_signals}
Any $k$-Fourier-sparse signal $x$ satisfies the following bounds:
\begin{enumerate}
    \item \label{item:uniform_bound_on_interval} $|x(t)| \le \frac{\pi k}{2} \cdot \| x \|_{[-1, 1]}$ for every $|t| \le 1$ (Theorem 2.3 in \cite{erdelyi2016inequalities}); 
    
    \item \label{item:leverage_bound_on_interval} $|x(t)| \le \sqrt{\frac{2k}{1-|t|}} \cdot \| x \|_{[-1, 1]}$ for every $|t| < 1$ (Theorem 7.1 in \cite{erdelyi2016inequalities}); 
    
    \item \label{item:polynomial_bound_outside_interval} $|x(t)| \le \left(e (|t| + 1) \right)^k \cdot \sup_{t \in [-1, 1]} |x(t)|$ for every $|t| > 1$ (Lemma 12.2 in \cite{erdelyi2016inequalities});
    
    \item \label{item:exponential_bound_outside_interval} $|x(t)| \le \poly(k) \cdot e^{O(t-1) \cdot k^2 \log k} \cdot \max_{s \in [-1,1]} |x(s)| $ for every $|t| > 1$ (Theorem 1.4 in \cite{CP19_ICALP}). 
\end{enumerate}
\end{lemma}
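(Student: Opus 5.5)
Each item is a known inequality from \cite{erdelyi2016inequalities,CP19_ICALP}, so the plan is to reduce to those sources and check that the normalizations match our setting (window $[-1,1]$, $k$ arbitrary real frequencies, complex amplitudes). I would still sketch why each bound holds, so that the constants and the form of the right-hand side can be verified.

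\textbf{Items~\ref{item:uniform_bound_on_interval} and~\ref{item:leverage_bound_on_interval}.} Fix the frequencies $f_1,\ldots,f_k$. Then $x$ lies in the $k$-dimensional space $V=\mathrm{span}\{e^{2\pi \bi f_j t}\}$. Hence $\sup_{x\in V}|x(t)|^2/\|x\|_{[-1,1]}^2=\sum_{i}|u_i(t)|^2$ for an orthonormal basis $u_1,\ldots,u_k$ of $V$ in $L^2[-1,1]$; this is the Christoffel-function / leverage-score identity.

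Both bounds therefore reduce to bounding this leverage function uniformly over the choice of frequencies:
\begin{itemize}
\item its integral over $[-1,1]$ is exactly $k$;
\item Erd\'elyi's Bernstein-type inequality for exponential sums controls how fast $|x|^2$ can vary near a point $t$ at distance $1-|t|$ from the boundary.
\end{itemize}
Combining these gives the $\sqrt{2k/(1-|t|)}$ profile in the interior. A Nikolskii-type argument, comparing with the extremal polynomial case obtained by letting the frequencies collide, gives the uniform $\frac{\pi k}{2}$ bound up to the boundary. I will take the sharp constants directly from Theorems~2.3 and~7.1 of \cite{erdelyi2016inequalities}.

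\textbf{Item~\ref{item:polynomial_bound_outside_interval}.} I would argue as in Tur\'an's lemma.
\begin{itemize}
\item $x$ satisfies the linear constant-coefficient ODE $\prod_j(\frac{d}{dt}-2\pi\bi f_j)x=0$, equivalently an exponential Lagrange interpolation identity.
\item Choose $k$ nodes in $[-1,1]$ and write $x(t)$ as a combination of the values $x(s_i)$. The weights are exponential-sum analogues of Lagrange basis polynomials.
\item Bound these weights at $|t|>1$ by $(e(|t|+1))^k$ in total, using Tur\'an's extremal argument, which is insensitive to the frequency locations.
\end{itemize}
This is Lemma~12.2 of \cite{erdelyi2016inequalities}, and I would cite it.

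\textbf{Item~\ref{item:exponential_bound_outside_interval}.} This is the main obstacle, because it needs the fine growth rate $e^{O(t-1)k^2\log k}$ rather than the crude $O(|t|)^k$. I would follow \cite{CP19_ICALP}:
\begin{itemize}
\item Construct a filter whose Fourier transform is supported in a band of width $\tilde O(k^2)$ and which acts like the box on the window.
\item Multiply $x$ by a shifted copy and relate $x(t)$ at $t=1+\delta$ to values of $x$ inside $[-1,1]$ via the filter. The filter contributes growth $e^{O(\delta)\cdot k^2\log k}$.
\item Use item~\ref{item:uniform_bound_on_interval} to turn the resulting $L^2$ control inside the window into pointwise control, paying the $\poly(k)$ factor.
\end{itemize}
Since this is exactly Theorem~1.4 of \cite{CP19_ICALP}, the proof of the lemma consists of invoking these four cited theorems after rescaling to the window $[-1,1]$.
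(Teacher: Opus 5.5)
Your proposal matches the paper, which gives no argument for this lemma and simply imports the four bounds from \cite{erdelyi2016inequalities} and \cite{CP19_ICALP}, just as your final step does.

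The explanatory sketches are not needed, and two of them run the wrong way. Letting the frequencies collide only shows that $k$-sparse sums can mimic degree-$(k-1)$ polynomials, which gives a lower bound on the constant in item~\ref{item:uniform_bound_on_interval}, not an upper bound. Your filter argument for item~\ref{item:exponential_bound_outside_interval} is circular, because in this paper the growth bound is an input to the filter construction: the remark after the lemma says so, and Lemma~\ref{lemma:accuracy-tunable-localization} uses item~\ref{item:exponential_bound_outside_interval} to prove that $H_{\ell,\delta}$ behaves like a box outside $[-1,1]$.
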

The last property is important in the construction of filter functions. 
Essentially, Conjecture~\ref{conj:growth_rate} says that the exponent of Property \ref{item:exponential_bound_outside_interval} can be improved to its square root, which matches the growth of the Chebyshev polynomial of degree $(k-1)$ asymptotically.


\section{Overview}\label{sec:overview}
We provide a high level overview of our methods in this section.

\paragraph{Frequency net $\mathcal{N}$.} To bound the sample complexity of learning a $k$-Fourier-sparse signal $x$, a standard method is a net argument, which constructs a frequency net $\mathcal{N}$ of bounded size here. Basically, this net $\mathcal{N} \subset [-F,F]$ has a finite size and guarantees that for any $x(t):=\sum_{j=1}^k \alpha_j e^{2 \pi \bi f_j t}$ with arbitrary frequencies $f_1,\ldots,f_k$, one can find $f'_1,\ldots,f'_k \in \mathcal{N}$ and $x' :=\sum_{j=1}^k \alpha'_j \cdot e^{2 \pi \bi f'_j t}$ such that $x'(t) \approx x(t)$ in the time window $t \in [-1,1]$ (see Theorem~\ref{thm:net_frequency} for a formal statement). In the off-grid setting, the challenge of this net argument is to bound the relative error of $\|x'-x\|_{[-1,1]}$ by $\|x\|_{[-1,1]}$ instead of the Fourier coefficients $|\alpha_1|,\ldots,|\alpha_k|$. This is because $k$ arbitrarily close frequencies could make $\|x\|_{[-1,1]}$ arbitrarily small compared to their Fourier coefficients. 

To bound the relative error between $x'$ and $x$, it suffices to bound the relative error of replacing one frequency $f_k$ in $x(t)$ by a frequency $f_{k+1}$ in $\mathcal{N}$. One of the key technical results in \cite{CKPS17} (Lemma 8.5) shows that the relative error of replacing $f_k$ by $f_{k+1}$ is at most $k^{O(k^2)} \cdot |f_k - f_{k+1}|$. This result has several important corollaries. At first, it shows $\mathcal{N}:=[-F,F] \cap \frac{1}{k^{O(k^2)}} \cdot \mathbb{Z} $ is a good frequency net. Secondly, this net implies an extra property of the approximation $x' :=\sum_{j=1}^k \alpha'_j \cdot e^{2 \pi \bi f'_j t}$ whose frequencies have a gap $\eta:=k^{-O(k^2)}$: $\|x\|_{[-1,1]}^2 \ge k^{-O(k^2)} \cdot (\eta)^k \cdot \sum_{j=1}^k |\alpha'_j|^2$. Finally, this relation between $\|x\|_{[-1,1]}^2$ and $\sum_{j=1}^k |\alpha'_j|^2$ implies that a degree-$\tilde{O}(k^3)$ Taylor expansion of $x'(t)$ is a good approximation.

When it is impossible to recover the $k$ frequencies in $x(t)$ and $x'(t)$ accurately, this suggests an efficient algorithm to interpolate $x'(t)$ as a summation of products of low-degree polynomials and wave functions: find coarse estimates $\tilde{f}_j$ for each $f'_j$ (and $f_j$) and interpolate $x(t)$ as
\begin{equation}\label{eq:approx_by_poly}
    x(t) \approx \sum_{j=1}^k e^{2 \pi \bi \tilde{f}_j t} \cdot q_j(t) \text{ with each polynomial $q_j$ of degree } O(|\tilde{f}_j-f_j|)+\tilde{O}(k^3).
\end{equation}

Our first technical result is an improvement in the relative error after replacing $f_k$ by any nearby frequency $f_{k+1}$. 
\begin{lemma}\label{lem:shift_one_freq}
    For any $x(t)=\sum_{j=1}^k \alpha_j e^{2 \pi \bi f_j t}$ with $k$ arbitrary frequencies $f_1,\ldots,f_k$ and any $f_{k+1} \notin \{f_1,\ldots,f_{k-1}\}$, there exists $x'(t)=\sum_{j=1}^{k-1} \alpha_j e^{2 \pi \bi f_j t} + \alpha_{k+1} e^{2 \pi \bi f_{k+1} t}$ such that
    \[
    \|x(t)-x'(t)\|_{[-1,1]} \le O(1) \cdot |f_k-f_{k+1}| \cdot \|x\|_{[-1,1]}.
    \]
\end{lemma}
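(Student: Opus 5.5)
The plan is to read the lemma as a statement about subspaces. Let $e_f(t):=e^{2\pi\bi f t}$, let $V:=\mathrm{span}\{e_{f_1},\ldots,e_{f_{k-1}}\}$ in $L^2[-1,1]$, and let $P$ be the orthogonal projection onto $V^\perp$.

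I will take the coefficients of $e_{f_1},\ldots,e_{f_{k-1}}$ in $x'$ to be re-optimized. The literal reading, with $\alpha_1,\ldots,\alpha_{k-1}$ kept fixed, is false when frequencies are close: $|\alpha_k|$ can be far larger than $\|x\|_{[-1,1]}$. So the goal is
\[
\dist\bigl(x,\mathrm{span}(V,e_{f_{k+1}})\bigr)\le O(|f_k-f_{k+1}|)\,\|x\|_{[-1,1]}.
\]
Write $x=v+\alpha_k e_{f_k}$ with $v\in V$, and set $g(f):=P e_f$. Then $Px=\alpha_k g(f_k)$ and $\|x\|_{[-1,1]}\ge \|Px\|=|\alpha_k|\,\|g(f_k)\|$. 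We have $\mathrm{span}(V,e_{f_{k+1}})=V\oplus\mathrm{span}(g(f_{k+1}))$, and $V$ contains the $V$-component of $x$. Hence
\[
\dist\bigl(x,\mathrm{span}(V,e_{f_{k+1}})\bigr)=|\alpha_k|\,\|g(f_k)\|\cdot\sin\angle\bigl(g(f_k),g(f_{k+1})\bigr)\le \|x\|_{[-1,1]}\cdot \sin\angle\bigl(g(f_k),g(f_{k+1})\bigr).
\]
This removes all dependence on the possibly huge $\alpha_k$. It remains to show that the angle between $g(f)$ and $g(f')$ is $O(|f-f'|)$.

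To bound the angle, I will bound the angular speed of the unit curve $u(f):=g(f)/\|g(f)\|$, which is well defined for $f\notin\{f_1,\ldots,f_{k-1}\}$. If $f_{k+1}$ lies on the other side of some $f_j$, I expect $u$ to extend continuously through that point via the confluent limit (derivatives $t^m e_{f_j}$). Otherwise one splits the path and handles it separately. The speed is
\[
\|u'(f)\|=\frac{\|(I-Q_f)\,g'(f)\|}{\|g(f)\|},\qquad g'(f)=P(2\pi\bi t\, e_f),
\]
where $Q_f$ is the projection onto $\mathrm{span}(g(f))$. Since $V\oplus\mathrm{span}(g(f))=W_f:=\mathrm{span}(V,e_f)$, the numerator is $2\pi\,\dist(t e_f, W_f)$. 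Therefore the key inequality to prove is
\[
\dist\bigl(t\,e_f,\ \mathrm{span}(e_{f_1},\ldots,e_{f_{k-1}},e_f)\bigr)\le C\cdot \dist\bigl(e_f,\ \mathrm{span}(e_{f_1},\ldots,e_{f_{k-1}})\bigr)
\]
with an absolute constant $C$. Integrating the speed then gives $\sin\angle\le \int_{f_k}^{f_{k+1}}\|u'\|\le 2\pi C|f_k-f_{k+1}|$.

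This inequality is the main obstacle. My first attempt uses the minimizer $r:=g(f)=e_f-v_0$, which is orthogonal to $V$. We have $t e_f=t\,r+t\,v_0$, and $\|t\,r\|\le\|r\|$ on $[-1,1]$. The difficulty is that $t\,v_0$ need not lie in $W_f$, and $\|v_0\|$ can be huge. I would instead try a different witness: approximate $t\,e_f$ by $t$ times the best $(k-1)$-term approximant. That is, compare $\dist(t e_f, W_f)$ with $\dist(e_f,V)$ through the identity
\[
t\prod_{j}(\text{shift operators})\,e_f\;\approx\;\text{derivative in } f,
\]
used in divided-difference form: $e_f-\sum_j c_j e_{f_j}$ is an exponential-sum analogue of the polynomial remainder.

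Sanity check the confluent case, where all $f_j=f=0$. There $V$ is spanned by polynomials of degree less than $k-1$, $e_f$ plays the role of $t^{k-1}$, and the inequality reads $\dist(t^k,P_{k-1})\le C\,\dist(t^{k-1},P_{k-2})$. This holds with $C\approx 1/2$ by Legendre leading coefficients. That is encouraging, and it suggests a general proof through the Gram/Cauchy–Binet determinant formula
\[
\dist(e_f,V)^2=\frac{\det G(f_1,\ldots,f_{k-1},f)}{\det G(f_1,\ldots,f_{k-1})}.
\]
Then both sides of the key inequality are ratios of Gram determinants, and I would compare them using the kernel $\int_{-1}^1 e^{2\pi\bi(a-b)t}\,\mathrm{d}t$.
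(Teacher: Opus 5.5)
\textbf{Where the proof stops.} Your reduction matches the paper's. After re-optimizing the coefficients on $V$, the error is at most $\|x\|_{[-1,1]}\cdot\sin\angle(g(f_k),g(f_{k+1}))$, and this sine is exactly the paper's ratio $\|w\|_{[-1,1]}/\|v_{k+1}^{\perp}\|_{[-1,1]}$. Everything after that rests on the ``key inequality'' $\dist(t e_f,W_f)\le C\,\dist(e_f,V)$, and you never prove it. The divided-difference heuristic and the Gram-determinant comparison are only directions. The latter would need a determinant inequality for the sinc kernel that you give no way to establish, and the Legendre computation only checks the confluent limit. Since this inequality is essentially the whole content of the lemma, the proposal is missing its core. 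There is also a secondary unresolved point: the path from $f_k$ to $f_{k+1}$ may pass through some $f_j$. There $g(f_j)=0$, your speed formula is singular, and you only ``expect'' continuity.

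\textbf{The missing idea.} In your first attempt, replace multiplication by $t$ with the integral operator $(S_f h)(t):=e^{2\pi\bi f t}\int_0^t e^{-2\pi\bi f s}h(s)\,\mathrm{d}s$. It has two key properties:
\begin{itemize}
\item $S_f e_{f_j}=(e_{f_j}-e_f)/(2\pi\bi(f_j-f))\in W_f$ for $f_j\neq f$;
\item $S_f e_f=t\,e_f$.
\end{itemize}
Write $e_f=g(f)+v_0$ with $v_0\in V$. Then $t e_f=S_f g(f)+S_f v_0$, where $S_f v_0\in W_f$, and $e_f\in W_f$ as well. Hence $\dist(t e_f,W_f)\le\min_{c}\bigl\|\int_0^t e^{-2\pi\bi f s}\,g(f)(s)\,\mathrm{d}s-c\bigr\|_{[-1,1]}\le\frac{2}{\pi}\|g(f)\|_{[-1,1]}$. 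The last step is Poincar\'e's inequality, since the $t$-derivative of the integral has modulus $|g(f)(t)|$. This proves your key inequality with $C=2/\pi$.

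\textbf{How the paper uses it.} The paper applies the same operator globally rather than infinitesimally. With $S=S_{f_k}$ it writes $v_{k+1}=2\pi\bi(f_{k+1}-f_k)\,S v_{k+1}+v_k$, splits $v_{k+1}=v_{k+1}^{\perp}+v_{k+1}^{\parallel}$, and notes $S v_{k+1}^{\parallel}\in V_k$. It concludes directly that $\dist(v_{k+1},V_k)\le 4|f_{k+1}-f_k|\,\|v_{k+1}^{\perp}\|_{[-1,1]}$. This removes the angular-speed integration entirely, and with it the crossing of intermediate frequencies. It also sidesteps the fact that your formula computes the projective speed $\|(I-Q_f)u'\|$ rather than $\|u'\|$.
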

Lemma~\ref{lem:shift_one_freq} shows that the relative error is $O(1) \cdot |f_{k+1}-f_k|$, much smaller than the previous bound $k^{O(k^2)} \cdot |f_{k+1}-f_k|$. The proof idea is to construct an approximation vector instead of estimating the Gram matrix of the wave functions $e^{2 \pi \bi f_1 t},\ldots,e^{2 \pi \bi f_k t},e^{2 \pi \bi f_{k+1} t}$. Specifically, let vectors $v_1:=(e^{2 \pi \bi f_1 t})_{t \in [-1,1]},\ldots,v_k:=(e^{2 \pi \bi f_k t})_{t \in [-1,1]},v_{k+1}:=(e^{2 \pi \bi f_{k+1} t})_{t \in [-1,1]}$. Let $v^{\bot}_{k}$ and $v^{\bot}_{k+1}$ be the components of $v_{k}$ and $v_{k+1}$ that are orthogonal to $\mathrm{span}\{v_1,\ldots,v_{k-1}\}$ respectively. Then the relative error of replacing $f_k$ by $f_{k+1}$ is the relative distance between the normalized unit vectors of $v^{\bot}_k$ and $v^{\bot}_{k+1}$. Equivalently, this is the ratio between the norm of the component of $v^{\bot}_{k+1}$ orthogonal to $v^{\bot}_k$ and $\|v^{\bot}_{k+1}\|_{[-1,1]}$ itself. Our new upper bound is obtained by presenting a decomposition of $v_{k+1}$ into two vectors $w$ and $(v_{k+1} - w)$ in $\mathrm{span}\{v_1,\ldots,v_k\}$. Since any upper bound on $\|w\|$ provides an upper bound on the component of $v^{\bot}_{k+1}$ orthogonal to $v^{\bot}_k$, $w \bot \mathrm{span}\{v_1,\ldots,v_k\}$ is no longer necessary. In Section~\ref{sec:net_freqs}, we present an integral operator of wave functions to construct $w$ and compare it with $v^{\bot}_{k+1}$ via the classical Poincare inequality.

As discussed above, this shows a much smaller net $\mathcal{N}:=[-F,F] \cap \frac{\epsilon}{k^{O(1)}} \cdot \mathbb{Z}$, which improves the sample complexity of learning $x$ (see Theorem~\ref{cor:query_complexity_learning} in Section~\ref{sec:net_freqs}) and the degree of the Taylor expansion of $x'(t)$ (see Lemma~\ref{lem:low_deg_approx} in Section~\ref{sec:net_freqs}). However, another bottleneck of previous learning algorithms \cite{CKPS17,SSWZ23} is the error of $|\tilde{f}_j-f_j|$ in \eqref{eq:approx_by_poly}. Before describing our methods, we review previous methods based on a pair of filter functions --- $H$ and its Fourier transform $\wh{H}$.

\paragraph{Filter functions $(H_{\ell,\delta},\wh{H_{\ell,\delta}})$.} This pair of filter functions from \cite{CKPS17, CP19_ICALP} has two properties: (1) $\wh{H}$ is compact and (2) $H$ acts like a box function on any $k$-Fourier-sparse signal $x$: $H \cdot x \approx \mathbf{1}_{[-1,1]} \cdot x$. Then $H \cdot y$ is a noisy approximation of $x([-1,1])$ and its Fourier transform $\wh{H}*\wh{y} \approx \wh{H}*\wh{x}$ preserves the structure of $\wh{x}$. The parameter $\ell$ bounds the number of frequencies and $\delta$ is the error in the approximation. 


We state the main properties  of these filter functions from \cite{CP19_ICALP} as follows. Let $\rect_s(t)$ denote the box function of width $s$: $\rect_s(t)=1/s$ if and only if $|t| \le s/2$; let $\sinc(sf)$ denote its Fourier transform $\frac{\sin(\pi s f)}{\pi s f}$.

\begin{lemma} \label{lemm:construction_H}
Given the sparsity $\ell$ and error $\delta$, 
let $C=O(1)$, $S=\ell^2 \log \ell$, $\alpha_H = 1 - \frac{\delta}{C \ell^2}$, and $s_0 = \Theta\left(\frac{\ell^2}{\delta} \sqrt{\log \frac{\ell}{\delta}}\right)$ be a normalizer such that
\begin{align*}
\wh{H_{\ell,\delta}}(f) & :=s_0\bigg( (\rect_{\frac{C \ell^2}{\delta}}(f)^{*C \log \ell/\delta}) *(\rect_{CS}(f)^{*C} *(\rect_{CS/2}(f)^{*2C}) * \cdots *(\rect_{C}(f)^{*CS}\bigg)\cdot \sinc(2 \alpha_H f), \\
H_{\ell,\delta}(t) & :=s_0 \bigg( (\sinc(\frac{C \ell^2}{\delta} \cdot t)^{C \log \ell/\delta}) \cdot (\sinc(CS \cdot t)^{C} *(\sinc(CS/2 \cdot t)^{2C}) \cdot \cdots \cdot(\rect(C t)^{CS}\bigg) * \rect_{2 \alpha_H}(t).    
\end{align*}
    Then $(H_{\ell,\delta},\wh{H_{\ell,\delta}})$ satisfies the following properties.
        \begin{enumerate}
        \item $supp(\wh{H_{\ell,\delta}}) =[-\Delta_{{\ell,\delta}},+\Delta_{{\ell,\delta}}]$ for $\Delta_{\ell,\delta}:=C^2 \cdot (\frac{\ell^2 \log(\ell/\delta)}{\delta} + S \log S)$. 

        \item $\|H_{\ell,\delta} \cdot z\|_2^2 = (1 \pm \delta) \cdot \|z\|_{[-1,1]}^2$ for any $\ell$-Fourier-sparse signal $z$.    

    \end{enumerate}
\end{lemma}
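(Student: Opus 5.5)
The lemma has two parts, which I would prove separately: the support claim by direct computation with convolutions, and the energy claim by comparing $|H_{\ell,\delta}(t)|^2$ with the growth of $\ell$-Fourier-sparse signals from Lemma~\ref{lemma:bounds_Fourier_sparse_signals}.

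\textbf{Support.} Throughout, $\wh{H_{\ell,\delta}}$ is given as a convolution of box functions, multiplied pointwise by $\sinc(2\alpha_H f)$.

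The pointwise multiplication by $\sinc$ cannot enlarge the support. The $\sinc$ factor vanishes only at isolated points, so it does not change the support interval either.

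Since $\supp(\rect_w)=[-w/2,w/2]$ and supports add under convolution, the total half-width is the sum of the half-widths of all the box factors. There are three groups of factors.
\begin{itemize}
\item The factor $\rect_{C\ell^2/\delta}^{*C\log(\ell/\delta)}$ contributes $\frac{C^2\ell^2\log(\ell/\delta)}{2\delta}$.
\item Each dyadic level $\rect_{CS/2^i}^{*C2^i}$ contributes $C^2S/2$.
\item There are about $\log S$ dyadic levels, so together they contribute $O(C^2 S\log S)$.
\end{itemize}
Summing gives $\Delta_{\ell,\delta}$. I would not spend effort on the precise constant.

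\textbf{Energy: setup.} Write $H=G*\rect_{2\alpha_H}$, where $G$ is the product of $\sinc$ powers and $s_0$ normalizes $\int G=1$, so that $H(t)=\int_{t-\alpha_H}^{t+\alpha_H}G(s)\,\mathrm{d}s$ with $G\ge 0$. The sharp factor $\sinc(\frac{C\ell^2}{\delta}t)^{C\log(\ell/\delta)}$ makes $G$ concentrated on scale $w\approx\delta/\ell^2$. The choice of $s_0$ should make the tail $\int_{|s|>w}G$ at most $\poly(\delta/\ell)$.

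It follows that $H(t)\in[1-\poly(\delta/\ell),1]$ for $|t|\le \alpha_H-w$, and $0\le H\le 1$ everywhere. The plan is to write
\[
\|Hz\|_2^2-\|z\|_{[-1,1]}^2=\int_{[-1,1]}(H^2-1)|z|^2+\int_{|t|>1}H^2|z|^2
\]
and bound each term by $O(\delta)\|z\|_{[-1,1]}^2$ for large $C$.

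\textbf{Energy: inside $[-1,1]$.} On the bulk $|t|\le\alpha_H-w$, the error is $\poly(\delta/\ell)\|z\|^2_{[-1,1]}$. The boundary layer $\alpha_H-w\le|t|\le1$ has length $O(\delta/(C\ell^2))$. There, item~\ref{item:uniform_bound_on_interval} of Lemma~\ref{lemma:bounds_Fourier_sparse_signals} gives $|z(t)|^2\le O(\ell^2)\|z\|_{[-1,1]}^2$, so this layer contributes $O(\delta/C)\|z\|^2_{[-1,1]}$.

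\textbf{Energy: outside $[-1,1]$ (main obstacle).} This is the step I expect to be hardest. I need $H(1+\tau)^2$ to beat the growth of $|z(1+\tau)|^2$ at every scale of $\tau$, and then integrate against item~\ref{item:uniform_bound_on_interval}, which gives $\max_{[-1,1]}|z|\le O(\ell)\|z\|_{[-1,1]}$. Since $H(1+\tau)\le\int_{s>\tau+(1-\alpha_H)}G(s)\,\mathrm{d}s$, it suffices to bound the tails of $G$. I would split into three ranges.
\begin{itemize}
\item \emph{Small $\tau$, i.e.\ $\tau\lesssim 1/S$.} Item~\ref{item:exponential_bound_outside_interval} gives $|z|\le\poly(\ell)e^{O(\tau\ell^2\log\ell)}\max_{[-1,1]}|z|$, which is $\poly(\ell)$ in this range. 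The sharp $\sinc$ factor makes $G$'s tail beyond distance $\tau+\delta/(C\ell^2)$ decay like $(\delta/(\ell^2\tau))^{C\log(\ell/\delta)}$. Integrating gives $O(\delta)$.
\item \emph{Intermediate $\tau\in[2^{i}/S,2^{i+1}/S]$.} Use the level-$i$ factor: $\sinc(\frac{CS}{2^i}s)^{C2^i}\le (C\cdot 2^{-i}S s)^{-C2^i}=2^{-\Omega(C\cdot 2^i)}$. This is $e^{-\Omega(CS\tau)}$, while the growth from item~\ref{item:exponential_bound_outside_interval} is only $e^{O(\tau\ell^2\log\ell)}=e^{O(S\tau)}$. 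So for $C$ large the product decays geometrically in $\tau$. The care needed here is in checking that the other $\sinc$ factors are bounded by $1$ and that $s_0$ is only $\poly(\ell/\delta)$, so it is absorbed by the polynomial-in-$\ell$ slack from the small range.
\item \emph{Large $\tau=\Omega(1)$.} Use the coarsest factor $\sinc(Ct)^{CS}$, which decays like $|t|^{-CS}$. This dominates the polynomial growth $(e(|t|+1))^\ell$ from item~\ref{item:polynomial_bound_outside_interval}, since $CS\gg\ell$.
\end{itemize}
Summing the three ranges, the contribution outside $[-1,1]$ is $O(\delta)\|z\|^2_{[-1,1]}$, which completes property~2.
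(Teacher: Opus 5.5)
Your proposal is correct and follows essentially the same route as the paper's proof (Claim~\ref{claim:H-bounds} plus Lemma~\ref{lemma:accuracy-tunable-localization}): the support comes from adding box widths, the interior is split into a bulk where $|1-H|\le(\delta/\ell)^{\Omega(C)}$ and a boundary layer of width $O(\delta/(C\ell^2))$ handled by item~\ref{item:uniform_bound_on_interval}, and the exterior is handled by dyadic tail bounds on the $\sinc$ product against items~\ref{item:exponential_bound_outside_interval} and~\ref{item:polynomial_bound_outside_interval} of Lemma~\ref{lemma:bounds_Fourier_sparse_signals}. One point to tighten: in your intermediate and large ranges you should not bound the sharp factor by $1$, because the level-$i$ factor near $\tau\approx 1/S$ gives only a factor depending on $C$ alone; since the tail integral runs over $s\ge\delta/(C\ell^2)$, the sharp factor is at most $\pi^{-C\log(\ell/\delta)}=(\delta/\ell)^{\Omega(C)}$ on the whole tail, and this is what absorbs $s_0=\poly(\ell/\delta)$ and the $\poly(\ell)$ loss from item~\ref{item:exponential_bound_outside_interval}, exactly as the paper builds it into items~4--6 of Claim~\ref{claim:H-bounds}.
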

Since $k$ is fixed and $\epsilon$ is a fixed constant, this work uses $H_k$ to denote $H_{k,\epsilon}$ and $\Delta_k$ to denote $\Delta_{k,\epsilon}=\tilde{\Theta}(k^2)$. Roughly speaking, the main properties of $(H_k,\wh{H_k})$ are $H_k(t) = 1 \pm \delta$ for $|t| \le 1- \Omega(\frac{\delta}{\ell^2})$ and $H_k(t) \approx 0$ for $|t|>1$; and  $\wh{H}$ has a compact support in $[-\Delta_k,\Delta_k]$. Previous algorithms \cite{CKPS17,CP19_ICALP,SSWZ23}  reconstruct $H_k(t) \cdot y(t)$ for $t \in \mathbb{R}$ from the observation $y(t)=x(t)+\eta(t)$ on the time window and obtain coarse estimates of $\tilde{f}_1,\ldots,\tilde{f}_{\ell}$ from its continuous Fourier transform $\wh{H_k \cdot y}$. 

 Also, our algorithm applies $H_k$ to $y$ and use $\wh{H_k \cdot y}$ to obtain $\tilde{f}_1,\ldots,\tilde{f}_\ell$; but its analysis uses $H_{\ell,\delta}$ with a variety of parameters in several places. For completeness, we show the exact properties of $(H_{\ell,\delta},\wh{H_{\ell,\delta}})$ and a full proof of Lemma~\ref{lemm:construction_H} in Appendix~\ref{sec:proof_H_delta}. Because $\|y\|_{[-1,1]}=\|x\|_{[-1,1]} \pm \|\eta\|_{[-1,1]}$ and $\|H_k x\|_2 \approx \|x\|_{[-1,1]}$, the following three energies $\|y\|_{[-1,1]}^2$, $\|x\|_{[-1,1]}^2$, and $\|H_k x\|_2^2$ are very close, so we use them to denote the energy of the observation for convenience.


\paragraph{Frequency Recovery.} Our frequency estimation procedure (for interpolating \eqref{eq:approx_by_poly}) uses frequency estimation algorithms developed in \cite{CKPS17,CP19_ICALP,SSWZ23}, but our contribution here is a new analysis that shows smaller error bounds. In particular, we use the following procedure to obtain frequency estimates $\tilde{f}_1,\ldots,\tilde{f}_\ell$, which are rough estimations of $f_1,\ldots,f_k$ in $x$.
\begin{lemma}\label{lem:find_heavy_frequencies}
    Let $L:=\{\tilde{f}_1,\ldots,\tilde{f}_\ell \}$ be the list of frequencies output by Procedure~\textsc{FrequencyEstimationX} in Algorithm 3 of \cite{SSWZ23} with input signal $H_k \cdot y$ and $\Delta:=\Delta_k$ instead of $k \cdot \Delta_k$ as the length of the frequency interval. With probability $0.99$, $L$ satisfies the following properties:
    \begin{enumerate}
        \item $\ell = O(k/\epsilon)$;
        \item for any $f$ with $\int_{f-\Delta_k}^{f+\Delta_k} |\wh{H_k \cdot y}(u)|^2 \mathrm{d} u \ge \frac{\epsilon}{5k} \cdot \|y\|_{[-1,1]}^2$, $\exists \tilde{f} \in L$ such that $|\tilde{f}-f| = O(\Delta_k)$.
    \end{enumerate}

    Moreover, this procedure takes $O(\frac{k^2 (\log k/\epsilon)^2 (\log F) \log (k\log F)}{\epsilon})$ samples and $O(\frac{k^2 (\log k/\epsilon)^2 (\log F)^2}{\epsilon})$ time.
\end{lemma}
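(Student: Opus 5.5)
The plan is to treat Procedure~\textsc{FrequencyEstimationX} of \cite{SSWZ23} as a black box whose guarantee is stated for a general signal $z:=H_k\cdot y$ with a given frequency-interval length $\Delta$, and to re-run its analysis with $\Delta=\Delta_k$ in place of $k\Delta_k$. The procedure works in three stages. First it hashes the frequency domain into $B=\Theta(k/\epsilon)$ buckets using random dilation and a filter $(G,\wh{G})$ whose passband has width about $\Delta$ after the dilation. Next it recovers, in each bucket, a frequency $\tilde f$ by locating the heavy frequency bit by bit, using samples at time shifts $z(t+\beta)$ for $\beta$ at geometric scales up to $\approx 1/\Delta$. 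Finally it outputs one candidate per bucket per repetition. The claims then follow from the three facts below.

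\textbf{Step 1 (size of $L$).} There are $B=O(k/\epsilon)$ buckets, and each contributes at most one estimate. A constant number of repetitions is needed for the amplification, which still leaves $\ell=O(k/\epsilon)$.

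\textbf{Step 2 (heavy frequencies are caught).} Fix $f$ with $\int_{f-\Delta_k}^{f+\Delta_k}|\wh{z}(u)|^2\,\mathrm{d}u\ge \frac{\epsilon}{5k}\|y\|_{[-1,1]}^2$. By Lemma~\ref{lemm:construction_H} and Theorem~\ref{thm:Plancherel_Parseval}, $\|\wh z\|_2^2=\|H_k y\|_2^2=O(\|y\|_{[-1,1]}^2)$. So, after a random hash, the expected energy from the rest of the spectrum that lands in $f$'s bucket is $O(\epsilon/B)\cdot\|y\|^2_{[-1,1]}$. With $B=Ck/\epsilon$, this is a small fraction of the $\frac{\epsilon}{5k}\|y\|_{[-1,1]}^2$ energy near $f$, with constant probability. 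The bucket's signal is then \emph{one-cluster}: its energy is concentrated in $[f-\Delta_k,f+\Delta_k]$ up to a small noise fraction. The one-cluster location lemma of \cite{CKPS17,SSWZ23} (a Chebyshev/Markov argument on the phase $z(t+\beta)/z(t)$ over random $t$) then returns $\tilde f$ with $|\tilde f-f|=O(\Delta)=O(\Delta_k)$. Note that this lemma needs only the cluster width $\Delta$, not $k\Delta$. The per-$f$ success probability is boosted by $O(\log(k/\epsilon))$ repetitions and a union bound over the $O(k/\epsilon)$ disjoint heavy regions, giving overall probability $0.99$.

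\textbf{Step 3 (complexity).} Each bucket search takes $O(\log F)$ levels (the search depth is $\log(F/\Delta)$), with $O(\log(k\log F))$ repetitions per level for union-bounding the levels. Each sample of $z$ costs $O(k/\epsilon)\cdot O(\log(k/\epsilon))^2$ for the hashing over the filter's support. Multiplying these gives the stated sample bound. The time bound replaces the $\log(k\log F)$ factor with the cost $O(\log F)$ of the frequency-domain search per level.

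I expect Step~2 to be the main obstacle. In \cite{SSWZ23} the interval length $k\Delta_k$ was used so that a heavy cluster would be well isolated even when $k$ neighboring clusters are present. With $\Delta=\Delta_k$, one must verify that the tail of $\wh{H_k}*\wh{x}$ leaking into the bucket, together with the noise $\wh{H_k}*\wh{\eta}$, is still only an $O(\epsilon/k)$ fraction. I would handle this purely by energy counting: Parseval bounds the total leakage, and random hashing spreads it evenly across buckets. This avoids any need for separation between clusters, and the $\frac{\epsilon}{5k}$ threshold is exactly what the count requires.
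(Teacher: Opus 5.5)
Your high-level route matches the paper's. The paper gives no new argument here. It invokes Procedure~\textsc{FrequencyEstimationX} with its analysis (Lemma~L.1 and Appendix~K of \cite{SSWZ23}) at $\Delta=\Delta_k$. It notes that this analysis is valid for any $\Delta\ge\Delta_k$ and that its guarantee depends only on the window-energy hypothesis \eqref{eq:condition_freq_recover}. The choice $k\Delta_k$ in \cite{SSWZ23} was needed only to make that hypothesis hold for most frequencies of $x$; here that burden is moved to Theorem~\ref{thm:covering_radius_heavy_region}. Your remark that the location step needs only width $\Delta$ and no cluster separation is exactly this point. The leakage check you flag as the main obstacle is already part of the cited analysis, not something new to prove.

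Where you go beyond the black box and re-derive the internals, two steps fail as written.

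First, Steps~1 and~2 contradict each other. With a constant number of repetitions (Step~1), each heavy region is missed with constant probability, so the union bound over $\Theta(k/\epsilon)$ regions in Step~2 does not go through. With $R=O(\log(k/\epsilon))$ repetitions (Step~2), the union of the per-repetition lists has size $O(\frac{k}{\epsilon}\log\frac{k}{\epsilon})$, not $O(k/\epsilon)$. To get both properties you need a merging step. For example, keep a candidate only if at least $R/2$ repetitions output an estimate within $O(\Delta_k)$ of it, then discard all estimates near a kept candidate. Each kept candidate consumes at least $R/2$ of the $RB$ raw estimates, so at most $2B$ survive. Every heavy region still has a survivor within $O(\Delta_k)$.

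Second, Step~3 does not produce the stated complexity. The factors you list multiply to $O(\frac{k}{\epsilon}(\log\frac{k}{\epsilon})^2\log F\log(k\log F))$. That is a factor $k$ below the claimed $O(\frac{k^2}{\epsilon}(\log\frac{k}{\epsilon})^2\log F\log(k\log F))$, and the same holds for time. So either your breakdown omits a factor of $k$ (in the cost of one hashing round or in the number of rounds), or it would silently improve on \cite{SSWZ23}. Either way it is not a derivation, and the sample and time bounds must be imported from \cite{SSWZ23}, as the paper does.

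A smaller slip: the expected spill-over into $f$'s bucket is $O(1/B)\cdot\|y\|_{[-1,1]}^2=O(\frac{\epsilon}{Ck})\|y\|_{[-1,1]}^2$, not $O(\epsilon/B)\cdot\|y\|_{[-1,1]}^2$. The former is what your comparison with $\frac{\epsilon}{5k}\|y\|_{[-1,1]}^2$ actually uses.
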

We remark that Procedure~\textsc{FrequencyEstimationX} in Algorithm 3 of \cite{SSWZ23} (including Lemma~L.1 and the analysis in Appendix K of \cite{SSWZ23}) can choose any $\Delta \ge \Delta_k$ and output a list $L$ of $\tilde{f}$ such that $\exists \tilde{f} \in L$ with $|\tilde{f}-f| = O(\Delta)$ as long as
\begin{equation}\label{eq:condition_freq_recover}
 \text{ $f$ satisfies } \int_{f-\Delta}^{f+\Delta} |\wh{H_k \cdot y}(u)|^2 \mathrm{d} u \ge \frac{\epsilon}{5k} \cdot \|y\|_{[-1,1]}^2.   
\end{equation}
However, it is highly non-trivial to guarantee condition \eqref{eq:condition_freq_recover} for arbitrary frequencies $f_1,\ldots,f_k$.

The analysis in \cite{SSWZ23} shows that $\Delta:=k \cdot 2\Delta_k=\tilde{O}(k^3)$ guarantees that \eqref{eq:condition_freq_recover} holds for most frequencies in $x$. We explain their choice $\Delta:=k \cdot 2\Delta_k$ as follows. For convenience, we call any pair of frequencies $f_i$ and $f_j$ in $x$ correlated if $|f_i-f_j| \le 2 \Delta_k$ because their Fourier spectra of $H_k \cdot (\alpha_i e^{2 \pi \bi f_i t})$ and $H_k \cdot (\alpha_j e^{2 \pi \bi f_j t})$ have a non-empty intersection. The analysis in \cite{SSWZ23} (including \cite{CKPS17}) partitions all frequencies into clusters by correlations --- $f_i$ and $f_j$ are in the same cluster when they are correlated that is, $|f_i-f_j| \le 2\Delta_k$. Then $\Delta:=k \cdot 2\Delta_k$ is an upper bound  on the length of a cluster because there are at most $k$ frequencies. Next, the analyses in \cite{CKPS17,SSWZ23} show that most clusters have a frequency satisfying Condition~\eqref{eq:condition_freq_recover} (under the noise) because these clusters have disjoint Fourier support (after convolution with $H_k$). On the other hand, counterexamples in \cite{CP19_ICALP} showed that $\Delta$ must be $\tilde{\Omega}(k^2)$.


\paragraph{Our approach.} We discuss how to get a smaller error $\tilde{O}(k^{2.75})$ instead of $\tilde{O}(k^3)$ for the frequency estimates in $L$ here. Together with the improved frequency net $\mathcal{N}=[-F,F] \cap \frac{\mathbb{Z}}{C \cdot k^2}$, this leads to a better learning algorithm in Theorem~\ref{inform:fast_algorithm}.

Our first observation is that for two signals $w(t)$ and $z(t)$ whose Fourier sparsities are $\ell$ and $r$, if $\ell$ and $r$ are small (compared to $k$), $H_k \cdot w$ and $H_k \cdot z$ are almost orthogonal even if they have a large intersection in the Fourier domain.

\begin{claim} \label{clm:almost_orthogonal_clusters}
For two signals of Fourier sparsity $\ell$ and $r$ respectively ($\ell \le r\le k$),
\begin{align*}
    w(t):=\sum_{j=1}^{\ell}\alpha_j e^{2\pi\bi f'_jt} \qquad \text{ and }
    \qquad
    z(t):=\sum_{j=1}^{r}\beta_j e^{2\pi\bi f_jt},
\end{align*}
if the distance between their frequencies $\min_{j,j'} |f_j-f'_{j'}| \ge \min\left\{ C_H\frac{\ell^2 \cdot (r + \log 1/\delta) \cdot \log^2 k}{\delta^2}, 2\Delta_k \right\}$ 
for some constant $C_H$, then
\begin{align*}
    |\langle H_k \cdot w,H_k \cdot z\rangle|
    \le \delta \cdot \|H_k \cdot w\|_2 \cdot \|H_k \cdot z\|_2
    \qquad \text{ and } \qquad
    |\langle w,z\rangle_{[-1,1]}|
    \le \delta \cdot \|w\|_{[-1,1]} \cdot \|z\|_{[-1,1]} .
\end{align*}
\end{claim}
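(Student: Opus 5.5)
Split into two cases according to which term attains the minimum in the separation hypothesis.

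\emph{Case 1: the gap is at least $2\Delta_k$.} By Lemma~\ref{lemm:construction_H}, $\supp(\wh{H_k})\subseteq[-\Delta_k,\Delta_k]$. Hence $\wh{H_k\cdot w}=\sum_j \alpha_j \wh{H_k}(\cdot-f'_j)$ is supported in $\bigcup_j[f'_j-\Delta_k,f'_j+\Delta_k]$, and similarly for $\wh{H_k\cdot z}$. If the separation exceeds $2\Delta_k$, these supports meet in a null set. Parseval (Theorem~\ref{thm:Plancherel_Parseval}) then gives $\langle H_k w,H_k z\rangle=0$. For the second inequality, note that $w+cz$ is $(\ell+r)\le 2k$-sparse for every scalar $c$. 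Apply property 2 of Lemma~\ref{lemm:construction_H} with $H_k$ built for sparsity $2k$; this only changes constants, or one may simply take the analysis filter $H_{2k,\delta}$. Polarization gives
\[
\langle w,z\rangle_{[-1,1]}=\langle Hw,Hz\rangle\pm O(\delta)\,(\|w\|_{[-1,1]}^2+\|z\|_{[-1,1]}^2).
\]
Rescale $w,z$ to equal norm and use homogeneity to obtain the product form.

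\emph{Case 2: the gap is at least $G:=C_H\ell^2(r+\log\frac1\delta)\log^2k/\delta^2$, which is smaller than $2\Delta_k$.} Here the Fourier supports of $H_kw$ and $H_kz$ may overlap, so instead I would use a cheaper filter adapted to the smaller signal $w$. Take the analysis filter $H':=H_{\ell',\delta'}$ with $\ell'\approx\ell$ and $\delta'\approx\delta/\sqrt{r+\log(1/\delta)}$, up to log factors. Its support width is $\Delta_{\ell',\delta'}\approx \ell^2\log/\delta'$, and the choice of $\delta'$ is meant to make this at most $G/2$. Then $\wh{H'w}$ is supported within distance $G/2$ of $\{f'_j\}$ and $\wh{z}$ is a sum of deltas at the $f_j$, so $\langle H'w,z\rangle=0$ over $\mathbb R$ exactly. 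It remains to compare $\langle w,z\rangle_{[-1,1]}$ with $\langle H'w,z\rangle$. Their difference is $\int (\mathbf 1_{[-1,1]}-H')\,w\bar z$, which splits into two pieces:
\begin{enumerate}
\item[(a)] the contribution from the region near $\pm1$ where $H'\neq1$;
\item[(b)] the contribution outside $[-1,1]$, where $z$ may grow.
\end{enumerate}
Piece (b) is the crux. I would bound it by Cauchy--Schwarz as $\|(\mathbf 1-H')w\|\cdot\|z\cdot\text{(weight)}\|$. For $|t|>1$, Lemma~\ref{lemma:bounds_Fourier_sparse_signals} items 3--4 give $|z(t)|\le \poly(k)e^{O(k^2\log k)(|t|-1)}\|z\|_{[-1,1]}$. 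The tail factors $\sinc(\cdot)^{C\log}$ of $H'$ decay faster than this growth, beyond a scale set by $r$ rather than $k$; presumably this is why $r$ enters $G$ additively. Piece (a) is handled using property 2 for $w$ alone together with the uniform bound $|z|\le \frac{\pi r}{2}\|z\|$ from item 1. Balancing (a) and (b) against $\delta\|w\|\|z\|$ fixes $\delta'$, and this balance is where I expect the main difficulty. The product-form error arises because $\|(\mathbf 1-H')w\|\lesssim\sqrt{\delta'}\|w\|$ and the $z$-side carries the factor $\sqrt{r+\log(1/\delta)}$.

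Finally, transfer from the restricted inner product to the $H_k$ one. In either case $w+cz$ is $2k$-sparse, so polarization with property 2 of $H_k$ gives
\[
\langle H_kw,H_kz\rangle=\langle w,z\rangle_{[-1,1]}\pm O(\epsilon)\|w\|\|z\|.
\]
Since $\epsilon$ may exceed $\delta$, I would rather rerun the Case 2 argument with $H_k^2$ in place of $\mathbf 1_{[-1,1]}$. Note that $\wh{H_k^2}$ has support width $2\Delta_k$, which is too wide for this. Instead, write $H_k^2 H'$ and show $\langle H_k^2H'w,z\rangle$ is small, using that $H_k^2\approx \mathbf 1_{[-1,1]}$ in the same tail sense. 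Then conclude with $\|H_k\cdot\|_2\approx\|\cdot\|_{[-1,1]}$.
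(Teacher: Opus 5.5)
Your skeleton matches the paper's. When the gap is at least $2\Delta_k$ you get exact cancellation from disjoint Fourier supports. Otherwise you use a filter $M$ whose Fourier support is below the gap, so that $\int_{\R} M w\bar z=0$, and split $\int(\mathbf 1_{[-1,1]}-M)w\bar z$ into a boundary-layer piece and a tail piece. The genuine gap is the choice of filter, exactly at the step you flag as the crux.

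With $H'=H_{\ell',\delta'}$ and $\ell'\approx\ell$, Claim~\ref{claim:H-bounds} gives a tail of only
\begin{itemize}
\item $(\delta'/\ell)^{\Omega(C)}e^{-\Omega(C\ell^2\log\ell\,(|t|-1))}$ on $[1,1+1/C]$, and
\item $(\delta'/\ell)^{\Omega(C)}\pi^{-\Omega(C\ell^2\log\ell)}$ up to $|t|=2$.
\end{itemize}
For the $r$-sparse $z$, the available bounds allow growth at rate $r^2\log r$ (item 4 of Lemma~\ref{lemma:bounds_Fourier_sparse_signals}), up to size $e^{O(r)}\|z\|_{[-1,1]}$ on $[1,2]$ (item 3). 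Growth $e^{\Omega(r)}$ on $[1+1/C,2]$ is actually attained, since $r$-sparse signals approximate the Chebyshev polynomial $T_{r-1}$ on $[-2,2]$. So your assertion that the $\sinc(\cdot)^{C\log}$ factors of $H'$ outpace the growth of $z$ is false once $r\gg\ell^2\log\ell$. No choice of $\delta'$ helps: forcing $(\delta'/\ell)^{\Omega(C)}\le e^{-\Omega(r)}$ makes the support, which is at least $\ell^2/\delta'$, exponential in $r$.

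The missing idea is that the tail must be tuned to the sparsity $r$ of $z$, while the boundary layer is tuned to $\ell$ and $\delta$. The paper takes $M=s_M\,\sinc(C\ell^2t/\delta^2)^{C(r+\log(1/\delta))}*\rect_{2\alpha_M}$ with $\alpha_M=1-\delta^2/(C\ell^2)$.
\begin{itemize}
\item The exponent $C(r+\log\frac1\delta)$ gives $|M(t)|\le\delta^{\Omega(C)}\pi^{-\Omega(Cr)}$ already on $1\le|t|\le 2$, and $\delta^{\Omega(C)}|Ct|^{-\Omega(Cr)}$ beyond. This beats $(e(|t|+1))^{\ell+r}$ from item 3 alone.
\item The layer width $\delta^2/(C\ell^2)$ makes the boundary piece $O(\delta)\|w\|_{[-1,1]}\|z\|_{[-1,1]}$, by Cauchy--Schwarz and $|w|^2\le O(\ell^2)\|w\|_{[-1,1]}^2$.
\item The resulting support $C^2\ell^2(r+\log\frac1\delta)/\delta^2$ is exactly how $r$ enters the separation.
\end{itemize}
Relatedly, your $\delta'\approx\delta/\sqrt{r+\log(1/\delta)}$ gives a layer of width $\Theta(\delta'/\ell^2)$ and hence a boundary error $O(\sqrt{\delta'})$, which exceeds $\delta$. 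The layer has to have width $\delta^2/\ell^2$.

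The transfer to $H_k$ is also missing its key observation. In Case 2 the gap is below $2\Delta_k=\tilde{O}(k^2/\epsilon)$ but at least $C_H\ell^2 r\log^2 k/\delta^2$. This forces $\delta^2\gtrsim \epsilon\ell^2 r/k^2$, which is what the factor $\log^2 k$ is for. One then compares $\int_{\R}H_k^2 w\bar z$ directly with $\int_{-1}^1 w\bar z$:
\begin{itemize}
\item on the interior, $|H_k^2-1|\le(\epsilon/k)^{\Omega(C)}$;
\item the layer of width $O(\epsilon/k^2)$ contributes $O(\sqrt{\epsilon}\,\ell/k)\|w\|_{[-1,1]}\|z\|_{[-1,1]}=O(\delta)\|w\|_{[-1,1]}\|z\|_{[-1,1]}$;
\item the tails are $\poly(k)(\epsilon/k)^{\Omega(C)}$.
\end{itemize}
No $H_k^2H'$ device is needed; its Fourier support is even wider, so it yields no cancellation.

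Finally, your Case 1 derivation of the $[-1,1]$ bound fails. $H_{2k,\epsilon}$ only gives error $\epsilon$, while $H_{2k,\delta}$ has support of order $k^2\log(k/\delta)/\delta\gg\Delta_k$, so the supports are no longer disjoint. That bound is in fact false in this regime. Two pure tones at distance $D$ slightly above $2\Delta_k$ have normalized correlation $|\sin(2\pi D)|/(2\pi D)=\Theta(1/\Delta_k)$ for generic $D$, which exceeds $\delta$ when $\delta\ll 1/\Delta_k$. For $\ell=r=1$, such $\delta$ indeed makes $2\Delta_k$ the minimum. The paper proves the $[-1,1]$ inequality only under the first separation condition, and only the $H_k$ inequality is used downstream.
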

If we set $\ell=k^{1/4}$ and $\delta=\epsilon$, this implies that $H_k \cdot w$  and $H_k \cdot z$ are almost orthogonal when their frequencies are separated by $\tilde{\Omega}(k^{1.5})$. However, $\Delta_k=\tilde{\Theta}(k^2)$ implies that $\wh{H_k \cdot w}$ and $\wh{H_k \cdot z}$ have a large intersection among their Fourier supports. The proof of Claim~\ref{clm:almost_orthogonal_clusters} is a modification of the proof of Lemma~\ref{lemm:construction_H} (essentially with different parameters), which is deferred to Appendix~\ref{sec:proof_clm_almost_orthogonal}.

Now we propose Algorithm~\ref{alg:cluster1} to partition $(f_1,\alpha_1),\ldots,(f_k,\alpha_k)$ in the support of $\wh{x}$ into clusters. 
Here are some \emph{definitions of clusters}. For a cluster $\cC=\{(f_1,a_1),\ldots,(f_\ell,a_\ell)\}$ in $\wh{x}$, let $x_{\cC}(t):=\sum_{j = 1}^{\ell} a_j e^{2\pi \bi f_j t}$.  Then we define $|\cC|:=\ell$ and call it the size (the number of frequencies) of $\cC$.
For two clusters, let $\operatorname{dist}(\cC,\cC') :=\underset{(f,a)\in\cC,\,(f',a')\in\cC'}{\min} |f-f'|$. 

We remark that Algorithm~\ref{alg:cluster1} is only used in the analysis of frequency estimates produced by Lemma~\ref{lem:find_heavy_frequencies}, because $f_1,\ldots,f_k$ are unknown. The goal of this algorithm is to partition $f_1,\ldots,f_k$ into as many clusters as possible while ensuring that any two clusters are almost orthogonal. So the distance threshold in the while loop of Algorithm~\ref{alg:cluster1}, $\min\bigg\{ d_{min} \cdot \min\{|\cC_i|^2,|\cC_j|^2\}, 2\Delta_k \bigg\}$ for $d_{min}:=\frac{2 C_H k^{1.5} \log^3 (k/\epsilon)}{\epsilon^2}$, is a relaxation of the distance $\min\left\{ C_H\frac{\ell^2 \cdot (r + \log 1/\delta) \cdot \log^2 k}{\delta^2}, 2\Delta_k \right\}$ in Claim~\ref{clm:almost_orthogonal_clusters}. The factor $k^{1.5}/\epsilon^2$ in $d_{min}$ comes from the facts that (1) the larger cluster of $\cC_i$ and $\cC_j$ may have size $\Omega(k)$ and (2) the correlation $\delta$ (in Claim~\ref{clm:almost_orthogonal_clusters}) needs to be less than $\epsilon/k^{1/4}$ for our proof.

\begin{algorithm} 
    \caption{Partition Frequencies into Clusters \label{alg:cluster1}} 
    \begin{algorithmic}
        \Procedure{}{frequencies $f_1,\ldots,f_k$ with amplitudes $\alpha_1,\ldots,\alpha_k$}.      
        \State Define $k$ clusters $\cC_i:=\{(f_i,\alpha_i)\}$ and $d_{min}:=\frac{2 C_H k^{1.5} \log^3 (k/\epsilon)}{\epsilon^2}$
        
        \While{ $\exists~\cC_i$ and $\cC_j$ such that $       
        \operatorname{dist}(\cC_i,\cC_j) \le \min\bigg\{ d_{min} \cdot \min\{|\cC_i|^2,|\cC_j|^2\}, 2\Delta_k \bigg\}
        $}
        \State  merge all clusters whose frequencies lie between $\cC_i$ and
    $\cC_j$ into one cluster
        \EndWhile
        \State Return all remaining clusters $\cC$
    \EndProcedure
    \end{algorithmic}
\end{algorithm}

Let $\cC_1,\ldots,\cC_n$ be the remaining clusters of Algorithm~\ref{alg:cluster1}, ordered by
their frequencies. By Algorithm~\ref{alg:cluster1}, the distance between any two different clusters $\cC_i$ and $\cC_j$ either satisfies $\dist(\cC_i,\cC_j) > 2 \Delta_k$ in which case they are orthogonal or lies in $(d_{min} \cdot \min\{|\cC_i|^2,|\cC_j|^2\},2 \Delta_k]$. In the second case, we call them correlated as in previous works \cite{CKPS17,SSWZ23}. Equivalently, $\cC_i$ and $\cC_j$ are correlated only if $\dist(\cC_i,\cC_j) \le 2\Delta_k$; otherwise, their Fourier supports (after the convolution with $\wh{H_k}$) are disjoint. For two correlated clusters $\cC_i$ and $\cC_j$, Claim~\ref{clm:almost_orthogonal_clusters} implies that 
\begin{equation}\label{eq:orthogonal}
|\langle H_k x_{\cC_i},H_k x_{\cC_j}\rangle| \le \frac{\epsilon \cdot \sqrt{\max\{|\cC_i|,|\cC_j|\}}}{k^{3/4}} \cdot \|H_k x_{\cC_i}\|_2 \cdot \|H_k x_{\cC_j}\|_2 \le \frac{\epsilon}{k^{1/4}} \cdot \|H_k x_{\cC_i}\|_2 \cdot \|H_k x_{\cC_j}\|_2.    \end{equation}

At the same time, each cluster is correlated with at most $2\frac{\Delta_k}{d_{min}} = \tilde{O}
(k^{0.5})$ clusters. While the correlation coefficient $\frac{\epsilon}{k^{1/4}}$ in \eqref{eq:orthogonal} is $\tilde{\omega}(1/k^{0.5})$, our key technical result (Theorem~\ref{thm:total_energy} in Section~\ref{sec:heavy_region}) shows that the clusters generated by Algorithm~\ref{alg:cluster1} satisfy
\begin{equation}\label{eq:sum_energies}
    \forall S \subseteq [n], \sum_{j \in S} \|H_k \cdot x_{\cC_j}\|_2^2 \approx (1\pm \epsilon) \cdot \|\sum_{j \in S} H_k \cdot x_{\cC_j}\|_2^2.
\end{equation}
In particular, for $S=[n]$, this implies that the energy contributed by every cluster $H_k \cdot x_{\cC_j}$ in $H_k \cdot x$ is about $\|H_k \cdot x_{\cC_j}\|_2^2$. 

Then we generalize the definition of heavy frequencies in \cite{CKPS17} to heavy clusters. We say that a cluster $\cC_i$ is heavy if and only if $\|H_k \cdot \cC_i\|_2^2 \ge \frac{\epsilon \cdot |\cC_i|}{k} \cdot \|H_k \cdot x\|_2^2$. \eqref{eq:sum_energies} implies that the total energy of light clusters is $\epsilon \cdot \|H_k x\|_2^2$. Therefore, it is safe to neglect light clusters and focus on heavy clusters.

Next, we show that the covering radius of $L$ is at most $\tilde{O}(k^{2.75})$ (see Theorem~\ref{thm:covering_radius_heavy_region} in Section~\ref{sec:heavy_region} for a formal statement) --- for most heavy clusters $\cC_i$, there exists $\tilde{f}$ in $L$ (output by Lemma~\ref{lem:find_heavy_frequencies}) with $\min_{f \in \cC_i} |f-\tilde{f}|=\tilde{O}(k^{2.75})$. Because we can not guarantee the estimation error of every $f_i$ under adversarial noise, it is more precise to use covering radius in the rest of this work. We \emph{define} $\operatorname{range}(\cC):= [\underset{(f,a)\in\cC}{\min} f,\underset{(f,a)\in\cC}{\max} f]$ and $\operatorname{range}(\cC)\pm\Delta := [\underset{(f,a)\in\cC}{\min} f-\Delta,\underset{(f,a)\in\cC}{\max} f+\Delta]$. 
If a heavy cluster $\cC$ has $|\cC| \ge \sqrt{k}$, one can extend the proof of \eqref{eq:sum_energies} to show
\begin{equation}\label{eq:large_clusters_energy}
\int_{\operatorname{range}(\cC)\pm\Delta_k} |\wh{H_k \cdot x}(f)|^2 \mathrm{d} f = \| \wh{H_k \cdot x} \|_{\operatorname{range}(\cC)\pm\Delta_k}^2 \ge (1-O(\epsilon)) \cdot \|H_k \cdot x_{\cC}\|_2^2.    
\end{equation}

In the noiseless setting, an averaging argument shows that at least one frequency of $\cC$ satisfies the condition in \eqref{eq:condition_freq_recover}. This is because $\|H \cdot x_{\cC}\|_2^2$ on the right-hand side of \eqref{eq:large_clusters_energy} is at least $\frac{\epsilon \cdot |\cC|}{k} \cdot \|H x\|_2^2$ and there are at most $|\cC|$ frequencies. So the length of $\operatorname{range}(\cC)$ plus $2 \Delta_k$ provides an upper bound on the covering radius. On the other hand, we can prove $\operatorname{range}(\cC) = \tilde{O}(k^{2.75})$ based on the merging condition in Algorithm~\ref{alg:cluster1}. Roughly speaking, $\tilde{O}(k^{2.75})$ comes from $k$ frequencies forming $k^{3/4}$ groups of size $k^{1/4}$ with distance $\Delta_k$ between any two adjacent groups. This provides an estimate of the covering radius in the noiseless setting. 

However, we need a finer estimate (see Corollary~\ref{cor:localized_theta_small_clusters}) of $\int_{\operatorname{range}(\cC)\pm\Delta_k} |\wh{H_k \cdot x}(f)|^2 \mathrm{d} f$ for adversarial noise $\eta$ with a bounded $\ell_2$ norm $\|\eta\|_{[-1,1]}^2 \le \epsilon \cdot \|x\|_{[-1,1]}^2$. In Section~\ref{sec:heavy_region}, we provide a formal proof on the covering radius under noise. In Section~\ref{sec:heavy-region-clustering-conjecture}, assuming Conjecture~\ref{conj:growth_rate}, we show the improvements to Claim~\ref{clm:almost_orthogonal_clusters} and Algorithm~\ref{alg:cluster1} and a better covering radius $\tilde{O}(k^2)$ underlying Theorem~\ref{inform:algorithm_conjecture}.

\section{Net of Frequencies}\label{sec:net_freqs}
We present the proof of Lemma~\ref{lem:shift_one_freq} in Section~\ref{sec:proof_shifting} and discuss two important corollaries of this result. A direct corollary of Lemma~\ref{lem:shift_one_freq} provides a net on ``off-grid" frequencies.


\begin{theorem}\label{thm:net_frequency}
Given any $\epsilon$, let $\mathcal{N}:=\frac{\epsilon}{C k^2} \cdot\mathbb{Z} \cap [-F,F]$ be the net of frequencies for a large constant $C$. For any $x(t):=\sum_{j=1}^k \alpha_j e^{2 \pi \bi f_j t}$ with $k$ arbitrary frequencies in $[-F,F]$, there exists $x'(t):=\sum_{j=1}^k \alpha'_j e^{2 \pi \bi f'_j t}$ whose frequencies $f'_1,\ldots,f'_k$ are in $\mathcal{N}$ such that
\[
\|x'-x\|_{[-1,1]} \le \epsilon \cdot \|x\|_{[-1,1]}.
\]
\end{theorem}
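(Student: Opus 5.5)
The plan is to apply Lemma~\ref{lem:shift_one_freq} $k$ times, moving the frequencies onto $\mathcal{N}$ one at a time, and to control the accumulated error by a telescoping triangle inequality. Order the frequencies $f_1,\ldots,f_k$ arbitrarily. For each $j$, let $f'_j\in\mathcal{N}$ be a nearest net point, so $|f_j-f'_j|\le \frac{\epsilon}{Ck^2}$ (net points inside $[-F,F]$ exist within this spacing of every $f_j\in[-F,F]$). If two rounded frequencies coincide, or a rounded frequency equals an unrounded one still present, then $x'$ simply has fewer than $k$ distinct frequencies. This is harmless because a $k'$-sparse signal with $k'<k$ is still of the required form with some zero amplitudes. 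To keep the hypothesis $f_{k+1}\notin\{f_1,\ldots,f_{k-1}\}$ of Lemma~\ref{lem:shift_one_freq} satisfied, I would first merge coinciding terms, or equivalently note that shifting onto an existing frequency just deletes a term, and that deletion is covered by taking the projection.

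Define the intermediate signals $x^{(0)}:=x$ and $x^{(j)}$, obtained from $x^{(j-1)}$ by replacing frequency $f_j$ with $f'_j$ via Lemma~\ref{lem:shift_one_freq}. By symmetry of the sum, the lemma applies to any chosen index, not just the $k$-th. This gives
\[
\|x^{(j)}-x^{(j-1)}\|_{[-1,1]}\le C_0\,|f_j-f'_j|\,\|x^{(j-1)}\|_{[-1,1]}\le \frac{C_0\epsilon}{Ck^2}\|x^{(j-1)}\|_{[-1,1]}.
\]
By induction, $\|x^{(j)}\|_{[-1,1]}\le (1+\frac{C_0\epsilon}{Ck^2})^j\|x\|_{[-1,1]}\le 2\|x\|_{[-1,1]}$ for $j\le k$. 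Summing the increments then gives
\[
\|x^{(k)}-x\|_{[-1,1]}\le k\cdot\frac{2C_0\epsilon}{Ck^2}\|x\|_{[-1,1]}\le\epsilon\|x\|_{[-1,1]}
\]
once $C\ge 2C_0$. In fact this only uses a net of spacing $\epsilon/(Ck)$, so the stated $k^2$ leaves slack, and I would not try to tighten it. Set $x':=x^{(k)}$.

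The main obstacle is purely in applying Lemma~\ref{lem:shift_one_freq} correctly: its relative error is measured against the current signal $x^{(j-1)}$, not the original $x$, and this is why the norm-growth induction above is needed. Degenerate collisions must also be handled as described. The real difficulty lies in Lemma~\ref{lem:shift_one_freq} itself, which we assume; given that lemma, the theorem is a routine hybrid argument.
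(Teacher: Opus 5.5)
Your hybrid argument and the norm-growth bookkeeping are fine, but your treatment of collisions is a genuine gap, and it is exactly where the $k^2$ in the net spacing comes from. The hypothesis $f_{k+1}\notin\{f_1,\ldots,f_{k-1}\}$ in Lemma~\ref{lem:shift_one_freq} is not cosmetic. If the target frequency already occurs among the remaining ones, then $v_{k+1}\in V_{k-1}$, so $v_{k+1}^{\perp}=0$. The best you can then do is project onto $V_{k-1}$, at cost $|\alpha_k|\,\|v_k^{\perp}\|_{[-1,1]}$, and this can equal $\|x\|_{[-1,1]}$.

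Concretely, take $k=2$ and $\eta=\epsilon/(Ck^2)$, with $f_1=\eta/10$ and $f_2=\eta/5$; both round to the net point $0$. Let $x(t)=\frac{e^{2\pi\bi f_2t}-e^{2\pi\bi f_1t}}{2\pi\bi(f_2-f_1)}$, so $x(t)=t+O(\eta)$ on $[-1,1]$. After your rounding and merging, $x'$ is a constant. Since $t$ is orthogonal to constants on $[-1,1]$, you get $\|x-x'\|_{[-1,1]}\ge(1-O(\eta))\|x\|_{[-1,1]}$. So the claim that deleting a term ``is covered by taking the projection'' is false, and the conclusion fails for nearest-point rounding. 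The same problem arises at intermediate steps: with an arbitrary order, a rounded $f'_j$ can land on a not-yet-moved $f_i$ that already lies in $\mathcal{N}$.

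The missing idea is to choose $k$ \emph{distinct} net points, together with an order of replacement that never creates a collision. The paper sorts $f_1<\cdots<f_k$ and sets $f'_j=\max\{\eta\lceil f_j/\eta\rceil,\,f'_{j-1}+\eta\}$. It then replaces $f_k,f_{k-1},\ldots,f_1$ in that order. At each step $f'_j\ge f_j>f_i$ for every untouched $i<j$, and $f'_j$ is $\eta$-separated from the already placed $f'_{j+1},\ldots,f'_k$, so the lemma applies every time.

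The price is that a frequency may move by up to $k\eta$ instead of $\eta$. This is unavoidable when many frequencies share one net cell and must be spread over distinct net points. So the per-step error is $O(k\eta)$, and the total is $(1+O(k\eta))^k-1=O(k^2\eta)$. This is precisely why the spacing $\epsilon/(Ck^2)$ is needed; your remark that spacing $\epsilon/(Ck)$ suffices does not hold for this argument.
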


This net bounds the query complexity of learning $x'$ and $x$ because the total number of possible $k$ frequencies is $|\mathcal{N}|^k=(kF/\epsilon)^{O(k)}$.
\begin{corollary}\label{cor:query_complexity_learning}
    Given any $F$ and $\epsilon$, let $y(t):=x(t)+\eta(t)$ for $x(t):=\sum_{j=1}^k \alpha_j e^{2 \pi \bi f_j t}$ be our observation over $[-1,1]$ with $k$ arbitrary frequencies $f_1,\ldots,f_k \in [-F,F]$ and $\|\eta(t)\|_{[-1,1]}^2 \le \epsilon \cdot \|x(t)\|_{[-1,1]}^2$. There exists an algorithm that takes $O(k^2 \cdot \log k \cdot \log \frac{kF}{\epsilon})$ samples and $(\frac{kF}{\epsilon})^{O(k)}$ time to output a $k$-Fourier-sparse signal $\tilde{x}$ such that with probability 0.99,
    \[
    \|\tilde{x}-x\|_{[-1,1]}^2 = O(\epsilon \cdot \|x\|_{[-1,1]}^2 + \|\eta\|_{[-1,1]}^2).
    \]
\end{corollary}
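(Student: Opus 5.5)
The plan is a net argument plus importance-sampled least squares over the finite hypothesis family supplied by Theorem~\ref{thm:net_frequency}. Theorem~\ref{thm:net_frequency} gives $x'$ with frequencies in $\mathcal{N}$ and $\|x'-x\|_{[-1,1]}\le \epsilon\|x\|_{[-1,1]}$. Set $y = x' + \eta'$ with $\eta' := \eta + (x - x')$, so that $\|\eta'\|_{[-1,1]}^2 = O(\epsilon\|x\|_{[-1,1]}^2+\|\eta\|_{[-1,1]}^2)$. The hypothesis class is then the union, over all $k$-subsets $\Lambda\subseteq\mathcal{N}$, of the $k$-dimensional linear spaces $V_\Lambda:=\mathrm{span}\{e^{2\pi\bi f t}: f\in\Lambda\}$. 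The number of such subspaces is $|\mathcal{N}|^k=(kF/\epsilon)^{O(k)}$, so $\log(\#\text{subspaces}) = O(k\log\frac{kF}{\epsilon})$.

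For sampling, I would draw $m$ i.i.d.\ points from the density $D(t)\propto \min\{\frac{2k}{1-|t|},\, (\pi k/2)^2\}$ on $[-1,1]$, the leverage-type distribution of \cite{CP19_colt}. By items~\ref{item:uniform_bound_on_interval} and \ref{item:leverage_bound_on_interval} of Lemma~\ref{lemma:bounds_Fourier_sparse_signals}, every $2k$-Fourier-sparse $z$ satisfies $\sup_t |z(t)|^2/(D(t)\|z\|_{[-1,1]}^2) \le \kappa$ with $\kappa=\int \min\{\frac{4k}{1-|t|},\pi^2k^2\}\,\mathrm{d}t=O(k\log k)$; the signals are $2k$-sparse because differences of two hypotheses are needed. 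For a fixed pair $\Lambda,\Lambda'$, the space $V_\Lambda+V_{\Lambda'}$ has dimension at most $2k$. The matrix Chernoff bound for weighted samples (as in \cite{CP19_colt}) then gives a subspace embedding $\frac{1}{m}\sum_i \frac{|z(t_i)|^2}{2D(t_i)}=(1\pm\frac12)\|z\|_{[-1,1]}^2$ for all $z$ in that space, with failure probability $2k\cdot e^{-\Omega(m/\kappa)}$. A union bound over all $(kF/\epsilon)^{O(k)}$ pairs requires $m = O(\kappa\cdot k\log\frac{kF}{\epsilon}) = O(k^2\log k\log\frac{kF}{\epsilon})$, which matches the claimed sample count.

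The algorithm enumerates every $\Lambda$, solves the weighted least squares $\min_{z\in V_\Lambda}\sum_i w_i|z(t_i)-y(t_i)|^2$ with $w_i=1/(mD(t_i))$, and outputs the overall minimizer $\tilde{x}$. Its running time is $(kF/\epsilon)^{O(k)}\cdot\poly(m)$. For the error analysis, write $\|\cdot\|_S$ for the weighted empirical norm. Since the noise $\eta'$ is arbitrary, it cannot be controlled uniformly by the union bound; instead, Markov's inequality gives $\|\eta'\|_S^2\le 100\|\eta'\|_{[-1,1]}^2$ with probability $0.99$, because $\E\|\eta'\|_S^2=\|\eta'\|_{[-1,1]}^2$. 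Then $\|\tilde{x}-x'\|_S\le \|\tilde{x}-y\|_S+\|\eta'\|_S\le \|x'-y\|_S+\|\eta'\|_S=2\|\eta'\|_S$. Because $\tilde{x}-x'$ lies in some $V_\Lambda+V_{\Lambda'}$, the embedding converts this to $\|\tilde{x}-x'\|_{[-1,1]}^2=O(\|\eta'\|_{[-1,1]}^2)$, and the triangle inequality with $x'$ finishes the proof.

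The main obstacle I expect is making the union bound cost exactly $k^2\log k\log\frac{kF}{\epsilon}$. Two points need care. First, the condition number $\kappa$ must be obtained from Erd\'elyi's bounds for $2k$-sparse signals, giving $O(k\log k)$ rather than $k^2$. Second, the matrix Chernoff tail $e^{-\Omega(m/\kappa)}$ must be applied with constant distortion, so that the logarithm of the number of subspaces enters only additively through $m/\kappa\gtrsim k\log\frac{kF}{\epsilon}$.
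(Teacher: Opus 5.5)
Your proposal is correct and follows essentially the same route as the paper: it uses the net from Theorem~\ref{thm:net_frequency}, Erd\'{e}lyi-based importance sampling with $\kappa=O(k\log k)$, a union bound over the $(kF/\epsilon)^{O(k)}$ subspaces of signals with at most $2k$ frequencies in the net, Markov's inequality on the residual $y-x'$, and brute-force weighted least squares. The only difference is that you get each subspace embedding from a matrix Chernoff bound (failure $O(k)\,e^{-\Omega(m/\kappa)}$), whereas the paper applies a scalar Chernoff bound on a $1/10$-net of size $2^{O(k)}$ of each unit sphere; both cost the same in the union bound and give $m=O(k^2\log k\log\frac{kF}{\epsilon})$.
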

Because the proofs of Theorem~\ref{thm:net_frequency} and Corollary~\ref{cor:query_complexity_learning} follow the same outline of Lemma 2.1 in \cite{CKPS17} and  Corollary~9.7 in \cite{CP19_colt} separately, we defer them to Appendix~\ref{sec:proofs}.

The second application of Theorem~\ref{thm:net_frequency} is an approximation of $x$ based on frequency estimations and low-degree expansions, which is the foundation of our efficient recovery algorithms in Theorem~\ref{inform:fast_algorithm} and Theorem~\ref{inform:algorithm_conjecture}. Plugging the frequency gap $\frac{\epsilon}{k^2}$ of $x'$ in Theorem~\ref{thm:net_frequency} to Lemma 8.7 in \cite{CKPS17}, we have the following approximation of $x'$.

\begin{lemma}\label{lem:low_deg_approx}
    Let $x(t):=\sum_{j=1}^k \alpha_j e^{2 \pi \bi f_j t}$ and $L:=\{\tilde{f}_1,\ldots,\tilde{f}_\ell\}$ such that the covering radius of $L$ is $\Delta$: $\forall f_i, \min_{\tilde{f}_j \in L} |f_i - \tilde{f}_j| \le \Delta$.
    Then there exist polynomials $q_1,\ldots,q_{\ell}$ of degree $D:=O(\Delta+k^2 \log k/\epsilon)$ such that
    $\|x - \sum_{j=1}^\ell e^{2 \pi \bi \tilde{f}_j t} \cdot q_j(t)\|_{[-1,1]} \le 2\epsilon \cdot \|x\|_{[-1,1]}$.
\end{lemma}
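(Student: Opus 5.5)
We prove Lemma~\ref{lem:low_deg_approx} in three steps: replace $x$ by a nearby signal with well-separated frequencies, Taylor-expand each wave around its nearest estimate in $L$, and control the total truncation error by a bound on the amplitudes in terms of $\|x\|_{[-1,1]}$.

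\textbf{Step 1: move to the net.} By Theorem~\ref{thm:net_frequency} with error parameter $\epsilon$, there is $x'(t)=\sum_{j=1}^k \alpha'_j e^{2\pi\bi f'_j t}$ with $f'_j\in\mathcal{N}=\frac{\epsilon}{Ck^2}\mathbb{Z}$ and $\|x-x'\|_{[-1,1]}\le \epsilon\|x\|_{[-1,1]}$. Coinciding frequencies can be merged, so the distinct frequencies of $x'$ have gap $\eta:=\frac{\epsilon}{Ck^2}$. Each rounding moves a frequency by at most $\eta$, so the covering radius of $L$ for the $f'_j$ is at most $\Delta+\eta$. Hence it suffices to approximate $x'$ within $\epsilon\|x\|_{[-1,1]}$ (up to constants).

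\textbf{Step 2: amplitude bound (the main obstacle).} This is where Lemma 8.7 of \cite{CKPS17} enters. We need
\[
\sum_j|\alpha'_j|^2\le \eta^{-O(k)}k^{O(k)}\|x'\|_{[-1,1]}^2 = (k/\epsilon)^{O(k)}\|x'\|_{[-1,1]}^2 .
\]
To get it, lower-bound the smallest eigenvalue of the Gram matrix $G_{ij}=\langle e^{2\pi\bi f'_i t},e^{2\pi\bi f'_j t}\rangle_{[-1,1]}$ by $(\eta/k)^{O(k)}$. One route is Gram–Schmidt: the distance from $v_k$ to $\mathrm{span}\{v_1,\ldots,v_{k-1}\}$ is at least $(c\eta)^{k-1}$. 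This follows by comparing with a polynomial/divided-difference argument, since multiplication by $e^{-2\pi \bi f t}$ and derivatives relate exponential sums to Vandermonde-type determinants. Alternatively, use the known bound on exponential sums with separated frequencies, which is exactly what \cite{CKPS17} proves. Whichever route is used, the key point is that the dependence on the gap is $\eta^{-O(k)}$, so $\log$ of the amplitude blow-up is only $O(k\log(k/\epsilon))$.

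\textbf{Step 3: Taylor expansion.} Assign each $f'_j$ to some $\tilde f_{i(j)}\in L$ with $|f'_j-\tilde f_{i(j)}|\le\Delta+\eta$. Write $e^{2\pi\bi f'_j t}=e^{2\pi\bi\tilde f_{i(j)}t}\cdot e^{2\pi\bi(f'_j-\tilde f_{i(j)})t}$ and truncate the second factor's Taylor series at degree $D$. On $|t|\le1$ the truncation error is at most
\[
\frac{(2\pi(\Delta+\eta))^{D+1}}{(D+1)!}\le 2^{-D}
\qquad\text{once } D\ge C'(\Delta+1).
\]
Define $q_i$ as the sum of the truncated series over all $j$ with $i(j)=i$. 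The total error is then at most $\sqrt2\sum_j|\alpha'_j|2^{-D}\le \sqrt{2k}\,(k/\epsilon)^{O(k)}2^{-D}\|x'\|_{[-1,1]}$. Choosing $D=O(\Delta+k\log(k/\epsilon))$ already makes this at most $\epsilon\|x\|_{[-1,1]}$, which is within the stated $O(\Delta+k^2\log k/\epsilon)$ (the looser stated bound absorbs a weaker, e.g.\ $(k/\epsilon)^{O(k^2)}$, form of Step 2 if one prefers that).

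Combining the errors from Steps 1 and 3 by the triangle inequality gives a total of at most $2\epsilon\|x\|_{[-1,1]}$, after rescaling constants.
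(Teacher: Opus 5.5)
Your proposal is correct and is essentially the paper's argument: Theorem~\ref{thm:net_frequency} gives a nearby signal whose distinct frequencies lie on $\frac{\epsilon}{Ck^2}\mathbb{Z}$, and the gap-based amplitude bound plus Taylor truncation of Lemma~8.7 in \cite{CKPS17} then gives the degree, which is all the paper's one-line proof does. One correction: the \cite{CKPS17} bound has the form $\|x'\|_{[-1,1]}^2\ge k^{-O(k^2)}\eta^{k}\sum_j|\alpha'_j|^2$, not your $(\epsilon/k)^{O(k)}$ form, and its $k^{-O(k^2)}$ factor is exactly where the $k^2\log k$ term comes from, so it is your parenthetical fallback that is actually justified --- your sharper form is in fact true, e.g.\ via Ingham's inequality on $[-1/\eta,1/\eta]$ combined with Property~\ref{item:polynomial_bound_outside_interval} of Lemma~\ref{lemma:bounds_Fourier_sparse_signals}, but your Gram--Schmidt sketch does not prove it --- and the paper's rounding moves each frequency by up to $k\eta$ rather than $\eta$, which is harmless here.
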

We remark that previous work \cite{CKPS17} showed a net of frequency gap $k^{-O(k^2)}$ such that the degree $D=O(\Delta+k^3 \log k/\epsilon)$. Theorem~\ref{thm:net_frequency} improves the second term from $k^3 \log k/\epsilon$ to $k^2 \log k/\epsilon$. Theorem~\ref{thm:covering_radius_heavy_region} in Section~\ref{sec:heavy_region} and Theorem~\ref{thm:covering_radius_conjecture} will show smaller covering radii $\Delta$. In Section~\ref{sec:main_proof}, we will finish the proof of Theorem~\ref{inform:fast_algorithm} and Theorem~\ref{inform:algorithm_conjecture} .

\subsection{Proof of Lemma~\ref{lem:shift_one_freq}}
\label{sec:proof_shifting}




Following the proof of \cite{CKPS17}, let $v_j(t) = e^{2\pi\bi f_jt}$ and $V_i=\mathrm{span}\{v_1,\ldots,v_i\}$, with $V_0=\{\vec{0}\}$.
We define $v_k^{\parallel}$ and $v_{k+1}^{\parallel}$ to be the  projections of $v_k$ and $v_{k+1}$ onto $V_{k-1}$. Then $v_k^{\perp}:=v_k-v_k^{\parallel}$ and $v_{k+1}^{\perp}:=v_{k+1}-v_{k+1}^{\parallel}$ are the orthogonal parts.



Furthermore, let $w$ be the component of $v_{k+1}^{\perp}$ orthogonal to $v_k^{\perp}$ such that $v_{k+1}^{\perp} - w$ is parallel to $v_k^{\perp}$.  
Because both $v_k^{\perp}$ and $v_{k+1}^{\perp}$ lie in $V_{k-1}^{\perp}$, $w$ is orthogonal to both $V_{k-1}$ and $v_k^{\perp}$. So it is the component of $v_{k+1}$ orthogonal to $V_k=V_{k-1}+\mathrm{span}\{v_k^{\perp}\}$.

As shown in Lemma 8.5 of \cite{CKPS17}, the relative error of replacing $f_k$ by $f_{k+1}$ is at most $\frac{\|w\|_{[-1,1]}^2}{\|v_{k+1}^{\bot}\|^2_{[-1,1]}}$. Explicitly,
\begin{align}
w = v_{k+1}^{\perp} -\frac{\langle v_{k+1}^{\perp},v_k^{\perp}\rangle_{[-1,1]}} {\|v_k^{\perp}\|_{[-1,1]}^2}v_k^{\perp} \quad \text{and} \quad \|w\|_{[-1,1]}^2 = \|v_{k+1}^{\perp}\|_{[-1,1]}^2 -\frac{|\langle v_k^{\perp},v_{k+1}^{\perp}\rangle_{[-1,1]}|^2} {\|v_k^{\perp}\|_{[-1,1]}^2 }. \label{eq:w-orthogonality-bound}
\end{align}
Observe that for $b:=\frac{|\langle v_k^{\bot},v_{k+1}^{\bot} \rangle|}{\|v_{k+1}^\bot\|^2_{[-1,1]}}$,
\[\min_{z \in \mathbb C} \|v_k^{\perp}-z v_{k+1}^{\perp}\|^2_{[-1,1]}
=\|v_k^{\perp}-b v_{k+1}^{\perp}\|^2_{[-1,1]}
=\|v_k^{\perp}\|_{[-1,1]}^2 - \frac{|\langle v_k^{\bot},v_{k+1}^{\bot} \rangle|^2}{\|v_{k+1}^\bot\|^2_{[-1,1]}} 
.\] So we set $x'= \sum_{j=1}^{k-1} \alpha'_j v_j + \alpha_k b v_{k+1}^{\perp}$ to approximate $x= \sum_{j=1}^k \alpha_j v_j$, which replaces $f_k$ by $f_{k+1}$ and keeps the component in $V_{k-1}$ the same. Hence,
\begin{align}
\|x'-x\|_{[-1,1]}^2 = |\alpha_k|^2 \cdot \|v_k^{\perp}-b v_{k+1}^{\perp}\|^2_{[-1,1]}
\le \frac{\|v_k^{\perp}-b v_{k+1}^{\perp}\|^2_{[-1,1]}}{\|v_k^{\perp}\|_{[-1,1]}^2} \cdot \|x\|^2_{[-1,1]}, \label{eq:one-shift-first-bound}
\end{align}
where we use that $|\alpha_k|\|v_k^{\perp}\|_{[-1,1]}\le\|x\|_{[-1,1]}$ in the last step. Moreover, by the guarantee of $b$,
\begin{align}
\frac{\|v_k^{\perp}-b v_{k+1}^{\perp}\|_{[-1,1]}^2}{\|v_k^{\bot}\|_{[-1,1]}^2}
= 1  - \frac{|\langle v_k^{\perp},v_{k+1}^{\perp}\rangle_{[-1,1]}|^2} {\|v_k^{\perp}\|_{[-1,1]}^2 \cdot \|v_{k+1}^{\perp}\|_{[-1,1]}^2} = \frac{\|w\|_{[-1,1]}^2}{\|v_{k+1}^\bot\|_{[-1,1]}^2}. \label{eq:b-orthogonality-bound}
\end{align}


Different than Lemma 8.5 in \cite{CKPS17}, the rest of this proof provides a new bound on the quotient $\|w\|^2_{[-1,1]} / \|v_{k+1}^{\perp}\|^2_{[-1,1]}$, based on the Poincare inequality via an integral operator. From now on, we fix $f_1,\ldots,f_k$ and $f_{k+1}$. For any $g:[-1,1] \to \mathbb{C}$ and $t \in [-1,1]$, we define 
\begin{equation}\label{eq:def_operator_S}
(Sg)(t)=e^{2\pi\bi f_kt} \int_0^t e^{-2\pi\bi f_ks}g(s)\,\mathrm{d}s. 
\end{equation}
For any $j \neq k$ with $v_j(t)=e^{2 \pi \bi f_j t}$,
\begin{equation}
S v_j(t) = \frac{e^{2 \pi \bi f_j t} - e^{2 \pi \bi f_k t} }{2 \pi \bi(f_j-f_k)}=(v_j-v_k)/(2\pi\bi(f_j-f_k)). \label{eq:S_v_i}
\end{equation}
So $S(u)\subseteq V_k$ for any $u \in V_{k-1}$.

Since \eqref{eq:S_v_i} also holds for $v_{k+1}$, rearranging \eqref{eq:S_v_i} for $v_{k+1}$ shows
\begin{align}
    v_{k+1} = & 2 \pi \bi (f_{k+1} - f_k) \cdot S v_{k+1} +  v_k \notag \\
    = & 2 \pi \bi (f_{k+1} - f_k) \cdot S v_{k+1}^{\perp} + 2 \pi \bi (f_{k+1} - f_k) \cdot S v_{k+1}^{\parallel} + v_k \notag \\
    = & 2 \pi \bi (f_{k+1} - f_k) \cdot ( S v_{k+1}^{\perp} - r \cdot v_k) +  \underbrace{2 \pi \bi (f_{k+1} - f_k) \cdot S v_{k+1}^{\parallel} + [ 2 \pi \bi (f_{k+1} - f_k) r + 1 ] \cdot v_k}_{\in V_k}, \label{eq:decom_v_k_1}
\end{align}
where $r$ is a parameter chosen later.

Because $v_{k+1}- \underbrace{2 \pi \bi (f_{k+1} - f_k) \cdot ( S v_{k+1}^{\perp} - r \cdot v_k)}_{\text{ first part in \eqref{eq:decom_v_k_1}}} \in V_k$ by \eqref{eq:decom_v_k_1}, $w$ as the component of $v_{k+1}$ orthogonal to $V_k$ satisfies
\begin{equation}
    \| w \|_{[-1,1]} \le \| 2 \pi \bi (f_{k+1} - f_k) \cdot ( S v_{k+1}^{\perp} - r \cdot v_k) \|_{[-1,1]} \label{eq:w-orthogonality-bound2}.
\end{equation}
For convenience, let $(Tg)(t):=\int_0^t e^{-2\pi\bi f_ks}g(s)\,\mathrm{d}s$ be the integral opeartor in \eqref{eq:def_operator_S} such that \eqref{eq:w-orthogonality-bound2} becomes
\[
\| w \|_{[-1,1]}  \le 2 \pi | f_{k+1} - f_k | \cdot \left\| T v_{k+1}^{\perp} - r \right\|_{[-1,1]}.
\]

Finally, we apply Poincar\'{e}'s inequality with a proper $r$.
\begin{lemma}[Poincar\'{e}'s inequality \cite{stein2011fourier}]
    If $f$ is continuously differentiable on $[-1, 1]$, 
    \begin{align*}
        \int_{-1}^{1} \left| f(t) - \frac{1}{2} \int_{-1}^{1} f(s) ds \right|^2 dt \le \frac{4}{\pi^2} \int_{-1}^{1} |f'(t)|^2 dt.
    \end{align*}
\end{lemma}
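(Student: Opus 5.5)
We prove the inequality by a Fourier-series (Wirtinger-type) argument. The constant $\frac{4}{\pi^2}$ is exactly the reciprocal of the first nonzero Neumann eigenvalue $(\pi/2)^2$ of $-\frac{d^2}{dt^2}$ on an interval of length $2$. So the plan is to expand the mean-zero part of $f$ in the Neumann (cosine) basis of $[-1,1]$ and compare the two sides coefficient by coefficient. Complex-valued $f$ causes no difficulty, since everything is done in $L^2$ with Parseval.

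\textbf{Step 1: subtract the mean and reflect.} Set $g(t):=f(t)-\frac{1}{2}\int_{-1}^{1} f(s)\,\mathrm{d}s$, so that $\int_{-1}^1 g=0$ and $g'=f'$. Extend $g$ to $[-1,3]$ by even reflection about $t=1$, namely $\tilde g(t):=g(2-t)$ for $t\in[1,3]$. Then extend $\tilde g$ $4$-periodically to $\mathbb{R}$.
\begin{itemize}
\item Since $\tilde g(3)=g(-1)=\tilde g(-1)$, the extension is continuous.
\item It is piecewise $C^1$, hence absolutely continuous.
\item By symmetry, $\int_{-1}^{3}|\tilde g|^2=2\int_{-1}^{1}|g|^2$ and $\int_{-1}^{3}|\tilde g'|^2=2\int_{-1}^{1}|f'|^2$.
\item $\int_{-1}^{3}\tilde g=2\int_{-1}^1 g=0$.
\end{itemize}

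\textbf{Step 2: compare Fourier coefficients.} Write $c_n:=\frac14\int_{-1}^{3}\tilde g(t)e^{-\bi n\pi t/2}\,\mathrm{d}t$ for the Fourier coefficients of $\tilde g$ with respect to period $4$. The mean-zero property gives $c_0=0$. Because $\tilde g$ is periodic and absolutely continuous, integrating by parts produces no boundary terms. Hence the coefficients of $\tilde g'$ are $\bi\frac{n\pi}{2}c_n$.

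\textbf{Step 3: apply Parseval and conclude.} Parseval on a period gives
\[
\int_{-1}^{3}|\tilde g|^2=4\sum_{n\neq 0}|c_n|^2\le 4\sum_{n\neq0}\Big(\frac{n\pi}{2}\Big)^2\Big(\frac{2}{\pi}\Big)^2|c_n|^2=\frac{4}{\pi^2}\int_{-1}^{3}|\tilde g'|^2 .
\]
The inequality uses $|n|\ge1$, so $(n\pi/2)^2(2/\pi)^2=n^2\ge 1$. Dividing both sides by $2$ via the symmetry identities of Step 1 yields the claimed inequality.

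\textbf{Main obstacle.} The only delicate point is the differentiation identity for the Fourier coefficients at the junctions $t=1$ and $t=3$ (mod $4$), where $\tilde g$ need not be differentiable. I would handle it by noting that $\tilde g$ is continuous and periodic. Then integration by parts on each $C^1$ piece telescopes, because the boundary values match at the junctions and across the period. Thus no pointwise convergence of the Fourier series is needed, only $L^2$ Parseval.

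Equality holds for $g(t)=\sin(\pi t/2)$, which confirms that $\frac{4}{\pi^2}$ is the sharp constant and that no better bound can come out of this argument.
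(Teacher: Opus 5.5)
Your argument is correct. The even reflection about $t=1$ gives a continuous, piecewise $C^1$, mean-zero $4$-periodic function, and Parseval with $|n|\ge 1$ yields exactly $\frac{4}{\pi^2}$, which is Wirtinger's constant $T^2/(4\pi^2)$ with $T=4$. The paper does not prove this lemma; it only cites Stein--Shakarchi, where the inequality is the standard Fourier-series/Parseval (Wirtinger) argument, so your reflection-plus-Parseval proof is essentially the intended route.
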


Applying Poincar\'{e} inequality with $f:=T v_{k+1}^{\perp}$ and $r := \frac{1}{2} \int_{-1}^{1} f(s) ds$, we have
\begin{align*}
    \left\| T v_{k+1}^{\perp} - r \right\|_{[-1,1]} \le \frac{2}{\pi} \| v_{k+1}^{\perp} \|_{[-1,1]}.
\end{align*}

Therefore, with \eqref{eq:w-orthogonality-bound2},
\begin{align*}
    \| w \|_{[-1,1]} \le 4 | f_{k+1} - f_k | \cdot \| v_{k+1}^{\perp} \|_{[-1,1]}.
\end{align*}

\section{Heavy Frequency Recovery}
\label{sec:heavy_region}

Our main result in this section provides a strong guarantee on the list of frequencies in Lemma~\ref{lem:find_heavy_frequencies}. Let  $\cC_1,\ldots,\cC_n$ denote clusters returned by Algorithm~\ref{alg:cluster1} in this section, while our recovery algorithms do not know the frequencies $f_1,\ldots,f_k$. We still use $H_k$ to denote the filter function $H_{k,\epsilon}$ constructed in Lemma~\ref{lemm:construction_H} with a support $\wh{H_k}=[-\Delta_k,\Delta_k]$ and recall $x_{\cC}(t):=\sum_{j = 1}^{\ell} a_j e^{2\pi \bi f_j t}$ for a cluster $\cC=\{(f_1,a_1),\ldots,(f_\ell,a_\ell)\}$ in $\wh{x}$.



\begin{theorem}\label{thm:covering_radius_heavy_region}
     For $y(t)=x(t)+\eta(t)$ with $x(t):=\sum_{j=1}^k \alpha_j e^{2 \pi \bi f_j t}$ and $\|\eta\|_{[-1,1]}^2 \le \epsilon \cdot \|x\|_{[-1,1]}^2$ for a fixed small constant $\epsilon$, let $L$ be the list of frequencies from Lemma~\ref{lem:find_heavy_frequencies} on $H_k \cdot y$. Then for covering radius $D:=\frac{k^{2.75}}{\epsilon^{1.5}} \cdot (\log k)^{O(1)}$, 
     \[
    \mathcal{R}:=\{\cC_i: \exists f \in \cC_i \text{ with } \min_{\tilde{f}_i \in L} |\tilde{f}_i - f| \le D \}\] covered by $L$ within the distance $D$ 
    satisfies $\|H_k \cdot (\sum_{\cC \in \mathcal{R}} x_{\cC})  - H_k x\|_{2}^2 = O(\epsilon) \cdot \|H_k x\|_2^2$.
    
\end{theorem}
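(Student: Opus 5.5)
We prove Theorem~\ref{thm:covering_radius_heavy_region} in three steps, following the outline in Section~\ref{sec:overview}. First, we establish the near-orthogonality of the clusters output by Algorithm~\ref{alg:cluster1}, namely \eqref{eq:sum_energies}. Second, we discard light clusters, bound the range of each surviving cluster, and verify condition \eqref{eq:condition_freq_recover} for most heavy clusters despite the noise. Third, we invoke Lemma~\ref{lem:find_heavy_frequencies} with $\Delta=\Delta_k$.

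\textbf{Step 1: near-orthogonality.} Write $g_i:=H_k x_{\cC_i}$. By the merging rule, any two surviving clusters $\cC_i,\cC_j$ fall into one of two cases. If $\dist(\cC_i,\cC_j)>2\Delta_k$, then $\wh{g_i}$ and $\wh{g_j}$ have disjoint supports, so $\langle g_i,g_j\rangle=0$. Otherwise $\dist(\cC_i,\cC_j)> d_{min}\min\{|\cC_i|^2,|\cC_j|^2\}$, and Claim~\ref{clm:almost_orthogonal_clusters} (with the smaller cluster as $w$ and $\delta\approx \epsilon\sqrt{\max\{|\cC_i|,|\cC_j|\}}/k^{3/4}$) gives \eqref{eq:orthogonal}.

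To prove \eqref{eq:sum_energies}, expand $\|\sum_{j\in S} g_j\|_2^2=\sum_j\|g_j\|_2^2+\sum_{i\ne j}\langle g_i,g_j\rangle$ and bound the cross terms by a Schur-test argument. Fix a cluster $\cC_i$. The clusters correlated with it lie in a window of length $4\Delta_k$ around $\operatorname{range}(\cC_i)$. For partners of size about $s$, consecutive ones are separated by at least $d_{min}s^2$, so there are at most $\Delta_k/(d_{min}s^2)$ of them. We group partners dyadically by size and weight each pair's coefficient $\delta_{ij}$ from Claim~\ref{clm:almost_orthogonal_clusters}, which depends on the sizes of both clusters. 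The goal is that the row sums $\sum_j \delta_{ij}$ are $O(\epsilon)$: the weight $\sqrt{\max}/k^{3/4}$ together with the count bound $\tilde O(k^{0.5}/s^2)$ should make the dyadic sum converge. If the crude $k^{1/4}$ loss appears, we instead apply Claim~\ref{clm:almost_orthogonal_clusters} with $r$ equal to the exact partner size, which yields $\delta_{ij}\lesssim \epsilon\,\min(|\cC_i|,|\cC_j|)\sqrt{\max}/k^{3/4}$. This dyadic bookkeeping is the main obstacle, and the constants in $d_{min}$ ($k^{1.5}/\epsilon^2$, $\log^3$) look tuned exactly for it.

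\textbf{Step 2: heavy clusters and their range.} Call $\cC$ heavy if $\|g_{\cC}\|_2^2\ge \frac{\epsilon|\cC|}{k}\|H_kx\|_2^2$. Light clusters have total energy at most $\epsilon\|H_kx\|_2^2$ by summing the definition, and by \eqref{eq:sum_energies} removing them changes $\|H_kx-H_k\sum x_\cC\|^2$ by $O(\epsilon)\|H_kx\|^2$.

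Next we bound $|\operatorname{range}(\cC)|$. Track the merge tree: a merge joins clusters at distance at most $\min(d_{min}s^2,2\Delta_k)$, where $s$ is the smaller size. Charging each gap to the smaller side, a cluster of final size $m$ has range at most $\sum_{\text{gaps}}\min(d_{min}s^2,2\Delta_k)$. Maximizing over partitions gives $\tilde O(k^{2.75}/\epsilon^{1.5})$: groups of size $\sqrt{\Delta_k/d_{min}}\approx k^{1/4}$, about $k^{3/4}$ gaps of $2\Delta_k$ each.

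\textbf{Step 3: energy localization under noise and covering.} For each heavy $\cC$, let $I=\operatorname{range}(\cC)\pm\Delta_k$ and decompose $\wh{H_ky}=\wh{g_\cC}+\sum_{\cC'\ne\cC}\wh{g_{\cC'}}+\wh{H_k\eta}$. On $I$ we want $\|\wh{H_ky}\|_I^2\ge(1-O(\epsilon))\|g_\cC\|^2$, that is, \eqref{eq:large_clusters_energy} plus a noise term. Because the intervals $I$ for distinct clusters overlap only through correlated neighbours, the off-cluster mass landing on $I$, summed over all $\cC$, is controlled by \eqref{eq:sum_energies} and $\|H_k\eta\|_2^2\lesssim\epsilon\|x\|^2$. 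So by Markov, the clusters where noise plus leakage exceeds $\frac{1}{2}\|g_\cC\|^2$ carry total energy $O(\epsilon)\|H_kx\|^2$, and we drop them as well. For each remaining heavy $\cC$, cover $I$ by $O(|\operatorname{range}(\cC)|/\Delta_k+1)$ windows $[f-\Delta_k,f+\Delta_k]$. One needs $|\cC|$ windows to suffice, using windows centred at the frequencies of $\cC$ and noting that the gaps larger than $2\Delta_k$ carry no $\wh{g_\cC}$ mass. Averaging, some $f\in\cC$ then satisfies \eqref{eq:condition_freq_recover} with mass at least $\frac{\epsilon}{2k}\|y\|^2\ge\frac{\epsilon}{5k}\|y\|^2$. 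Lemma~\ref{lem:find_heavy_frequencies} then gives $\tilde f\in L$ within $O(\Delta_k)\le D$ of $f$, so $\cC\in\mathcal R$. The clusters not in $\mathcal R$ are light or noise-dominated, with total energy $O(\epsilon)\|H_kx\|^2$, and \eqref{eq:sum_energies} turns this into the stated bound. The range bound from Step 2 enters only in case the averaging must instead be done over windows spanning the range.
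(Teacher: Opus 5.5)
\textbf{Main gap: Step 3.} The windows $I_\cC=\operatorname{range}(\cC)\pm\Delta_k$ are far from disjoint. Surviving clusters may sit only about $d_{min}=\tilde O(k^{1.5})$ apart while $\Delta_k=\tilde\Theta(k^2)$, so one frequency can lie in $\Delta_k/d_{min}=\tilde\Theta(\epsilon\sqrt k)$ of the windows. Your Markov step needs $\sum_\cC\|\wh{H_k\eta}\|_{I_\cC}^2=O(\|H_k\eta\|_2^2)$, but the left side can be $\tilde\Theta(\epsilon\sqrt k)$ times larger.

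For example, take $k$ heavy singletons spaced just over $d_{min}$ apart, each with $\|H_kx_\cC\|_2^2\approx\|H_kx\|_2^2/k$. Add noise whose filtered spectrum is spread roughly evenly over the occupied band of length $\approx k\,d_{min}$. Every $I_\cC$ then receives noise energy $\approx\epsilon\frac{2\Delta_k}{k d_{min}}\|H_kx\|_2^2=\tilde\Theta(\epsilon^2k^{-1/2})\|H_kx\|_2^2\gg\frac12\|H_kx_\cC\|_2^2$. So every cluster is declared noise-dominated and discarded.

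The same example breaks your leakage claim. Each $I_\cC$ meets the spectra of $\tilde\Theta(\epsilon\sqrt k)$ neighbours, and \eqref{eq:sum_energies} is a global identity that says nothing about energy restricted to overlapping windows.

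The paper's proof is designed to avoid exactly this:
\begin{enumerate}
\item On clusters of size $\le\theta_s=\tilde\Theta(k^{1/4})$ it replaces $H_k$ by the narrow filter $H_\theta$ (Corollary~\ref{cor:localized_theta_small_clusters}). Their spectra then lie in $\operatorname{range}(\cC)\pm\Delta_\theta$ with $\Delta_\theta<d_{min}/2$ and are pairwise disjoint.
\item It merges all overlapping supports into disjoint intervals $J_a$, and uses Theorem~\ref{thm:total_energy} to get $\|\wh z\|_{J_a}^2\approx\sum_{i\in R_a}\|H_kx_{\cC_i}\|_2^2$.
\item It runs the good/bad Markov argument over the disjoint $J_a$, where it is valid.
\item Averaging inside a good $J_a$ yields only \emph{one} heavy window per $J_a$. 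A single estimate must therefore cover all of $R_a$, so the radius is $|J_a|+O(\Delta_k)$, with
$|J_a|\le\sum_i|I_i|\le\sum_i|\operatorname{range}(\cC_i)|+2k\Delta_\theta+2\Delta_k k/\theta_s=\tilde O(k^{2.75}/\epsilon^{1.5})$.
\end{enumerate}
This last bound is where Claim~\ref{clm:bound_range_clusters} and the exponent $2.75$ genuinely enter. In your write-up the range bound is essentially unused, and the argument would give radius $O(\Delta_k)$. That is a stronger conclusion than the paper obtains, and a symptom of the overlap problem.

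\textbf{Secondary gap: Step 1.} This step does not close as written. Plain row sums with \eqref{eq:orthogonal}, which uses only $\dist\ge d_{min}\min\{|\cC_i|,|\cC_j|\}^2$, fail. A cluster of size $\approx k/2$ flanked by $\tilde\Theta(\epsilon\sqrt k)$ singletons spaced just over $d_{min}$ has row sum $\tilde\Theta(\epsilon^2k^{1/4})$. Your fallback $\epsilon\min\{|\cC_i|,|\cC_j|\}\sqrt{\max\{|\cC_i|,|\cC_j|\}}/k^{3/4}$ is larger than \eqref{eq:orthogonal}, not smaller.

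The missing idea is to feed the \emph{actual} distance into Claim~\ref{clm:almost_orthogonal_clusters}. This gives
$\delta_{ij}=\frac{\epsilon}{k^{1/4}}\sqrt{\min\{|\cC_i|^2,|\cC_j|^2\}\max\{|\cC_i|,|\cC_j|\}/D_{i,j}}$ with $D_{i,j}=k\dist(\cC_i,\cC_j)/d_{min}$ (Claim~\ref{clm:satisbility_of_delta_construction}). Far partners then decay like $D_{i,j}^{-1/2}$, and Lemma~\ref{lem:bounding_on_D_weighted} together with the induction in Lemma~\ref{lem:one_sided_weight_sum_weighted} bounds the row sums by $O(\epsilon)$. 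Alternatively, a weighted Schur test with weights $|\cC_i|^{1/4}$ appears to suffice with the crude bound \eqref{eq:orthogonal}. You could also simply cite Theorem~\ref{thm:total_energy}.
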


We finish the proof of Theorem~\ref{thm:covering_radius_heavy_region} in this section. The key technical result of this proof is the following theorem, which shows that the clusters output by Algorithm~\ref{alg:cluster1} (in Section~\ref{sec:overview}) are almost orthogonal given Claim~\ref{clm:almost_orthogonal_clusters}(in Section~\ref{sec:overview}).

\begin{theorem} \label{thm:total_energy}
Let $d_{min}:=\frac{2 C_H k^{1.5} \log^3 (k/\epsilon)}{\epsilon^2}$ and $\cC_1,\ldots,\cC_n$ be $n$ clusters with $\dist(\cC_i,\cC_j) \ge  \min\bigg\{ d_{min} \cdot \min\{|\cC_i|^2,|\cC_j|^2\}, 2\Delta_k \bigg\}$ for any two $\cC_i$ and $\cC_j$.
For every $S\subseteq[n]$,
\begin{align*}
    \left\|H_k \cdot\sum_{i\in S}x_{\cC_i}\right\|_2^2
    = \left(1\pm O(\epsilon)\right) \sum_{i\in S}\|H_k \cdot x_{\cC_i}\|_2^2.
\end{align*}
In particular, $
    \|H_k \cdot x\|_2^2 
    = \left(1\pm O(\epsilon)\right) \sum_{i=1}^n\|H_k \cdot x_{\cC_i}\|_2^2 $.
\end{theorem}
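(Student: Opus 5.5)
Write $u_i := H_k \cdot x_{\cC_i}$. Expanding the square gives
\[
\Big\|\sum_{i\in S}u_i\Big\|_2^2=\sum_{i\in S}\|u_i\|_2^2+\sum_{i\neq j\in S}\langle u_i,u_j\rangle,
\]
so the task is to show $\sum_{i\ne j\in S}|\langle u_i,u_j\rangle| \le O(\epsilon)\sum_{i\in S}\|u_i\|_2^2$. Two pairs behave differently. If $\dist(\cC_i,\cC_j)>2\Delta_k$, then $\wh{u_i}$ and $\wh{u_j}$ have disjoint supports, since each is $\wh{x_{\cC}}*\wh{H_k}$ and $\wh{H_k}$ is supported in $[-\Delta_k,\Delta_k]$. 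By Parseval (Theorem~\ref{thm:Plancherel_Parseval}), $\langle u_i,u_j\rangle=0$. So only correlated pairs, with $\dist(\cC_i,\cC_j)\in[d_{min}\min\{|\cC_i|^2,|\cC_j|^2\},2\Delta_k]$, contribute.

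For a correlated pair with $|\cC_i|=\ell\le r=|\cC_j|$, I apply Claim~\ref{clm:almost_orthogonal_clusters} and solve for the best $\delta$. The condition is $\dist\ge C_H\ell^2(r+\log 1/\delta)\log^2 k/\delta^2$. With $\dist\ge d_{min}\ell^2=2C_Hk^{1.5}\log^3(k/\epsilon)\ell^2/\epsilon^2$ it suffices to take
\[
\delta_{ij}=\epsilon\sqrt{r}/k^{3/4}\le \epsilon\sqrt{\max(|\cC_i|,|\cC_j|)}/k^{3/4},
\]
which is the bound in \eqref{eq:orthogonal}. The extra log factor absorbs $\log 1/\delta$.

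The main obstacle is summing these terms. Each cluster may have up to $\tilde O(\sqrt k)$ correlated neighbours, while the crude correlation $\epsilon/k^{1/4}$ only yields $\tilde O(\epsilon k^{1/4})$ by Schur's test. I will therefore use the size-dependent bound and a sharper neighbour count. I bound $|\langle u_i,u_j\rangle|\le \delta_{ij}\cdot\tfrac12(\|u_i\|^2+\|u_j\|^2)$ and charge each term to both clusters. It then suffices to show, for every fixed $i$,
\[
\sum_{j\text{ correlated with }i}\epsilon\sqrt{\max(|\cC_i|,|\cC_j|)}/k^{3/4}=O(\epsilon).
\]
Split the neighbours $j$ into two groups.

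\textbf{Larger neighbours} ($|\cC_j|\ge|\cC_i|$). These lie within $2\Delta_k$ of $\cC_i$ but contain disjoint sets of frequencies. Their total size is at most $k$, and there are at most $O(\Delta_k/(d_{min}|\cC_i|^2))$ of them on each side. Consecutive clusters on one side must be separated by at least $d_{min}\min(\cdot)^2\ge d_{min}|\cC_i|^2$ gaps, and I plan to use the ordering so that only nearby gaps matter. Cauchy--Schwarz then gives
\[
\sum_j\sqrt{|\cC_j|}\le\sqrt{N\cdot k},\qquad N=\tilde O\big(\Delta_k/(d_{min}|\cC_i|^2)\big)=\tilde O\big(\sqrt k/|\cC_i|^2\big).
\]
This is $\tilde O(k^{3/4}/|\cC_i|)\cdot\epsilon$-small after the logarithms are absorbed into $d_{min}$.

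\textbf{Smaller neighbours} ($|\cC_j|<|\cC_i|$). Group them dyadically by size $s=|\cC_j|\in[2^a,2^{a+1})$. Clusters of size about $s$ within distance $2\Delta_k$ must be pairwise $d_{min}s^2$-separated, so there are $O(\Delta_k/(d_{min}s^2))$ of them. The sum over this group is
\[
\epsilon\sqrt{|\cC_i|}\,k^{-3/4}\cdot\sum_a \tilde O\big(\sqrt k/4^a\big),
\]
a convergent geometric series dominated by $a=0$. To keep that $\sqrt{|\cC_i|}\cdot\sqrt k$ term from growing, I expect to need the combined count bound: the number of such $j$ is also at most $k/s$. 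Hence the count is $\min(\tilde O(\sqrt k/s^2),k/s)$, and I would balance these two bounds against the $\sqrt{|\cC_i|}$ factor.

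Getting every case to close at exactly $O(\epsilon)$ with the precise exponents $k^{1.5}$ and $k^{3/4}$ is the delicate step. If needed I will check $\log$ factors against the $\log^3$ in $d_{min}$. Finally, summing the charged bounds over $i\in S$ gives the theorem, and $S=[n]$ gives the stated special case.
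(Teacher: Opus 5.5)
\textbf{The gap: with your distance-free correlation bound, the row-sum reduction is false.} You bound every correlated pair by $\delta_{ij}=\epsilon\sqrt{\max(|\cC_i|,|\cC_j|)}/k^{3/4}$. That is, you evaluate Claim~\ref{clm:almost_orthogonal_clusters} at the \emph{minimum} allowed distance $d_{min}\min\{|\cC_i|,|\cC_j|\}^2$. You then reduce to showing $\sum_{j}\epsilon\sqrt{\max(|\cC_i|,|\cC_j|)}/k^{3/4}=O(\epsilon)$ for every $i$. That inequality fails under the theorem's hypotheses.

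Take one cluster $\cC_0$ with $|\cC_0|\approx k/2$. To its right, place $N\approx 2\Delta_k/d_{min}$ singletons at distances $d_{min},2d_{min},\ldots,Nd_{min}$ from it. Every pair is at distance at least $d_{min}=\min\{d_{min}\cdot 1^2,2\Delta_k\}$, so the hypotheses hold, and every singleton is correlated with $\cC_0$. The row sum at $\cC_0$ is $N\epsilon\sqrt{k/2}/k^{3/4}$. Since $\Delta_k/d_{min}$ is $\sqrt k$ up to $\epsilon$ and logarithmic factors, this is of order $\epsilon k^{1/4}$ up to such factors, which is unbounded.

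Your smaller-neighbours paragraph already shows this. The $a=0$ term is $\epsilon\sqrt{|\cC_i|}\cdot\tilde O(k^{-1/4})$, and the extra count $k/s$ gives nothing at $s=1$. So no balancing of the two counts closes this case. In this example the quadratic form itself is small: a star with $N$ leaves and entries $\delta$ has operator norm $\sqrt N\delta$. The failure is therefore in the unweighted Schur-type reduction, and rescuing a distance-free bound would need a different summation argument, which the proposal does not give.

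\textbf{The missing idea:} correlation decays with distance, and neighbours of a large cluster are forced far away. Claim~\ref{clm:almost_orthogonal_clusters} holds with $\delta^2\approx C_H\ell^2(r+\log\frac1\delta)\log^2k/\dist(\cC_i,\cC_j)$. So the paper sets $D_{i,j}=k\,\dist(\cC_i,\cC_j)/d_{min}$ and $\delta_{i,j}=\frac{\epsilon}{k^{1/4}}\sqrt{\min\{|\cC_i|^2,|\cC_j|^2\}\max\{|\cC_i|,|\cC_j|\}/D_{i,j}}$.

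For each threshold $t$, count the clusters of size $\ge t$ between $\cC_x$ and $\cC_y$; consecutive ones are at least $d_{min}t^2$ apart. This gives $D_{x,y}\ge\frac k2\bigl(\sum_{a=x}^y|\cC_a|-\max_{x\le a\le y}|\cC_a|\bigr)$: the distance to a correlated neighbour grows linearly in the number of intervening frequencies.

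When $\cC_i$ is the largest cluster on a side, an induction then yields $\sum_j\sqrt{|\cC_i||\cC_j|^2/D_{i,j}}\le4\sqrt{\sum_j|\cC_j|}$. The non-largest correlated mass on each side is at most $4\Delta_k/d_{min}=O(\epsilon\sqrt k)$, so the row sum is $O(\epsilon^{1.5})$. In the example above it is $\frac{\epsilon}{k^{1/4}}\sum_{j\le N}\sqrt{|\cC_0|/(kj)}=O(\epsilon\sqrt N/k^{1/4})$.

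Your larger-neighbours count is fine. The distance-free bound also suffices for smaller neighbours while $|\cC_i|$ is small (roughly $|\cC_i|\lesssim\sqrt k$). Indeed, the paper uses exactly your per-summand bound $\epsilon\sqrt{\max\{|\cC_i|,|\cC_j|\}}/k^{3/4}$ when $|\cC_i|<2\sqrt{\Delta_k/d_{min}}$. It switches to the distance-dependent $\delta_{i,j}$ when $|\cC_i|\ge2\sqrt{\Delta_k/d_{min}}$, which forces $\cC_i$ to be the largest cluster on each side.
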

Recall that a cluster $\cC_i$ is heavy iff $\|H_k \cdot x_{\cC_i}\|_2^2 \ge \frac{\epsilon \cdot |\cC_i|}{k} \cdot \|H_k x\|_2^2$. Theorem~\ref{thm:total_energy} implies that it is safe to neglect all light clusters with $\|H \cdot x_{\cC_j}\|_2^2 \le \frac{\epsilon \cdot |\cC_j|}{k} \cdot \|H \cdot x\|_2^2$ and focus on heavy clusters. This is because 
\begin{align*}
\| H_k \cdot \sum_{\text{heavy } \cC_i} x_{\cC_i}\|_2^2 & \ge (1-O(\epsilon)) \cdot \sum_{\text{heavy } \cC_i}  \|H_k \cdot x_{\cC_i}\|_2^2 \\
& = (1-O(\epsilon)) (\sum_{i=1}^n \|H_k \cdot x_{\cC_i}\|_2^2 - \sum_{\text{light } \cC_i} \|H_k \cdot x_{\cC_i}\|_2^2) \\
& = (1-O(\epsilon))^2 \|H_k \cdot x\|_2^2 - (1-O(\epsilon))\epsilon \cdot \|H_k \cdot x\|_2^2 = (1- O(\epsilon)) \|H_k \cdot x\|_2^2.
\end{align*}

We rewrite $H_k \cdot y=H_k \cdot (x + \eta) = H_k \cdot (\sum_{\text{heavy } \cC_i} x_{\cC_i}) + H_k \cdot \eta + H_k \cdot (\sum_{\text{light } \cC_i} x_{\cC_i})$. In the rest of this section, we reset $x:=\sum_{\text{heavy } \cC_i} x_{\cC_i}$ and consider the recovery of (heavy) clusters in $x$ under noise $\eta_H:=H_k \cdot \eta + H_k \cdot (\sum_{\text{light } \cC_i} x_{\cC_i})$ with $\|\eta_H\|_2^2 =O(\epsilon) \cdot \|H_k \cdot x\|_2^2$. 

The next observation is that for a (heavy) cluster $\cC_i$ with $|\cC_i|=O(k^{1/4})$, most of the energies of $\wh{H \cdot \cC_i}$ concentrate around $range(\cC_i) \pm \Delta$ for some $\Delta:=O(k^{1.5})$ much smaller than $\Delta_k$. Let $\theta_s:=\tilde{\Theta}(k^{1/4})$ be the largest integer with $\Delta_{\theta_s,\epsilon^2/k} < d_{min}/2$ (defined in Lemma~\ref{lemm:construction_H}). In the rest of this section, we call a cluster $\cC_i$ small iff $|\cC_i| \le \theta_s$; otherwise we call it large.

To be more precise, for any small cluster $\cC_i$ of size $\le \theta_s$, basic properties of $H_{\theta_s,\delta}$ and $H_k$ imply that $H_{\theta_s,\delta} \cdot x_{\cC_i} \approx H_k \cdot x_{\cC_i}$. By the Fourier transform, $\wh{H_{\theta_s,\delta} \cdot x_{\cC_i}} \approx \wh{H_k \cdot x_{\cC_i}}$ such that most energies of $\wh{H_k \cdot x_{\cC_i}}$ are concentrated in $range(\cC_i) \pm \Delta_{\theta_s,\delta}$. We refer to Claim~\ref{clm:cluster_concentration} for a formal statement. After choosing $\delta$ carefully, the following fact provide a good approximation on the Fourier spectrum of $H_k \cdot (\sum_{\text{small clusters}} x_{\cC_j})$ of small clusters. Since $\theta_s$ is fixed, we define $\Delta_{\theta}:=\Delta_{\theta_s,\epsilon^2/k}$, which is less than $d_{min}/2$.

\begin{corollary}
\label{cor:localized_theta_small_clusters}
Let $T \subseteq \{j:|\cC_j| \le \theta_s\}$ be a subset of small clusters and 
$H_\theta:=H_{\theta_s,\epsilon^2/k}$ with support $[-\Delta_\theta,\Delta_\theta]$. 
Then $\supp(\wh{H_\theta\cdot x_{\cC_j}})$ is contained in
$\operatorname{range}(\cC_j)\pm\Delta_\theta$ for every $j\in T$ and
\begin{align*}
    \left\| \sum_{j\in T}\wh{H_k \cdot x_{\cC_j}} - \sum_{j\in T}\wh{H_\theta\cdot x_{\cC_j}} \right\|_2
    \le \epsilon \left(\sum_{j\in T} \|\wh{H_k \cdot x_{\cC_j}}\|_2^2 \right)^{1/2}.
\end{align*}
\end{corollary}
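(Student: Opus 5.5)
The support claim follows from convolution, and the error bound from a per-cluster comparison plus almost-orthogonality. For each $j \in T$, $\wh{H_\theta \cdot x_{\cC_j}} = \sum_{(f,a)\in\cC_j} a\, \wh{H_\theta}(\cdot - f)$, and $\wh{H_\theta}$ is supported in $[-\Delta_\theta,\Delta_\theta]$ by Lemma~\ref{lemm:construction_H}. So each summand lives in $[f-\Delta_\theta, f+\Delta_\theta] \subseteq \operatorname{range}(\cC_j)\pm\Delta_\theta$.

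\textbf{Step 1 (per-cluster comparison).} Each $x_{\cC_j}$ is $\theta_s$-sparse, hence also $k$-sparse. I would invoke Claim~\ref{clm:cluster_concentration}, which in any case follows from the exact properties of $H_{\ell,\delta}$ in Appendix~\ref{sec:proof_H_delta}. Both $H_k$ and $H_\theta$ equal $1\pm\delta$ on $[-1+\Omega(\delta/\ell^2), 1-\Omega(\delta/\ell^2)]$ and are negligible outside $[-1,1]$. For a $\theta_s$-sparse signal, the energy in the boundary strips is small relative to $\|x_{\cC_j}\|_{[-1,1]}^2$ by item~\ref{item:leverage_bound_on_interval} of Lemma~\ref{lemma:bounds_Fourier_sparse_signals}. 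Together these give
\[
\|H_k x_{\cC_j} - H_\theta x_{\cC_j}\|_2 \le \frac{\epsilon}{2}\, \|H_k x_{\cC_j}\|_2 .
\]
By Plancherel (Theorem~\ref{thm:Plancherel_Parseval}), the same bound holds on the Fourier side. Choosing $\delta=\epsilon^2/k$ leaves room here.

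\textbf{Step 2 (summing, the main obstacle).} Set $e_j := H_k x_{\cC_j} - H_\theta x_{\cC_j}$. Summing naively via the triangle inequality loses a factor $\sqrt{|T|}$, so I instead bound $\|\sum_{j\in T} e_j\|_2^2 = \sum_{j,j'} \langle e_j, e_{j'}\rangle$ by controlling the cross terms.

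I would split $e_j$ using the support structure. The piece $\wh{H_\theta x_{\cC_j}}$ is supported in $\operatorname{range}(\cC_j)\pm\Delta_\theta$. Distinct small clusters satisfy $\dist(\cC_j,\cC_{j'}) \ge \min\{d_{min}\cdot\min\{|\cC_j|^2,|\cC_{j'}|^2\}, 2\Delta_k\} \ge d_{min} > 2\Delta_\theta$. Hence the $H_\theta$-parts have disjoint Fourier supports and are exactly orthogonal.

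For the $H_k$-parts, I would use the pairwise almost-orthogonality of Claim~\ref{clm:almost_orthogonal_clusters}, as in Theorem~\ref{thm:total_energy}. Mixed inner products $\langle H_k x_{\cC_j}, H_\theta x_{\cC_{j'}}\rangle$ are handled the same way: $H_\theta x_{\cC_{j'}}$ is close to $H_k x_{\cC_{j'}}$ by Step 1.

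A cleaner route is the following. By Theorem~\ref{thm:total_energy}, $\|\sum_{j\in T} H_k x_{\cC_j}\|_2^2 = (1\pm O(\epsilon))\sum_{j} \|H_k x_{\cC_j}\|_2^2$. By the disjoint supports, $\|\sum_{j\in T} H_\theta x_{\cC_j}\|_2^2 = \sum_j \|H_\theta x_{\cC_j}\|_2^2$. It then remains to show that the cross term $\mathrm{Re}\sum_{j}\langle H_k x_{\cC_j}, \sum_{j'} H_\theta x_{\cC_{j'}}\rangle$ is close to $\sum_j \|H_k x_{\cC_j}\|_2^2$. Expanding $\|A-B\|^2 = \|A\|^2+\|B\|^2-2\mathrm{Re}\langle A,B\rangle$ then gives $O(\epsilon)\sum_j\|H_k x_{\cC_j}\|_2^2$.

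The diagonal contributions of the cross term are handled by Step 1. The off-diagonal terms $\langle H_k x_{\cC_j}, H_\theta x_{\cC_{j'}}\rangle$ for $j\ne j'$ are the hard part: the $O(\epsilon)$ loss in Theorem~\ref{thm:total_energy} only yields $\sqrt{\epsilon}$ accuracy in norm. I would therefore recalibrate the constants and $\delta=\epsilon^2/k$, rescaling $\epsilon$ by a constant, so that the final bound is $\epsilon(\sum_j\|\wh{H_k x_{\cC_j}}\|_2^2)^{1/2}$.

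For the off-diagonal terms themselves, I would localize: $H_\theta x_{\cC_{j'}}$ has Fourier support in a window of width $O(\Delta_\theta)$ around $\cC_{j'}$. Within that window, $\wh{H_k x_{\cC_j}}$ is itself close (by Step 1) to $\wh{H_\theta x_{\cC_j}}$, which vanishes there. Cauchy--Schwarz over each window, followed by summing over the disjoint windows, then bounds the total off-diagonal contribution by $\sum_j \|e_j\|_2 \|H_\theta x_{\cC_j}\|_2 \le \epsilon\sum_j\|H_k x_{\cC_j}\|_2^2$.
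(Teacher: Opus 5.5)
Your support argument is correct, but the error bound has a genuine gap, and it starts in Step~1, where you discard exactly the factor that makes the statement easy. With $\delta=\epsilon^2/k$, Claim~\ref{clm:cluster_concentration} gives $\|\wh{H_\theta\cdot x_{\cC_j}}-\wh{H_k\cdot x_{\cC_j}}\|_2^2\le O(\epsilon^2/k)\cdot\|\wh{H_k\cdot x_{\cC_j}}\|_2^2$. That is a per-cluster error of order $\epsilon/\sqrt{k}$ in norm, not $\epsilon/2$. With this bound, the naive triangle inequality you rejected already suffices (absorbing the constant, as the paper does):
\[
\Bigl\|\sum_{j\in T}\bigl(\wh{H_k\cdot x_{\cC_j}}-\wh{H_\theta\cdot x_{\cC_j}}\bigr)\Bigr\|_2
\le\frac{\epsilon}{\sqrt{k}}\sum_{j\in T}\|\wh{H_k\cdot x_{\cC_j}}\|_2
\le\frac{\epsilon}{\sqrt{k}}\sqrt{|T|}\,\Bigl(\sum_{j\in T}\|\wh{H_k\cdot x_{\cC_j}}\|_2^2\Bigr)^{1/2}.
\]
Since there are at most $k$ clusters, $|T|\le k$ finishes it. This is the paper's proof. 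The parameter $\epsilon^2/k$ in $H_\theta$ is chosen precisely to pay for the $\sqrt{|T|}$ loss, and no cross-cluster orthogonality is needed. (Minor: for the boundary strips you need the uniform bound, item~\ref{item:uniform_bound_on_interval} of Lemma~\ref{lemma:bounds_Fourier_sparse_signals}. The bound $2k/(1-|t|)$ from item~\ref{item:leverage_bound_on_interval} is not integrable up to $|t|=1$.)

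Having thrown away the $1/\sqrt{k}$, Step~2 cannot recover it. Write $A:=\sum_{j\in T}H_k x_{\cC_j}$ and $B:=\sum_{j\in T}H_\theta x_{\cC_j}$.

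First, the expansion $\|A-B\|_2^2=\|A\|_2^2+\|B\|_2^2-2\mathrm{Re}\langle A,B\rangle$ relies on Theorem~\ref{thm:total_energy}. Its $(1\pm O(\epsilon))$ slack on $\|A\|_2^2$ caps the method at $\|A-B\|_2=O(\sqrt{\epsilon})\,(\sum_{j}\|H_k x_{\cC_j}\|_2^2)^{1/2}$, as you note yourself. Rescaling $\epsilon$ by a constant does not turn $\sqrt{\epsilon}$ into $\epsilon$. Invoking Theorem~\ref{thm:total_energy} at accuracy $\epsilon^2$ instead would change $d_{min}$, and hence the clusters themselves.

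Second, the localization step does not give the bound you state. For fixed $j$, the disjoint $H_\theta$-supports give $\sum_{j'\neq j}\langle H_k x_{\cC_j},H_\theta x_{\cC_{j'}}\rangle=\langle e_j,\sum_{j'\neq j}H_\theta x_{\cC_{j'}}\rangle$. Cauchy--Schwarz bounds this by $\|e_j\|_2\,\|\sum_{j'\neq j}H_\theta x_{\cC_{j'}}\|_2$, not by $\|e_j\|_2\|H_\theta x_{\cC_j}\|_2$. Summing over $j$ yields $\|B\|_2\sum_j\|e_j\|_2$. With your per-cluster bound $\epsilon/2$, this reintroduces the very $\sqrt{|T|}$ factor you were trying to avoid. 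Nothing in your argument controls how much of $\wh{e_j}$ lies in the windows of the other clusters.
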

Our proof relies on the following fact: small clusters have disjoint Fourier supports in $\wh{H_\theta\cdot x_{\cC_j}}$ by the definition of $\theta_s$. 

For large clusters, we use the following bound on the length of their ranges. For convenience, we state it for all possible sizes. 
\begin{claim}
\label{clm:bound_range_clusters}
From Algorithm~\ref{alg:cluster1}, a cluster with $\ell$ frequencies has a range of length at most (Recall $\Delta_k:=\tilde{O}(\frac{k^2}{\epsilon} )$ and $d_{min}:=\tilde{O}(\frac{k^{1.5}}{\epsilon^2})$)
\begin{align*}
    r_\ell \le
    \begin{cases}
  d_{min}\cdot O(\ell^2)  = \tilde{O}(\frac{k^{1.5}}{\epsilon^2} \cdot \ell^2),     &   \ell\le \sqrt{\frac{2\Delta_k}{d_{min}}},\\[1.2ex]
    d_{min}\cdot\sqrt{\frac{2\Delta_k}{d_{min}}} \cdot O(\ell) = \tilde{O}(\frac{k^{1.75}}{\epsilon^{1.5}} \cdot \ell),
        &       \ell >\sqrt{\frac{2\Delta_k}{d_{min}}}.
    \end{cases}
\end{align*}
Moreover, $\sum_{i \in [n]} |range(\cC_i)| = \tilde{O}(\frac{k^{2.75}}{\epsilon^{1.5}})$. 
\end{claim}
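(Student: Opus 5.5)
We bound the range of a cluster produced by Algorithm~\ref{alg:cluster1} by induction on the merging history. Every merge is triggered by two clusters $\cC_i,\cC_j$ with $\dist(\cC_i,\cC_j)\le \min\{d_{min}\min\{|\cC_i|^2,|\cC_j|^2\},2\Delta_k\}$, and it absorbs everything lying between them. The key accounting step is to write the range of the final cluster as a sum of ``gaps'' between consecutive frequencies. Each gap is either contained inside the range of some earlier cluster, or it is a gap that was bridged at a merge step. So we charge each bridged gap to the \emph{smaller} of the two clusters being merged at that step, say of size $s$; that gap is at most $\min\{d_{min}s^2,2\Delta_k\}$.

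Step 1 is a recursive inequality. Let $R(\ell)$ be the maximal range of a cluster of size $\ell$. If a merge combines clusters of sizes $\ell_1\le\ell_2$, plus possibly intermediate clusters, the new range is at most the sum of the constituent ranges plus the bridging gap. Because the merge absorbs intermediate clusters, one merge can bridge several gaps. However, each gap between adjacent constituents is at most $\dist(\cC_i,\cC_j)$, and each bridged gap can still be charged to a constituent whose size is at least the smaller endpoint size. I would handle this by ordering constituents and charging each internal gap to the threshold $\min\{d_{min}\ell_1^2,2\Delta_k\}$. This reduces the problem to bounding a sum in which every bridged gap of size $g$ is paid by a cluster of size $s$ with $g\le \min\{d_{min}s^2,2\Delta_k\}$.

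Step 2 is a potential/charging argument in the style of ``union by size''. Define the cost of a gap charged to a cluster of size $s$ as $\min\{d_{min}s^2,2\Delta_k\}$, and distribute it evenly over the $s$ elements, so each element pays $\min\{d_{min}s,2\Delta_k/s\}$. Every time an element is on the smaller side of a merge, its cluster size at least doubles. An element therefore pays at scales $s=1,2,4,\dots$ up to $\ell$. Summing the geometric series gives at most $O(d_{min}\cdot\min\{\ell,\sqrt{2\Delta_k/d_{min}}\})$ per element. This is the delicate point, since for $s>\sqrt{2\Delta_k/d_{min}}$ the charges $2\Delta_k/s$ decrease geometrically, so the total is dominated by the crossover $s^*=\sqrt{2\Delta_k/d_{min}}$, giving $O(d_{min}s^*)=O(\sqrt{d_{min}\Delta_k})$.

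Step 3 converts these per-element charges into the stated bounds on $r_\ell$.
\begin{itemize}
\item For $\ell\le s^*$, the per-element charge is $O(d_{min}\ell)$, so $r_\ell=O(d_{min}\ell^2)$.
\item For $\ell>s^*$, multiplying the per-element charge $O(d_{min}s^*)$ by the $\ell$ elements gives $r_\ell = O(d_{min}s^*\ell)$, which is $\tilde O(k^{1.75}\ell/\epsilon^{1.5})$ after plugging in $\Delta_k=\tilde O(k^2/\epsilon)$ and $d_{min}=\tilde O(k^{1.5}/\epsilon^2)$.
\end{itemize}
For the ``moreover'' part, the charges are per element and every frequency belongs to exactly one final cluster. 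Summing over all $k$ frequencies therefore gives $\sum_i|\mathrm{range}(\cC_i)|\le k\cdot O(d_{min}s^*)=\tilde O(k^{2.75}/\epsilon^{1.5})$.

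The main obstacle is Step 1: rigorously ensuring that every gap in a final range is charged exactly once, and to a cluster whose size governs the merge threshold. The difficulty is that a merge also absorbs intermediate clusters, which may be much smaller than both $\cC_i$ and $\cC_j$. I would resolve this by observing that any gap between consecutive absorbed constituents lies within the interval between $\cC_i$ and $\cC_j$. Its length is thus at most $\dist(\cC_i,\cC_j)$, and it can be charged to the smaller of $\cC_i,\cC_j$. This may cause that cluster to be charged several times at one merge, so I would bound the number of such gaps per merge by the number of absorbed clusters. Alternatively, I would note that those intermediate clusters at least double the smaller side's size, which keeps the geometric summation of Step 2 intact up to a constant factor.
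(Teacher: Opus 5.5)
Your small-to-large charging argument is correct and takes a genuinely different route from the paper's. However, the obstacle you flag in Step~1 should be resolved differently from either remedy you suggest.
\begin{itemize}
\item Charging the smaller endpoint once per bridged gap multiplies its charge by the number of intermediate clusters absorbed in that merge. That number need not be $O(1)$, since the while loop may merge $\cC_i,\cC_j$ while many small clusters sit between them.
\item The ``doubling'' remedy does not address this multiplicity. Intermediate clusters can be singletons, and doubling of the smaller side is already supplied by the larger endpoint.
\end{itemize}
The right observation is that clusters are always contiguous blocks of the sorted frequencies. So every absorbed cluster and every newly bridged gap lies inside the interval between $\cC_i$ and $\cC_j$, and the merged cluster has range exactly $|\mathrm{range}(\cC_i)|+\dist(\cC_i,\cC_j)+|\mathrm{range}(\cC_j)|$. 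Hence one charge of $\dist(\cC_i,\cC_j)\le\min\{d_{min}s^2,2\Delta_k\}$ per merge suffices, with $s=\min\{|\cC_i|,|\cC_j|\}$, spread over the $s$ elements of the smaller endpoint. The histories of the intermediate clusters can simply be discarded. (The paper's recursion treats each merge as binary and takes $r_\ell$ over clusters with at most $\ell$ frequencies, so it relies on the same fact implicitly.) With this fix, your per-element bound $O(d_{min}\min\{\ell,s^*\})$ and both conclusions go through.

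The paper instead defines $r_\ell$ by the max-recursion $r_\ell=\max_i\{r_i+r_{\ell-i}+\min\{d_{min}\min\{i^2,(\ell-i)^2\},2\Delta_k\}\}$. It then verifies by strong induction an explicit closed form, with $i_0=\sqrt{2\Delta_k/d_{min}}$ equal to your $s^*$: the bound is $d_{min}\cdot\frac12(\ell^2-\ell)$ for $\ell\le i_0$ and $d_{min}(\frac32 i_0\ell-\frac12\ell-i_0^2)$ for $\ell>i_0$. The induction step is a three-way case split on where $i$ and $\ell-i$ fall relative to $i_0$. The ``moreover'' part then uses $\sum_i|\mathrm{range}(\cC_i)|\le r_k$ via superadditivity.

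What your amortization buys:
\begin{itemize}
\item It replaces the guessed closed form and case analysis by a geometric series peaking at the crossover $s^*$, which shows directly why each frequency costs only $O(\sqrt{d_{min}\Delta_k})$.
\item The bound on the sum over all clusters comes for free by summing per-element charges, with no superadditivity step.
\item It adapts verbatim to other thresholds. For Algorithm~\ref{alg:cluster2}, the per-element charge is $\min\{d'_{min},2\Delta_k/s\}$, and summing over doublings gives the $\ell\log\ell$ bound of Claim~\ref{clm:bound_range_clusters_conjecture}.
\end{itemize}
The paper's induction, in exchange, gives explicit constants.
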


Now we are ready to finish the proof of Theorem~\ref{thm:covering_radius_heavy_region}.
The proofs of Theorem~\ref{thm:total_energy}, Corollary~\ref{cor:localized_theta_small_clusters}, and Claim~\ref{clm:bound_range_clusters} are deferred to Section~\ref{subsec:heavy_region_clustering}, Section~\ref{subsec:heavy_region_localized_filters}, and Section~\ref{subsec:heavy_region_range} separately. 


\begin{proofof}{Theorem~\ref{thm:covering_radius_heavy_region}}
First of all, we approximate $H_k \cdot x$ and $H_k \cdot y$ as follows. We split all clusters in $x$ into small ones of size $\le  \theta_s$ and larger ones of size $> \theta_s$. For $H_{\theta}$ defined in Corollary~\ref{cor:localized_theta_small_clusters} with Fourier support $[-\Delta_\theta,\Delta_\theta]$ of $\Delta_\theta:=C_H \cdot \frac{k^{3/2} \cdot \log^2 k}{2 \epsilon^2}$, let
\begin{equation}\label{eq:def_z}
    z(t):=\sum_{\cC_i:|\cC_i| \le \theta_s} H_{\theta}(t) \cdot x_{\cC_i}(t) + \sum_{\cC_j:|\cC_j| > \theta_s} H_{k}(t) \cdot x_{\cC_j}(t).
\end{equation}
By Corollary~\ref{cor:localized_theta_small_clusters}, $\|z-H_k \cdot x\|_2^2 \le O(\epsilon) \cdot \|H_k x\|_2^2$. Recall that $\eta_H:=H_k \cdot \eta + H_k \cdot (\sum_{\text{light } \cC_i} x_{\cC_i})$ with $\|\eta_H\|_2^2 =O(\epsilon) \cdot \|H \cdot x\|_2^2$. We consider $H_k \cdot y=z + (H_k \cdot x - z + \eta_H) = z + \eta'$ for noise $\eta':=H_k \cdot x - z + \eta_H$ with  $\|\eta'\|_2^2 = O(\epsilon) \cdot \|H_k \cdot x\|_2^2$.

Another useful property is that $\Delta_\theta < \dist(\cC_i,\cC_j)/2$ for any two clusters (from the definition of $\theta_s$ and $\Delta_{\theta}$) such that $\wh{H_{\theta} \cdot x_{\cC_i}}$ and $\wh{H_{\theta} \cdot x_{\cC_{i'}}}$ are disjoint for any two \emph{small} clusters $\cC_i$ and $\cC_{i'}$. However, a large cluster $\cC_j$ may have $\wh{H_k \cdot x_{\cC_j}}$ intersecting with small clusters. So we consider the following approach.

Now we define $n$ intervals to be the Fourier support of each $\cC_i$ in $z$ (defined in \eqref{eq:def_z}):
\begin{align*}
    I_i =
    \begin{cases}
    range(\cC_i) \pm \Delta_{ \theta},
        &|\cC_i| < \theta_s,\\[1.2ex]
    range(\cC_i) \pm \Delta_k,
        & |\cC_i| \ge \theta_s.
    \end{cases}
\end{align*}
Then we keep merging intervals as long as there exist $I_i$ and $I_j$ with $I_i \cap I_j \neq \emptyset$. For convenience, let $J_1,\ldots,J_m$ be the remaining disjoint intervals. For a cluster $\cC$ and interval $J_j$, we use $\cC \subset J_j$ to indicate that each frequency $f \in \cC$ satisfies $f \in J_j$ and $R_j:=\{i: \cC_i \subset J_j\}$ to denote the clusters in $J_j$.



By the definition of $J_a$ and $R_a$, $J_a$ is the union of supports of clusters in $R_a$ as
\begin{equation}\label{eq:def_I_a}
J_a:=\left( \cup_{i \in R_a: |\cC_i| < \theta_s} \supp(\wh{H_{\theta} \cdot x_{\cC_i}}) \right) \cup \left( \cup_{j \in R_a: |\cC_j| \ge \theta_s} \supp(\wh{H_{k} \cdot x_{\cC_j}}) \right). \end{equation}

We apply Theorem~\ref{thm:total_energy} to clusters in $R_a$: 
\begin{equation}\label{eq:sum_R_a_1}
     \|\sum_{i \in R_a} \wh{H_k \cdot x_{\cC_i}}\|_{2}^2 = (1 \pm O(\epsilon)) \cdot \sum_{i \in R_a} \|\wh{H_k \cdot x_{\cC_i}}\|_{2}^2.
\end{equation}
At the same time, the signal constituted by clusters in $R_a$ is $\sum_{i \in R_a:|\cC_i| < \theta_s} \wh{H_{\theta} \cdot x_{\cC_i}} + \sum_{j \in R_a:|\cC_j| \ge \theta_s} \wh{H_{k} \cdot x_{\cC_j}}$. 
Corollary~\ref{cor:localized_theta_small_clusters} bounds its difference to $\sum_{i \in R_a} \wh{H_k \cdot x_{\cC_i}}$ as
\begin{equation}\label{eq:sum_R_a_2}
     \left\|\sum_{i \in R_a:|\cC_i| < \theta_s} \wh{H_{\theta} \cdot x_{\cC_i}} + \sum_{j \in R_a:|\cC_j| \ge \theta_s} \wh{H_{k} \cdot x_{\cC_j}} - \sum_{i \in R_a} \wh{H_k \cdot x_{\cC_i}} \right\|_{2} \le \epsilon \cdot \left( \sum_{i \in R_a:|\cC_i| < \theta_s} \|\wh{H_k \cdot x_{\cC_i}}\|_{2}^2 \right)^{1/2}.
\end{equation}
Eq~\eqref{eq:sum_R_a_1} and Eq~\eqref{eq:sum_R_a_2} imply that
\begin{equation}\label{eq:energy_in_I_a}
\left\|\sum_{i \in R_a:|\cC_i| < \theta_s} \wh{H_{\theta} \cdot x_{\cC_i}} + \sum_{j \in R_a:|\cC_j| \ge \theta_s} \wh{H_{k} \cdot x_{\cC_j}} \right\|_2^2 = (1 \pm O(\epsilon)) \sum_{i \in R_a} \|\wh{H_k \cdot x_{\cC_i}}\|_{2}^2.
\end{equation}
Recall that $J_a$ defined in \eqref{eq:def_I_a} is the union of supports of $\wh{H_{\theta} \cdot x_{\cC_i}}$ over small clusters in $R_a$ and $\wh{H_{k} \cdot x_{\cC_j}}$ over large clusters in $R_a$. Because $J_1,\ldots,J_m$ are disjoint, $\wh{z}(f) \cdot \mathbf{1}_{J_a}(f)=\sum_{i \in R_a:|\cC_i| < \theta_s} \wh{H_{\theta} \cdot x_{\cC_i}}(f) + \sum_{j \in R_a:|\cC_j| \ge \theta_s} \wh{H_{k} \cdot x_{\cC_j}}(f)$ from the definition of $z$ in \eqref{eq:def_z}. 
\eqref{eq:energy_in_I_a} shows 
\begin{equation}\label{eq:energy_Z_J_a}
\|\wh{z}\|^2_{J_a} = (1 \pm O(\epsilon)) \cdot \sum_{i \in R_a} \|\wh{H_k \cdot x_{\cC_i}}\|_{2}^2.    
\end{equation}

Next, we consider $z+\eta'$ for noise  $\eta':=H_k \cdot x - z + \eta_H$ with  $\|\eta'\|_2^2 = O(\epsilon) \cdot \|x\|_{[-1,1]}^2$. We say that an interval $J_a$ in $z$ is bad if $\|\wh{\eta'}\|_{J_a}^2 \ge \|\wh{z}\|_{J_a}^2/16$; otherwise $J_a$ is good. 
 Since $J_1,\ldots,J_m$ are disjoint, 
 \[
 \|\eta'\|_2^2 \ge \sum_{a \in [m]:J_a \text{ is bad}} \|\wh{\eta'}\|^2_{J_a} \ge \sum_{a \in [m]:J_a \text{ is bad}} \|\wh{z}\|^2_{J_a}/16.
 \]
 This implies
 \begin{equation}\label{eq:good_clusters}
\sum_{a \in [m]:J_a \text{ is good}} \|\wh{z}\|^2_{J_a} \ge (1-O(\epsilon)) \cdot \sum_{i} \|\wh{H_k \cdot x_{\cC_i}}\|_{2}^2 - 16 \|\eta'\|_2^2 \ge (1-O(\epsilon)) \cdot \|\wh{H_k \cdot x}\|_{2}^2.      
 \end{equation}
So $\mathcal{R}:=\cup_{a: J_a \text{ is good}} R_a$ in this theorem. Then \eqref{eq:good_clusters} shows that the total energy of the clusters in $\mathcal{R}$ is at least $(1-O(\epsilon)) \cdot \|H_k x\|_2^2$. This indicates 
\[ 
\sum_{\cC \notin \mathcal{R}} \|H_k \cdot x_{\cC}\|_2^2 = O(\epsilon) \cdot \|H_k x\|_2^2  \text{ and } \|H_k \cdot (\sum_{\cC \notin \mathcal{R}} x_{\cC})\|_2^2 = O(\epsilon) \cdot \|H_k x\|_2^2
\] by applying Theorem~\ref{thm:total_energy} twice to $[n]$ and $[n] \setminus \mathcal{R}$ separately. 
Finally, $\|H_k \cdot (\sum_{\cC \in \mathcal{R}} x_{\cC})  - H_k x\|_{2}^2 = O(\epsilon) \cdot \|H_k x\|_2^2$ follows the above bound in $H_k \cdot (\sum_{\cC \notin \mathcal{R}} x_{\cC})$.
 
Finally, we bound the covering radius $D$. For each good $J_a$, there exists $f_a$ in some cluster of $R_a$ with $\int_{f_a-\Delta_k}^{f_a+\Delta_k} |\wh{H_k \cdot y}(f)|^2 \mathrm{d} f \ge \frac{\epsilon}{5k} \cdot \|x\|_{[-1,1]}^2$. This is because the number of frequencies in $J_a$ is  $(\sum_{i \in R_a} |\cC_i|)$ and 
\begin{align*}
\|H_k \cdot y\|^2_{J_a} & = \|\wh{z+\eta'}\|^2_{J_a}\\
& \ge \frac{1}{2} |\wh{z}\|_{J_a}^2 \tag{$J_a$ is good} \\
& \ge \frac{1-O(\epsilon)}{2} \sum_{i \in R_a} \|\wh{H_k \cdot x_{\cC_i}}\|_{2}^2 \tag{by \eqref{eq:energy_Z_J_a}}\\
& \ge \frac{1-O(\epsilon)}{2} \cdot \frac{\epsilon \cdot (\sum_{i \in R_a} |\cC_i|)}{k} \cdot \|H_k x\|_{[-1,1]}^2. \tag{by the definition of heavy clusters}\\    
& \ge \frac{1}{5} \cdot \frac{\epsilon \cdot (\sum_{i \in R_a} |\cC_i|)}{k} \cdot \|y\|_{[-1,1]}^2.
\end{align*}
 So the covering radius $D$ (of $\mathcal{R}$) is the length of $J_a$ plus $O(\Delta_k)$. The former is at most 
\begin{align*}
\sum_{i=1}^n |I_i| & = \sum_{i: |\cC_i|<\theta_s} (|range(\cC_i)| +  2 \Delta_{\theta}) + \sum_{i: |\cC_i| \ge \theta_s} (|range(\cC_i)| +  2 \Delta_{k}) \\
& = \sum_i |range(\cC_i)| + k \cdot 2 \Delta_{\theta} + 2 \Delta_k \cdot \frac{k}{\theta_s} = \tilde{O}(k^{2.75}/\epsilon^{1.5}+ k^{2.5}/\epsilon^2).    \tag{by Claim~\ref{clm:bound_range_clusters}} 
\end{align*}

From the discussion above, for $D=\frac{k^{2.75}}{\epsilon^1.5} \cdot (\log k)^{O(1)}$, for every cluster $\cC$ in $\mathcal{R}$, there exists $\tilde{f} \in L$ (output by Lemma~\ref{lem:find_heavy_frequencies}) such that $\max_{f \in \cC} |\tilde{f}-f| \le D$. 
\end{proofof}


\subsection{Proof of Theorem~\ref{thm:total_energy}}
\label{subsec:heavy_region_clustering}
We finish the proof of Theorem~\ref{thm:total_energy} in this section. We assume $\epsilon$ is a small constant such that $d_{min}<\Delta_k$.


Recall $\Delta_k:=C^2 (\frac{k^2 \log(k/\epsilon)}{\epsilon} + k^2 \log k \cdot \log (k^2 \log k) )\le \frac{3C^2k^{2}\log^2(k/\epsilon)}{\epsilon}$ and $d_{min}:=\frac{2 C_H k^{1.5} \log^3 (k/\epsilon)}{\epsilon^2}$. Therefore,

\begin{equation}
\label{eq:max D in correlation}
\frac{\Delta_k}{d_{min}}\le \frac{3C^2k^2\log^2(k/\epsilon)/\epsilon}{2C_Hk^{1.5}\log^3(k/\epsilon)/\epsilon^2}\le \frac{3C^2}{2C_H} k^{0.5}\epsilon=O(k^{0.5}\epsilon).
\end{equation}

In this proof, we assume that $\cC_1,\ldots,\cC_n$ are sorted by their frequencies.
For correlated clusters $\cC_i$ and $\cC_j$ with $\dist(\cC_i,\cC_j) \le 2 \Delta_k$, we define
\begin{align}
    D_{i,j}:= & \frac{k\cdot \dist(\cC_i,\cC_j)} {d_{min}}.
\end{align}

Because $D_{i,j}$ is defined only when $\cC_i$ and $\cC_j$ are correlated and $\dist(\cC_i,\cC_j) \ge d_{min} \cdot \min\{|\cC_i|^2,|\cC_j|^2\}$ in this case, we have the following bounds on $D_{i,j}$:

\begin{equation}
\label{eq:bound_D}
D_{i,j} \in \left[ k \cdot \min\{|\cC_i|^2,|\cC_j|^2\}, \frac{2k\Delta_k}{d_{min}} \right].
\end{equation}

\begin{claim}
\label{clm:satisbility_of_delta_construction}
For correlated $\cC_i$ and $\cC_j$, let 
\begin{equation}
    \label{eq:selection_on_delta}
    \delta_{i,j}:=
    \frac{\epsilon}{k^{1/4}}
    \sqrt{
    \frac{\min\{|\cC_i|^2,|\cC_j|^2\} \max\{|\cC_i|,|\cC_j|\}}
    {D_{i,j}}}
\end{equation}
denote the correlation of $\cC_i$ and $\cC_j$; and let $\delta_{i,j}:=0$ for uncorrelated $\cC_i$ and $\cC_j$. 
Then for any $i$ and $j$, $\delta_{i,j}$ satisfies the condition of Claim~\ref{clm:almost_orthogonal_clusters}: 
\begin{equation}
\label{eq:weighted_pair_correlation}
    |\langle H_k \cdot x_{\cC_i},H_k \cdot x_{\cC_j}\rangle|
    \le \delta_{i,j} \|H_k \cdot x_{\cC_i}\|_2\|H_k \cdot x_{\cC_j}\|_2 .
\end{equation}
\end{claim}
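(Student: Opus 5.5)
The plan is to reduce the claim directly to Claim~\ref{clm:almost_orthogonal_clusters}. The case of uncorrelated clusters is immediate. If $\dist(\cC_i,\cC_j) > 2\Delta_k$, then $\supp(\wh{H_k \cdot x_{\cC_i}}) \subseteq \operatorname{range}(\cC_i)\pm\Delta_k$ and $\supp(\wh{H_k \cdot x_{\cC_j}}) \subseteq \operatorname{range}(\cC_j)\pm\Delta_k$ are disjoint. The clusters are sorted, so their ranges do not interleave. Plancherel (Theorem~\ref{thm:Plancherel_Parseval}) then gives $\langle H_k x_{\cC_i}, H_k x_{\cC_j}\rangle = 0 = \delta_{i,j}\cdot(\cdots)$.

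For correlated clusters, write $\ell:=\min\{|\cC_i|,|\cC_j|\}$ and $r:=\max\{|\cC_i|,|\cC_j|\}\le k$. We invoke Claim~\ref{clm:almost_orthogonal_clusters} with $\delta:=\delta_{i,j}$, so it suffices to check the separation hypothesis
\[
\dist(\cC_i,\cC_j) \ge \min\left\{ C_H\frac{\ell^2 (r+\log 1/\delta_{i,j})\log^2 k}{\delta_{i,j}^2},\, 2\Delta_k\right\}.
\]
We verify the first branch of the minimum, which is enough. Substituting $\delta_{i,j}^2 = \frac{\epsilon^2 \ell^2 r}{k^{1/2} D_{i,j}}$ and $D_{i,j}=k\,\dist(\cC_i,\cC_j)/d_{min}$, we get
\[
C_H\frac{\ell^2 (r+\log 1/\delta_{i,j})\log^2 k}{\delta_{i,j}^2}
= \frac{C_H k^{1/2} D_{i,j}(r+\log 1/\delta_{i,j})\log^2 k}{\epsilon^2 r}
= \dist(\cC_i,\cC_j)\cdot \frac{C_H k^{1.5}\log^2 k}{\epsilon^2 d_{min}}\cdot\frac{r+\log 1/\delta_{i,j}}{r}.
\]
With $d_{min}= 2C_H k^{1.5}\log^3(k/\epsilon)/\epsilon^2$, the middle factor equals $\frac{\log^2 k}{2\log^3(k/\epsilon)}$. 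The required inequality therefore reduces to
\[
\frac{r+\log(1/\delta_{i,j})}{r}\le \frac{2\log^3(k/\epsilon)}{\log^2 k},
\]
and this holds once $\log(1/\delta_{i,j}) \le \log(k/\epsilon)$, since the right side is at least $2\log(k/\epsilon)$ and $r\ge 1$.

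The main obstacle is lower bounding $\delta_{i,j}$ so that $\log(1/\delta_{i,j})=O(\log(k/\epsilon))$. From \eqref{eq:bound_D}, $D_{i,j}\le 2k\Delta_k/d_{min}=O(k^{1.5}\epsilon)$ by \eqref{eq:max D in correlation}. Hence
\[
\delta_{i,j}^2 \ge \frac{\epsilon^2 \ell^2 r}{k^{1/2}\cdot O(k^{1.5}\epsilon)} \ge \Omega\!\left(\frac{\epsilon}{k^2}\right),
\]
so $\log(1/\delta_{i,j}) \le \log(k/\epsilon)+O(1)$. The additive constant is absorbed because the right side above is $2\log^3(k/\epsilon)/\log^2 k \ge 2\log(k/\epsilon)$, which leaves slack for any constant once $k/\epsilon$ exceeds a fixed constant.

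The remaining checks are routine. By the lower bound in \eqref{eq:bound_D}, $\delta_{i,j}\le \epsilon k^{-1/4}\sqrt{r/k}\le 1$, so $\delta_{i,j}$ is a legitimate correlation parameter. If Claim~\ref{clm:almost_orthogonal_clusters} requires $\ell\le r$, we order the pair accordingly, which is harmless by symmetry. Claim~\ref{clm:almost_orthogonal_clusters} then yields \eqref{eq:weighted_pair_correlation}.
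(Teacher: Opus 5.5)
Your proposal is correct and follows the paper's proof essentially step for step. You invoke Claim~\ref{clm:almost_orthogonal_clusters} with $\delta=\delta_{i,j}$ and verify the first branch of its separation condition by substituting the definitions of $\delta_{i,j}$ and $D_{i,j}$. You then control $\log(1/\delta_{i,j})=\log(k/\epsilon)+O(1)$ via $D_{i,j}\le 2k\Delta_k/d_{min}=O(k^{1.5}\epsilon)$, just as the paper does. Your explicit handling of the uncorrelated case by disjoint Fourier supports (which the paper only states later, in the proof of Theorem~\ref{thm:total_energy}) and the check $\delta_{i,j}\le 1$ are harmless additions.
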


\begin{proof}

Applying Claim~\ref{clm:almost_orthogonal_clusters} to $\dist(\cC_i,\cC_j)$, it is clear that any $\delta$ satisfies the inequality below would meet the condition of Claim~\ref{clm:almost_orthogonal_clusters}.

\begin{equation}
\label{eq: limit_on_delta}
\dist(\cC_i,\cC_j)=D_{i,j}\cdot \frac{d_{min}}{k} \ge C_H\cdot \frac{\min\{|\cC_i|^2,|\cC_j|^2\}\big(\max\{|\cC_i|,|\cC_j|\}+\log(1/\delta)\big)}{\delta^2}\cdot \log^2 k.
\end{equation}

We show that $\delta_{i,j}$ defined in \eqref{eq:selection_on_delta} satisfies \eqref{eq: limit_on_delta}. We begin by bounding the $\log\left(1/\delta_{i,j}\right)$ term in \eqref{eq: limit_on_delta} as follows: 

\begin{align}
\log\left(\frac{1}{\delta_{i,j}}\right)
&=\log\left(\frac{k^{1/4}}{\epsilon}\sqrt{\frac{D_{i,j}}{\min\{|\cC_i|^2,|\cC_j|^2\}\max\{|\cC_i|,|\cC_j|\}}}\right)\tag{by \eqref{eq:selection_on_delta}}\nonumber\\
&\le \log\left(\frac{k^{1/4}}{\epsilon}\sqrt{\frac{2k\Delta_k}{d_{min}}}\right)\tag{by the upper bound of $D_{i,j}$ in \eqref{eq:bound_D} and denominator $\ge 1$}\nonumber\\
&\le \log\left(\frac{k^{1/4}}{\epsilon}\sqrt{\frac{3C^2 k^{1.5}\epsilon}{C_H}}\right)\tag{by \eqref{eq:max D in correlation}}\nonumber\\ 
&=\log\left(\sqrt{\frac{3C^2}{C_H}}\cdot \frac{k}{\epsilon^{0.5}}\right)\nonumber\\
&\le \log \frac{k^2}{e\epsilon^2}\tag{suppose $\sqrt{\frac{3C^2}{C_H}}\le \frac{k}{e\epsilon^{1.5}}$}\nonumber\\
&=2 \log(k/\epsilon)-1. \label{eq:upper_bound_partial_term} 
\end{align}

Now we are ready to finish the proof of \eqref{eq: limit_on_delta}.

\begin{align*}
&C_H\cdot \frac{\min\{|\cC_i|^2,|\cC_j|^2\}\big(\max\{|\cC_i|,|\cC_j|\}+\log(1/\delta_{i,j})\big)}{\delta_{i,j}^2}\cdot \log^2 k\\
=&C_H\frac{\min\{|\cC_i|^2,|\cC_j|^2\}\max\{|\cC_i|,|\cC_j|\}}{\delta_{i,j}^2}\log^2 k\cdot \left(1+\frac{\log(1/\delta_{i,j})}{\max\{|\cC_i|,|\cC_j|\}}\right)\\
\le& C_H \frac{k^{0.5}D_{i,j}}{\epsilon^2}\cdot\log^2 k \left(1+\log\frac{1}{\delta_{i,j}}\right)\tag{plug definition of $\delta_{i,j}$ and denominator $\ge 1$}\\
\le&\frac{d_{min}}{2k\log k/\epsilon}D_{i,j}\cdot 2\log  (k/\epsilon)\tag{because $d_{min}:=\frac{2 C_H k^{1.5} \log^3 (k/\epsilon)}{\epsilon^2}$ and \eqref{eq:upper_bound_partial_term} on $\log(1/\delta_{i,j})$}\\
=&\frac{d_{min}}{k} D_{i,j}.
\end{align*}

\end{proof}

Our plan is to show that $\sum_{j: \text{ correlated with } i} \delta_{i,j}=O(\epsilon)$ for any $i$ in this section. We first bound the total number of frequencies in the correlated clusters.

\begin{lemma}
\label{lem:bounding_on_D_weighted}
For two correlated clusters $\cC_x$ and $\cC_y$ with indices $x<y$, 
\begin{align*}
D_{x,y}\ge\frac{k}{2} \left(\sum_{a=x}^{y}|\cC_a| - \max_{x\le a\le y}|\cC_a|\right).
\end{align*}
\end{lemma}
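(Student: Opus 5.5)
Since $D_{x,y}=k\cdot\dist(\cC_x,\cC_y)/d_{min}$, the claim is equivalent to the range bound
\[
\dist(\cC_x,\cC_y)\ \ge\ \frac{d_{min}}{2}\Big(\sum_{a=x}^{y}|\cC_a|-\max_{x\le a\le y}|\cC_a|\Big).
\]
The plan is to lower bound $\dist(\cC_x,\cC_y)$ by summing the gaps between consecutive clusters between them. The clusters are sorted, and Algorithm~\ref{alg:cluster1} merges everything lying between two merged clusters. Hence the output clusters occupy disjoint, consecutive frequency ranges. In particular,
\[
\dist(\cC_x,\cC_y)\ \ge\ \sum_{a=x}^{y-1}\dist(\cC_a,\cC_{a+1}),
\]
since the interval between $\max\cC_x$ and $\min\cC_y$ contains all the consecutive gaps. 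Because $\cC_x,\cC_y$ are correlated, $\dist(\cC_x,\cC_y)\le 2\Delta_k$, so every consecutive pair $a,a+1$ in between is also at distance $\le 2\Delta_k$. The termination condition then forces the other branch of the minimum:
\[
\dist(\cC_a,\cC_{a+1})\ >\ d_{min}\cdot\min\{|\cC_a|^2,|\cC_{a+1}|^2\}\ \ge\ d_{min}\min\{|\cC_a|,|\cC_{a+1}|\}.
\]

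The remaining step is combinatorial. For positive integers $s_x,\dots,s_y$, I need
\[
\sum_{a=x}^{y-1}\min\{s_a,s_{a+1}\}\ \ge\ \frac12\Big(\sum_a s_a-\max_a s_a\Big).
\]
I would prove it by charging each term $s_a$ other than the maximum to an adjacent edge: walking from the position $m$ of the maximum toward the left, charge $s_a$ ($a<m$) to the edge $(a,a+1)$; on the right, charge $s_a$ ($a>m$) to the edge $(a-1,a)$. Each edge receives at most one charge. However, $\min\{s_a,s_{a+1}\}$ can be smaller than the charged $s_a$, so this needs care.

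The cleaner route is the standard fact $\min\{u,v\}\ge (u+v-|u-v|)/2$ combined with bounding $\sum|s_a-s_{a+1}|$. That sum is the total variation, which is not controlled by $\max s$ for oscillating sequences, so I would not rely on it alone. Instead I would use the quadratic slack. Squares give $\min\{s_a^2,s_{a+1}^2\}\ge \min\{s_a,s_{a+1}\}$. A local minimum $s_a$ of the sequence gets two edges each of size at least $s_a$. A sequence that is monotone on each side of the maximum has $\sum\min = \sum_{a\neq m} s_a$ exactly. In general, every element other than the global max can be assigned to the edge toward the neighbor at least as large as it, if such a neighbor exists. Conflicts arise only when two elements claim the same edge, which means they are equal; then the edge's value covers at least half of their sum. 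Every non-maximal element has a weakly larger neighbor unless it is a strict local maximum. A strict local max $s_a$ that is not global is bounded by something more subtle: its two neighbors are smaller.

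This is where the factor $1/2$ and the quadratic term must be used. For a local max $s_a$ with neighbors $u,v<s_a$, I would pair it with the larger of the gaps and accept a loss. The cleanest fix I expect is to use the merge rule once more: a strict local max with smaller neighbors still has its neighbors' edges satisfying $\min^2$, which is at least $u^2$. I expect this local-maximum accounting to be the main obstacle, and would try an induction on $y-x$ that removes the smallest cluster (whose two edges each contribute at least its size) to close it.
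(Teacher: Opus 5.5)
There is a genuine gap: the route through consecutive gaps alone cannot work. Your reduction $\dist(\cC_x,\cC_y)\ge\sum_{a=x}^{y-1}\dist(\cC_a,\cC_{a+1})$ is correct. After it, the only information you keep is a lower bound on each \emph{adjacent} gap, and the combinatorial inequality you then need is false. Take sizes $(s_x,\ldots,s_y)=(1,M,1,M,1)$. Every consecutive minimum is $1$, so $\sum_a\min\{s_a,s_{a+1}\}=4$, while $\frac12\bigl(\sum_a s_a-\max_a s_a\bigr)=\frac{M+3}{2}$. Keeping the squares does not help, because $\min\{s_a^2,s_{a+1}^2\}=1$ on every edge too. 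So no local-maximum accounting, and no induction that only uses consecutive gaps, can close the argument.

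What excludes this configuration for actual output clusters is information about a \emph{non-adjacent} pair, which your reduction discards. Any two clusters lying between $\cC_x$ and $\cC_y$ are within distance $\dist(\cC_x,\cC_y)\le 2\Delta_k$, so they are themselves correlated. In the example, the two size-$M$ clusters must therefore be more than $d_{min}M^2$ apart.

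The paper uses exactly this. For each threshold $t\ge 1$, let $\cC_{i_1},\ldots,\cC_{i_p}$ be the clusters among $\cC_x,\ldots,\cC_y$ of size at least $t$.
\begin{itemize}
\item Consecutive members of this subsequence are correlated, so $D_{i_j,i_{j+1}}\ge kt^2$.
\item Their gaps are disjoint subintervals of the gap between $\cC_x$ and $\cC_y$, so $D_{x,y}\ge (p-1)kt^2$.
\item The layer-cake identity then gives
\[
\sum_a|\cC_a|=\sum_{t\ge 1}\#\{a:|\cC_a|\ge t\}\le\sum_{t\le\max_a|\cC_a|}\Bigl(\frac{D_{x,y}}{kt^2}+1\Bigr),
\]
and $\sum_t t^{-2}\le 2$ yields the lemma.
\end{itemize}

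This also shows that your early weakening of $\min\{|\cC_a|^2,|\cC_{a+1}|^2\}$ to $\min\{|\cC_a|,|\cC_{a+1}|\}$ is itself fatal. With only a linear separation, even over all pairs, one gets a harmonic sum and hence an extra logarithmic factor. That is why Section~\ref{sec:heavy-region-clustering-conjecture} rescales $D_{i,j}$ by $\log k$.

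The loss is real, not an artifact of the proof. Put unit-spaced clusters at positions $j=1,\ldots,2^m-1$ with sizes $2^{\nu(j)}$, where $\nu$ is the $2$-adic valuation, and size-$2^m$ clusters at positions $0$ and $2^m$. This configuration satisfies the linear pairwise condition, yet $\sum_a s_a-\max_a s_a = m2^{m-1}+2^m$ while the span is only $2^m$.
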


\begin{proof}
For any threshold of size $t \ge 1$, let $\cC_{i_1},\ldots,\cC_{i_p}$ be the clusters in $\{\cC_x,\ldots,\cC_y\}$ whose size is at least $t$.
As \eqref{eq:bound_D} shows, 
all $D_{i_1,i_2},\ldots,D_{i_{p-1},i_p}$ are larger than $kt^2$. 
So $D_{x,y}\ge (p - 1) k t^2$.
Summing this bound over thresholds $t$ shows
\begin{align*}
\sum_{a=x}^{y}|\cC_a|
&=\sum_{t=1}^{\max_{x\le a\le y}|\cC_a|} \sum_{j = x}^y \mathbb{I}[|\cC_j| \ge t] \tag{$|\cC_j| = \sum_{t=1}^{\max_{x\le a\le y}|\cC_a|} \mathbb{I}[|\cC_a| \ge t]$}\\
& \le \sum_{t=1}^{\max_{x\le a\le y}|\cC_a|} \left(\frac{D_{x,y}}{kt^2}+1\right) \tag{$\sum_{j = x}^y \mathbb{I}[|\cC_j| \ge t] \le \frac{D_{x,y}}{kt^2}+1$}\\
& \le \frac{2D_{x,y}}{k} + \max_{x\le a\le y}|\cC_a|.
\end{align*}

\end{proof}

One more step towards bounding $\sum_{j: \text{ correlated with } i} \delta_{i,j}$ is to bound the summation of \\ $\sqrt{
    \frac{\min\{|\cC_i|^2,|\cC_j|^2\} \max\{|\cC_i|,|\cC_j|\}}
    {D_{i,j}}}$ in \eqref{eq:selection_on_delta} (the definition of $\delta_{i,j}$).
\begin{lemma}
\label{lem:one_sided_weight_sum_weighted}
Let $i < q$. Suppose that $\cC_i$ and $\cC_q$ are correlated and $|\cC_i|=\max_{i\le a\le q}|\cC_a|$. 
Then
\begin{align*}
\sum_{j=i+1}^{q} \sqrt{\frac{|\cC_i||\cC_j|^2}{D_{i,j}}}
\le 4 \sqrt{\sum_{j=i+1}^{q}|\cC_j|}.
\end{align*}
The same bound holds for $i>q$.
\end{lemma}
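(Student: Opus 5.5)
The plan is to feed the lower bound on $D_{i,j}$ from Lemma~\ref{lem:bounding_on_D_weighted} into each term and then compare the resulting sum with an integral of $1/\sqrt{s}$.

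\textbf{Step 1: lower bound on each $D_{i,j}$.} Fix $j\in\{i+1,\ldots,q\}$. Since the clusters are sorted by frequency, $\dist(\cC_i,\cC_j)\le \dist(\cC_i,\cC_q)\le 2\Delta_k$, so $\cC_i$ and $\cC_j$ are correlated and $D_{i,j}$ is defined. The hypothesis $|\cC_i|=\max_{i\le a\le q}|\cC_a|$ gives $\max_{i\le a\le j}|\cC_a|=|\cC_i|$. Applying Lemma~\ref{lem:bounding_on_D_weighted} to the pair $(i,j)$, we get
\[
D_{i,j}\ge \frac{k}{2}\, S_j, \qquad S_j:=\sum_{a=i+1}^{j}|\cC_a|.
\]
Here the maximum removed in the lemma is exactly the term $|\cC_i|$.

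\textbf{Step 2: reduce to a sum of the form $\sum_j |\cC_j|/\sqrt{S_j}$.} Using $|\cC_i|\le k$ and Step 1, each term satisfies
\[
\sqrt{\frac{|\cC_i|\,|\cC_j|^2}{D_{i,j}}}\le |\cC_j|\sqrt{\frac{2|\cC_i|}{k S_j}}\le \sqrt{2}\,\frac{|\cC_j|}{\sqrt{S_j}}.
\]

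\textbf{Step 3: compare with an integral.} Set $S_i=0$, so $S_j-S_{j-1}=|\cC_j|$ and $S_j>0$ for $j>i$. Since $1/\sqrt{s}\ge 1/\sqrt{S_j}$ on $[S_{j-1},S_j]$,
\[
\frac{S_j-S_{j-1}}{\sqrt{S_j}}\le \int_{S_{j-1}}^{S_j}\frac{\mathrm{d}s}{\sqrt{s}}.
\]
Summing over $j=i+1,\ldots,q$ telescopes to $\int_0^{S_q}s^{-1/2}\,\mathrm{d}s=2\sqrt{S_q}$. Therefore the left-hand side of the lemma is at most $2\sqrt{2}\sqrt{S_q}\le 4\sqrt{\sum_{j=i+1}^q|\cC_j|}$.

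The case $i>q$ follows from the symmetric argument with the indices reversed.

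The only delicate point is Step 1: we must make sure that every intermediate pair $(i,j)$ is correlated, so that $D_{i,j}$ exists and Lemma~\ref{lem:bounding_on_D_weighted} applies to it. This follows from the monotonicity of distances between clusters sorted by frequency. It also matters that the maximum subtracted in Lemma~\ref{lem:bounding_on_D_weighted} is exactly $|\cC_i|$, which is what makes $S_j$ start at $a=i+1$. If the factor $|\cC_i|\le k$ feels too lossy, it is not a problem here, because $D_{i,j}$ already carries the factor $k$ that cancels it.
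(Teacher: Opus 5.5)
Your proof is correct, but it takes a different and cleaner route than the paper.

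\textbf{The paper's route.} The paper argues by induction on $q$. In the inductive step it uses Lemma~\ref{lem:bounding_on_D_weighted} only in a weakened form, $D_{i,q}\ge kA/2$ with $A=\sum_{j=i+1}^{q-1}|\cC_j|$, which drops the newest cluster $|\cC_q|$ from the sum. It compensates with the separate lower bound $D_{i,q}\ge k|\cC_q|^2$ from \eqref{eq:bound_D}. Together these give the per-term bound $\min\{1,\sqrt{2}|\cC_q|/\sqrt{A}\}$. A two-case check then shows that $4\sqrt{A}$ plus this term is at most $4\sqrt{A+|\cC_q|}$.

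\textbf{Your route.} You use the full strength of Lemma~\ref{lem:bounding_on_D_weighted} for every $j$, since the prefix sum $S_j$ already contains $|\cC_j|$. This bounds the left-hand side by $\sqrt2\sum_j (S_j-S_{j-1})/\sqrt{S_j}\le\sqrt2\int_0^{S_q}s^{-1/2}\,\mathrm{d}s$.

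\textbf{What each buys.} Your argument needs no induction and no case split. It does not use the $k|\cC_j|^2$ lower bound at all, and it gives the better constant $2\sqrt2$. You also state explicitly that each intermediate pair $(i,j)$ is correlated because the clusters are sorted, which the paper uses without comment. The paper's inductive template is reused for Lemmas~\ref{lem:one_sided_weight_sum_weighted1_conjecture} and~\ref{lem:one_sided_weight_sum_weighted2_conjecture}. Your integral comparison would handle those as well, with only constant-factor changes.
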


\begin{proof}
We prove this bound by induction on $q$.  
Base case $q = i + 1$: Because $|\cC_i|\le k$, $D_{i,i+1}\ge k|\cC_{i+1}|^2$ by \eqref{eq:bound_D}. 
Thus $\sqrt{|\cC_i||\cC_{i+1}|^2/D_{i,i+1}}\le 1\le 4\sqrt{|\cC_{i+1}|}$, proving the base case.

For the induction step, denote $A:=\sum_{j=i+1}^{q-1}|\cC_j|$. 
Lemma~\ref{lem:bounding_on_D_weighted} gives $ D_{i,q}\ge kA/2$. 
Thus, $D_{i,q}\ge k|\cC_q|^2$ from \eqref{eq:bound_D} and $D_{i,q} \ge kA/2$ imply
\begin{align*}
\sqrt{\frac{|\cC_i||\cC_q|^2}{D_{i,q}}}\le \min\left\{1,\frac{\sqrt2|\cC_q|}{\sqrt A}\right\}.
\end{align*}
\begin{enumerate}
\item If $|\cC_q|\le \sqrt{\frac{A}{2}}\le A$, we obtain:
\begin{align*}
4\sqrt{A}+\frac{\sqrt{2}|\cC_q|}{\sqrt{A}} 
\le \sqrt{A}\left(4\sqrt{1+\frac{|\cC_q|}{A}}\right) = 4\sqrt{A+|\cC_q|}.
\end{align*}
\item If $|\cC_q|>\sqrt{\frac{A}{2}}$, then:
\begin{align*}
4\sqrt{A}+1
\le 4\sqrt{A+\sqrt{\frac{A}{2}}}
\le 4\sqrt{A+|\cC_q|}.
\end{align*}
\end{enumerate}

In both cases, the induction hypothesis extends from $q-1$ to $q$.
Finally, we have
\[
\sum\limits_{j=i+1}^q\sqrt{\frac{|\cC_i||\cC_j|^2}{D_{i,j}}}\le 4\sqrt{\sum\limits_{j=i+1}^q |\cC_j|}.
\]
The case $q < i$ works similarly.
\end{proof}

With Lemma~\ref{lem:one_sided_weight_sum_weighted}, we obtain an important lemma for Theorem~\ref{thm:total_energy} as follows:

\begin{lemma}
\label{lem:total_energy_weighted_row_sum}
For every cluster $\cC_i$,

\begin{align}
\sum_{j:\,\cC_j\text{ correlated with }\cC_i} \delta_{i,j}= O(\epsilon).\label{eq:summation_correlation}
\end{align}
\end{lemma}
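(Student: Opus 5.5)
Fix a cluster $\cC_i$ and split the correlated clusters $\cC_j$ into those with $j>i$ and those with $j<i$. By symmetry it suffices to bound the right side. From \eqref{eq:selection_on_delta},
\[
\delta_{i,j}=\frac{\epsilon}{k^{1/4}}\sqrt{\frac{\min\{|\cC_i|^2,|\cC_j|^2\}\max\{|\cC_i|,|\cC_j|\}}{D_{i,j}}}.
\]
The plan is to reduce the sum $\sum_j \delta_{i,j}$ to sums of the form handled by Lemma~\ref{lem:one_sided_weight_sum_weighted}. We then use Lemma~\ref{lem:bounding_on_D_weighted} together with \eqref{eq:bound_D} to show that the total size of clusters correlated with $\cC_i$ is only $O(\sqrt{k}\,\epsilon\cdot\text{stuff})$. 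The point is that the sizes $\sum|\cC_j|$ enter only under a square root, while the prefactor $k^{-1/4}$ compensates.

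\textbf{Step 1 (total size of correlated clusters).} Let $q$ be the farthest index to the right correlated with $i$. Lemma~\ref{lem:bounding_on_D_weighted} and $D_{i,q}\le 2k\Delta_k/d_{min}$ from \eqref{eq:bound_D} give
\[
\sum_{a=i}^{q}|\cC_a|-\max_{i\le a\le q}|\cC_a|\le \frac{4\Delta_k}{d_{min}}=O(\sqrt{k}\,\epsilon)
\]
by \eqref{eq:max D in correlation}. So the mass of correlated clusters, excluding the single largest, is $O(\sqrt k)$.

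\textbf{Step 2 (reduction to a maximal pivot).} Lemma~\ref{lem:one_sided_weight_sum_weighted} requires $\cC_i$ to be the largest cluster on the segment. To arrange this, scan rightwards and decompose the range $i<j\le q$ into segments delimited by successive prefix maxima $i=p_0<p_1<\dots$ of $|\cC_a|$. For $j$ with $|\cC_j|\le|\cC_i|$, the summand $\sqrt{\min^2\max/D_{i,j}}$ equals $\sqrt{|\cC_i||\cC_j|^2/D_{i,j}}$, which is exactly the form in Lemma~\ref{lem:one_sided_weight_sum_weighted}. Applied on the segment before the first larger cluster, this gives $4\sqrt{\sum_j|\cC_j|}=O(k^{1/4}\sqrt\epsilon)$ by Step 1. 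For $j$ with $|\cC_j|>|\cC_i|$, the summand is $\sqrt{|\cC_i|^2|\cC_j|/D_{i,j}}$. There we use \eqref{eq:bound_D}, $D_{i,j}\ge k|\cC_i|^2$, and Lemma~\ref{lem:bounding_on_D_weighted}, $D_{i,j}\ge \frac{k}{2}(\text{mass between}-\max)$. The larger clusters $p_1,p_2,\dots$ have strictly increasing sizes, and each beyond the first has all the previous ones (total mass at least $\sum_{s<t}|\cC_{p_s}|$) lying between it and $\cC_i$. So the $t$-th term is at most $\sqrt{2|\cC_i|^2|\cC_{p_t}|/(k\sum_{s<t}|\cC_{p_s}|)}$, and for $t=1$ it is at most $\sqrt{|\cC_{p_1}|/k}\le 1$. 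The smaller clusters lying between consecutive prefix maxima would be charged similarly with $D_{i,j}\ge k|\cC_j|^2$ and the mass bound. Alternatively, I would try to bound $\min^2\max\le |\cC_i||\cC_j|\min$ and apply Lemma~\ref{lem:one_sided_weight_sum_weighted} from the viewpoint of the larger cluster, using the symmetric case $i>q$.

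\textbf{Step 3 (conclude).} Summing gives a bound of order $\sqrt{\sum_{j\text{ corr}}|\cC_j|}+O(1)=O(k^{1/4})$ up to the contributions of the larger clusters. Multiplying by $\epsilon/k^{1/4}$ yields $O(\epsilon)$.

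The main obstacle is Step 2 for the correlated clusters larger than $\cC_i$. Their number is not a priori $O(1)$, and the term for a big $\cC_j$ is only controlled through the mass-between bound. I expect to need a geometric-growth or telescoping argument, $\sum_t \sqrt{|\cC_{p_t}|/\sum_{s<t}|\cC_{p_s}|}$, bounded via $\sqrt{x}\le$ a $\log$-free telescoping inequality like the one in the induction of Lemma~\ref{lem:one_sided_weight_sum_weighted}, so that the total stays $O(k^{1/4})$ rather than $O(k^{1/4}\log k)$.
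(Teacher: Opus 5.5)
Your proposal has a genuine gap: Step~2 is not completed. You leave open the bound for the correlated clusters larger than $\cC_i$, and for the smaller clusters lying beyond the first larger one, which you only say are ``charged similarly''. Instead you defer to a telescoping inequality that you do not supply, so Step~3 rests on an unproved estimate.

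The idea you are missing is that no telescoping is needed. Even one correlated $\cC_j$ with $|\cC_j|\ge|\cC_i|$ already forces $\cC_i$ to be small. Indeed, the two sides of \eqref{eq:bound_D} give $k|\cC_i|^2\le D_{i,j}\le 2k\Delta_k/d_{min}$, i.e.\ $|\cC_i|\le\sqrt{2\Delta_k/d_{min}}=O(k^{1/4}\sqrt{\epsilon})$. The paper therefore splits on the size of $\cC_i$ rather than decomposing by prefix maxima.

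\textbf{Case $|\cC_i|\ge 2\sqrt{\Delta_k/d_{min}}$.} Here every correlated cluster is strictly smaller than $\cC_i$. Also, every cluster between $\cC_i$ and the farthest correlated one is itself correlated, since clusters are contiguous in frequency. So Lemma~\ref{lem:one_sided_weight_sum_weighted} applies directly on both sides. Your Step~1 then bounds the total by $\frac{\epsilon}{k^{1/4}}\cdot 8\sqrt{4\Delta_k/d_{min}}=O(\epsilon^{1.5})$.

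\textbf{Case $|\cC_i|<2\sqrt{\Delta_k/d_{min}}$.} Here you only need the crude bound $\delta_{i,j}\le\frac{\epsilon}{k^{1/4}}\sqrt{\max\{|\cC_i|,|\cC_j|\}/k}$, which follows from $D_{i,j}\ge k\min\{|\cC_i|^2,|\cC_j|^2\}$.
\begin{itemize}
\item The largest correlated cluster on each side contributes at most $\epsilon/k^{1/4}$.
\item By Step~1, every other correlated cluster has size at most $4\Delta_k/d_{min}$. This also dominates $|\cC_i|$ under the standing assumption $d_{min}<\Delta_k$.
\item There are at most $8\Delta_k/d_{min}$ such clusters, since their number is at most their total size.
\end{itemize}
Together these clusters contribute at most
$\frac{\epsilon}{k^{1/4}}\cdot 8\frac{\Delta_k}{d_{min}}\sqrt{\frac{4\Delta_k}{d_{min}k}}=16\,\epsilon\,k^{-3/4}(\Delta_k/d_{min})^{3/2}=O(\epsilon^{2.5})$
by \eqref{eq:max D in correlation}.

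So your worry that the larger clusters need not be $O(1)$ in number is harmless. Apart from the two side maxima, each of their terms is tiny. Only the count times the term size matters, and that is controlled by the mass bound of Step~1. With this case split, the prefix-maxima decomposition and the telescoping argument you anticipate are unnecessary.
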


\begin{proof}
Let $p$ and $q$ be the leftmost and rightmost indices such that $\cC_p$ and $\cC_q$ are correlated with $\cC_i$.
Lemma~\ref{lem:bounding_on_D_weighted} and upper bound of \eqref{eq:bound_D} imply
\begin{equation}
\label{eq:left_nonlargest_mass}
    \sum_{a=p}^{i}|\cC_a|-\max_{p\le a\le i}|\cC_a|
    \le \frac{2}{k}D_{p,i}\le 4\frac{\Delta_k}{d_{min}}.
\end{equation}
With the same argument on the right, we have that if $q>i$,
\begin{equation}
\label{eq:right_nonlargest_mass}
    \sum_{a=i}^{q}|\cC_a|-\max_{i\le a\le q}|\cC_a|
    \le  4\frac{\Delta_k}{d_{min}}.
\end{equation}

First suppose $|\cC_i|\ge 2 \sqrt{\Delta_k/d_{min}}$.
If a correlated neighbor $\cC_j$ had $|\cC_j|\ge|\cC_i|$, then lower and upper bound of \eqref{eq:bound_D} imply
\begin{align*}
    k|\cC_i|^2 \le D_{i,j} < \frac{2 \Delta_k}{d_{min}},
\end{align*}
contradicting $|\cC_i|\ge 2\sqrt{\Delta_k/d_{min}}$.
Hence $\cC_i$ is the largest cluster on each correlated side.
For every correlated $j\ne i$, the summand in \eqref{eq:summation_correlation} is then $\sqrt{|\cC_i||\cC_j|^2/D_{i,j}}$. Therefore,
\begin{align*}
\sum_{\substack{j:\,j\ne i\\ \cC_j\text{ correlated with }\cC_i}} \delta_{i,j}
&= \sum_{\substack{j:\,j\ne i\\ \cC_j\text{ correlated with }\cC_i}} \frac{\epsilon}{k^{1/4}}\sqrt{\frac{\min\{|\cC_i|^2,|\cC_j|^2\}\max\{|\cC_i|,|\cC_j|\}}{D_{i,j}}}\\
&\le\frac{\epsilon}{k^{1/4}}\left( 4\sqrt{\sum_{a=p}^{i-1}|\cC_a|} +4\sqrt{\sum_{a=i+1}^{q}|\cC_a|}\right)\tag{By Lemma~\ref{lem:one_sided_weight_sum_weighted} on two sides}\\
&\le \frac{\epsilon}{k^{1/4}}\cdot 8\sqrt{4\frac{\Delta_k}{d_{min}}}\tag{by \eqref{eq:left_nonlargest_mass} and \eqref{eq:right_nonlargest_mass}}\\
&= 16\frac{\epsilon}{k^{1/4}}(\frac{\Delta_k}{d_{min}})^{1/2}.
\end{align*}

It remains to consider $|\cC_i| < 2 \sqrt{\Delta_k/d_{min}}$.
By lower bound of \eqref{eq:bound_D}, each summand is at most
\begin{align}
\sqrt{\frac{\min\{|\cC_i|^2,|\cC_j|^2\}\max\{|\cC_i|,|\cC_j|\}}{D_{i,j}}}
\le \sqrt{\frac{\max\{|\cC_i|,|\cC_j|\}}{k}}.
\label{eq:small_center_each_summand}
\end{align}
Since  every cluster has size at most $k$, the two possible largest clusters on each side of $\cC_i$ contribute at most $2$ in total by \eqref{eq:small_center_each_summand}.
For the remaining correlated clusters, \eqref{eq:left_nonlargest_mass} and \eqref{eq:right_nonlargest_mass} show that their total size is at most $8\Delta_k/d_{\min}$.
Each of these remaining clusters has size at most $4\Delta_k/d_{\min}$ by \eqref{eq:left_nonlargest_mass} or \eqref{eq:right_nonlargest_mass}.
Using \eqref{eq:small_center_each_summand} and the fact that the number of remaining clusters is at most their total size (i.e., the total number of frequencies among remaining clusters),
\begin{align*}
\sum_{\substack{j:\,j\ne i\\ \cC_j\text{ correlated with }\cC_i}} \delta_{i,j}&=
\sum_{\substack{j:\,j\ne i\\ \cC_j\text{ correlated with }\cC_i}} \frac{\epsilon}{k^{1/4}}\sqrt{\frac{\min\{|\cC_i|^2,|\cC_j|^2\}\max\{|\cC_i|,|\cC_j|\}}{D_{i,j}}} \\
&\le \frac{\epsilon}{k^{1/4}}\sum_{\substack{j:\,j\ne i\\ \cC_j\text{ correlated with }\cC_i}} \sqrt{\frac{\max\{|\cC_i|,|\cC_j|\}}{k}} \tag{by lower bound of \eqref{eq:bound_D}}\\
&\le \frac{\epsilon}{k^{1/4}}\cdot \left(8\frac{\Delta_k}{d_{\min}} \sqrt{ \frac{\max\left\{4\frac{\Delta_k}{d_{\min}},\ 2\sqrt{\frac{\Delta_k}{d_{min}}}\right\}}{k}}
+2\right) \\
&\le 16\frac{\epsilon}{k^{3/4}}(\frac{\Delta_k}{d_{min}})^{3/2}+2\frac{\epsilon}{k^{1/4}}.
\end{align*}
\end{proof}

Recall in $\eqref{eq:max D in correlation}$ that $\Delta_k/d_{min}=O(k^{0.5}\epsilon)$. Combining the two cases together makes:

\begin{align*}
\sum_{\substack{j:\,j\ne i\\ \cC_j\text{ correlated with }\cC_i}} \delta_{i,j}\le& \max\left\{16\frac{\epsilon}{k^{1/4}}(\frac{\Delta_k}{d_{min}})^{1/2},16\frac{\epsilon}{k^{3/4}}(\frac{\Delta_k}{d_{min}})^{3/2}+2\frac{\epsilon}{k^{1/4}}\right\}\\
\le& O(\epsilon^{1.5})+O(\epsilon^{2.5})+2\frac{\epsilon}{k^{1/4}}
\le O(\epsilon).\\
\end{align*}

\begin{proofof}{Theorem~\ref{thm:total_energy}}
For $i,j \in S$ with $i \neq j$ and $\cC_i$ being correlated with $\cC_j$, Claim~\ref{clm:satisbility_of_delta_construction} implies
\begin{align}
|\langle H_k \cdot x_{\cC_i}, H_k \cdot x_{\cC_j}\rangle|
&\le \delta_{i,j}\cdot \|H_k \cdot x_{\cC_i}\|_2 \cdot \|H_k \cdot x_{\cC_j}\|_2\nonumber\\
&\le \delta_{i,j}\left(\frac{\|H_k \cdot x_{\cC_i}\|_2^2}{2}+\frac{\|H_k \cdot x_{\cC_j}\|_2^2}{2}\right).\label{eq:bound inner product by delta}
\end{align}

If $\cC_i$ and $\cC_j$ are not correlated, then their filtered Fourier supports are disjoint, so their inner product is zero.
Hence, by Lemma~\ref{lem:total_energy_weighted_row_sum},
\begin{align*}
\left\|H_k \cdot \sum_{i\in S} x_{\cC_i} \right\|_2^2
&= \left\| \sum_{i\in S} H_k \cdot x_{\cC_i} \right\|_2^2 \\
&= \sum_{i\in S} \|H_k \cdot x_{\cC_i} \|_2^2
+ \sum_{\substack{i,j\in S\\ i \neq j}} \langle H_k \cdot x_{\cC_i}, H_k \cdot x_{\cC_j}\rangle \\
&= \sum_{i\in S} \|H_k \cdot x_{\cC_i} \|_2^2
+ \sum_{i \in S} \sum_{\substack{j\in S\\ i \neq j}} \delta_{i,j} \|H_k \cdot x_{\cC_i}\|_2^2 \\
&=\sum_{i\in S}\left(1 \pm O(\epsilon)\right) \|H_k \cdot x_{\cC_i} \|_2^2\\
&=\left(1 \pm O(\epsilon)\right)\sum_{i\in S} \|H_k \cdot x_{\cC_i} \|_2^2.
\end{align*}
\end{proofof}

\subsection{Localized Filters for Small Clusters}
\label{subsec:heavy_region_localized_filters}

\begin{claim}\label{clm:cluster_concentration}
    For any $\ell$, $\delta$ such that $\Delta_{\ell,\delta}\le \Delta_k$, and any cluster $\cC$ with $|\cC| \le \ell$, $\wh{H_{\ell,\delta}\cdot x_{\cC}}$ is a good approximation of $\wh{H \cdot x_{\cC}}$: 
    \begin{align*}
        \| \wh{H_{\ell,\delta}\cdot x_{\cC}} - \wh{H_k \cdot x_{\cC}}\|_2^2 \le O(\delta) \cdot \|\wh{H_k \cdot x_{\cC}}\|_2^2.
    \end{align*}

    Since $\wh{H_{\ell,\delta} \cdot x_{\cC}}$ is supported in $R:=range(\cC) \pm \Delta_{\ell,\delta}$, this implies that a $1-O(\delta)$ fraction of the energy of $\|\wh{H_k \cdot x_{\cC}}\|_2^2$ is concentrated in $R$:
    \begin{align*}
        \|\wh{H_k \cdot x_{\cC}}\|_{R}^2 \ge (1-O(\delta)) \cdot \|\wh{H_k \cdot x_{\cC}}\|_2^2.
    \end{align*}
\end{claim}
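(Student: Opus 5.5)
Since Fourier transform is an isometry (Theorem~\ref{thm:Plancherel_Parseval}), we will work in the time domain throughout: it suffices to show $\|H_{\ell,\delta}\cdot x_{\cC} - H_k\cdot x_{\cC}\|_2^2 \le O(\delta)\,\|H_k\cdot x_{\cC}\|_2^2$. Expanding the square gives
\[
\|H_{\ell,\delta} x_{\cC} - H_k x_{\cC}\|_2^2 = \|H_{\ell,\delta} x_{\cC}\|_2^2 + \|H_k x_{\cC}\|_2^2 - 2\,\mathrm{Re}\,\langle H_{\ell,\delta} x_{\cC}, H_k x_{\cC}\rangle .
\]
Property 2 of Lemma~\ref{lemm:construction_H} controls the first two terms, because $x_{\cC}$ is $\ell$-sparse and $\ell\le k$. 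Explicitly, $\|H_{\ell,\delta} x_{\cC}\|_2^2=(1\pm\delta)\|x_{\cC}\|_{[-1,1]}^2$ and $\|H_k x_{\cC}\|_2^2=(1\pm\epsilon)\|x_{\cC}\|_{[-1,1]}^2$. What remains is to show that the cross term satisfies $\langle H_{\ell,\delta}x_{\cC},H_kx_{\cC}\rangle=(1\pm O(\delta))\|x_{\cC}\|_{[-1,1]}^2$.

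For the cross term I plan to use the pointwise structure of the filters, which is proved in Appendix~\ref{sec:proof_H_delta}. Both $H_{\ell,\delta}$ and $H_k$ are real-valued and lie in $[0,1+o(1)]$. The filter $H_{\ell,\delta}$ equals $1\pm\delta$ on $|t|\le 1-\Omega(\delta/\ell^2)$, and $H_k$ is even closer to $1$ there. Both filters decay outside $[-1,1]$. On the inner region the integrand $H_{\ell,\delta}H_k|x_{\cC}|^2$ is $(1\pm O(\delta))|x_{\cC}|^2$. The contribution from the rest of $\mathbb{R}$ is bounded by the same argument that proves Property 2 in the appendix. On the thin boundary strips of width $O(\delta/\ell^2)$ we use $|x_{\cC}(t)|^2\le O(\ell^2)\|x_{\cC}\|^2_{[-1,1]}$ (Lemma~\ref{lemma:bounds_Fourier_sparse_signals}, item~\ref{item:uniform_bound_on_interval}). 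Outside $[-1,1]$ we combine the growth bounds (items~\ref{item:polynomial_bound_outside_interval},\ref{item:exponential_bound_outside_interval}) with the rapid decay of $H_{\ell,\delta}$ at scale $\ell$, which gives an $O(\delta)$ fraction. An alternative I would keep in reserve is the product filter $H_{\ell,\delta}H_k$: it satisfies the same two pointwise facts, so the proof of Property 2 yields $\|\sqrt{H_{\ell,\delta}H_k}\,x_{\cC}\|_2^2=(1\pm O(\delta))\|x_{\cC}\|^2_{[-1,1]}$ directly. Combining the three terms, the square is at most $O(\delta+\epsilon)\|x_{\cC}\|^2_{[-1,1]}$. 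In the intended use $\delta=\epsilon^2/k$ is below $\epsilon$, so I will need $H_k$'s own error to be at most $\delta$. The cleaner route is to compare each filter to $\mathbf{1}_{[-1,1]}$ separately, $\|H x_{\cC}-\mathbf{1}_{[-1,1]}x_{\cC}\|_2^2\le O(\delta_H)\|x_{\cC}\|^2_{[-1,1]}$, and then apply the triangle inequality.

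\textbf{Main obstacle.} This last point is where I expect the difficulty. Property 2 is only a statement about norms, and I need the stronger pointwise closeness $H\cdot x\approx \mathbf{1}_{[-1,1]}x$ in $\ell_2$. I would prove it from the appendix construction, which shows $H=1\pm\delta$ on the interior and integrates $|H x|^2$ outside to $O(\delta)$. For $H_k$ the error parameter is $\epsilon$ rather than $\delta$. I would handle this by observing that $H_k$'s error on an $\ell$-sparse signal with $\ell\ll k$ is governed by the boundary strip of width $\epsilon/k^2$. Its contribution is then $O(\ell^2\epsilon/k^2)\le O(\delta)$ when $\delta\ge \epsilon\ell^2/k^2$, which I would verify holds for the parameters $\ell=\theta_s$, $\delta=\epsilon^2/k$.

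The second statement of the claim then follows by restricting to the support. Since $\wh{H_{\ell,\delta}x_{\cC}}=\wh{H_{\ell,\delta}}*\wh{x_{\cC}}$ is supported in $R$,
\[
\|\wh{H_kx_{\cC}}\|_{\overline R}^2=\|\wh{H_kx_{\cC}}-\wh{H_{\ell,\delta}x_{\cC}}\|_{\overline R}^2\le O(\delta)\|\wh{H_kx_{\cC}}\|_2^2 ,
\]
where $\overline{R}$ denotes the complement of $R$. Subtracting this from the total energy gives the claimed concentration in $R$.
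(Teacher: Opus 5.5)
Your final route is correct and is essentially the paper's proof. You compare each filter with $\mathbf{1}_{[-1,1]}$ on three pieces: the interior, the boundary strips (via $|x_{\cC}|^2\le O(\ell^2)\|x_{\cC}\|^2_{[-1,1]}$), and the tails, and you use $\delta=\Omega(\epsilon\ell^2/k^2)$ to absorb $H_k$'s strip of width $O(\epsilon/k^2)$; the paper runs the same three-region pointwise estimate on $|H_k-H_{\ell,\delta}|^2|x_{\cC}|^2$ directly instead of passing through $\mathbf{1}_{[-1,1]}$, restricts to $\overline{R}$ exactly as you do, and takes $\delta=\Omega(\epsilon\ell^2/k^2)$ from the hypothesis $\Delta_{\ell,\delta}\le\Delta_k$ rather than checking it only for $\ell=\theta_s$, $\delta=\epsilon^2/k$ (as you noticed, the expand-the-square route is unnecessary since it only yields $O(\delta+\epsilon)$).
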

\begin{proof}
We may assume that $\Delta_{\ell, \delta} \le \Delta_k$ holds; otherwise we can simply replace all $H_{\ell, \delta}$ with $H_k$ in the above statement.
We first compare $H_k$ and $H_{\ell,\delta}$ in the time domain.  
Let $I_{\ell,\delta} := \left[-1+\frac{C\delta}{\ell^2}, 1-\frac{C\delta}{\ell^2} \right]$.
Applying Claim~\ref{claim:H-bounds} to $H_{\ell,\delta}$ shows $H_{\ell,\delta}(t)=1\pm O(\delta)$ for $t\in I_{\ell,\delta}$.
By assumption, we have $\delta = \Omega (\frac{\ell^2}{k^2} \epsilon)$. 
So $I_{\ell,\delta}$ lies inside the region on which the filter $H_k$ is also equal to $1 + \left(\frac{\varepsilon}{k}\right)^{\Omega(1)} \subset 1\pm O(\delta)$. 
Hence, $|H_k(t)-H_{\ell,\delta}(t)| \le O(\delta)$. 
Therefore,
\begin{align*}
    \int_{I_{\ell,\delta}} |H_k(t)-H_{\ell,\delta}(t)|^2 |x_{\cC}(t)|^2 dt \le O(\delta) \cdot \int_{-1}^{1}|x_{\cC}(t)|^2 dt.
\end{align*}
On the boundary layer $[-1,1]\setminus I_{\ell,\delta}$, whose length is $O(\delta / \ell^2)$,  we simply use the bound $|H_k|,|H_{\ell,\delta}|\le 2$.
This implies
\begin{align*}
    \int_{[-1,1]\setminus I_{\ell,\delta}} |H_k(t)-H_{\ell,\delta}(t)|^2 |x_{\cC}(t)|^2 dt
    \le O\left(\frac{\delta}{\ell^2}\right) \cdot \sup_{|t|\le1}|x_{\cC}(t)|^2
    \le O(\delta) \cdot \int_{-1}^{1}|x_{\cC}(t)|^2 dt,
\end{align*}
where the last inequality is by Property~\ref{item:uniform_bound_on_interval} of Lemma~\ref{lemma:bounds_Fourier_sparse_signals}.

It remains to control the tails outside $[-1,1]$.  
Since $|\cC| \le \ell$,
\begin{align*}
    \int_{\R\setminus[-1,1]} |H_k(t)-H_{\ell,\delta}(t)|^2 |x_{\cC}(t)|^2 dt
    \le \int_{\R\setminus[-1,1]} (|H_k(t)|^2 + |H_{\ell,\delta}(t)|^2)\cdot |x_{\cC}(t)|^2 dt
    \le O(\delta) \cdot \int_{-1}^{1}|x_{\cC}(t)|^2 dt,
\end{align*}
by Claim~\ref{claim:H-bounds} applied with sparsity $\ell$.
Summing the above results, we obtain
\begin{align*}
    \int_{\R} |H_k(t)-H_{\ell,\delta}(t)|^2 |x_{\cC}(t)|^2 dt \le O(\delta) \cdot \int_{-1}^{1}|x_{\cC}(t)|^2 dt.
\end{align*}

With Plancherel,
\begin{align*}
    \| \wh{H_{\ell,\delta}\cdot x_{\cC}} - \wh{H_k \cdot x_{\cC}} \|_2^2
    \le O(\delta) \cdot \int_{-1}^{1}|x_{\cC}(t)|^2 dt
    \le O(\delta) \cdot \| \wh{H_k \cdot x_{\cC}} \|_2^2,
\end{align*}
where the last inequality uses the inside-energy guarantee of $H_k$.

Finally, since $\supp (\wh{H_{\ell,\delta}\cdot x_{\cC}}) \subseteq R$,
\begin{align*}
    \| \wh{H_k \cdot x_{\cC}} \|_{[-\infty, \infty]\setminus R}^2
    = \|\wh{H_{\ell,\delta}\cdot x_{\cC}} - \wh{H_k \cdot x_{\cC}}\|_{[-\infty, \infty]\setminus R}^2
    \le O(\delta) \cdot \| \wh{H_k \cdot x_{\cC}} \|_2^2.
\end{align*}
Equivalently,
\begin{align*}
    \|\wh{H_k \cdot x_{\cC}} \|_{R}^2 \ge (1-O(\delta)) \cdot\| \wh{H_k \cdot x_{\cC}} \|_2^2.
\end{align*}
\end{proof}

\begin{proofof}{Corollary~\ref{cor:localized_theta_small_clusters}}
Recall that $\Delta_{\theta}=\Delta_{\theta_s,\epsilon^2/k}$. Because $\theta_s$ is chosen to satisfy $\Delta_{\theta_s,\epsilon^2/k} < \dist(\cC_i,\cC_j)/2$, it indicates all clusters in $T$ have disjoint support sets in $\wh{H_{\theta} \cdot x_{\cC_j}}$.
By Claim~\ref{clm:cluster_concentration}, for each $j\in T$,
\begin{align*}
    \|\wh{H_k \cdot x_{\cC_j}}-\wh{H_\theta\cdot x_{\cC_j}}\|_2^2
    \le \frac{\epsilon^2}{k}\|\wh{H_k \cdot x_{\cC_j}}\|_2^2 .
\end{align*}
Hence,
\begin{align*}
    \|\sum_{j\in T}\wh{H_k \cdot x_{\cC_j}} - \sum_{j\in T}\wh{H_\theta\cdot x_{\cC_j}}\|_2 
    & \le \sum_{j \in T} \| \wh{H_k \cdot x_{\cC_j}} - \wh{H_{\theta} \cdot x_{\cC_j}}\|_2 \\
    & \le \frac{\epsilon}{\sqrt{k}} \cdot \sum_{j \in T} \| \wh{H_k \cdot x_{\cC_j}} \|_2 \\
    & \le \frac{\epsilon}{\sqrt{k}} \sqrt{|T|} \cdot \sqrt{\sum_{j \in T}\|\wh{H_k \cdot x_{\cC_j}}\|_2^2} \\
    & \le \epsilon \cdot \sqrt{\sum_{j \in T}\|\wh{H_k \cdot x_{\cC_j}}\|_2^2},
\end{align*}
where the third step is by Cauchy-Schwarz inequality and the last step follows from that the number of clusters is at most $k$.
\end{proofof}

\subsection{Proof of Claim~\ref{clm:bound_range_clusters}}
\label{subsec:heavy_region_range}

Let $r_\ell$ denote the maximum possible length of the range of any cluster with at most $\ell$ frequencies. Since a cluster is generated by merging two smaller clusters, we have
\[
r_\ell=\max\limits_{i=1}^{\ell-1} \left\{r_i+r_{\ell-i}+\min\left\{ d_{min}\cdot \min\{i^2,(\ell-i)^2\}, 2\Delta_{k}\right\}\right\}
\]
for any $\ell \ge 2$. Based on symmetry, the upper bound of $i$ can be replaced with $\frac{\ell}{2}$.


Let $i_0:= \sqrt{\frac{2\Delta_k}{d_{min}}} $ such that $d_{min}\cdot i_0^2 = 2\Delta_k$.

We prove the following hypothesis of $\ell$ by induction: 
\[
r_{\ell} \le
\begin{cases}
d_{min}\cdot \frac{1}{2}(\ell^2-\ell),   &\ell\le i_0,\\
d_{min}\cdot  (\frac{3}{2}i_0\cdot \ell  - \frac{1}{2}\cdot \ell -i_0^2),    & \ell>i_0.
\end{cases}
\]

The base case $\ell=1$ follows from $r_{\ell}=0$. 

Let $p_{\ell,i}:=\left(r_i+r_{\ell-i}+\min\big\{d_{min}\cdot \min\{i^2,(\ell-i)^2\}, 2\Delta_k\}\right) /  d_{min}$. Since $r_\ell = d_{min}\cdot  \max_{i=1}^{\ell/2} p_{\ell,i}$, it is enough to bound $p_{\ell,i}$ in difference cases for the inductive step of $r_\ell$. 

\begin{enumerate}
\item If $0<i \le  \ell -i \le i_0$, we have $\ell \le 2i_0$, then
\begin{align*}
p_{\ell,i}
&\le \frac{1}{2}(i^2-i)+ \frac{1}{2}((\ell-i)^2-(\ell-i)) + i^2 \\
&= \frac{1}{2}(\ell^2 - \ell)- (\ell-2i)i\\
&\le \frac{1}{2}(\ell^2 - \ell).
\end{align*}
$\ell \in [i_0,2i_0]$ guarantees $\frac{1}{2}(i_0 - \ell)(2 i_0 - \ell)\le 0$, which implies $\frac{1}{2}(\ell^2 -\ell)\le \frac{3}{2}i_0\cdot \ell  - \frac{1}{2}\cdot \ell -i_0^2 $ for $\ell \in [i_0,2i_0]$. This implies 
$$
p_{\ell,i} \le
\begin{cases}
\frac{1}{2}(\ell^2-\ell), &\ell \le i_0,\\
\frac{3}{2}i_0\cdot \ell - \frac{1}{2}\cdot \ell - i_0^2, &\ell > i_0.
\end{cases}
$$
    
\item If $0< i\le i_0 < \ell -i$, we have $\ell > i_0$, then
\begin{align*}
p_{\ell,i}&\le \frac{1}{2}(i^2-i)+(\frac{3}{2}i_0\cdot (\ell-i)-\frac{1}{2}\cdot (\ell-i) -  i_0^2)+ i^2\\
&= \frac{3}{2}i_0\cdot \ell - \frac{1}{2}\cdot \ell -\frac{3}{2}(i_0-i)\cdot i - i_0^2     \\
&\le \frac{3}{2}i_0\cdot \ell - \frac{1}{2}\cdot \ell - i_0^2. 
\end{align*}    

\item If $i_0 < i \le \ell -i $, we have $\ell > i_0$, then
\begin{align*}
p_{\ell,i}&\le   (\frac{3}{2}i_0\cdot i  - \frac{1}{2}\cdot i -i_0^2) +  (\frac{3}{2}i_0\cdot (\ell-i)  - \frac{1}{2}\cdot (\ell-i) -i_0^2) +  i_0^2\\
&= \frac{3}{2}i_0\cdot \ell - \frac{1}{2}\cdot 
\ell -i_0^2. 
\end{align*}
\end{enumerate}
Combining all of the above cases with $r_\ell=d_{min} \cdot \max\limits_{i=1}^{\ell/2}p_{\ell,i}$, we finish the proof of $r_\ell$.

The next observation is that $\sum_{i \in [n]} |range(\cC_i)|$ is upper bounded by $r_k$ shown above, where $k=\sum_{i\in[n]} |\cC_i|$. Because for any pair of clusters $\cC_i,\cC_j$, $|range(\cC_i)|+ |range(\cC_j)| \le r_{|\cC_i|}+r_{|\cC_j|} \le r_{|\cC_i|+|\cC_j|}$, 
\[\sum_{i \in [n]} |range(\cC_i)| \le r_{\sum_{i\in [n]}|\cC_i|} \le r_k.\]

The last part is to bound $r_k$. Recall $\Delta_k=C^2 (\frac{k^2 \log k/\epsilon}{\epsilon} + k^2 \log k \cdot \log (k^2 \log k) )$ and $d_{min}:=\frac{2 C_H k^{1.5} \log^3 k/\epsilon}{\epsilon^2} $. Since $\epsilon$ is a small constant, we have
\begin{align}
    \Delta_{k} &= C^2 (\frac{k^2 \log k/\epsilon}{\epsilon} + k^2 \log k \cdot \log (k^2 \log k) ) = \Theta(\frac{k^2\log k}{\epsilon}+k^2\log^2 k), \label{eq:Theta_Delta_k}\\
    d_{min} &=\frac{2 C_H k^{1.5} \log^3 k/\epsilon}{\epsilon^2} = \Theta(\frac{k^{1.5}\log^3 k}{\epsilon^2}). \label{eq:Theta_d}
\end{align}
Combining \eqref{eq:Theta_Delta_k} and \eqref{eq:Theta_d}, we have a bound of $r_k$ and $\sum_{i \in [n]} |range(\cC_i)|$:
\begin{align*}
\sum_{i \in [n]} |range(\cC_i)| &\le r_k \le \frac{3}{2} d_{min} \sqrt{\frac{2\Delta_k}{d_{min}}} \cdot k = \frac{3}{2}  \sqrt{2\Delta_k\cdot d_{min}}\cdot k\\
&= \Theta \left( k^{2.75}\log^2 k \cdot   \epsilon^{-1}\cdot \sqrt{ \epsilon^{-1}+ \log k} \right ) = \tilde{O}(\frac{k^{2.75}}{\epsilon^{1.5}}).
\end{align*}

\section{Heavy Frequency Recovery under Conjecture~\ref{conj:growth_rate}}
\label{sec:heavy-region-clustering-conjecture}
Assuming Conjecture~\ref{conj:growth_rate}, we show a better guarantee on the list of frequencies returned from Lemma~\ref{lem:find_heavy_frequencies}. In fact, Conjecture~\ref{conj:growth_rate} could improve the construction of $H_k$ and the query complexity by a $\log k$ factor. For ease of exposition, we focus on the improvement of the main term $k^{O(1)}$ in this work and omit that part.

First, we improve Claim~\ref{clm:almost_orthogonal_clusters} under Conjecture~\ref{conj:growth_rate}. 
\begin{claim}
\label{clm:almost_orthogonal_clusters_under_conjecture}
For two signals of Fourier sparsity $\ell$ and $r$ separately ($\ell \le r\le k$)
\begin{align*}
    w(t):=\sum_{j=1}^{\ell}\alpha_j e^{2\pi\bi f'_jt} \qquad \text{ and }
    \qquad
    z(t):=\sum_{j=1}^{r}\beta_j e^{2\pi\bi f_jt},
\end{align*}
if the distance between their frequencies $\min_{j,j'} |f_j-f'_{j'}| \ge \min\left\{ C_H (\frac{\ell r}{\delta} + \frac{\ell^2 \log (r/\delta)}{\delta^2}) \log^2 k, 2\Delta_H \right\}$ 
for some constant $C_H$,
then
\begin{align*}
    |\langle H_k w,H_k z\rangle|
    \le \delta \cdot \|H_k w\|_2 \cdot \|H_k z\|_2
    \qquad \text{ and } \qquad
    |\langle w,z\rangle_{[-1,1]}|
    \le \delta \cdot \|w\|_{[-1,1]} \cdot \|z\|_{[-1,1]} .
\end{align*}
\end{claim}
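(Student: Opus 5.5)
We follow the proof of Claim~\ref{clm:almost_orthogonal_clusters} in Appendix~\ref{sec:proof_clm_almost_orthogonal}. The only change is that Conjecture~\ref{conj:growth_rate} replaces Property~\ref{item:exponential_bound_outside_interval} of Lemma~\ref{lemma:bounds_Fourier_sparse_signals} when we control the filtered signals just outside $[-1,1]$. The plan is to build an auxiliary filter $G:=H_{\ell,\delta'}$ adapted to the sparser signal $w$, with Fourier support $[-\Delta_G,\Delta_G]$. We take $\Delta_G$ smaller than half the frequency separation, so that $\wh{G\cdot w}$ and $\wh{z}$ (or $\wh{G \cdot w}$ and $\wh{H_k\cdot z}$, after accounting for the support of $\wh{H_k}$) have disjoint supports. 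The disjointness gives $\langle G\cdot w, H_k^2 \cdot z\rangle$ or $\langle G\cdot w, z\rangle_{\R}$-type inner products equal to zero. The error then comes only from $\langle (H_k-G)\cdot w, H_k\cdot z\rangle$, which we bound by Cauchy--Schwarz as $\|(H_k-G)w\|_2\cdot\|H_k z\|_2$. The same template handles $\langle w,z\rangle_{[-1,1]}$ with $\mathbf{1}_{[-1,1]}$ in place of $H_k$.

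The key step is to redesign $G$ under the conjecture so that $\|(H_k-G) w\|_2\le\delta\|w\|_{[-1,1]}$ holds with support size $\Delta_G = C_H(\frac{\ell r}{\delta}+\frac{\ell^2\log(r/\delta)}{\delta^2})\log^2 k$. In Lemma~\ref{lemm:construction_H} the support has two parts. One is the boundary width $\frac{\ell^2}{\delta}$, which makes $G$ drop from $1$ to $0$ within $\delta/\ell^2$ of $\pm1$; that loss is controlled by Property~\ref{item:uniform_bound_on_interval}. The other is the factor $S=\ell^2\log \ell$, which kills the growth $e^{k^2(|t|-1)}$ of $w$ outside. Under Conjecture~\ref{conj:growth_rate}, $|w(t)|\le \ell^{O(1)} e^{O(\ell)\sqrt{|t|-1}}\max_{[-1,1]}|w|$. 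So at distance $\tau=|t|-1$, the filter $G$ only needs to decay like $e^{-\Omega(\ell\sqrt{\tau})}\cdot\delta$ times polynomial factors.

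A $\sinc^{p}$ factor of width parameter $s$ decays like $(s\tau)^{-p}$. To beat $e^{O(\ell\sqrt\tau)}$ on the scales $\tau\in[\delta/\ell^2,1]$ we would use dyadic scales $\tau\approx 2^{-j}$. At scale $j$ we need $\ell\,2^{-j/2}$ extra decay, and the same telescoping product of $\sinc$ powers as in the $\rect_{CS/2^j}^{*2^jC}$ construction should then give a total support of order $\ell\cdot\sqrt{\ell^2/\delta}\cdot\log$. That is the origin of the $\frac{\ell r}{\delta}$ term, with $r$ appearing because we need the cross term against the larger signal $z$. In the $H_k$ version we must also handle $H_k\cdot z$, where $z$ is $r$-sparse and its growth factor is $e^{O(r)\sqrt\tau}$. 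The $\frac{\ell^2\log(r/\delta)}{\delta^2}$ term comes from the sharp-cutoff $\sinc^{C\log(r/\delta)}$ factor, which forces $|G-H_k|\le\delta$ inside and uses Property~\ref{item:leverage_bound_on_interval}/\ref{item:uniform_bound_on_interval} near the boundary.

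The main obstacle is this tail estimate: showing $\int_{|t|>1}|G(t)-H_k(t)|^2|w(t)|^2\,dt \le \delta^2\|w\|^2_{[-1,1]}$ (and the analogous mixed term involving $z$) when the only available input is the conjectured $\sqrt{\tau}$-growth. We would split $|t|-1$ into dyadic shells. On each shell we would multiply the conjectured growth by the decay of $G$ there, and use the polynomial bound (Property~\ref{item:polynomial_bound_outside_interval}) for $|t|\ge 2$. The remaining work is choosing the $\sinc$ exponents at each scale so that the total support stays $\Delta_G$. If $\Delta_G\ge 2\Delta_H$, we simply use $H_k$ itself, since then the supports of $\wh{H_k w}$ and $\wh{H_k z}$ are disjoint and the inner products vanish. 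This accounts for the $\min$ in the hypothesis.
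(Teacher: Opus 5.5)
Your filter idea is the right one. The paper uses a multiscale ``square-root localizer'' $M'_{\ell,r,\delta}$ with Fourier support below the separation, flat up to distance $\delta^2/\ell^2$ from $\pm1$, and decaying like $e^{-\Omega(Cr\sqrt{|t|-1})}$ outside. (The unconditional proof you say you follow uses $M_{\ell,r,\delta}$ and Property~\ref{item:polynomial_bound_outside_interval}, not $H_{\ell,\delta'}$ and Property~\ref{item:exponential_bound_outside_interval}.)

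The gap is in the decomposition you build around the filter. In the only nontrivial regime the separation is below $2\Delta_k$, and it can be as small as $\tilde O(\ell r/\delta+\ell^2/\delta^2)\ll\Delta_k$. Then $\wh{G\cdot w}$ overlaps $\wh{H_k\cdot z}$ and $\wh{H_k^2\cdot z}$, so neither $\langle Gw,H_kz\rangle$ nor $\langle Gw,H_k^2z\rangle$ vanishes. The separation is fixed by hypothesis, so there is no room to ``account for'' $\Delta_k$. Likewise $\langle Gw,\mathbf{1}_{[-1,1]}z\rangle$ is not zero in general, since $\mathbf{1}_{[-1,1]}z$ has no compactly supported spectrum. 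The only inner product that vanishes is $\int_{\R}G\,w\,\bar z=\sum_{j,j'}\alpha_j\bar\beta_{j'}\wh{G}(f_{j'}-f'_j)$. Once you use it, the remainder is no longer $\langle (H_k-G)w,H_kz\rangle$, and it cannot be bounded by Cauchy--Schwarz against $\|z\|_2$, which is infinite on $\R$. The estimate you single out as the main obstacle, $\int_{|t|>1}|G-H_k|^2|w|^2\le\delta^2\|w\|_{[-1,1]}^2$, is the easy part: it involves only $w$ and needs decay only at rate $\ell$. It is not where the $\frac{\ell r}{\delta}$ term comes from.

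The paper proves the unfiltered statement first, via $\langle w,z\rangle_{[-1,1]}=-\int_{\R\setminus[-1,1]}M'w\bar z+\int_{-1}^{1}(1-M')w\bar z$.
\begin{itemize}
\item The second term is handled by Cauchy--Schwarz on $[-1,1]$, with a boundary layer of width $\delta^2/\ell^2$ and $|w|\le O(\ell)\|w\|_{[-1,1]}$.
\item The first term is bounded \emph{pointwise}, applying Conjecture~\ref{conj:growth_rate} to both $w$ and $z$, so $|w\bar z|\le\poly(r)e^{O(r\sqrt{|t|-1})}\|w\|_{[-1,1]}\|z\|_{[-1,1]}$.
\end{itemize}
This mixed tail forces decay rate $r$ starting at scale $\delta^2/\ell^2$. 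The resulting support is $\sum_i\min\{2^i\ell^2/\delta^2,\,r^2/2^i\}\le4\ell r/\delta$. Your heuristic $\ell\sqrt{\ell^2/\delta}$ has both the wrong rate ($\ell$ instead of $r$) and the wrong boundary scale ($\delta/\ell^2$ instead of $\delta^2/\ell^2$).

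The $H_k$ statement is then obtained by reduction, not by a new filter. If the separation is at least $2\Delta_k$, the supports are disjoint. Otherwise the hypothesis forces $\delta^2=\Omega(\epsilon\ell^2/k^2)$, and this gives $|\langle H_kw,H_kz\rangle-\langle w,z\rangle_{[-1,1]}|=O(\delta)\|w\|_{[-1,1]}\|z\|_{[-1,1]}$. That bound uses the flat part of $H_k$, its width-$O(\epsilon/k^2)$ boundary layer together with the uniform bound on $w$, and its tails against Property~\ref{item:exponential_bound_outside_interval}. One concludes with $\|w\|_{[-1,1]}\le O(1)\|H_kw\|_2$.

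Finally, the multiscale $\sinc$ product and its tail bounds (Claim~\ref{claim:sqrt-localizer-bounds}) are exactly what you defer as ``remaining work''. So the proposal does not yet contain the filter construction either.
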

The proof of Claim~\ref{clm:almost_orthogonal_clusters_under_conjecture} is very similar to the proof of Claim~\ref{clm:almost_orthogonal_clusters}, which is deferred to Appendix~\ref{sec:almost_orthogonal_clusters_under_conjecture}.

In this section, We reset $d'_{min}:=\frac{C_H k \log^4 (k/\epsilon)}{\epsilon^2}$ and use Algorithm~\ref{alg:cluster2} to partition frequencies into clusters. The only difference compared to Algorithm~\ref{alg:cluster1} is that the distance becomes $d'_{min} \cdot \min\{|\cC_i|,|\cC_j|\}$.
\begin{algorithm} 
    \caption{Partition Frequencies into Clusters \label{alg:cluster2}}
    \begin{algorithmic}
        \Procedure{}{frequencies $f_1,\ldots,f_k$ with amplitudes $\alpha_1,\ldots,\alpha_k$}      
        \State Define $k$ clusters $\cC_i:=\{(f_i,\alpha_i)\}$ and $d'_{min}:=\frac{C_H k \log^4 (k/\epsilon)}{\epsilon^2}$
        
        \While{ $\exists~\cC_i$ and $\cC_j$ such that $       
        \operatorname{dist}(\cC_i,\cC_j) \le \min\bigg\{ d'_{min} \cdot \min\{|\cC_i|,|\cC_j|\}, 2\Delta_k \bigg\}
        $}
        \State  merge all clusters whose frequencies lie between $\cC_i$ and
    $\cC_j$ into one cluster
        \EndWhile
        \State Return all remaining clusters $\cC$
    \EndProcedure
    \end{algorithmic}
\end{algorithm}

Now we state the main guarantee and finish its proof in the rest of this section.
\begin{theorem}\label{thm:covering_radius_conjecture}
     Let $D:=\frac{k^{2}}{\epsilon^2} \cdot (\log k)^{O(1)}$ be the covering radius and $L$ be the list of frequencies from Lemma~\ref{lem:find_heavy_frequencies}. Then \[
    \mathcal{R}:=\{\cC_i: \exists f \in \cC_i \text{ with } min_{\tilde{f}_i \in L} |\tilde{f}_i - f| \le D \}\] covered by $L$ within the distance $D$ satisfies $\|H_k \cdot (\sum_{\cC \in \mathcal{R}} x_{\cC})  - H_k x\|_{2}^2 = O(\epsilon) \cdot \|H_k x\|_2^2$. 
\end{theorem}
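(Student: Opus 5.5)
The proof will follow the proof of Theorem~\ref{thm:covering_radius_heavy_region} step by step. The only changes are that Claim~\ref{clm:almost_orthogonal_clusters_under_conjecture} replaces Claim~\ref{clm:almost_orthogonal_clusters}, and Algorithm~\ref{alg:cluster2} (linear threshold $d'_{min}\cdot\min\{|\cC_i|,|\cC_j|\}$) replaces Algorithm~\ref{alg:cluster1}. I would establish three ingredients before repeating the interval-merging argument.

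\textbf{Almost orthogonality.} First I prove the analogue of Theorem~\ref{thm:total_energy}: for the clusters of Algorithm~\ref{alg:cluster2} and every $S\subseteq[n]$, $\|H_k\sum_{i\in S}x_{\cC_i}\|_2^2=(1\pm O(\epsilon))\sum_{i\in S}\|H_kx_{\cC_i}\|_2^2$.
\begin{itemize}
\item For correlated clusters I set $D_{i,j}:=k\cdot\dist(\cC_i,\cC_j)/d'_{min}\in[k\min\{|\cC_i|,|\cC_j|\},2k\Delta_k/d'_{min}]$.
\item I choose the correlation of the form $\delta_{i,j}\approx \epsilon\cdot\min\{|\cC_i|,|\cC_j|\}\max\{|\cC_i|,|\cC_j|\}/D_{i,j}$, up to $\sqrt{\cdot}$ adjustments.
\item I check that this choice satisfies the separation condition of Claim~\ref{clm:almost_orthogonal_clusters_under_conjecture}. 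Both terms $\ell r/\delta$ and $\ell^2\log(r/\delta)/\delta^2$ must be dominated by $\dist=D_{i,j}d'_{min}/k$. The extra $\log^4$ and $1/\epsilon^2$ in $d'_{min}$ are meant to absorb this.
\item I redo Lemma~\ref{lem:bounding_on_D_weighted} with the linear threshold. Counting clusters of size $\ge t$ between $\cC_x$ and $\cC_y$ gives $D_{x,y}\ge (p-1)kt$. Summing over dyadic $t$ gives $\sum_{a}|\cC_a|-\max_a|\cC_a|\le O(D_{x,y}\log k/k)$.
\item I then bound the row sums $\sum_j\delta_{i,j}=O(\epsilon)$ as in Lemma~\ref{lem:one_sided_weight_sum_weighted} and Lemma~\ref{lem:total_energy_weighted_row_sum}, splitting into the cases where $\cC_i$ is or is not the largest cluster on its side.
\end{itemize}
The number of correlated neighbours is now only $\Delta_k/d'_{min}=\tilde O(k/\epsilon^{-1})$. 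The weight decays like $1/D_{i,j}$ rather than $1/\sqrt{D_{i,j}}$, so the harmonic-type sum gives an extra $\log k$ that the $\log^4$ absorbs.

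\textbf{Range bound.} This replaces Claim~\ref{clm:bound_range_clusters}. With the recursion $r_\ell\le\max_i\{r_i+r_{\ell-i}+\min\{d'_{min}\min\{i,\ell-i\},2\Delta_k\}\}$, induction gives $r_\ell\le d'_{min}\cdot\ell\log_2\ell$, as in merge-sort. Hence $\sum_i|range(\cC_i)|\le r_k=O(d'_{min}k\log k)=\tilde O(k^2/\epsilon^2)$.

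\textbf{Small clusters.} I take $\theta_s$ to be the largest size with $\Delta_{\theta_s,\epsilon^2/k}<d'_{min}\theta_s/2$. This must be read relative to each pair, since the merging distance between clusters of size $\ge\theta_s$ is $d'_{min}\theta_s$. Lemma~\ref{lemm:construction_H} gives $\Delta_{\ell,\delta}\approx\ell^2/\delta$. I expect the conjecture to be needed here as well: it gives a sharper filter with support roughly $\ell\cdot k/\epsilon$-type, of the form $\ell r/\delta$ as in Claim~\ref{clm:almost_orthogonal_clusters_under_conjecture}. With that filter, every small cluster localizes within $O(d'_{min}|\cC|)$ of its range. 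Corollary~\ref{cor:localized_theta_small_clusters} then holds with $\Delta_\theta$ replaced by these size-dependent widths.

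\textbf{Assembly.} I define $z$ and the intervals $I_i$, merge overlapping intervals into $J_a$, and classify each $J_a$ as good or bad. Copying the earlier argument verbatim, the good intervals carry a $(1-O(\epsilon))$ fraction of the energy and each contains a frequency satisfying \eqref{eq:condition_freq_recover}. The covering radius is then bounded by $\sum_i|I_i|$. The range contributes $\tilde O(k^2/\epsilon^2)$, and the padding contributes $\sum_i O(d'_{min}|\cC_i|)+2\Delta_k\cdot k/\theta_s$.

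\textbf{Main obstacle.} The hard step is keeping this padding term at $\tilde O(k^2)$, which requires the large clusters to number $\tilde O(1)$ or to be padded by less than $\Delta_k$. I would first try to show that, under the conjecture, $H_{\ell,\delta}$ can be used with $\ell=|\cC_i|$ for every cluster, making the padding $\tilde O(|\cC_i|k/\epsilon^2)$ and the total $\tilde O(k^2/\epsilon^2)$. Failing that, I would set $\theta_s=\tilde\Theta(k)$ so that only $O(1)$ clusters receive the full $\Delta_k$ padding.
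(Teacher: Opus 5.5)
\paragraph{Where the proposal falls short.} Your almost-orthogonality plan, the $d'_{min}\,\ell\log\ell$ range recursion and the interval-merging assembly match the paper. However, the step you flag as the main obstacle is left open, and neither route you suggest closes it.

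With the fixed accuracy $\delta=\epsilon^2/k$, Lemma~\ref{lemm:construction_H} gives $\Delta_{\ell,\epsilon^2/k}=\tilde\Theta(\ell^2k/\epsilon^2)$. So your $\theta_s$ is only polylogarithmic, and up to $k/\theta_s=\tilde\Theta(k)$ clusters receive the full $2\Delta_k$ padding. That gives a bound of order $k^3$ up to logarithms.

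Raising $\theta_s$ to $\tilde\Theta(k)$ with the same $\delta$ does not work either. Claim~\ref{clm:cluster_concentration} needs $\Delta_{\ell,\delta}\le\Delta_k$, which forces $\ell=\tilde O(\sqrt{k})$. Even in that range, $\sum_i\Delta_{|\cC_i|,\epsilon^2/k}$ can be $\tilde\Omega(k^{2.5})$, for example with about $\sqrt{k}$ clusters of size about $\sqrt{k}$.

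Finally, the conjecture does not give a localizing filter whose width is linear in $\ell$. The $\ell^2\log(\ell/\delta)/\delta$ term of $\Delta_{\ell,\delta}$ comes from the transition layer of width $\delta/\ell^2$ at $\pm1$, where $|x_{\cC}|^2$ can be as large as $\Theta(\ell^2)\|x_{\cC}\|_{[-1,1]}^2$. It does not come from the growth of $x_{\cC}$ outside $[-1,1]$. The growth bound only governs the tail factors and saves at most a logarithm. In the paper the conjecture enters only through Claim~\ref{clm:almost_orthogonal_clusters_under_conjecture}.

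\paragraph{The missing idea.} Make the accuracy depend on the cluster size: for $|\cC_i|<\theta_s$ use $H_{|\cC_i|,\delta_i}$ with $\delta_i:=\epsilon^2|\cC_i|/k$. Then
\[
\Delta_{|\cC_i|,\delta_i}=O\big(|\cC_i|\,k\log(k/\epsilon)/\epsilon^2+|\cC_i|^2\log^2k\big),
\]
so the paddings sum to $\tilde O(k^2/\epsilon^2)$ because $\sum_i|\cC_i|\le k$. Claim~\ref{clm:cluster_concentration} together with Cauchy--Schwarz still controls the total replacement error; up to a constant factor it is at most
\[
\sum_i\sqrt{\delta_i}\,\|H_kx_{\cC_i}\|_2\le\Big(\sum_i\frac{\epsilon^2|\cC_i|}{k}\Big)^{1/2}\Big(\sum_i\|H_kx_{\cC_i}\|_2^2\Big)^{1/2}\le\epsilon\Big(\sum_i\|H_kx_{\cC_i}\|_2^2\Big)^{1/2},
\]
which Theorem~\ref{thm:total_energy_conjecture} converts to $O(\epsilon)\|H_kx\|_2$.

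Take $\theta_s$ to be the smallest size with $\Delta_{\theta_s,\epsilon^2\theta_s/k}\ge\Delta_k$. This gives $\theta_s=\Omega(\epsilon k)$, so at most $O(1/\epsilon)$ clusters use $H_k$, and they contribute only $O(\Delta_k/\epsilon)$ padding. Disjointness of the small-cluster supports is not needed, since the intervals are merged anyway.

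\paragraph{A secondary point.} A one-term choice $\delta_{i,j}\propto\epsilon\,\ell r/D_{i,j}$ violates the $\ell^2\log(r/\delta)/\delta^2$ part of the separation condition once $D_{i,j}\gg r^2$. You need the additive term $\propto\epsilon\min\{|\cC_i|,|\cC_j|\}/\sqrt{D_{i,j}}$ and a separate row-sum bound for it. This is what Claim~\ref{clm:satisbility_of_delta_construction_conjecture} and Lemma~\ref{lem:one_sided_weight_sum_weighted2_conjecture} do.
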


In the rest of this section, we finish the proof of Theorem~\ref{thm:covering_radius_conjecture} under Conjecture~\ref{conj:growth_rate}. The proof strategy is almost the same as the outline of Theorem~\ref{thm:covering_radius_heavy_region} with the following two improvements.
\begin{theorem}
\label{thm:total_energy_conjecture}
Let $d'_{min}:=\frac{C_H k \log^4 (k/\epsilon)}{\epsilon^2}$ and $\cC_1,\ldots,\cC_n$ be $n$ clusters with $\dist(\cC_i,\cC_j) \ge  \min\bigg\{ d'_{min} \cdot \min\{|\cC_i|,|\cC_j|\}, 2\Delta_k \bigg\}$ for any two $\cC_i$ and $\cC_j$.
For every $S\subseteq[n]$,
\begin{align*}
    \left\|H_k \cdot\sum_{i\in S}x_{\cC_i}\right\|_2^2
    = \left(1\pm O(\epsilon)\right) \sum_{i\in S}\|H_k \cdot x_{\cC_i}\|_2^2.
\end{align*}
In particular, $
    \|H_k \cdot x\|_2^2 
    = \left(1\pm O(\epsilon)\right) \sum_{i=1}^n\|H_k \cdot x_{\cC_i}\|_2^2 $.
\end{theorem}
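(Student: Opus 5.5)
The plan mirrors the proof of Theorem~\ref{thm:total_energy}, with Claim~\ref{clm:almost_orthogonal_clusters_under_conjecture} in place of Claim~\ref{clm:almost_orthogonal_clusters}. We expand
\[
\Big\|H_k\sum_{i\in S}x_{\cC_i}\Big\|_2^2=\sum_{i\in S}\|H_k x_{\cC_i}\|_2^2+\sum_{i\neq j}\langle H_kx_{\cC_i},H_kx_{\cC_j}\rangle .
\]
Uncorrelated pairs, those with $\dist>2\Delta_k$, have disjoint filtered Fourier supports and contribute zero. For each correlated pair we assign a coefficient $\delta_{i,j}$ bounding the correlation. By AM--GM, as in \eqref{eq:bound inner product by delta}, it then suffices to show $\sum_{j\text{ corr.\ with }i}\delta_{i,j}=O(\epsilon)$ for every $i$.

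\textbf{Choice of $\delta_{i,j}$.} Set $D'_{i,j}:=k\cdot\dist(\cC_i,\cC_j)/d'_{min}$. Then $D'_{i,j}\in[k\min\{|\cC_i|,|\cC_j|\},\,2k\Delta_k/d'_{min}]$, and $\Delta_k/d'_{min}=O(k\epsilon\log^{-2})$. With $\ell=\min$ and $r=\max$ of the two sizes, the separation condition of Claim~\ref{clm:almost_orthogonal_clusters_under_conjecture} has two terms. The term $\ell r/\delta$ suggests $\delta\approx \epsilon\,\ell r/D'$, and the term $\ell^2\log/\delta^2$ suggests $\delta\approx\epsilon\sqrt{\ell^2/(kD')}$. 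We take
\[
\delta_{i,j}:=\epsilon\Big(\frac{\min\{|\cC_i|,|\cC_j|\}\max\{|\cC_i|,|\cC_j|\}}{D'_{i,j}}\cdot\frac1k+\sqrt{\frac{\min\{|\cC_i|,|\cC_j|\}^2}{k\,D'_{i,j}}}\Big),
\]
possibly with extra $\log$ factors. We then verify the condition as in Claim~\ref{clm:satisbility_of_delta_construction}. The $\log^4$ in $d'_{min}$ absorbs both $\log^2k$ and $\log(r/\delta)=O(\log(k/\epsilon))$, because $\delta$ is at least polynomially small.

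\textbf{Summing the coefficients.} First we prove the analogue of Lemma~\ref{lem:bounding_on_D_weighted} for the linear merging rule. For correlated $\cC_x,\cC_y$, thresholding by size $t$ gives $D'_{x,y}\ge (p-1)kt$, and summing over $t$ yields
\[
D'_{x,y}\gtrsim k\big(\textstyle\sum_{a}|\cC_a|-\max_a|\cC_a|\big)/\log k .
\]
A dyadic grouping loses only a $\log$, which the $\log^4$ pays for. Hence the clusters correlated with $\cC_i$ on one side, excluding the largest, have total size $O(\Delta_k/d'_{min})=O(\epsilon k)$.

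Next we bound each part of $\delta_{i,j}$, ordering the neighbours $j$ on one side by distance. The gaps accumulate, so that $D'_{i,j}\gtrsim k\cdot(\text{mass of clusters between})$, as in Lemma~\ref{lem:one_sided_weight_sum_weighted}.
\begin{itemize}
\item The first part sums like $\sum_j |\cC_i||\cC_j|/(k\,D'_{i,j})$, with $D'_{i,j}\gtrsim k\max(|\cC_j|,A_j)$. This gives a harmonic-type sum $\sum_j|\cC_j|/A_j=O(\log k)$, weighted by $\epsilon|\cC_i|/k\le\epsilon$. That is $O(\epsilon\log k)$, and the $\log$ is absorbed into $d'_{min}$.
\item The second part behaves like $\epsilon\sum_j |\cC_j|/(k\sqrt{A_j})$, with $A_j$ the mass up to $j$. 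This is at most $\epsilon\sqrt{\text{total mass}}/k=O(\epsilon)$.
\end{itemize}
The largest neighbour on each side is handled separately, as in Lemma~\ref{lem:total_energy_weighted_row_sum}. Its term is at most $\epsilon\cdot O(1)$ because $D'\ge k\min$.

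\textbf{Main obstacle.} The delicate step is the row-sum bound. The per-pair coefficient now decays only like $1/D'$ rather than $1/\sqrt{D'}$ times a $k^{-1/4}$ prefactor, so the induction of Lemma~\ref{lem:one_sided_weight_sum_weighted} must be redone with a harmonic rather than square-root potential. The $\log$ losses must also be tracked carefully so that the $\log^4$ in $d'_{min}$ suffices. If the clean bound fails, I would first try inflating $d'_{min}$ by one more $\log$ factor.
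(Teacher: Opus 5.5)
Your architecture matches the paper's: expand the square, discard uncorrelated pairs by disjoint Fourier support, bound correlated pairs with a two-term $\delta_{i,j}$ via Claim~\ref{clm:almost_orthogonal_clusters_under_conjecture}, and prove a per-row bound using a threshold/accumulation argument, a harmonic sum and a square-root sum. The gap is the coefficient itself: your $\delta_{i,j}$ does not satisfy the hypothesis of Claim~\ref{clm:almost_orthogonal_clusters_under_conjecture}, so the step ``verify the condition as in Claim~\ref{clm:satisbility_of_delta_construction}'' fails. With your normalization $D'_{i,j}=k\,\dist(\cC_i,\cC_j)/d'_{min}$ you have $\dist(\cC_i,\cC_j)=D'_{i,j}\cdot C_H\log^4(k/\epsilon)/\epsilon^2$, but your $\delta_{i,j}$ is calibrated as if $\dist\approx kD'_{i,j}/\epsilon^2$. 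Concretely, $\sqrt{kD'_{i,j}}\ge k\ge r$, so your first term is at most your second. Hence $\delta_{i,j}\le 2\epsilon\ell/\sqrt{kD'_{i,j}}$, and therefore $\ell^2\log(r/\delta)/\delta^2\ge kD'_{i,j}\log(r/\delta)/(4\epsilon^2)$. The separation hypothesis would then require $4\log^4(k/\epsilon)\ge k\log^2k\cdot\log(r/\delta)$, which is false for large $k$. No ``extra $\log$ factors'' can repair this polynomial loss. Both terms are too small, by roughly $k$ and $\sqrt{k}$; note that your own heuristic $\delta\approx\epsilon\ell r/D'$ already disagrees with the displayed $\epsilon\ell r/(kD')$.

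The correct orders, up to constants, are $\frac{\epsilon^2}{\log^2 k}\cdot\frac{|\cC_i||\cC_j|}{D'_{i,j}}$ and $\frac{\epsilon\,\min\{|\cC_i|,|\cC_j|\}}{\sqrt{D'_{i,j}\log k}}$. This is essentially the paper's $\delta_{i,j}$ once you write its $D_{i,j}=k\log k\cdot\dist/d'_{min}=\log k\cdot D'_{i,j}$. Your summation structure survives, but its arithmetic changes. Your bound $\epsilon\sqrt{\text{total mass}}/k$ for the second part is an artifact of the spurious $1/\sqrt{k}$. Take $s_j:=\min\{|\cC_i|,|\cC_j|\}$, and use $D'_{i,j}\ge ks_j$ together with the threshold bound $D'_{i,j}\gtrsim k\cdot(\text{mass between})/\log k$. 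Then the bare sum $\sum_j s_j/\sqrt{D'_{i,j}}$ is only $O(\sqrt{\log k}\cdot\sqrt{\sum_j s_j/k})$. There is no polynomial slack, so reaching $O(\epsilon)$ requires balancing logarithms; the paper lets the $1/\sqrt{\log k}$ in the coefficient absorb the $\sqrt{\log k}$ lost in the threshold count. Enlarging $d'_{min}$ is not an option, since the statement fixes it (and it is unnecessary).

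The paper organizes this by building $\log k$ into $D_{i,j}$, so that Lemma~\ref{lem:bounding_on_D_weighted_conjecture} incurs no loss. It then proves the two inductive one-sided bounds $\sum_j ks_j/D_{i,j}\le 3\log(\sum_j s_j)+1$ and $\sum_j s_j/\sqrt{D_{i,j}}\le 4\sqrt{\sum_j s_j/k}$. Working with $s_j$ rather than $|\cC_j|$ also makes $\cC_i$ automatically the maximum, which removes your separate treatment of ``the largest neighbour.'' As written, your sums put $|\cC_j|$ in the numerator and would break when several neighbours on one side are larger than $\cC_i$.
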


\begin{claim} \label{clm:bound_range_clusters_conjecture}
From Algorithm~\ref{alg:cluster2}, a cluster with $\ell$ frequencies has a range of length at most 
\[
r_\ell \le
\begin{cases}
d'_{min}\cdot O( \ell \log \ell),   &\ell \le \frac{2\Delta_k}{d'_{min}},\\
d'_{min}\cdot \log \frac{2\Delta_k}{d'_{min}} \cdot O( \ell ),    & \ell >\frac{2\Delta_k}{d'_{min}}.
\end{cases}
\]
Moreover, $\sum_{i \in [n]} |range(\cC_i)| = \tilde{O}(k) \cdot d'_{min} = \tilde{O}(\frac{k^2}{\epsilon^2})$. 
\end{claim}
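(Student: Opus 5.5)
The plan mirrors the proof of Claim~\ref{clm:bound_range_clusters}. Let $r_\ell$ be the maximum possible range length of a cluster with $\ell$ frequencies produced by Algorithm~\ref{alg:cluster2}. A cluster is formed by merging clusters, and the merged pieces lie between two clusters $\cC_i,\cC_j$ whose gap is at most $\min\{d'_{min}\min\{|\cC_i|,|\cC_j|\},2\Delta_k\}$. As in Section~\ref{subsec:heavy_region_range}, I would reduce to the binary recursion
\[
r_\ell \le \max_{1\le i\le \ell/2}\Big\{ r_i + r_{\ell-i} + \min\{ d'_{min}\cdot i,\ 2\Delta_k\}\Big\},
\]
with $r_1=0$. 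I will treat this merging structure as given by the same argument used there.

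Set $i_0:=2\Delta_k/d'_{min}$, so that $d'_{min} i_0=2\Delta_k$, and normalize $p_{\ell,i}:=(\cdots)/d'_{min}$. The inductive hypothesis I would try is
\[
r_\ell/d'_{min}\le \ell\log_2\ell \quad(\ell\le i_0),
\qquad
r_\ell/d'_{min}\le 2\ell\log_2 i_0 - c\, i_0 \quad(\ell> i_0),
\]
for a suitable constant $c$, adjusting constants if needed.

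\emph{Case $i\le \ell-i\le i_0$.} Here $i\log i+(\ell-i)\log(\ell-i)+i\le \ell\log\ell$. This is the standard inequality: since $\ell-i\ge \ell/2$,
\[
(\ell-i)\log\frac{\ell}{\ell-i}+i\log\frac{\ell}{i}\ \ge\ i\log 2 = i
\]
in base 2. This gives the $O(\ell\log\ell)$ regime, the analogue of "merging with the smaller side" in heavy-light decomposition: each frequency pays $d'_{min}$ every time its cluster at least doubles.

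\emph{Cases $\ell>i_0$.} The added term is at most $\min\{i,i_0\}$. When $i\le i_0<\ell-i$ I would use the linear bound on $r_{\ell-i}$ and $i\log i+i\le 2i\log i_0$ to absorb the term. When $i>i_0$, the added term is $i_0$, and the two linear bounds each subtract $c\,i_0$, leaving slack $c\,i_0$ to absorb it. The transition at $\ell\in(i_0,2i_0]$ needs the check $\ell\log\ell\le 2\ell\log i_0-c\,i_0$, which holds for $\ell\ge i_0$ large with $c\le 1$, up to constant tuning. This constant bookkeeping, rather than any conceptual point, is the step I expect to be most fiddly. If the clean form fails, I would fall back to a looser hypothesis $r_\ell\le C\,d'_{min}\ell\log\min\{\ell,i_0\}$ and verify each case directly.

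Finally, superadditivity gives
\[
\sum_i|\mathrm{range}(\cC_i)|\le \sum_i r_{|\cC_i|}\le r_k,
\]
since $r_a+r_b\le r_{a+b}$ from the recursion. Therefore
\[
\sum_i|\mathrm{range}(\cC_i)| = O\big(d'_{min}\, k\log(\Delta_k/d'_{min})\big)=\tilde O(k)\cdot d'_{min}.
\]
With $d'_{min}=\tilde O(k/\epsilon^2)$ this equals $\tilde O(k^2/\epsilon^2)$.
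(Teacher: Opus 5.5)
Your proposal is correct and follows essentially the same route as the paper: the same binary recursion with threshold $i_0=2\Delta_k/d'_{min}$, a two-regime induction hypothesis ($\Theta(\ell\log\ell)$ below $i_0$, affine $\Theta(\ell\log i_0)-\Theta(i_0)$ above) verified in the same three cases, and the same superadditivity step giving $\sum_i|\mathrm{range}(\cC_i)|\le r_k$. The differences are only in constants and in the Case~1 argument. With $c=1$ (forced by Case~3), your bound $2\ell\log_2 i_0-i_0$ needs $i_0$ bounded below by a constant (e.g.\ $i_0\ge 4$) for Case~2 and the transition check. The paper's hypothesis $d'_{min}(C'\ell\log i_0+\ell-i_0)$ with $C'=\frac{1}{2\log 2}$ closes all cases with no such restriction. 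Your Case~1 bound $i\log_2(\ell/i)\ge i$ is a simpler substitute for the paper's entropy inequality.
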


We are ready to finish the proof of Theorem~\ref{thm:covering_radius_conjecture}. The proofs of Theorem~\ref{thm:total_energy_conjecture} and Claim~\ref{clm:bound_range_clusters_conjecture} are deferred to Section~\ref{sec:proof_total_energy_conjecture} and Section~\ref{heavy_region_range_conjecture} separately.

\begin{proofof}{Theorem~\ref{thm:covering_radius_conjecture}}
Let $\theta_s=\Theta(k)$ be the smallest integer with $\Delta_{\theta_s, \frac{\epsilon^2 \cdot \theta_s}{k}} \ge \Delta_k$. We consider 
\begin{equation}\label{eq:def_z_conjecture}
    z(t):=\sum_{i: |\cC_i| < \theta_s} H_{|\cC_i|, \frac{\epsilon^2 |\cC_i|}{k}}(t) \cdot x_{\cC_i}(t) + \sum_{i: |\cC_i| \ge \theta_s} H_{k}(t) \cdot x_{\cC_i}(t).
\end{equation}
Similar to Corollary~\ref{cor:localized_theta_small_clusters}, we bound the error between $z$ and $H_k \cdot x$ as follows:
\begin{align*}
    \|z - H_k \cdot x\|_2 & =  \|\sum_{i: |\cC_i| < \theta_s} (H_{|\cC_i|, \frac{\epsilon^2 |\cC_i|}{k}} - H_k) \cdot x_{\cC_i}\|_2 \\
    & \le \sum_{i: |\cC_i| < \theta_s} \| (H_{|\cC_i|, \frac{\epsilon^2 |\cC_i|}{k}} - H_k) \cdot x_{\cC_i}\|_2 \\
    & \le \sum_{i: |\cC_i| < \theta_s} (\frac{\epsilon^2 |\cC_i|}{k})^{1/2} \cdot \|H_k \cdot x_{\cC_i}\|_2 \tag{by Claim~\ref{clm:cluster_concentration}} \\
    & \le (\sum_i \frac{\epsilon^2 |\cC_i|}{k})^{1/2} \cdot (\sum_i \|H_k \cdot x_{\cC_i}\|^2_2)^{1/2} \tag{the Cauchy-Schwartz inequality} \\
    & \le \epsilon \cdot (1+O(\epsilon)) \|H_k x\|_2. \tag{by Theorem~\ref{thm:total_energy_conjecture}}
\end{align*}
Also, the above calculation implies that for any subset $T$ of clusters, 
\begin{equation}\label{eq:error_replacing_H_k}
\|\sum_{i \in T} (H_{|\cC_i|, \frac{\epsilon^2 |\cC_i|}{k}} - H_k) \cdot x_{\cC_i}\|^2_2 \le \frac{\epsilon^2 \sum_{i \in T} |\cC_i|}{k} \cdot (\sum_i \|H_k \cdot x_{\cC_i}\|^2_2).    
\end{equation}

Now we define $n$ intervals corresponding to the Fourier support of each $\cC_i$ in $z$ defined above:
\begin{align*}
    I_i =
    \begin{cases}
    range(\cC_i) \pm \Delta_{|\cC_i|, \frac{\epsilon^2 |\cC_i|}{k}},
        &|\cC_i| < \theta_s,\\[1.2ex]
    range(\cC_i) \pm \Delta_k,
        & |\cC_i| \ge \theta_s.
    \end{cases}
\end{align*}

Similar to the proof of Theorem~\ref{thm:covering_radius_heavy_region}, we keep merging intervals as long as there exist $I_i$ and $I_j$ with $I_i \cap I_j \neq \emptyset$. For convenience, let $J_1,\ldots,J_m$ be the remaining disjoint intervals. For a cluster $\cC$ and interval $J_i$, we use $\cC \subset J_j$ to indicate that each frequency $f \in \cC$ satisfies $f \in J_j$.

By Theorem~\ref{thm:total_energy_conjecture}, for each $J_j$,
\begin{equation}\label{eq:sum_clusters_J}
    \| \sum_{i: \cC_i \subset J_j} H_k x_{\cC_i} \|_2^2 \ge (1-O(\epsilon)) \sum_{i: \cC_i \subset J_j} \|  H_k x_{\cC_i} \|_2^2.
\end{equation}
\eqref{eq:error_replacing_H_k} implies that 
\begin{equation}\label{eq:error_J}
     \| \sum_{i: |\cC_i|<\theta_s \text{ and }\cC_i \subset J_j} (H_{|\cC_i|, \frac{\epsilon^2 |\cC_i|}{k}} - H_k) x_{\cC_i} \|_2 \le \epsilon \cdot (\sum_{i: \cC_i \subset J_j} \|  H_k x_{\cC_i} \|_2^2)^{1/2}
\end{equation}
So \eqref{eq:sum_clusters_J} and \eqref{eq:error_replacing_H_k} imply that 
\[
\| \sum_{i: |\cC_i|<\theta_s \text{ and }\cC_i \subset J_j} H_{|\cC_i|, \frac{\epsilon^2 |\cC_i|}{k}} \cdot x_{\cC_i} + \sum_{i: |\cC_i| \ge \theta_s \text{ and }\cC_i \subset J_j} H_k \cdot x_{\cC_i} \|_2^2  \ge (1- O(\epsilon)) \cdot (\sum_{i: \cC_i \subset J_j} \|  H_k x_{\cC_i} \|_2^2).
\]
Because $J_j$ is disjoint with the rest $J_1,\ldots,J_m$ and the Fourier supports of $H_{|\cC_i|, \frac{\epsilon^2 |\cC_i|}{k}} \cdot x_{\cC_i}$ and $H_k \cdot x_{\cC_i}$ in the LHS of the above inequality are in $J_j$, this is equivalent to 
\begin{equation}
    \|\wh{z}\|_{J_j}^2 \ge (1- O(\epsilon)) \cdot (\sum_{i: \cC_i \subset J_j} \|  H_k \cdot x_{\cC_i} \|_2^2).
\end{equation}
The rest of this proof is identical to the proof of Theorem~\ref{thm:covering_radius_heavy_region} except the calculation of the covering radius
\begin{align*}
    \sum_i |I_i| & = \sum_{i=1} |range(\cC_i)| + \sum_{i:|\cC_i|<\theta_s} 2\Delta_{|\cC_i|,\frac{\epsilon^2 |\cC_i|}{k}}+ 2 \Delta_k \cdot k/\theta_s \\
    & \le \sum_{i:|\cC_i|<\theta_s} 2C^2 \cdot \left( \frac{|\cC_i|^2 \log \frac{|\cC_i|}{\frac{\epsilon^2 |\cC_i|}{k}}}{\frac{\epsilon^2 |\cC_i|}{k}} + |\cC_i|^2 \log |\cC_i| \cdot \log (|\cC_i|^2 \log |\cC_i| ) \right)  + O(k \cdot d'_{min} + \Delta_k) \tag{by the definition of $\Delta_{\ell,\delta}$ in Lemma~\ref{lemm:construction_H}}\\
    & \le \sum_{i:|\cC_i|<\theta_s} 2C^2 \cdot \left(\frac{|\cC_i| \cdot k \log \frac{k}{\epsilon^2}}{\epsilon^2} + |\cC_i|^2 \log |\cC_i| \cdot \log (|\cC_i|^2 \log |\cC_i| ) \right) + O(k \cdot d'_{min} + \Delta_k) \\
    & = O(\frac{k^2 \log \frac{k}{\epsilon^2}}{\epsilon^2}) + O(k^2 \log^2 k) + O(k \cdot d'_{min} + \Delta_k)=\tilde{O}(k^2/\epsilon^2).
\end{align*}
\end{proofof}

\subsection{Proof of Theorem~\ref{thm:total_energy_conjecture}}\label{sec:proof_total_energy_conjecture}


Recall $\Delta_k:=C^2 (\frac{k^2 \log k/\epsilon}{\epsilon} + k^2 \log k \cdot \log (k^2 \log k) )\le \frac{3C^2k^{2}\log^2 (k/\epsilon)}{\epsilon}$ and $d'_{min}:=\frac{C_H k \log^4 (k/\epsilon)}{\epsilon^2}$. Therefore,

\begin{equation}
\label{eq:max_D_in_correlation_conjecture}
\frac{\Delta_k}{d'_{min}}\le \frac{3C^2k^2\log^2(k/\epsilon)/\epsilon}{C_Hk\log^4(k/\epsilon)/\epsilon^2}\le \frac{3C^2\epsilon k}{C_H\log^2(k/\epsilon)}.
\end{equation}

We assume that $\cC_1,\ldots,\cC_n$ are sorted by their frequencies in this proof. Similarly to Section~\ref{subsec:heavy_region_clustering}, for correlated clusters $\cC_i$ and $\cC_j$ with $\dist(\cC_i,\cC_j) \le 2 \Delta_k$, we define
\begin{align*}
    D_{i,j}:= \frac{k\log k\cdot \dist(\cC_i,\cC_j)} {d'_{min}}.
\end{align*}

Because $D_{i,j}$ is defined only when $\cC_i$ and $\cC_j$ are correlated and $\dist(\cC_i,\cC_j) \ge d'_{min} \cdot \min\{|\cC_i|,|\cC_j|\}$ in this case, we have the following bounds on $D_{i,j}$:
\begin{equation}
\label{eq:bound_D_conjecture}
D_{i,j} \in \left[ k \log k \cdot \min\{|\cC_i|,|\cC_j|\}, k \log k \cdot \frac{2\Delta_k}{d'_{min}} \right].
\end{equation}


Claim~\ref{clm:almost_orthogonal_clusters_under_conjecture} improves Claim~\ref{clm:satisbility_of_delta_construction} to the following bound.
\begin{claim}
\label{clm:satisbility_of_delta_construction_conjecture}

Let $\delta_{i,j}=0$ for uncorrelated $\cC_i$ and $\cC_j$. For correlated $\cC_i$ and $\cC_j$, let 
\begin{equation}
\label{eq:selection_on_delta_conjecture}
    \delta_{i,j}:=
    \frac{\epsilon^2}{\log k}\cdot \frac{|\cC_i|\cdot|\cC_j|}{D_{i,j}}+\epsilon\sqrt{\frac{3\log (k/\epsilon)}{\log k}}\cdot \frac{\min\{|\cC_i|,|\cC_j|\}}{\sqrt{D_{i,j}}}.
\end{equation}
Then for any $i$ and $j$, $\delta_{i,j}$ satisfies the condition of Claim~\ref{clm:almost_orthogonal_clusters_under_conjecture}: 
\begin{equation}
\label{eq:weighted_pair_correlation_conjecture}
    |\langle H_k \cdot x_{\cC_i},H_k \cdot x_{\cC_j}\rangle|
    \le \delta_{i,j} \|H_k \cdot x_{\cC_i}\|_2\|H_k \cdot x_{\cC_j}\|_2 .
\end{equation}
\end{claim}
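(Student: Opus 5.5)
Mirroring the proof of Claim~\ref{clm:satisbility_of_delta_construction}, the plan is to check that $\delta:=\delta_{i,j}$ from \eqref{eq:selection_on_delta_conjecture} makes the separation hypothesis of Claim~\ref{clm:almost_orthogonal_clusters_under_conjecture} hold. Uncorrelated pairs are trivial: their filtered Fourier supports are disjoint, so the inner product vanishes and $\delta_{i,j}=0$ works. For correlated pairs, write $\ell:=\min\{|\cC_i|,|\cC_j|\}$ and $r:=\max\{|\cC_i|,|\cC_j|\}$, and recall $\dist(\cC_i,\cC_j)=D_{i,j}\, d'_{min}/(k\log k)$ with $d'_{min}=C_H k\log^4(k/\epsilon)/\epsilon^2$. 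Then $\dist(\cC_i,\cC_j)= C_H D_{i,j}\log^3(k/\epsilon)\cdot\frac{\log(k/\epsilon)}{\epsilon^2\log k}$. Since $\dist\le 2\Delta_k$, we need this quantity to be at least $C_H\bigl(\frac{\ell r}{\delta}+\frac{\ell^2\log(r/\delta)}{\delta^2}\bigr)\log^2 k$.

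We split along the two summands of $\delta=\delta^{(1)}+\delta^{(2)}$, where $\delta^{(1)}=\frac{\epsilon^2}{\log k}\frac{\ell r}{D_{i,j}}$ and $\delta^{(2)}=\epsilon\sqrt{3\log(k/\epsilon)/\log k}\,\frac{\ell}{\sqrt{D_{i,j}}}$. Because $\delta\ge\delta^{(1)}$, the first term satisfies $\frac{\ell r}{\delta}\le \frac{D_{i,j}\log k}{\epsilon^2}$. Multiplying by $C_H\log^2 k$ gives at most $C_H D_{i,j}\log^3 k/\epsilon^2$, which is at most half of $\dist$ once the spare $\log(k/\epsilon)$ factors are used. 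Because $\delta\ge\delta^{(2)}$, the second term satisfies $\frac{\ell^2}{\delta^2}\le \frac{D_{i,j}\log k}{3\epsilon^2\log(k/\epsilon)}$. It remains to show $\log(r/\delta)\le 3\log(k/\epsilon)$; the product with $C_H\log^2 k$ is then at most $C_H D_{i,j}\log^3 k/\epsilon^2$, again at most half of $\dist$. Summing the two halves gives the hypothesis, and Claim~\ref{clm:almost_orthogonal_clusters_under_conjecture} yields \eqref{eq:weighted_pair_correlation_conjecture}.

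The main obstacle is the bound on $\log(1/\delta)$, analogous to \eqref{eq:upper_bound_partial_term}. Using $\delta\ge\delta^{(2)}$ together with the upper bound $D_{i,j}\le 2k\log k\,\Delta_k/d'_{min}=O(\epsilon k^2/\log(k/\epsilon))$ from \eqref{eq:bound_D_conjecture} and \eqref{eq:max_D_in_correlation_conjecture}, and $\ell\ge1$, we get $1/\delta\le \sqrt{D_{i,j}}/\epsilon\cdot O(1)=O(k/\sqrt{\epsilon})$. Since $r\le k$, this gives $r/\delta=O(k^2/\sqrt\epsilon)$, so $\log(r/\delta)\le 2\log(k/\epsilon)+O(1)\le 3\log(k/\epsilon)$ for large $k$ (assuming, as before, that constants like $C/\sqrt{C_H}$ are absorbed). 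The remaining care is only the constant bookkeeping needed so that each half is at most $\dist/2$; if needed, we can enlarge $C_H$ in $d'_{min}$ relative to the $C_H$ of Claim~\ref{clm:almost_orthogonal_clusters_under_conjecture}.
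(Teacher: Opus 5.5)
Your overall route matches the paper's. You check the separation hypothesis of Claim~\ref{clm:almost_orthogonal_clusters_under_conjecture} for $\delta=\delta^{(1)}+\delta^{(2)}$, and you get $\log(r/\delta)\le 3\log(k/\epsilon)$ from $\delta\ge\delta^{(2)}$, $\ell\ge1$, $r\le k$ and the upper bound on $D_{i,j}$, just as the paper does.

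The step that fails as written is the assertion that each of your two halves is at most $\dist(\cC_i,\cC_j)/2$ ``once the spare $\log(k/\epsilon)$ factors are used.'' Bounding the two terms separately via $\delta\ge\delta^{(1)}$ and $\delta\ge\delta^{(2)}$ gives a total of $2D_{i,j}\log k/\epsilon^2$. You would then need $2C_H D_{i,j}\log^3 k/\epsilon^2\le C_H D_{i,j}\log^4(k/\epsilon)/(\epsilon^2\log k)$, i.e.\ $\log^4(k/\epsilon)\ge 2\log^4 k$. This is false for fixed $\epsilon$ and large $k$, because $\log(k/\epsilon)/\log k\to1$. So there are no spare factors to absorb a constant $2$. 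Sharpening $\log(r/\delta)$ to roughly $2\log k$ only reduces the loss to about $5/3$.

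The paper avoids the loss by not splitting. Since $\delta\ge\delta^{(2)}$,
\[
\delta^2=\delta\,\delta^{(1)}+\delta\,\delta^{(2)}\ge \delta\,\delta^{(1)}+\bigl(\delta^{(2)}\bigr)^2 .
\]
Multiplying by $D_{i,j}\log k/\epsilon^2$ turns the right-hand side into exactly $\ell r\delta+3\ell^2\log(k/\epsilon)$. Dividing by $\delta^2$ gives $\frac{\ell r}{\delta}+\frac{\ell^2\log(r/\delta)}{\delta^2}\le D_{i,j}\log k/\epsilon^2$ with no factor $2$. Then $\log^4 k\le\log^4(k/\epsilon)$ suffices, with the same $C_H$ in $d'_{min}$ and in Claim~\ref{clm:almost_orthogonal_clusters_under_conjecture}.

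Your fallback of giving $d'_{min}$ a larger constant than that claim does repair the argument; enlarging both constants together would not, since $C_H$ cancels. But it proves the statement for a modified $d'_{min}$, hence a modified Algorithm~\ref{alg:cluster2} and $D_{i,j}$. The one-line inequality above proves it as stated.
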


\begin{proof}
For convenience, we denote $\ell:=\min\{|\cC_i|,|\cC_j|\}$, $r:=\max\{|\cC_i|,|\cC_j|\}$, $D:=D_{i,j}$, and $\delta:=\delta_{i,j}$. 
Then $\delta = \frac{\epsilon^2 \ell r}{D\log k} + \sqrt{\frac{3\epsilon^2\log(k/\epsilon)}{D \log k}} \cdot \ell$.
By Claim~\ref{clm:almost_orthogonal_clusters_under_conjecture}, the result follows if 
\begin{equation} \label{eq:limit_on_delta_conjecture}
    \frac{D d'_{min}}{k\log k} = \dist(\cC_i,\cC_j) \ge C_H (\frac{\ell r}{\delta} + \frac{\ell^2 \log (r/\delta)}{\delta^2}) \log^2 k.
\end{equation}
Hence, it remains to show that \eqref{eq:limit_on_delta_conjecture} holds for our choice of $\delta$.

By equation \eqref{eq:max_D_in_correlation_conjecture} and the upper bound in \eqref{eq:bound_D_conjecture}, 
\begin{align*}
    D \le k \log k \cdot \frac{2\Delta_k}{d'_{min}} \le \frac{6C^2}{C_H} \frac{\epsilon k^2\log k}{\log^2(k/\epsilon)} \le k^2,
\end{align*}
for sufficiently large $C_H$. Thus, the second term of $\delta$ implies
\begin{align}
    \frac{r}{\delta} \le \frac{r}{\ell} \sqrt{\frac{D \log k}{3\epsilon^2\log(k/\epsilon)}} \le \frac{k^2}{\epsilon} \quad \text{and} \quad \log\left(r/\delta\right)\le 3\log(k/\epsilon). \label{eq:upper_bound_partial_term1_conjecture}
\end{align}

The first term of $\delta$ shows
\[
    \frac{D\log k}{\epsilon^2} \cdot \frac{\epsilon^2\ell r}{D\log k}\cdot\delta = \ell r\,\delta,
\]
and the second term gives
\[
    \frac{D\log k}{\epsilon^2} \cdot \left( \sqrt{\frac{3\epsilon^2\log(k/\epsilon)}{D \log k}} \cdot \ell \right)^2
    = 3\ell^2\log(k/\epsilon).
\]
Hence,
\[
    \frac{D\log k}{\epsilon^2}\delta^2 \ge \ell r\,\delta+3\ell^2\log(k/\epsilon).
\]
Dividing by $\delta^2$ and using \eqref{eq:upper_bound_partial_term1_conjecture} yields
\[
    \frac{D\log k}{\epsilon^2}
    \ge \frac{\ell r}{\delta} + \frac{\ell^2\log(r/\delta)}{\delta^2},
\]
which satisfies the separation condition in Claim~\ref{clm:almost_orthogonal_clusters_under_conjecture}.
\end{proof}

Similarly to the framework of Section \ref{subsec:heavy_region_clustering}, we would like to show that $\sum_{j: \text{ correlated with } i} \delta_{i,j}=O(\epsilon)$ for any $i$. Just like Lemma~\ref{lem:bounding_on_D_weighted}, we first bound the total number of frequencies in correlated clusters.



\begin{lemma}
\label{lem:bounding_on_D_weighted_conjecture}
Suppose that $x < y$ and $\cC_x$ is correlated with $\cC_y$.
Let $s_x,\ldots,s_y$ be positive integers with $s_a \le |\cC_a|$ for every $a$ and $s_x=\max_{x\le a\le y}s_a$.
\begin{align*}
D_{x,y}\ge \frac{k}{2} \sum_{a=x+1}^{y}s_a.
\end{align*}
\end{lemma}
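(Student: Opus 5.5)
The idea is to reuse the threshold-counting argument from Lemma~\ref{lem:bounding_on_D_weighted}, now with the linear separation $d'_{min}\cdot\min\{|\cC_i|,|\cC_j|\}$ of Algorithm~\ref{alg:cluster2}. The extra $\log k$ in the definition of $D_{x,y}$ will absorb the harmonic sum that this produces. Throughout, $\log$ is the natural logarithm and $k\ge 3$ is large enough that $1+\log k\le 2\log k$.

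\textbf{Step 1: geometry of the clusters.} I first record that the ranges of $\cC_1,\ldots,\cC_n$ are pairwise disjoint intervals ordered by index. This holds because Algorithm~\ref{alg:cluster2} always merges every cluster whose frequencies lie between the two clusters being merged.

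Consequently, for any indices $x\le i<j\le y$ we have $\dist(\cC_i,\cC_j)\le \dist(\cC_x,\cC_y)$. Since $\cC_x$ and $\cC_y$ are correlated, $\dist(\cC_x,\cC_y)\le 2\Delta_k$, so $\cC_i$ and $\cC_j$ are correlated as well. The termination condition of Algorithm~\ref{alg:cluster2} then gives $\dist(\cC_i,\cC_j)\ge d'_{min}\cdot\min\{|\cC_i|,|\cC_j|\}$.

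Moreover, if $x=i_1<i_2<\cdots<i_p$ is any increasing sequence of indices in $[x,y]$, the gaps between consecutive selected clusters are disjoint subintervals of the gap between $\cC_x$ and $\cC_y$. Hence
\[
\dist(\cC_x,\cC_y)\ge \sum_{j=1}^{p-1}\dist(\cC_{i_j},\cC_{i_{j+1}}).
\]

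\textbf{Step 2: one threshold at a time.} Fix an integer $t\in[1,s_x]$ and let $N_t:=|\{a\in(x,y]: s_a\ge t\}|$. Since $s_x=\max_a s_a\ge t$, the index $x$ together with these $N_t$ indices forms an increasing sequence of $N_t+1$ indices, each with $s_a\ge t$.

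For each consecutive pair $i_j,i_{j+1}$ in this sequence, Step 1 gives $\dist(\cC_{i_j},\cC_{i_{j+1}})\ge d'_{min}\min\{|\cC_{i_j}|,|\cC_{i_{j+1}}|\}\ge d'_{min}\,t$, using $|\cC_a|\ge s_a\ge t$. Summing over the $N_t$ pairs yields $\dist(\cC_x,\cC_y)\ge N_t\, d'_{min}\, t$. By the definition of $D_{x,y}$ this becomes
\[
D_{x,y}\ge k\log k\cdot N_t\cdot t, \qquad\text{that is,}\qquad N_t\le \frac{D_{x,y}}{k\log k\cdot t}.
\]

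\textbf{Step 3: summing over thresholds.} Every $s_a$ with $a\in(x,y]$ is at most $s_x$, so the layer-cake identity gives
\[
\sum_{a=x+1}^{y}s_a=\sum_{t=1}^{s_x}N_t\le \frac{D_{x,y}}{k\log k}\sum_{t=1}^{s_x}\frac1t\le \frac{D_{x,y}}{k\log k}\,(1+\log k)\le \frac{2D_{x,y}}{k}.
\]
Here $s_x\le |\cC_x|\le k$ bounds the harmonic sum, and $1+\log k\le 2\log k$ gives the last step. Rearranging yields $D_{x,y}\ge \frac{k}{2}\sum_{a=x+1}^{y}s_a$.

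\textbf{Main obstacle.} The only delicate point is Step 1: justifying that every intermediate pair is correlated and that the distances add up along the chosen subsequence. Both facts rely on the merging rule, which keeps the clusters ordered and non-interleaving; once that is established, the rest is the same counting argument as in Lemma~\ref{lem:bounding_on_D_weighted}.
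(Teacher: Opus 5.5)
Your proposal is correct and follows the same threshold-counting argument as the paper. For each threshold $t$, the clusters with $s_a\ge t$ form a chain whose consecutive gaps are at least $d'_{min}\,t$, and summing $N_t\le D_{x,y}/(k\log k\cdot t)$ over $t\le s_x\le k$ lets the extra $\log k$ in $D_{x,y}$ absorb the harmonic sum; your differences are only cosmetic (you count over $(x,y]$ with $x$ anchoring the chain rather than subtracting $s_x$ at the end, and you make explicit the correlation and gap-additivity facts the paper leaves implicit).
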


\begin{proof}
For a threshold $t \ge 1$, let $\cC_{i_1},\ldots,\cC_{i_p}$ be the clusters in $\{\cC_x,\ldots,\cC_y\}$ whose size is at least $t$.
As \eqref{eq:bound_D_conjecture} shows, 
all $D_{i_1,i_2},\ldots,D_{i_{p-1},i_p}$ are larger than $k\log k\cdot t$.  So $(p - 1) k\log k\cdot t \le D_{x,y}$.
Summing this bound over thresholds $t$ shows
\begin{align*}
\sum_{a=x}^{y}s_a
&=\sum_{t=1}^{s_x} \sum_{j = x}^y \mathbb{I}[s_j \ge t] \tag{$s_j = \sum_{t=1}^{s_x} \mathbb{I}[s_j \ge t]$}\\
& \le \sum_{t=1}^{s_x} \left(\frac{D_{x,y}}{k\log k\cdot t}+1\right) \tag{$\sum_{j = x}^y \mathbb{I}[s_j \ge t] \le \sum_{j = x}^y \mathbb{I}[|\cC_j| \ge t] \le \frac{D_{x,y}}{k\log k\cdot t}+1$}\\
& \le \frac{2D_{x,y}}{k} + s_x. \tag{$s_x \le |C_x| \le k$}
\end{align*}
\end{proof}

Unlike Lemma~\ref{lem:one_sided_weight_sum_weighted}, there are two terms in \eqref{eq:selection_on_delta_conjecture}. So we have two lemmas for two terms.

\begin{lemma}
\label{lem:one_sided_weight_sum_weighted1_conjecture}
Suppose that $i < q$ and $\cC_i$ is correlated with $\cC_q$.
Let $s_i,\ldots,s_q$ be positive integers with $s_a \le |\cC_a|$ for every $a$ and $s_i=\max_{i\le a\le q}s_a$.
Then
\[
    \sum_{j=i+1}^{q}\frac{ks_j}{D_{i,j}} \le 3\log\left(\sum_{j=i+1}^{q}s_j\right)+1.
\]
The same bound holds on the left side of $i$.
\end{lemma}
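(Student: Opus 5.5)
We bound each term $\frac{k s_j}{D_{i,j}}$ using the lower bound on $D_{i,j}$ from Lemma~\ref{lem:bounding_on_D_weighted_conjecture}, applied to the subsequence $\cC_i,\ldots,\cC_j$, and then compare the resulting sum to a harmonic-type sum. For each $j\in\{i+1,\ldots,q\}$, the pair $\cC_i,\cC_j$ is correlated: its distance is at most $\dist(\cC_i,\cC_q)\le 2\Delta_k$, since clusters are sorted and frequencies between them lie in between. The restricted family $s_i,\ldots,s_j$ still has $s_i$ as its maximum, so Lemma~\ref{lem:bounding_on_D_weighted_conjecture} gives $D_{i,j}\ge \frac{k}{2}\sum_{a=i+1}^{j}s_a$. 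Writing $S_j:=\sum_{a=i+1}^{j}s_a$ (so $S_{i}=0$ and $s_j=S_j-S_{j-1}$), we get
\[
\frac{k s_j}{D_{i,j}} \le \frac{2(S_j-S_{j-1})}{S_j}.
\]

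We also use the direct lower bound \eqref{eq:bound_D_conjecture}: $D_{i,j}\ge k\log k\cdot\min\{|\cC_i|,|\cC_j|\}\ge k\log k\cdot s_j$, since $s_j\le s_i\le|\cC_i|$ and $s_j\le|\cC_j|$. Hence each term is also at most $1/\log k\le 1$. This handles the first term $j=i+1$, where the ratio bound gives $2$; using the direct bound instead, it contributes at most $1$, which is the additive $+1$.

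For $j\ge i+2$, note $S_{j-1}\ge 1$ because all $s_a$ are positive integers. We then bound
\[
\frac{S_j-S_{j-1}}{S_j}\le \int_{S_{j-1}}^{S_j}\frac{du}{u}=\ln\frac{S_j}{S_{j-1}}.
\]
Telescoping over $j=i+2,\ldots,q$ gives $\sum_{j=i+2}^q \frac{k s_j}{D_{i,j}}\le 2\ln\frac{S_q}{S_{i+1}}\le 2\ln S_q\le 3\log S_q$, whether $\log$ is natural or base $2$. Adding the contribution of $j=i+1$ yields $3\log(\sum_{j=i+1}^q s_j)+1$. The left side is symmetric: reverse the order of the clusters and apply the mirrored version of Lemma~\ref{lem:bounding_on_D_weighted_conjecture}.

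The main point to get right is the first-term handling. The ratio bound alone is weak when $S_{j-1}=0$, so the $j=i+1$ term must be treated separately with \eqref{eq:bound_D_conjecture}. We must also check that Lemma~\ref{lem:bounding_on_D_weighted_conjecture} applies to every prefix $[i,j]$; this holds because the maximum condition and correlation are inherited from $[i,q]$.
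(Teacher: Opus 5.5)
Your proof is correct. It uses the same two ingredients as the paper, the prefix bound of Lemma~\ref{lem:bounding_on_D_weighted_conjecture} and the direct bound \eqref{eq:bound_D_conjecture}, but assembles them differently.

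The paper inducts on $q$. In the inductive step it uses only $D_{i,q}\ge kA/2$ with $A=\sum_{j=i+1}^{q-1}s_j$, i.e.\ it drops the new term $s_q$ from the lemma's sum. That is why it must cap the new term by $1$ via \eqref{eq:bound_D_conjecture} at every step. It then splits into the case $s_q\le A/2$ (using $2x\le 3\log(1+x)$) and the case $s_q>A/2$ (using $1\le 3\log(3/2)$).

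You keep $s_j$ in the sum, which gives the self-normalized bound $2s_j/S_j$. The inequality $1-1/x\le\ln x$ then telescopes directly, with no case split. You need \eqref{eq:bound_D_conjecture} only for the first term, where $S_{j-1}=0$.

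Your route is shorter and gives the slightly sharper bound $2\ln(S_q/s_{i+1})+1$. The paper's induction would go through with a weaker prefix bound that excludes the last cluster. Yours needs the full statement of the lemma, which the lemma does provide, and your check that correlation and the maximality of $s_i$ carry over to every prefix $[i,j]$ is exactly what is required.
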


\begin{proof}
We prove this by induction on $q$.  
The base case $q = i + 1$ follows from $D_{i,i+1} \ge k\log k\cdot |\cC_{i+1}| \ge k\log k \cdot s_{i+1}$. 
For the induction step, denote $A:=\sum_{j=i+1}^{q-1}s_j$. 
Lemma~\ref{lem:bounding_on_D_weighted_conjecture} gives $D_{i,q}\ge kA/2$. 
Also, lower bound of \eqref{eq:bound_D_conjecture} gives $D_{i,q}\ge k\log k\cdot |\cC_q| \ge k\log k\cdot s_q$.  
Thus,
\begin{align*}
    \frac{k s_q}{D_{i,q}}
    \le \min\left\{\frac{1}{\log k},\frac{2 s_q}{A}\right\}\le \min\left\{1,\frac{2 s_q}{A}\right\}.
\end{align*}
\begin{enumerate}
\item If $s_q\le \frac{A}{2}$, by $2 x \le 3 \log (1+x)$ for $x \in (0,1/2]$, we have
\begin{align*}
3\log(A)+1+\frac{2 s_q}{A}
\le 3\log(A)+1+3\log \left(1+\frac{ s_q}{A}\right)=3\log (A+ s_q)+1.
\end{align*}
\item If $s_q>\frac{A}{2}$, by $1 \le 3 \log(3/2)$, we obtain
\begin{align*}
3\log(A)+1+1\le 3\log\left(A+\frac{A}2\right)+1
\le 3\log (A+ s_q)+1.
\end{align*}
\end{enumerate}

In both cases, the induction hypothesis extends from $q-1$ to $q$.
Finally, we have
\[
\sum\limits_{j=i+1}^q\frac{k s_j}{D_{i,j}}\le 3\log\left(\sum\limits_{j=i+1}^q s_j\right)+1.
\]
The case $q < i$ works similarly.
\end{proof}

\begin{lemma}
\label{lem:one_sided_weight_sum_weighted2_conjecture}
Suppose that $i < q$ and $\cC_i$ is correlated with $\cC_q$.
Let $s_i,\ldots,s_q$ be positive integers with $s_a \le |\cC_a|$ for every $a$ and $s_i=\max_{i\le a\le q}s_a$.
Then
\begin{align*}
\sum_{j=i+1}^{q} \frac{s_j}{\sqrt {D_{i,j}}}
\le 4\sqrt{\frac{\sum\limits_{j=i+1}^q s_j}{k}}.
\end{align*}
The same bound holds on the left side.
\end{lemma}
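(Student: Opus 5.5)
We argue by induction on $q$, mirroring Lemma~\ref{lem:one_sided_weight_sum_weighted} and Lemma~\ref{lem:one_sided_weight_sum_weighted1_conjecture}, and aim for the invariant
\[
\sum_{j=i+1}^{q}\frac{s_j}{\sqrt{D_{i,j}}}\le 4\sqrt{\frac{\sum_{j=i+1}^q s_j}{k}}.
\]
Two lower bounds on $D_{i,q}$ drive the argument. The first is the lower bound of \eqref{eq:bound_D_conjecture}: $D_{i,q}\ge k\log k\cdot \min\{|\cC_i|,|\cC_q|\}\ge k\, s_q$. This uses $s_q\le s_i\le|\cC_i|$ and $s_q\le|\cC_q|$, and we simply drop the $\log k$ factor. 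The second is Lemma~\ref{lem:bounding_on_D_weighted_conjecture} applied to the indices $i,\ldots,q$, which gives $D_{i,q}\ge \frac{k}{2}\sum_{a=i+1}^{q}s_a\ge \frac{k}{2}A$, where $A:=\sum_{j=i+1}^{q-1}s_j$.

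\textbf{Base case} ($q=i+1$). Here $s_{i+1}/\sqrt{D_{i,i+1}}\le s_{i+1}/\sqrt{k s_{i+1}}=\sqrt{s_{i+1}/k}$, which is at most the right-hand side.

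\textbf{Inductive step.} The new term satisfies
\[
\frac{s_q}{\sqrt{D_{i,q}}}\le \min\left\{\sqrt{\frac{s_q}{k}},\ \frac{\sqrt2\, s_q}{\sqrt{kA}}\right\}.
\]
By the induction hypothesis the earlier terms sum to at most $4\sqrt{A/k}$. After factoring out $1/\sqrt{k}$, it suffices to show $4\sqrt{A}+\text{term}\le 4\sqrt{A+s_q}$, and we split into two cases.

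If $s_q\le A$, we use the second bound. Since $\sqrt{1+u}\ge 1+u/3$ for $u\in[0,1]$, we get $4\sqrt{A+s_q}\ge 4\sqrt A+\frac{4s_q}{3\sqrt A}$, which dominates $\frac{\sqrt2 s_q}{\sqrt A}$ because $4/3>\sqrt2$.

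If $s_q>A$, we use the first bound. We need $4\sqrt A+\sqrt{s_q}\le 4\sqrt{A+s_q}$. Squaring, this becomes $16A+8\sqrt{As_q}+s_q\le 16A+16 s_q$, which holds since $8\sqrt{As_q}\le 8 s_q$.

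Both cases close the induction. The left side is symmetric: reverse the indices, using the mirrored version of Lemma~\ref{lem:bounding_on_D_weighted_conjecture}.

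One point needs checking: whether Lemma~\ref{lem:bounding_on_D_weighted_conjecture} gives the sum starting from $i+1$. It does, because its proof gives $\sum_{a=i}^{q}s_a\le 2D/k+s_i$. The main (mild) obstacle is choosing the case threshold and the constant $4$ so that the square-root increment absorbs the new term. The threshold $s_q$ versus $A$ together with constant $4$ works, as the calculations above show.
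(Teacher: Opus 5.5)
Your argument follows the paper's proof essentially line by line: induction on $q$, the two lower bounds $D_{i,q}\ge k s_q$ and $D_{i,q}\ge kA/2$, and a two-case split on $s_q$ versus $A$ (the paper splits at $A/2$, which is immaterial). Your base case and your case $s_q>A$ are correct. However, the case $s_q\le A$ contains a false step. Since $4/3\approx 1.333$ is \emph{smaller} than $\sqrt2\approx 1.414$, the bound $\sqrt{1+u}\ge 1+u/3$ only gives $4\sqrt{A+s_q}\ge 4\sqrt A+\frac{4s_q}{3\sqrt A}$. That does not dominate $4\sqrt A+\frac{\sqrt2\, s_q}{\sqrt A}$, so as written this case does not close.

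The inequality you need is still true, so the repair is one line. Put $u:=s_q/A\in(0,1]$. Squaring $4+\sqrt2\,u\le 4\sqrt{1+u}$ reduces it to $8\sqrt2\,u+2u^2\le 16u$, i.e.\ $u\le 8-4\sqrt2\approx 2.34$. This is exactly the paper's step $1+\frac{\sqrt2}{4}u\le\sqrt{1+u}$. Alternatively, by concavity use the chord bound $\sqrt{1+u}\ge 1+(\sqrt2-1)u$ on $[0,1]$; its coefficient $4(\sqrt2-1)\approx 1.66$ exceeds $\sqrt2$. With this fix your proof is complete.

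Incidentally, your bound $\min\{|\cC_i|,|\cC_q|\}\ge s_q$ (via $s_q\le s_i\le|\cC_i|$) is slightly more careful than the paper's $D_{i,q}\ge k\log k\,|\cC_q|$. The latter is not literally what \eqref{eq:bound_D_conjecture} gives.
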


\begin{proof}
Again we prove the right-sided bound by induction on $q$.
The case $q = i + 1$ follows from $D_{i,i+1}\ge k\log k\cdot |\cC_{i+1}|$. 
For the induction step, we denote $A:=\sum_{j=i+1}^{q-1}s_j$. 
Lemma~\ref{lem:bounding_on_D_weighted_conjecture} implies $D_{i,q} \ge  kA/2$. 
In addition, the lower bound of \eqref{eq:bound_D_conjecture} shows $D_{i,q}\ge k\log k\cdot |\cC_q|\ge k\log k\cdot s_q$.  
Thus,
\begin{align*}
    \frac{s_q}{\sqrt{D_{i,q}}}
    \le \min\left\{\frac{\sqrt{2}s_q}{\sqrt{kA}},\sqrt{\frac{s_q}{k}}\right\}.
\end{align*}
\begin{enumerate}
\item If $s_q/A \le \frac12$, we obtain:
\begin{align*}
4\sqrt{\frac{A}{k}} + \frac{\sqrt{2}s_q}{\sqrt{kA}}
\le 4\sqrt{\frac{A}{k}}\left(1 + \frac{\sqrt{2} s_q}{4A}\right)
\le 4\sqrt{\frac{A}{k}} \sqrt{1+\frac{s_q}{A}}
= 4\sqrt{\frac{A+s_q}{k}}.
\end{align*}
\item If $s_q/A > \frac12$, we have:
\begin{align*}
4\sqrt{\frac{A}{k}} + \sqrt{\frac{s_q}{k}}
\le 4\sqrt{\frac{A}{k}} \left(1+\sqrt{\frac{s_q}{16A}}\right)
\le 4\sqrt{\frac{A}{k}} \sqrt{1+\frac{s_q}{A}}
= 4\sqrt{\frac{A+s_q}{k}}.
\end{align*}
\end{enumerate}

In both cases, the induction hypothesis extends from $q-1$ to $q$.
Finally, we have
\[
\sum\limits_{j=i+1}^q\frac{ks_j}{\sqrt{D_{i,j}}}
\le 4\sqrt{\frac{\sum\limits_{j=i+1}^q s_j}{k}}
\le 4.
\]
The case $q < i$ works similarly.
\end{proof}

With Lemma~\ref{lem:one_sided_weight_sum_weighted1_conjecture} and Lemma~\ref{lem:one_sided_weight_sum_weighted2_conjecture}, we obtain an important lemma for Theorem~\ref{thm:total_energy_conjecture} as follows:

\begin{lemma}
\label{lem:total_energy_weighted_row_sum_conjecture}
For every cluster $\cC_i$,

\begin{align}
\sum_{j \ne i:\,\cC_j\text{ correlated with }\cC_i} \delta_{i,j}= O(\epsilon).\label{eq:summation_correlation_conjecture}
\end{align}
\end{lemma}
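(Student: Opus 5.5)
We split $\delta_{i,j}$ from \eqref{eq:selection_on_delta_conjecture} into its two terms and bound the sum of each over correlated $j\neq i$ separately, once on each side of $i$. Write
\[
\delta^{(1)}_{i,j}:=\frac{\epsilon^2}{\log k}\cdot\frac{|\cC_i||\cC_j|}{D_{i,j}}, \qquad \delta^{(2)}_{i,j}:=\epsilon\sqrt{\frac{3\log(k/\epsilon)}{\log k}}\cdot\frac{\min\{|\cC_i|,|\cC_j|\}}{\sqrt{D_{i,j}}}.
\]
Fix the right side $i<j\le q$, where $q$ is the rightmost cluster correlated with $\cC_i$; the left side is symmetric.

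The key device is a choice of weights $s_a$ for Lemma~\ref{lem:one_sided_weight_sum_weighted1_conjecture} and Lemma~\ref{lem:one_sided_weight_sum_weighted2_conjecture}. Both lemmas need $s_i=\max_{i\le a\le q}s_a$ and $s_a\le|\cC_a|$. We therefore truncate:
\[
s_i:=|\cC_i|, \qquad s_a:=\min\{|\cC_a|,|\cC_i|\}\ \text{ for } i<a\le q .
\]
Every $s_a$ is a positive integer, and the maximum is attained at $i$.

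For the first term, $|\cC_i||\cC_j|\le k\cdot|\cC_j|$, but we want $s_j$ rather than $|\cC_j|$. So we split into two cases.
\begin{itemize}
\item If $|\cC_j|\le|\cC_i|$, then $s_j=|\cC_j|$ and $\delta^{(1)}_{i,j}\le\frac{\epsilon^2}{\log k}\cdot\frac{k s_j}{D_{i,j}}$.
\item If $|\cC_j|>|\cC_i|$, we use the symmetric bound $|\cC_i||\cC_j|\le k\,|\cC_i| = k\,s_j$, so the same inequality holds.
\end{itemize}
Summing and applying Lemma~\ref{lem:one_sided_weight_sum_weighted1_conjecture}, the first term contributes at most
\[
\frac{\epsilon^2}{\log k}\bigl(3\log\textstyle\sum_j s_j+1\bigr)\le\frac{\epsilon^2}{\log k}(3\log k+1)=O(\epsilon^2),
\]
since $\sum_j s_j\le k$.

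For the second term, $\min\{|\cC_i|,|\cC_j|\}=s_j$ exactly. Lemma~\ref{lem:one_sided_weight_sum_weighted2_conjecture} then gives
\[
\sum_{j=i+1}^q\frac{s_j}{\sqrt{D_{i,j}}}\le 4\sqrt{\frac{\sum_j s_j}{k}}\le 4.
\]
This contributes $O\bigl(\epsilon\sqrt{\log(k/\epsilon)/\log k}\bigr)=O(\epsilon)$, because $\epsilon$ is a fixed constant and so $\log(k/\epsilon)=O(\log k)$. Adding both sides and both terms gives \eqref{eq:summation_correlation_conjecture}.

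The main obstacle is checking that the lemmas really apply with the truncated weights. In Lemma~\ref{lem:bounding_on_D_weighted_conjecture}, the step $\sum_a \mathbb{I}[s_a\ge t]\le\sum_a\mathbb{I}[|\cC_a|\ge t]$ needs only $s_a\le|\cC_a|$. The threshold pairs $D_{i_1,i_2}$ there must be correlated pairs, and this holds because they lie between $x$ and $y$, hence within distance $2\Delta_k$. The base cases use $D_{i,i+1}\ge k\log k\cdot s_{i+1}$, which follows from the lower bound in \eqref{eq:bound_D_conjecture} since $s_{i+1}\le\min\{|\cC_i|,|\cC_{i+1}|\}$. If some step fails for a large neighbor $\cC_j$, the fallback is to bound that neighbor directly by $\delta_{i,j}\le\epsilon^2/\log k+O(\epsilon/\sqrt{\log k})$, using $D_{i,j}\ge k\log k\,\min\{|\cC_i|,|\cC_j|\}$. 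The number of such neighbors is $O(\log k)$, by the same threshold counting, so this still needs care to stay $O(\epsilon)$; the truncation route avoids that issue.
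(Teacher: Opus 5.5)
Your proposal is correct and follows the paper's proof: the paper uses the same weights $s_j=\min\{|\cC_i|,|\cC_j|\}$ and the same bound $|\cC_i||\cC_j|\le k s_j$, then applies Lemma~\ref{lem:one_sided_weight_sum_weighted1_conjecture} and Lemma~\ref{lem:one_sided_weight_sum_weighted2_conjecture} on each side of $i$ to get $O(\epsilon^2)+O(\epsilon)$. The fallback paragraph is unnecessary, but your remark that the base case needs $D_{i,i+1}\ge k\log k\cdot s_{i+1}$ via $s_{i+1}\le\min\{|\cC_i|,|\cC_{i+1}|\}$, rather than the paper's $D_{i,i+1}\ge k\log k\,|\cC_{i+1}|$, is a correct tightening.
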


\begin{proof}
In this proof, we will bound two terms of $\delta_{i,j}$ in \eqref{eq:selection_on_delta_conjecture} separately.

Let $s_j:=\min\{|\cC_i|,|\cC_j|\}$.  
Then $s_i=\max_j s_j$, so the above two lemmas can be applied separately to the correlated clusters to the left and to the right of $i$.  
Since $\sum_j s_j\le \sum_j |\cC_j| \le k$, Lemma~\ref{lem:one_sided_weight_sum_weighted1_conjecture} implies
\begin{equation} \label{eq:bound_on_delta_part1}
    \sum_{\substack{j\ne i:\\ \cC_j\text{ correlated with }\cC_i}}
    \frac{ks_j}{D_{i,j}} \le 8\log k.
\end{equation}
Similarly,
Lemma~\ref{lem:one_sided_weight_sum_weighted2_conjecture} gives
\begin{equation} \label{eq:bound_on_delta_part2}
    \sum_{\substack{j\ne i:\\ \cC_j\text{ correlated with }\cC_i}}
    \frac{s_j}{\sqrt{D_{i,j}}} \le 8.
\end{equation}
Recall that $\delta_{i,j} = \frac{\epsilon^2}{\log k}\cdot \frac{|\cC_i|\cdot|\cC_j|}{D_{i,j}}+\epsilon\sqrt{\frac{3\log (k/\epsilon)}{\log k}}\cdot \frac{\min\{|\cC_i|,|\cC_j|\}}{\sqrt{D_{i,j}}}$. 
Because $|\cC_i||\cC_j|\le k s_j$,
\begin{align} \label{eq:bound_on_delta_part3}
    \delta_{i,j} \le \frac{\epsilon^2}{\log k}\cdot \frac{ks_j}{D_{i,j}}+\epsilon\sqrt{\frac{3\log (k/\epsilon)}{\log k}}\cdot \frac{s_j}{\sqrt{D_{i,j}}}.
\end{align}
By plugging \eqref{eq:bound_on_delta_part1} and \eqref{eq:bound_on_delta_part2} into \eqref{eq:bound_on_delta_part3}, we have
\begin{align*}
    \sum_{\substack{j\ne i:\\ \cC_j\text{ correlated with }\cC_i}} \delta_{i,j} 
    \le \frac{\epsilon^2}{\log k} \cdot 8\log k
    + \epsilon\sqrt{\frac{3\log (k/\epsilon)}{\log k}} \cdot 8
    \le O(\epsilon).
\end{align*}
\end{proof}

\begin{proofof}{Theorem~\ref{thm:total_energy_conjecture}}
For $i,j \in S$ with $i \neq j$ and $\cC_i$ being correlated with $\cC_j$, Claim~\ref{clm:satisbility_of_delta_construction_conjecture} and \eqref{eq:weighted_pair_correlation_conjecture} implies
\begin{align}
|\langle H_k \cdot x_{\cC_i}, H_k \cdot x_{\cC_j}\rangle|
&\le \delta_{i,j}\cdot \|H_k \cdot x_{\cC_i}\|_2 \cdot \|H_k \cdot x_{\cC_j}\|_2\nonumber\\
&\le \delta_{i,j}\left(\frac{\|H_k \cdot x_{\cC_i}\|_2^2}{2}+\frac{\|H_k \cdot x_{\cC_j}\|_2^2}{2}\right).\label{eq:bound_inner_product_by_delta_conjecture}
\end{align}
Then we finish this proof by applying Lemma~\ref{lem:total_energy_weighted_row_sum_conjecture} to \eqref{eq:bound_inner_product_by_delta_conjecture}.

\end{proofof}

\subsection{Proof of Claim~\ref{clm:bound_range_clusters_conjecture}}
\label{heavy_region_range_conjecture}

Similarly to the proof of Claim~\ref{clm:bound_range_clusters}, we have:
\[
r_\ell=\max_{1 \le i \le \ell/2} \left\{r_i+r_{\ell-i}+\min\left\{ d'_{min}\cdot \min\{i,\ell-i\}, 2\Delta_{k}\right\}\right\}
\]
for any $\ell \ge 2$. 

Let $i_0:= \frac{2\Delta_k}{d'_{min}}$ such that $d'_{min}\cdot i_0 = 2\Delta_k$, and $C' := \frac{1}{2\log 2}$.

We still prove the following hypothesis by induction in $\ell$:
\[
r_\ell \le
\begin{cases}
d'_{min}\cdot C' \ell\log \ell,   &\ell\le i_0,\\
d'_{min}\cdot (C' \ell\log i_0 +\ell-i_0),    & \ell>i_0.
\end{cases}
\]

The base case $\ell=1$ follows from $r_{\ell}=0$. 

Let $p_{\ell,i}:=\left(r_i+r_{\ell-i}+\min\big\{d'_{min}\cdot \min\{i,\ell-i\}, 2\Delta_k\}\right) / d'_{min}$. It is sufficient to bound different cases of $p_{\ell,i}$ for the inductive step of $r_\ell$.

\begin{enumerate}
\item If $0<i \le  \ell -i \le i_0$, we have $\ell \le 2i_0$, then
\begin{align*}
p_{\ell,i}
&\le C' i\log i + C'(\ell -i)\log(\ell -i)+ i\\
&= \frac{i}{2}\log_2 i + \frac{\ell-i}{2}\log_2(\ell-i) +i\\
&= \frac{x\ell}{2}\log_2 (x\ell) + \frac{(1-x)\ell}{2}\log_2{(1-x)\ell} +x\ell \tag{Assume $i=x\ell$ with $0<x\le 1/2$}\\
&= \frac{1}{2}\log_2 \ell^{\ell} + \ell(\frac{1}{2}(x\log_2 x+(1-x)\log_2{(1-x)}) +x)\\
&\le \frac{1}{2}\ell \log_2 \ell=C' \ell \log \ell. 
\end{align*}

Since $x\ln x +(1-x)\ln(1-x)\le 4\ln2\cdot x(x-1)$ with $0<x<1$ and $2x(x-1)+x = 2x(x-1/2)\le 0$ with $0< x <1/2$, the last inequality holds.

Since $1-\frac{1}{x} \ge \frac{\log_2 {x}}{2}$ with $1\le x\le 2$, we have $C'\ell\log i_0 + \ell - i_0\ge C'\ell\log \ell$ with $i_0\le \ell\le 2i_0$, then
$$
p_{\ell,i} \le
\begin{cases}
C'\ell \log \ell, &\ell \le i_0,\\
C'\ell\log i_0 +\ell  -i _0, &\ell > i_0.
\end{cases}
$$
    
\item If $0< i\le i_0 < \ell -i$, we have $\ell > i_0$, then
\begin{align*}
p_{\ell,i}&\le C' i\log i + (C'(\ell-i)\log i_0 + (\ell - i) - i_0) + i \\
&\le C'\ell \log i_0 +\ell - i_0.
\end{align*}    

\item If $i_0 < i \le \ell -i $, we have $\ell > i_0$, then
\begin{align*}
p_{\ell,i}
&\le  (C'i\log i_0 + i - i_0) +  (C'(\ell-i)\log i_0 + (\ell - i) - i_0)  + i_0 \\
&= C'\ell \log i_0 + \ell - i_0.
\end{align*}
\end{enumerate}
Combining all of the above cases with $r_\ell=d'_{min} \cdot \max\limits_{i=1}^{\ell/2}p_{\ell,i}$, we finish the proof of $r_{\ell}$.

As same as Claim~\ref{clm:bound_range_clusters}, $\sum_{i \in [n]} |range(\cC_i)|$ is upper bounded by $r_{k}$.Combining $\Delta_k=\Theta(\frac{k^2\log k}{\epsilon}+k^2\log^2 k)$ and $d'_{min}:=\frac{C_H k \log^4 k/\epsilon}{\epsilon^2} $, we have a bound of $r_k$ and $\sum_{i \in [n]} |range(\cC_i)|$:
\begin{align*}
\sum_{i \in [n]} |range(\cC_i)| &\le r_k \le d'_{min}\cdot(\frac{k}{2\log 2} \cdot  \log \frac{2\Delta_k}{d'_{min}}+ k) \\
&= \Theta \left(\frac{k^2\log^4 k}{\epsilon^2} 
\cdot \log(k(\frac{\epsilon}{ \log^3 k}+\frac{\epsilon^2}{\log^2 k}))\right ) = \tilde{O}(\frac{k^{2}}{\epsilon^{2}}).
\end{align*}

\section{Main Results}\label{sec:main_proof}
We prove Theorem~\ref{inform:fast_algorithm} and Theorem~\ref{inform:algorithm_conjecture}. Since their proofs are very similar, we combine them as follows.

\begin{theorem}\label{thm:fast_algorithm}
    Given any $F$ and a small constant $\epsilon>0$, let $y(t):=x(t)+\eta(t)$ for $x(t):=\sum_{j=1}^k \alpha_j e^{2 \pi \bi f_j t}$ be the observation over the time window $[-1,1]$ with $k$ arbitrary frequencies $f_1,\ldots,f_k \in [-F,F]$ and $\|\eta(t)\|_{[-1,1]}^2 \le \epsilon \cdot \|x(t)\|_{[-1,1]}^2$. There exists an algorithm that takes $m:=\tilde{O}(k^{3.75})$ samples and $\tilde{O}(m^{\omega})$ time to output $\tilde{x}$ with $\|\tilde{x}-x\|_{[-1,1]}^2 \le O(\epsilon \cdot \|x\|_{[-1,1]}^2 + \|\eta\|_{[-1,1]}^2)$.    

    If Conjecture~\ref{conj:growth_rate} is correct, the same guaranty holds for algorithms with $m_c:=\tilde{O}(k^{3})$ samples and $\tilde{O}(m_c^{\omega})$ time.
\end{theorem}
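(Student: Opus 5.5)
The plan is to assemble three ingredients: the frequency list $L$ from Lemma~\ref{lem:find_heavy_frequencies}, the covering-radius guarantee (Theorem~\ref{thm:covering_radius_heavy_region}, or Theorem~\ref{thm:covering_radius_conjecture} under the conjecture), and the low-degree approximation of Lemma~\ref{lem:low_deg_approx}. These give a linear family $\mathcal{F}:=\{\sum_{\tilde f\in L} e^{2\pi\bi \tilde f t} q_{\tilde f}(t): \deg q_{\tilde f}\le d\}$ of dimension $N=|L|\cdot(d+1)$ that contains a good approximation of $x$. We then learn within $\mathcal{F}$ by weighted linear regression, as in \cite{CP19_colt,SSWZ23}.

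\emph{Step 1 (frequency estimates).} Run Lemma~\ref{lem:find_heavy_frequencies} on $H_k\cdot y$ with $\tilde O(k^2)$ samples and obtain $L$ with $|L|=O(k/\epsilon)$. By Theorem~\ref{thm:covering_radius_heavy_region}, the clusters in $\mathcal{R}$ carry all but $O(\epsilon)\|H_kx\|_2^2$ of the energy. Set $x_{\mathcal R}:=\sum_{\cC\in\mathcal R}x_{\cC}$. Using $\|H_k z\|_2^2=(1\pm\epsilon)\|z\|_{[-1,1]}^2$ for the $k$-sparse signal $z=x-x_{\mathcal R}$ (Lemma~\ref{lemm:construction_H}), we get $\|x-x_{\mathcal R}\|_{[-1,1]}^2=O(\epsilon)\|x\|_{[-1,1]}^2$.

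Each frequency $f$ of a cluster $\cC\in\mathcal R$ lies within $D+|\mathrm{range}(\cC)|$ of some $\tilde f\in L$. By Claim~\ref{clm:bound_range_clusters} this is still $\tilde O(k^{2.75})$, and by Claim~\ref{clm:bound_range_clusters_conjecture} it is $\tilde O(k^2)$ under the conjecture. Applying Lemma~\ref{lem:low_deg_approx} to $x_{\mathcal R}$ with this covering radius gives polynomials of degree $d=\tilde O(k^{2.75})$ (respectively $\tilde O(k^2)$), since the $k^2\log k/\epsilon$ term is dominated. We obtain $g\in\mathcal F$ with $\|x-g\|_{[-1,1]}^2=O(\epsilon)\|x\|_{[-1,1]}^2$. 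The dimension is $N=\tilde O(k^{3.75})$, respectively $\tilde O(k^3)$.

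\emph{Step 2 (regression).} Draw $m$ samples from the weighted distribution of \cite{CP19_colt} (as used in \cite{SSWZ23}) for the family $\mathcal F$. Its condition number/leverage bound lets $m=\tilde O(N)$ samples give an oblivious subspace embedding for $\mathcal F$ with constant distortion and constant success probability. Output the weighted least-squares minimizer $\tilde x\in\mathcal F$. The standard guarantee gives
\[
\|\tilde x-x\|_{[-1,1]}^2\lesssim\|x-g\|_{[-1,1]}^2+\|\eta\|_{[-1,1]}^2 .
\]
Two facts justify this: $\E$ of the weighted sampled noise energy equals $\|\eta\|_{[-1,1]}^2$ (use Markov), and $x-g$ is handled the same way. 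Solving the $m\times N$ least-squares problem costs $\tilde O(m^{\omega})$ time, which dominates Step 1.

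The main obstacle is Step 2's sample bound $m=\tilde O(N)$ rather than a larger power. This needs a uniform bound $\sup_{t}\sup_{h\in\mathcal F}|h(t)|^2/\|h\|_{[-1,1]}^2$ under the chosen weights, of order $N$ up to logs, for sums of $|L|$ frequency-shifted degree-$d$ polynomials. The first thing to try is the CP19 weighting $\propto \min\{N/(1-|t|),\,N^2\}$, combined with the per-$t$ leverage bound for this family proved in \cite{SSWZ23} (their analysis is insensitive to $d$ and $|L|$ beyond the product). If that is unavailable, I would fall back to Lemma~\ref{lemma:bounds_Fourier_sparse_signals}-style bounds via the approximation of each element by a $\tilde O(N)$-sparse signal. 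A secondary check is that the constants from Theorem~\ref{thm:covering_radius_heavy_region}'s $O(\epsilon)$ loss, added to regression error, still yield $O(\epsilon\|x\|^2+\|\eta\|^2)$.
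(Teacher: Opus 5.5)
Your proposal is correct and follows the paper's proof essentially step for step. You take the frequency list from Lemma~\ref{lem:find_heavy_frequencies}, discard the clusters outside $\mathcal{R}$ via Theorem~\ref{thm:covering_radius_heavy_region} (resp.\ Theorem~\ref{thm:covering_radius_conjecture}) together with Lemma~\ref{lemm:construction_H}, approximate the remaining part by Lemma~\ref{lem:low_deg_approx} inside the family $\{e^{2\pi\bi\tilde f_j t}t^d\}$ of dimension $\tilde O(k^{3.75})$ (resp.\ $\tilde O(k^3)$), and then regress, exactly as the paper does.

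The only difference is in the regression step. The paper treats it as a black box, citing Theorem~1.1 of \cite{CP19_colt}, which gives $\tilde O(\dim\mathcal F/\epsilon)$ samples for any linear family, so it never needs the leverage bound you flag as the main obstacle. Your fallback also works: view elements of $\mathcal F$ as limits of $N$-Fourier-sparse signals, apply Lemma~\ref{lemma:bounds_Fourier_sparse_signals}, and use the CP19 weighting.
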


\begin{proof}
    We first show the algorithm and the analysis for the first part. The algorithm behind the first part (and Theorem~\ref{inform:fast_algorithm}) follows the same outline as the algorithm in \cite{CKPS17,SSWZ23}:
    \begin{enumerate}
        \item We apply Lemma~\ref{lem:find_heavy_frequencies} to obtain $\ell=O(k/\epsilon)$ frequencies in $L:=\{\tilde{f}_1,\ldots,\tilde{f}_\ell\}$ such that    \begin{equation}\label{eq:pf_freq_guarantee}
            \forall f \text{ with } \int_{f-\Delta_k}^{f+\Delta_k} |\wh{x \cdot H_k}(u)|^2 \mathrm{d} u \ge \frac{\epsilon}{5k} \cdot \|y\|_{[-1,1]}^2, \exists \tilde{f} \in L \text{ with } |\tilde{f}-f| = O(\Delta_k).
        \end{equation}
        This step takes $k^2 (\log kF/\epsilon)^{O(1)}$ samples and $k^2 (\log kF/\epsilon)^{O(1)}$ time.

        \item Let $D:=\frac{k^{2.75}}{\epsilon^{1.5}} \cdot (\log k)^{O(1)}$ be the covering radius guaranteed by Theorem~\ref{thm:covering_radius_heavy_region} and the linear family $\mathcal{F}:=\{e^{2 \pi \bi \tilde{f}_j t}\cdot t^{d}: j \in [\ell], d \le C \cdot D\}$ for some constant $C=O(1)$. Then applying a linear regression algorithm (like Theorem~1.1 in \cite{CP19_colt}) finds $\tilde{x} \in \mathcal{F}$ with $\tilde{O}(\frac{\ell \cdot CD}{\epsilon})$ samples and $\tilde{O}(\frac{\ell \cdot CD}{\epsilon})^{\omega}$ time. 
    \end{enumerate}
    Let $\cC_1,\ldots,\cC_n$ be the clusters of $x$ output by Algorithm~\ref{alg:cluster1}.
    Given $L$ and $D$ defined above, let $\mathcal{R} \subset [n]$ be the clusters covered by $L$ within distance $D$ such that Theorem~\ref{thm:covering_radius_heavy_region} guarantees that $x':=\sum_{j \in \mathcal{R}} x_{\cC_j}$ satisfies
    \[
        \|H_k x - H_k x'\|^2_2 = O(\epsilon) \cdot \|H_k x\|_2^2.
    \]
    Lemma~\ref{lemm:construction_H} implies $\|x-x'\|_{[-1,1]}^2 = O(\epsilon) \cdot \|x\|_{[-1,1]}^2$ because all frequencies in $\wh{x'}$ are in $\wh{x}$.
    
    Next, Lemma~\ref{lem:low_deg_approx} implies that there are  degree-$(C \cdot D)$ polynomials $q_1,\ldots,q_\ell$ such that 
    \[
    \|x' - \sum_{j=1}^\ell e^{2 \pi \bi \tilde{f}_j t} \cdot q_j(t)\|_{[-1,1]} \le 2\epsilon \|x'\|_{[-1,1]}.    \]

    A triangle inequality shows
    \[
    \|x - \sum_{j=1}^\ell e^{2 \pi \bi \tilde{f}_j t} \cdot q_j(t)\|^2_{[-1,1]} = O(\epsilon) \cdot \|x\|_{[-1,1]}^2.
    \]

    Because $y=x+\eta$, there exists $z \in \mathcal{F}$ with $\|z - y\|^2_{[-1,1]} \le O(\epsilon \cdot \|x\|^2_{[-1,1]} + \| \eta \|^2_{[-1,1]})$. So linear regression algorithms return $\tilde{x}$ given samples in $y$ within $\|\tilde{x} - x\|_{[-1,1]}^2=O(\epsilon \cdot \|x\|_{[-1,1]}^2 + \| \eta \|_{[-1,1]}^2)$.

    The algorithm for Theorem~\ref{thm:covering_radius_conjecture} is the same except for the setting $D:=\frac{k^{2}}{\epsilon^{2}} \cdot (\log k)^{O(1)}$ from Theorem~\ref{thm:covering_radius_conjecture}. Since the analysis is the same, we omit it here.
\end{proof}

\section*{Acknowledgements}
The authors used Gemini 3.1 during the development of this work to explore
proof strategies and search for related tools in the literature. Gemini
was not used in any part of the exposition. 
The authors assume responsibility for all content. 

\bibliographystyle{alpha} 
\bibliography{bibFFT}

\appendix

\section{Proofs of Theorem~\ref{thm:net_frequency} and Corollary~\ref{cor:query_complexity_learning}}\label{sec:proofs}




\subsection{Proof of Theorem~\ref{thm:net_frequency}}




We use the same rightward-separation construction as \cite{CKPS17}. 

Suppose $f_1<\cdots<f_k$. 
We set $\eta=\epsilon/(Ck^2)$ for a large constant $C$, and define
\begin{align*}
f'_1 = \eta \cdot \lceil f_1/\eta \rceil , \qquad f'_j=\max\big\{\eta \cdot \lceil f_j/\eta \rceil ,f'_{j-1}+\eta\big\} \quad (2\le j\le k).
\end{align*}
We may assume $f_k \le F- k \eta$; otherwise, we can adjust the direction of rounding. 
Thus the new frequencies $\{f'_j\}_{j=1}^k \subset \mathcal{N} = \frac{\epsilon}{C k^2} \cdot\mathbb{Z} \cap [-F,F]$ and each moves by at most $k \eta$.

Starting with $x$, we replace $f_k$ by $f'_k$, then $f_{k-1}$ by $f'_{k-1}$, and continue down to $f_1$.  
Each time a frequency is replaced, we apply Lemma~\ref{lem:shift_one_freq} to the current signal.
With sufficiently large $C$, the triangle inequality yields
\begin{align*}
\|x'-x\|_{[-1,1]}
\le \bigl[(1+ O((k \eta))^k - 1\bigr] \|x\|_{[-1,1]}
\le (e^{\epsilon/2}-1) \cdot \|x\|_{[-1,1]}
\le \epsilon \cdot \|x\|_{[-1,1]},
\end{align*}
where the last inequality holds for $0 < \varepsilon < 1$.

\subsection{Proof of Corollary~\ref{cor:query_complexity_learning}}

For a sample sequence $S=(t_1,\ldots,t_m)$ and positive weights
$\omega=(\omega_1,\ldots,\omega_m)$, we denote
\begin{align*}
    \|u\|_{S,\omega}^2:=2\sum_{i=1}^m\omega_i|u(t_i)|^2.
\end{align*}
For a finite frequency set $A\subset\mathbb R$, let $V_A:=\operatorname{span}\{e^{2\pi\bi f t}:f\in A\}$.

Following the analysis in \cite{CP19_colt} and improved bounds in Lemma~\ref{lemma:bounds_Fourier_sparse_signals}, we can actually obtain a more general sampling lemma as described below. For completeness, we provide its proof at the end of this subsection.

\begin{lemma}\label{lem:finite_grid_embedding}
Let $\mathcal{N} \subset[-F,F]$.
There exists a explicit distribution $D_{\mathcal{F}}$ such that for $m=O\bigl(k^2 \log k \log |\mathcal{N}| \bigr)$,
independent samples $t_1,\ldots,t_m$ from $D_{\mathcal{F}}$, and weights $\omega_i=1/(m D_{\mathcal{F}}(t_i))$, 
with probability at least $0.995$ we have
\begin{equation}\label{eq:finite_grid_embedding}
\frac{1}{2}\|u\|_{[-1,1]}^2 \le \|u\|_{S,\omega}^2 \le \frac{3}{2}\|u\|_{[-1,1]}^2
\end{equation}
simultaneously for every signal $u$ having at most $k$ frequencies in $\mathcal{N}$.
\end{lemma}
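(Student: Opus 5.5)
The plan is to use importance sampling with a distribution driven by the Erdélyi bounds in Lemma~\ref{lemma:bounds_Fourier_sparse_signals}, followed by a matrix Chernoff bound on a single $k$-dimensional subspace and a union bound over the net.

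\textbf{Sampling distribution.} Set
\[
D_{\mathcal F}(t)\propto \min\Bigl\{\tfrac{\pi^2k^2}{4},\ \tfrac{2k}{1-|t|}\Bigr\}\quad\text{on }[-1,1].
\]
By Properties~\ref{item:uniform_bound_on_interval} and~\ref{item:leverage_bound_on_interval}, every $k$-Fourier-sparse $u$ then satisfies $|u(t)|^2\le K(t)\,\|u\|_{[-1,1]}^2$, where $K(t)$ is the bracketed quantity. The normalizer is
\[
\int_{-1}^1 K(t)\,\mathrm{d} t = O(k\log k),
\]
since the cap at $k^2$ is active only within distance $O(1/k)$ of $\pm1$, and $\int 2k/(1-|t|)$ over the rest contributes $O(k\log k)$. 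Consequently, for each $t$,
\[
\sup_u \frac{|u(t)|^2}{D_{\mathcal F}(t)\,\|u\|^2_{[-1,1]}}\le O(k\log k).
\]
Since $\|u\|_{S,\omega}^2$ is an unbiased estimator of $\|u\|_{[-1,1]}^2$ (up to the factor normalizing $2\sum\omega_i$ against the length of $[-1,1]$), each weighted term $2\omega_i|u(t_i)|^2$ is at most $O(k\log k/m)\cdot\|u\|^2_{[-1,1]}$.

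\textbf{Fixed subspace.} Fix a frequency set $A\subset\mathcal N$ with $|A|\le k$, and pick an orthonormal basis $b_1,\dots,b_{k}$ of $V_A$ under $\langle\cdot,\cdot\rangle_{[-1,1]}$. The quadratic form $u\mapsto\|u\|_{S,\omega}^2$ becomes a sum of $m$ independent PSD rank-one matrices $\frac1m M_i$ with $\E M_i=I$. The bound above gives $\|M_i\|\le O(k\log k)$. By matrix Chernoff,
\[
\Pr\Bigl[\tfrac12 I\preceq \tfrac1m\textstyle\sum_i M_i\preceq\tfrac32 I\text{ fails}\Bigr]\le 2k\exp\bigl(-\Omega(m/(k\log k))\bigr).
\]

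\textbf{Union bound.} There are at most $|\mathcal N|^k$ choices of $A$. Requiring $2k|\mathcal N|^k\exp(-\Omega(m/(k\log k)))\le0.005$ gives
\[
m=O\bigl(k\log k\cdot(k\log|\mathcal N|+\log k)\bigr)=O(k^2\log k\log|\mathcal N|).
\]
Every $u$ with at most $k$ frequencies in $\mathcal N$ lies in some such $V_A$, so \eqref{eq:finite_grid_embedding} holds simultaneously for all of them.

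\textbf{Main obstacle.} The delicate step is the uniform leverage bound with the correct integral $O(k\log k)$ rather than $O(k^2)$. Using only Property~\ref{item:uniform_bound_on_interval} would cost an extra $k/\log k$ factor. The fix is to combine Properties~\ref{item:uniform_bound_on_interval} and~\ref{item:leverage_bound_on_interval} and check that the endpoint regions contribute only $O(k)$. Throughout, the sampling density must depend only on $k$, not on $A$, so that a single sample set serves the union bound.
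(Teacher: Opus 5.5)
Your argument is correct, and its skeleton matches the paper's. You use the same importance-sampling density: your $K(t)/\int_{-1}^1 K$ agrees up to constants with the piecewise density $\frac{c}{(1-|t|)\log k}$ / $ck$ of Lemma~\ref{lem:cp_importance_sampling}. You get the same $O(k\log k)$ bound on $|u(t)|^2/D_{\mathcal F}(t)$ from Properties~\ref{item:uniform_bound_on_interval} and~\ref{item:leverage_bound_on_interval}, and you take the same union bound over the at most $|\mathcal N|^k$ subspaces $V_A$.

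The difference is how the embedding for a fixed $V_A$ is obtained. The paper applies a scalar Chernoff bound (accuracy $1/5$) to each point of a $1/10$-net of the unit sphere of $V_A$, which has size $2^{O(k)}$. It takes a union bound over all $|\mathcal N|^{O(k)}$ net points and then extends from the net to the whole sphere by the usual triangle-inequality argument. You instead apply matrix Chernoff to the rank-one matrices built from an orthonormal basis of $V_A$. The operator-norm bound $\|M_i\|\le O(k\log k)$ that this requires is exactly the supremum over unit vectors $u\in V_A$ of $|u(t_i)|^2/D_{\mathcal F}(t_i)$, so it follows from the same pointwise bound at no extra cost.

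Your route replaces the $2^{O(k)}$ net factor by $2k$ and skips the net-to-sphere extension, while the paper's route needs only scalar concentration. Since the union bound over $A$ already costs a factor $|\mathcal N|^k$, both give $m=O(k^2\log k\log|\mathcal N|)$.
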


Now we are ready to prove Corollary~\ref{cor:query_complexity_learning}.

\begin{algorithm}[t]
\caption{Recover k-sparse-Fourier signal}\label{alg:net_learning}
\begin{algorithmic}[1]
\Procedure{SparseFT}{$y,k,F,\epsilon$}
\State $\mathcal{N}\gets\frac{\epsilon}{10Ck^2} \cdot \mathbb Z\cap[-F,F]$
\State $m\gets \Theta(k^2 \log k \log |\mathcal{N}|)$
\State Draw $t_1, \cdots, t_m$ independently from the density $D_{\mathcal{F}}$ in Lemma~\ref{lem:finite_grid_embedding}
\State Query $y(t_1), \cdots, y(t_m)$ and set the corresponding weights
\ForAll{$A\subseteq\mathcal{N}$ with $|A|\le k$}
    \State $x_A\gets\argmin_{z\in V_A}\|y-z\|_{S,\omega}$
\EndFor
\State \Return $\widetilde x \gets\argmin_{A \subset \mathcal{N}, |A| \le k}\|y- x_A\|_{S,\omega}$
\EndProcedure
\end{algorithmic}
\end{algorithm}

With taking $\rho = \epsilon/10$, Theorem~\ref{thm:net_frequency}, applied with accuracy $\rho$, implies a signal $x'$ whose frequencies lie in
\begin{equation}
    \mathcal{N}=\frac{\rho}{Ck^2}\mathbb Z\cap[-F,F]
    \qquad\text{and}\qquad
    \|x-x'\|_{[-1,1]}\le \rho\|x\|_{[-1,1]}.
\end{equation}

We apply Lemma~\ref{lem:finite_grid_embedding} with $\mathcal{N}'=\mathcal{N}, k' = 2k$, which uses
\begin{equation}\label{eq:cp-sample-count}
m=O\left(k^2 \log k  \log |\mathcal{N}| \right) =O\left(k^2 \log k  \log\frac{kF}{\epsilon}\right)
\end{equation}
samples and guarantees
\begin{equation} \label{eq:cp-sample-bound}
    \frac{1}{2}\|\widetilde x-x'\|_{[-1,1]}^2 \le \|\widetilde x-x'\|_{S,\omega}^2 \le \frac{3}{2}\|\widetilde x-x'\|_{[-1,1]}^2,
\end{equation}
with probability 0.995.

Let $r=y-x'=\eta+(x-x')$.
Then $\|r\|_{[-1,1]} \le \|\eta\|_{[-1,1]} + \|x-x'\|_{[-1,1]}$. 
And by the definition of the norm $\| \cdot \|_{S,\omega}$,
\[
\mathbb E\bigl[\|r\|_{S,\omega}^2\bigr] =\int_{-1}^1|r(t)|^2\,\mathrm dt =\|r\|_{[-1,1]}^2.
\]
Markov's inequality therefore shows that, with probability at least $0.995$,
\begin{equation}\label{eq:noise_empirical_bound}
\|r\|_{S,\omega}\le\sqrt{200}\|r\|_{[-1,1]}.
\end{equation}
And the events in Lemma~\ref{lem:finite_grid_embedding} and \eqref{eq:noise_empirical_bound} hold simultaneously with probability at least $0.99$.

Assume both two above events happen. Then
\begin{align*}
\|\widetilde x-x'\|_{[-1,1]}
&\le\sqrt2\|\widetilde x-x'\|_{S,\omega} \tag{the lower bound in \eqref{eq:cp-sample-bound}}\\
&\le\sqrt2\bigl( \|\widetilde x-y\|_{S,\omega} +\|y-x'\|_{S,\omega}\bigr) \tag{the triangle inequality}\\
&\le 2 \sqrt2 \|y-x'\|_{S,\omega} \tag{$\|\widetilde x-y\|_{S,\omega} \le \|y-x'\|_{S,\omega}$}\\
&\le2\sqrt2\|r\|_{S,\omega} \tag{the definition of $r$}\\
&\le40\|r\|_{[-1,1]} \tag{the assumption}.
\end{align*}
Consequently,
\begin{align*}
\|\widetilde x-x\|_{[-1,1]} \le \|\widetilde x-x'\|_{[-1,1]} +\|x'-x\|_{[-1,1]}
\le O\left(\|\eta\|_{[-1,1]} +\epsilon \|x\|_{[-1,1]}\right).
\end{align*}

Finally, there are at most $(e |\mathcal{N}| / k)^k = (kF/\epsilon)^{O(k)}$ sets of at most $k$ grid frequencies. 
And each $A$ costs $\poly(m, k)$ time for linear regression.
Thus the total running time is $(kF/\epsilon)^{O(k)}$.

\begin{proof}[Proof of Lemma~\ref{lem:finite_grid_embedding}]
By replacing the original bounds with Property \ref{item:uniform_bound_on_interval} and \ref{item:leverage_bound_on_interval} from Lemma~\ref{lemma:bounds_Fourier_sparse_signals}, we can remove a log terms in Theorem~9.1 of \cite{CP19_colt}:

\begin{lemma}\label{lem:cp_importance_sampling}
There exists a constant $c=\Theta(1)$ such that the distribution whose
density with respect to the uniform distribution on $[-1,1]$ is
\begin{equation}\label{eq:def_cp_sampling_density}
D_{\mathcal{F}}(t)=
\begin{cases}
\frac{c}{(1-|t|)\log k}, & |t|\le1-\frac{1}{k},\\
c k, & |t|>1-\frac{1}{k}
\end{cases}
\end{equation}
guarantees, for every $k$-Fourier-sparse signal $x$,
\begin{equation}\label{eq:cp_importance_sampling}
\frac{|x(t)|^2}{D_{\mathcal{F}}(t)} \le O(k \log k)\|x\|_{[-1,1]}^2 \qquad (t\in[-1,1]).
\end{equation}
\end{lemma}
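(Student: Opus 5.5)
We need to prove Lemma lem:cp_importance_sampling: with density $D_{\mathcal{F}}$ as defined, $|x(t)|^2/D_{\mathcal{F}}(t)\le O(k\log k)\|x\|^2_{[-1,1]}$. The proof is a case split on $t$, using Properties~\ref{item:uniform_bound_on_interval} and~\ref{item:leverage_bound_on_interval} of Lemma~\ref{lemma:bounds_Fourier_sparse_signals}. Before that, we choose the constant $c$ so that $D_{\mathcal{F}}$ is a genuine probability density with respect to the uniform measure $\mathrm{d}t/2$ on $[-1,1]$.

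\textbf{Normalization.} The inner region contributes
\[
\frac{1}{2}\int_{|t|\le 1-1/k}\frac{c}{(1-|t|)\log k}\,\mathrm{d}t=\frac{c}{\log k}\int_{1/k}^{1}\frac{\mathrm{d}u}{u}=c.
\]
The boundary layer $|t|>1-1/k$ has total length $2/k$, so it contributes $\frac{1}{2}\cdot\frac{2}{k}\cdot ck=c$. The total mass is therefore $2c$, and we take $c=1/2=\Theta(1)$. For $k=1$ the inner region is empty; any fixed constant normalization handles that degenerate case.

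\textbf{Inner region, $|t|\le 1-1/k$.} Property~\ref{item:leverage_bound_on_interval} gives $|x(t)|^2\le \frac{2k}{1-|t|}\|x\|_{[-1,1]}^2$. Dividing by $D_{\mathcal{F}}(t)=\frac{c}{(1-|t|)\log k}$, the factor $1-|t|$ cancels:
\[
\frac{|x(t)|^2}{D_{\mathcal{F}}(t)}\le \frac{2k\log k}{c}\,\|x\|_{[-1,1]}^2=O(k\log k)\,\|x\|_{[-1,1]}^2 .
\]

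\textbf{Boundary layer, $|t|>1-1/k$.} Property~\ref{item:uniform_bound_on_interval} gives $|x(t)|^2\le \frac{\pi^2k^2}{4}\|x\|_{[-1,1]}^2$. Dividing by $D_{\mathcal{F}}(t)=ck$ yields
\[
\frac{|x(t)|^2}{D_{\mathcal{F}}(t)}\le O(k)\,\|x\|_{[-1,1]}^2 .
\]
This is within the target bound.

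There is no real obstacle in this argument. The only delicate point is the normalization convention, namely whether the density is taken with respect to $\mathrm{d}t$ or $\mathrm{d}t/2$. Either convention changes $c$ by only a factor of $2$, and the weight $\omega_i=1/(mD_{\mathcal{F}}(t_i))$ together with the factor $2$ in $\|\cdot\|_{S,\omega}$ is consistent with the uniform-measure convention.

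For the subsequent use in Lemma~\ref{lem:finite_grid_embedding}, the same argument applies to the $2k$-sparse differences, giving the bound $O(k\log k)$. This then feeds the matrix Chernoff bound combined with a union bound over $|\mathcal{N}|^{2k}$ subspaces, which yields $m=O(k^2\log k\log|\mathcal{N}|)$.
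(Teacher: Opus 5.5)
Your proof is correct and follows the paper's route. The paper's proof only says to redo Theorem~9.1 of \cite{CP19_colt} with Properties~\ref{item:uniform_bound_on_interval} and~\ref{item:leverage_bound_on_interval} of Lemma~\ref{lemma:bounds_Fourier_sparse_signals} in place of the original bounds, and your case split (the leverage bound cancelling the $1-|t|$ factor for $|t|\le 1-1/k$, the uniform $O(k^2)$ bound against the density $ck$ in the boundary layer, and the normalization giving $c=1/2$) is exactly that argument written out.
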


By Lemma~\ref{lem:cp_importance_sampling}, there is $\kappa=O(k \log k)$ such that every signal $x$ with at most $k$
frequencies satisfies
\begin{equation}\label{eq:cp-pointwise-bound}
\frac{|x(t)|^2}{D_{\mathcal{F}}(t)} \le \kappa \cdot \|x\|_{[-1,1]}^2.
\end{equation}
For a fixed nonzero $x$, we define
\[
Z_i(x):=\frac{2|x(t_i)|^2}
{D_{\mathcal{F}}(t_i)\|x\|_{[-1,1]}^2}.
\]
These variables are independent, have expectation $1$, and lie in $[0,2\kappa]$.
Moreover, the definition of the weights gives
\[
\frac{\|x\|_{S,\omega}^2}{\|x\|_{[-1,1]}^2} =\frac1m\sum_{i=1}^m Z_i(x).
\]

We state the following version of the Chernoff bound used in this proof.
\begin{lemma}[Chernoff Bound \cite{chernoff1952,tarjan09}]
\label{lem:cp_chernoff}
Let $Z_1,\ldots,Z_m$ be independent random variables such that
$0\le Z_i\le R$ and $\E[Z_i]=1$ for every $i$.  For every
$0<\theta<1/2$,
\begin{equation}\label{eq:cp_chernoff}
\Pr\left[
\left|\frac1m\sum_{i=1}^m Z_i-1\right|\ge\theta
\right]
\le2\exp\left(-\frac{\theta^2m}{3R}\right).
\end{equation}
\end{lemma}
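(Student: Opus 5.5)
We would prove the Chernoff bound by the standard exponential-moment method, after rescaling so that each summand lies in $[0,1]$. Set $Y_i:=Z_i/R\in[0,1]$. Then $\E[Y_i]=1/R$, and with $Y:=\sum_{i=1}^m Y_i$ we have $\mu:=\E[Y]=m/R$. The event $\left|\frac1m\sum_i Z_i-1\right|\ge\theta$ is exactly the event $|Y-\mu|\ge\theta\mu$. It therefore suffices to show
\[
\Pr[Y\ge(1+\theta)\mu]\le e^{-\theta^2\mu/3}
\qquad\text{and}\qquad
\Pr[Y\le(1-\theta)\mu]\le e^{-\theta^2\mu/2},
\]
and then take a union bound over the two tails, which produces the factor $2$ in \eqref{eq:cp_chernoff}.

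The first step is a moment generating function bound that uses only $Y_i\in[0,1]$. Fix any real $\lambda$. Because $y\mapsto e^{\lambda y}$ is convex, on $[0,1]$ it lies below its chord, so $e^{\lambda y}\le 1+(e^\lambda-1)y$. Taking expectations gives
\[
\E\bigl[e^{\lambda Y_i}\bigr]\le 1+(e^\lambda-1)\E[Y_i]\le \exp\bigl((e^\lambda-1)\E[Y_i]\bigr).
\]
By independence of the $Y_i$, we get $\E[e^{\lambda Y}]\le \exp\bigl((e^\lambda-1)\mu\bigr)$.

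For the upper tail, take $\lambda=\ln(1+\theta)>0$ and apply Markov's inequality to $e^{\lambda Y}$. This gives
\[
\Pr[Y\ge(1+\theta)\mu]\le \left(\frac{e^{\theta}}{(1+\theta)^{1+\theta}}\right)^{\mu}.
\]
For the lower tail, take $\lambda=\ln(1-\theta)<0$ and apply Markov to $e^{\lambda Y}$ on the event $Y\le(1-\theta)\mu$. This gives
\[
\Pr[Y\le(1-\theta)\mu]\le \left(\frac{e^{-\theta}}{(1-\theta)^{1-\theta}}\right)^{\mu}.
\]

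The only real work is the final elementary step: showing $\theta-(1+\theta)\ln(1+\theta)\le-\theta^2/3$ and $-\theta-(1-\theta)\ln(1-\theta)\le-\theta^2/2$ for $0<\theta<1/2$.
\begin{enumerate}
\item For the upper tail we would use $\ln(1+\theta)\ge \frac{2\theta}{2+\theta}$. Substituting, the exponent is at most $-\frac{\theta^2}{2+\theta}$, which is at most $-\frac{\theta^2}{3}$ whenever $\theta\le 1$.
\item For the lower tail we would expand $(1-\theta)\ln(1-\theta)=-\theta+\sum_{j\ge2}\frac{\theta^j}{j(j-1)}$, which is at least $-\theta+\theta^2/2$, so the exponent is at most $-\theta^2/2$.
\end{enumerate}
Both tail bounds are therefore at most $e^{-\theta^2 m/(3R)}$. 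Summing the two tails gives \eqref{eq:cp_chernoff}.
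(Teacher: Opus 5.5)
Your proof is correct. The rescaling $Y_i=Z_i/R\in[0,1]$, the chord bound $e^{\lambda y}\le 1+(e^\lambda-1)y$, Markov's inequality with $\lambda=\ln(1\pm\theta)$, and the two logarithm inequalities all check out and give exactly $2e^{-\theta^2 m/(3R)}$. The paper gives no proof of its own and only cites Chernoff and Tarjan's notes; your argument is the standard multiplicative Chernoff-bound proof found in such sources.
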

Lemma~\ref{lem:cp_chernoff} with $\theta=1/5$ therefore implies
\begin{equation}\label{eq:fixed_signal_concentration}
\Pr\left[
\left|\frac{\|x\|_{S,\omega}^2}{\|x\|_{[-1,1]}^2}-1\right|
>\frac15
\right]
\le 2e^{- \Omega(m/\kappa)}
\end{equation}.

For every $A\subseteq\mathcal N$ with $1\le|A|\le k$, we construct a $1/10$-net $\mathcal M_A$ of the unit sphere of $V_A$ in the $\|\cdot\|_{[-1,1]}$ norm.  
The volumetric argument shows $|\mathcal M_A| \le 2^{O(k)}$.  
Hence, the total number of net points is at most $|\mathcal{N}|^{O(k)}$.
A union bound in \eqref{eq:fixed_signal_concentration} shows that a sufficiently large
\[
m=O\bigl(k^2 \log k \log |\mathcal N|\bigr)
\]
makes the estimate in \eqref{eq:fixed_signal_concentration} hold for every point of every $\mathcal M_A$ with probability at least $0.995$.
From the property of the net, for any $x$ having at most $k$ frequencies in $\mathcal{N}$,
$\|x\|^2_{S,\omega}=(1 \pm \frac{1}{2}) \|x\|_{[-1,1]}^2$.
\end{proof}

\section{Filters and Locality} \label{sec:Filters}
One may assume that all powers in the filters are rounded up to the least even integer.
In the below proofs, we use the following bounds on the $\sinc$ function:
\begin{fact} \label{fact:sinc_bounds}
    Recall that $\sinc(x):=\frac{\sin(\pi x)}{\pi x}$. We denote $a:=\frac{1.2}{\pi}$.
    \begin{enumerate}
    \item For any $|x| \ge a$, $|\sinc(x)| \le \frac{1}{\pi |x|}$.
    \item For any $|x| \le a$, $\sinc(x) \in \left[1 - \frac{\pi^2 |x|^2}{6}, 1 - \frac{\pi^2 |x|^2}{10}\right]$.
    \end{enumerate}
\end{fact}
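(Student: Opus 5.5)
The statement to prove is Fact~\ref{fact:sinc_bounds}, two elementary bounds on $\sinc(x)=\frac{\sin(\pi x)}{\pi x}$. Both reduce to scalar inequalities for $\sin$ in the variable $u:=\pi|x|$. Since $\sinc$ is even, I will assume $x\ge 0$ throughout. The threshold $a=1.2/\pi$ then corresponds to $u\le 1.2$.

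\textbf{Part 1.} The bound is immediate from $|\sin(\pi x)|\le 1$: for every $x\neq 0$,
\[
|\sinc(x)|=\frac{|\sin(\pi x)|}{\pi|x|}\le\frac{1}{\pi|x|}.
\]
The restriction $|x|\ge a$ is not actually needed. It only marks the regime in which this bound is the one used later.

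\textbf{Part 2.} For $0\le u\le 1.2$, I need
\[
1-\frac{u^2}{6}\;\le\;\frac{\sin u}{u}\;\le\;1-\frac{u^2}{10}.
\]
\begin{itemize}
\item \emph{Lower bound.} This holds for all $u\ge 0$ by the standard Taylor inequality $\sin u\ge u-u^3/6$. To see it, integrate $\cos s\le 1$ to get $\sin u\le u$, and then integrate $\sin s\le s$ to get $1-\cos u\le u^2/2$. Integrating once more gives $u-\sin u\le u^3/6$.
\item \emph{Upper bound.} Here I use the alternating Taylor series. For $u\le 1.2$ its terms decrease in magnitude, since the ratio of consecutive terms is $u^2/((2n)(2n+1))<1$. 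Hence
\[
\sin u\le u-\frac{u^3}{6}+\frac{u^5}{120},
\]
which can also be proved by integrating the cosine bounds as above. It then suffices to show $\frac{u^2}{6}-\frac{u^4}{120}\ge\frac{u^2}{10}$, which is equivalent to $u^2\le 8$. This holds comfortably because $u^2\le 1.44$.
\end{itemize}

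\textbf{Main obstacle.} There is no real difficulty. The only care needed is the monotone decrease of the Taylor terms that justifies the fifth-order upper bound, and that follows directly from the ratio check above.
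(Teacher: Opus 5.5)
Your proof is correct. Part 1 follows from $|\sin(\pi x)|\le 1$. Part 2 follows from $u-u^3/6\le \sin u\le u-u^3/6+u^5/120$ together with the check that $u^2/6-u^4/120\ge u^2/10$ if and only if $u^2\le 8$.

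The paper states this as a Fact without giving any proof, so your Taylor-bound argument supplies the omitted verification. You could also drop the alternating-series monotonicity check entirely. One more round of your integration chain gives $\cos u\le 1-u^2/2+u^4/24$ from $\sin s\ge s-s^3/6$, and hence $\sin u\le u-u^3/6+u^5/120$ for all $u\ge 0$.
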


\begin{corollary} \label{cor:sinc_bounds_plus}
    For every $p > 0$, every even integer $q \ge 2$,
    \begin{enumerate}
        \item $\int_{|x| \le a/p} \sinc(p x)^q \mathrm{d} x = \Theta\left(\frac{1}{p\sqrt{q}}\right)$;
        \item $\int_{|x| \ge t} \sinc(p x)^q \mathrm{d} x \le O\left(\frac{1}{p}(\pi p t)^{-q+1}\right)$ with $t \ge a/p$.
    \end{enumerate}
\end{corollary}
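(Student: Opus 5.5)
We need to prove Corollary~\ref{cor:sinc_bounds_plus}: for $p>0$ and even $q\ge 2$,
\[
\int_{|x|\le a/p}\sinc(px)^q\,\mathrm{d}x=\Theta\!\left(\frac{1}{p\sqrt q}\right)
\qquad\text{and}\qquad
\int_{|x|\ge t}\sinc(px)^q\,\mathrm{d}x\le O\!\left(\frac1p(\pi p t)^{-q+1}\right)\ \text{for } t\ge a/p .
\]
The plan is to substitute $u=px$, so that $\mathrm{d}x=\mathrm{d}u/p$. Both integrals then become $\frac1p$ times an integral of $\sinc(u)^q$ that does not depend on $p$. Part~1 reduces to showing $\int_{|u|\le a}\sinc(u)^q\,\mathrm{d}u=\Theta(1/\sqrt q)$. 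Part~2 reduces to showing $\int_{|u|\ge pt}\sinc(u)^q\,\mathrm{d}u=O((\pi pt)^{-q+1})$ with $pt\ge a$. Since $q$ is even, $\sinc^q\ge 0$, so there are no sign issues.

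\textbf{Part 1.} Use item~2 of Fact~\ref{fact:sinc_bounds}: for $|u|\le a$, $\sinc(u)\in[1-\pi^2u^2/6,\,1-\pi^2u^2/10]$. Both endpoints are positive, because $\pi^2a^2/6=1.44/6<1$.

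For the upper bound, use $1-y\le e^{-y}$:
\[
\sinc(u)^q\le e^{-q\pi^2u^2/10}.
\]
Integrating this Gaussian over $\mathbb{R}$ gives $O(1/\sqrt q)$.

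For the lower bound, note that $1-y\ge e^{-2y}$ for $y\in[0,0.24]$, and here $y=\pi^2u^2/6\le 0.24$. Hence
\[
\sinc(u)^q\ge e^{-q\pi^2u^2/3}.
\]
Integrate this over the window $|u|\le \min\{a,1/\sqrt q\}$. Because $q\ge 2$, this window is $|u|\le 1/\sqrt q$, on which the integrand is at least $e^{-\pi^2/3}$. The integral is therefore $\Omega(1/\sqrt q)$.

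\textbf{Part 2.} Set $s=pt\ge a$. By item~1 of Fact~\ref{fact:sinc_bounds}, $|\sinc(u)|\le 1/(\pi|u|)$ for $|u|\ge s$. Therefore
\[
\int_{|u|\ge s}\sinc(u)^q\,\mathrm{d}u\le 2\int_s^\infty(\pi u)^{-q}\,\mathrm{d}u=\frac{2}{\pi(q-1)}(\pi s)^{-q+1}=O\!\left((\pi s)^{-q+1}\right).
\]
Multiplying by $1/p$ gives the claimed bound.

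The only point needing care is the lower bound in Part~1, namely checking that the elementary inequality $1-y\ge e^{-2y}$ holds on the relevant range. It does, since $\ln(1-y)\ge -2y$ for all $y\le 1/2$.
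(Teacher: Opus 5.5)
Your proof is correct and follows essentially the same route as the paper: a two-sided Gaussian comparison from item~2 of Fact~\ref{fact:sinc_bounds} for Part~1, and direct integration of $|\sinc(u)|\le 1/(\pi|u|)$ for Part~2. One small slip is that, since $a=1.2/\pi\approx 0.38$, the window $|u|\le\min\{a,1/\sqrt q\}$ equals $|u|\le 1/\sqrt q$ only for $q\ge 8$, not for all $q\ge 2$; this is harmless, because $\min\{a,1/\sqrt q\}\ge a/\sqrt q$ and $qu^2\le 1$ on the window, so the integrand is still at least $e^{-\pi^2/3}$ and the $\Omega(1/\sqrt q)$ lower bound stands.
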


\begin{proof}
By Fact~\ref{fact:sinc_bounds},
\begin{align*}
    \sinc(p x)^q = \exp(-\Theta(p^2 q x^2)) \quad (|x|\le a/p),
    \qquad
    \sinc(p x)^q \le (\pi p |x|)^{-q} \quad (|x|\ge a/p).
\end{align*}
Therefore,
\begin{align*}
    \int_{|x| \le a/p} \sinc(p x)^q \mathrm{d} x
    = & \int_{|x| \le a/p\sqrt{q}} \sinc(p x)^q \mathrm{d} x + \int_{a/p\sqrt{q} \le |x| \le a/p} \sinc(p x)^q \mathrm{d} x \\
    = & \int_{|x| \le a/p\sqrt{q}} \Theta(1) \mathrm{d} x + \int_{a/p\sqrt{q} \le |x| \le a/p} \exp(-\Theta(p^2 q x^2)) \mathrm{d} x = \Theta\left(\frac{1}{p\sqrt{q}}\right).
\end{align*}
And
\begin{align*}
    \int_{|x| \ge t} \sinc(p x)^q \mathrm{d} x \le \int_{|x| \ge t} (\pi p |x|)^{-q} \mathrm{d} x \le O\left(\frac{1}{p}(\pi p t)^{-q+1}\right).
\end{align*}
\end{proof}

\subsection{Proof of Lemma~\ref{lemm:construction_H}}\label{sec:proof_H_delta}
We finish the proof of Lemma~\ref{lemm:construction_H} about $(H_{\ell,\delta},\wh{H_{\ell,\delta}})$ in this section. The construction of this filter originates from \cite{CP19_ICALP}. We restate it here with the parameter scale needed for our proof.

Given the sparsity $\ell$ and error $\delta$, let $C=O(1)$, $S=\ell^2 \log \ell$, $\alpha_H = 1 - \frac{\delta}{C \ell^2}$ and
\begin{equation}\label{eq:gH-def}
    g_H(t):= \sinc\left(\frac{C \ell^2}{\delta}t\right)^{C\log\frac{\ell}{\delta}}
    \cdot \prod_{j=0}^{\left\lceil2 \log \ell\right\rceil} \sinc\left(C \ell^2 2^{-j} t\right)^{C 2^j \log \ell}.
\end{equation}
Then 
\begin{align}
H_{\ell,\delta}(t) & :=s_0 \cdot g_H(t) * \rect_{2 \alpha_H}(t)
\end{align}
where $s_0 > 0$ is chosen so that $H_{\ell,\delta}(0)=1$.

\begin{claim} \label{claim:H-bounds}
For $\ell \ge 1$ and $0 < \delta<1/2$, the filter $H = H_{\ell,\delta}$ has the following properties:
\begin{enumerate}
    \item $s_0 = \Theta\left(\frac{C \ell^2}{\delta} \sqrt{C \log \frac{\ell}{\delta}}\right)$.
    \item $|1 - H(t)| \le \left(\frac{\delta}{\ell}\right)^{\Omega(C)}$ for $|t| \le 1 - \frac{2\delta}{C \ell^2}$.
    \item $|H(t)| \le O(1)$ for $1 - \frac{2\delta}{C \ell^2} \le |t| \le 1$.
    \item $|H(t)| \le \left(\frac{\delta}{\ell}\right)^{\Omega(C)} \exp\left(-\Omega(C \ell^2\log(\ell)(|t|-1))\right)$ for $1 \le |t| \le 1 + 1/C$.
    \item $|H(t)| \le \left(\frac{\delta}{\ell}\right)^{\Omega(C)} \pi^{-\Omega\left(C \ell^2\log \ell\right)}$ for $1+1/C \le |t| \le 2$
    \item $|H(t)| \le \left(\frac{\delta}{\ell}\right)^{\Omega(C)} |C \pi t|^{-\Omega\left(C \ell^2\log \ell\right)}$ for $|t| \ge 2$.
    \item $\supp (\wh{H}) \subseteq [-\Delta_H, \Delta_H]$ with $\Delta_H := C^2 \cdot (\frac{\ell^2 \log \ell/\delta}{\delta} + 2\ell^2\log^2 \ell)$.
\end{enumerate}
\end{claim}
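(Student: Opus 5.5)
I will verify the seven properties of Claim~\ref{claim:H-bounds} one by one from the explicit product form \eqref{eq:gH-def} of $g_H$, using Fact~\ref{fact:sinc_bounds} and Corollary~\ref{cor:sinc_bounds_plus} throughout. Write $g_H=g_0\cdot\prod_j g_j$ with $g_0(t)=\sinc(\frac{C\ell^2}{\delta}t)^{C\log(\ell/\delta)}$ and $g_j(t)=\sinc(C\ell^2 2^{-j}t)^{C2^j\log\ell}$. Every factor is nonnegative, since the powers are rounded to even integers, and each is at most $1$.

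\textbf{Support and normalizer.} Property 7 is immediate from the Fourier side. The transform of $\sinc(pt)^q$ is $\rect_p^{*q}$, supported in $[-pq/2,pq/2]$. Supports add under convolution, and multiplying by $\sinc(2\alpha_H f)$ does not enlarge the support. Summing gives $\frac{C\ell^2}{\delta}\cdot C\log\frac{\ell}{\delta}$ plus $\sum_j C\ell^2 2^{-j}\cdot C2^j\log\ell=O(C^2\ell^2\log^2\ell)$, which is $\Delta_H$.

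For Property 1, note that $s_0^{-1}=\int_{-\alpha_H}^{\alpha_H}g_H(s)\,ds$ up to the factor $1/(2\alpha_H)$ coming from $\rect$. I will show that $g_0$ dominates the integral. The width of $g_0$ is $\frac{\delta}{C\ell^2\sqrt{C\log(\ell/\delta)}}$ by Corollary~\ref{cor:sinc_bounds_plus}(1), and on this window each $g_j$ equals $1-O(C^2\ell^4 4^{-j}t^2\cdot 2^j\log\ell)=1-o(1)$. The tail $|s|\ge a\delta/(C\ell^2)$ of $g_0$ is controlled by Corollary~\ref{cor:sinc_bounds_plus}(2) and is negligible, which yields $\int g_H=\Theta\bigl(\frac{\delta}{C\ell^2\sqrt{C\log(\ell/\delta)}}\bigr)$.

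\textbf{Time-domain bounds.} Since $H(t)=s_0\int_{t-\alpha_H}^{t+\alpha_H}g_H(s)\,ds/(2\alpha_H)$ and $g_H$ is concentrated near $0$, we have $1-H(t)\propto\int_{|s|>\alpha_H-|t|}g_H$ whenever $|t|<\alpha_H$.

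For Property 2, $|t|\le 1-\frac{2\delta}{C\ell^2}$ gives $\alpha_H-|t|\ge\frac{\delta}{C\ell^2}$. The tail of $g_0$ beyond that point is at most $\pi^{-C\log(\ell/\delta)+1}$ relative to its mass, which is $(\delta/\ell)^{\Omega(C)}$.

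Property 3 is just $H\le s_0\int g_H/(2\alpha_H)\cdot O(1)=O(1)$.

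For Properties 4--6, $|t|\ge1$ forces $|s|\ge|t|-\alpha_H\ge |t|-1+\frac{\delta}{C\ell^2}$ in the integral. The $g_0$ factor then contributes $(\delta/\ell)^{\Omega(C)}$.
\begin{itemize}
\item In the range $1\le|t|\le1+1/C$, choose $j$ with $C\ell^2 2^{-j}(|t|-1)\approx a$. The corresponding factor $g_j$ is $\exp(-\Theta(C^2\ell^4 4^{-j}(|t|-1)^2 C2^j\log\ell))$, which gives $\exp(-\Omega(C\ell^2\log\ell\,(|t|-1)))$. If $|t|-1$ is below the smallest scale, use $j=\lceil2\log\ell\rceil$.
\item For $|t|\ge 1+1/C$, the $j=0$ factor is $\le (\pi C\ell^2|s|)^{-C\log\ell}$, and multiplying the factors over all $j$ gives exponent $\Omega(C\ell^2\log\ell)$. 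This produces $\pi^{-\Omega(C\ell^2\log\ell)}$ on $[1+1/C,2]$ and $|C\pi t|^{-\Omega(C\ell^2\log\ell)}$ beyond $2$.
\end{itemize}
Throughout, the factor $s_0$ costs only $\poly(\ell/\delta)$, which is absorbed into $(\delta/\ell)^{\Omega(C)}$ by taking $C$ large.

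\textbf{Main obstacle.} The delicate step is Property 4: matching the scale index $j$ to $|t|-1$ so that the decay rate is uniformly $\ell^2\log\ell$ per unit length. It also needs the tail integral over $s$ to be bounded by its value at the left endpoint. I would handle the latter by dominating the integral by the endpoint value times the width of the dominating $g_j$ factor, which is at most $\poly(\ell)$.
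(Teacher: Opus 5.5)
Your proposal is correct and follows the paper's proof essentially step by step. The normalization and the flatness come from the first sinc factor, the near-boundary decay from the dyadic factor matched to $|t|-1$ (the paper phrases this as a pointwise bound on $g_H$ over dyadic bands of $v$ followed by integration), the far tail from all dyadic factors together, and the support from summing box widths. Your check that the dyadic factors are $1-o(1)$ on the effective window of the first factor also supplies the lower bound on $\int g_H$ that the paper's normalization step glosses over.

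The only slip is the edge case $|t|-1<a/(C\ell^2)$, where no dyadic factor helps. In particular $j=\lceil 2\log\ell\rceil$ does not help there; it is the right cap only at the opposite end, as $|t|-1$ approaches $1/C$. In that case the target $\exp(-\Omega(C\ell^2\log\ell\,(|t|-1)))\ge \ell^{-O(C)}$ is simply absorbed into the $(\delta/\ell)^{\Omega(C)}$ already supplied by the first factor, as the paper does.
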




\begin{proof}
We first determine $s_0$. Applying Corollary~\ref{cor:sinc_bounds_plus} to the first $\sinc$ factor, and using that all other factors are bounded by $1$, we have
\begin{align*}
    \int g_H(v) dv = \Theta\left(\frac{\delta}{\ell^2\sqrt{\log\frac{\ell}{\delta}}}\right)
    ~\text{and}~
    \int_{|v|\ge \delta/(C\ell^2)}g_H(v) dv
    \le \left(\frac{\delta}{\ell}\right)^{\Omega(C)}.
\end{align*}
Thus $s_0 = \Theta\left(\frac{\ell^2}{\delta}\sqrt{\log\frac{\ell}{\delta}}\right)$.

If $|t| \le 1 - 2\delta/(C\ell^2)$,
then the interval $[t-\alpha_H,t+\alpha_H]$ contains $[-\delta/(C\ell^2),\delta/(C\ell^2)]$ and misses only the above tail of $g_H$. So
\begin{align*}
    |H(t)-1| \le 2 s_0 \int_{|v|\ge \delta/(C\ell^2)} g_H(v) dv \le \left(\frac{\delta}{\ell}\right)^{\Omega(C)}.
\end{align*}
The normalization also implies $|H(t)|\le s_0\int_\R g_H(v)dv=O(1)$ for all $t$.

It remains to bound the tails. By symmetry, assume $t \ge 1$.
Then $H(t) \le 2 s_0 \cdot \int_{t-\alpha_H}^\infty g_H(v) dv$.
The first $\sinc$ factor implies, for $ v\ge \delta/(C\ell^2)$, $\sinc\left(\frac{C\ell^2}{\delta}v\right)^{C\log\frac{\ell}{\delta}} \le \left(\frac{\delta}{\ell}\right)^{\Omega(C)}$.
On $\delta/(C\ell^2)\le v\le 1/(C\ell^2)$, the extra polynomial loss from the first factor absorbs the missing $\exp(-\Omega(C\ell^2\log(\ell)v))$ factor.
And the $j$-th multiscale factor becomes active once $C \ell^2 v \ge 2^j$.  Hence, for
$\frac{2^j}{C\ell^2}\le v\le \frac{2^{j+1}}{C\ell^2}$ with
$0\le j\le \lceil2\log \ell\rceil$,
\begin{align*}
    g_H(v) 
    & \le
    \sinc\left(\frac{C\ell^2}{\delta}v\right)^{C\log\frac{\ell}{\delta}}
    \sinc\left(C \ell^2 2^{-j} v\right)^{C 2^j \log \ell} \\
    & \le \left(\frac{\delta}{\ell}\right)^{\Omega(C)} \exp(-\Omega(C 2^j \log \ell) ) \\
    & \le \left(\frac{\delta}{\ell}\right)^{\Omega(C)} \exp(-\Omega(C^2 \ell^2\log(\ell) v)),
\end{align*}
where the second step is by $C \ell^2 2^{-j} v \ge 1$ and the last step follows from $C \ell^2 v \le 2^{j+1}$.
Integrating shows
\begin{equation}
    \label{eq:bound on Ht (Appendix)}
    H(t)
    \le \left(\frac{\delta}{\ell}\right)^{\Omega(C)} \exp\left(-\Omega(C \ell^2\log(\ell)(t-1))\right),
\end{equation}
which proves the claimed near-boundary bound for $t \le 1+1/C$.
If $t \ge 1 + 1/C$,
\begin{align*}
    H(t)
    \le \left(\frac{\delta}{\ell}\right)^{\Omega(C)} (C\pi(t-1))^{-\Omega\left(C \ell^2\log \ell\right)} 
    \le \begin{cases}
        \left(\frac{\delta}{\ell}\right)^{\Omega(C)} \pi^{-\Omega\left(C \ell^2\log \ell\right)}, & 1+1/C \le t \le 2, \\
        \left(\frac{\delta}{\ell}\right)^{\Omega(C)} (C\pi t)^{-\Omega\left(C \ell^2\log \ell\right)}, & t \ge 2.
    \end{cases}
\end{align*}

The Fourier support bound follows by summing the widths of the box functions:
\begin{align*}
    \frac{\ell^2}{\delta}\log\frac{\ell}{\delta} + \sum_{j=0}^{\left\lceil2 \log \ell\right\rceil} \ell^2 \log \ell \le O\left( \frac{\ell^2}{\delta}\log\frac{\ell}{\delta} + \ell^2\log^2 \ell\right).
\end{align*}
\end{proof}

\begin{lemma} \label{lemma:accuracy-tunable-localization}
For every $\ell \ge 1$, $0 < \delta < 1/2$ and $\ell$-Fourier-sparse signal $x$,
\begin{align}
    \int_{-1}^{1}|H_{\ell, \delta}(t)x(t)|^2 dt
    & \ge (1-\delta)\int_{-1}^{1}|x(t)|^2 dt, \label{eq:inside-goal}\\
    \int_{\R\setminus[-1,1]}|H_{\ell, \delta}(t)x(t)|^2 dt
    & \le \delta\int_{-1}^{1}|x(t)|^2 dt, \label{eq:outside-goal}.
\end{align}
\end{lemma}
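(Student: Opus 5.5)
The plan is to split $[-1,1]$ and its complement into the same regions used in Claim~\ref{claim:H-bounds}. On each region I will pair the pointwise bound on $H=H_{\ell,\delta}$ with a pointwise bound on $|x(t)|$ from Lemma~\ref{lemma:bounds_Fourier_sparse_signals}, both written relative to $\|x\|_{[-1,1]}$. Throughout, I will use that $\sup_{[-1,1]}|x|\le \frac{\pi\ell}{2}\|x\|_{[-1,1]}$ by Property~\ref{item:uniform_bound_on_interval}. The constant $C$ in the construction will be chosen large enough that every factor $(\delta/\ell)^{\Omega(C)}$ absorbs the polynomial losses in $\ell$ and the constants, leaving at most $\delta/10$.

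For the inside bound \eqref{eq:inside-goal}, let $I=[-1+\frac{2\delta}{C\ell^2},1-\frac{2\delta}{C\ell^2}]$. On $I$, Claim~\ref{claim:H-bounds}(2) gives $|H(t)|^2\ge 1-(\delta/\ell)^{\Omega(C)}$. This yields
\[
\int_I|Hx|^2\ge (1-\delta/4)\|x\|_I^2 \ge (1-\delta/4)\|x\|_{[-1,1]}^2-\|x\|^2_{[-1,1]\setminus I}.
\]
The boundary layer $[-1,1]\setminus I$ has length $\frac{4\delta}{C\ell^2}$. Hence
\[
\|x\|^2_{[-1,1]\setminus I}\le \frac{4\delta}{C\ell^2}\cdot\frac{\pi^2\ell^2}{4}\|x\|^2_{[-1,1]}\le \frac{\delta}{2}\|x\|_{[-1,1]}^2
\]
for $C$ large. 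Dropping the nonnegative contribution of $|Hx|^2$ on the layer then gives \eqref{eq:inside-goal}. This is exactly why the $\ell^2$ scaling of $\alpha_H$ is chosen: it must match the $\ell^2$ in the uniform bound.

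For the outside bound \eqref{eq:outside-goal}, by symmetry I take $t\ge1$ and split into three ranges.

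On $[1,1+1/C]$, I combine Claim~\ref{claim:H-bounds}(4) with Property~\ref{item:exponential_bound_outside_interval}, which gives $|x(t)|\le \poly(\ell)e^{O(t-1)\ell^2\log\ell}\sup_{[-1,1]}|x|$. The filter decay $\exp(-\Omega(C\ell^2\log\ell\,(t-1)))$ beats the growth $e^{O(\ell^2\log\ell)(t-1)}$ once $C$ exceeds the hidden constant, so the integrand is at most $(\delta/\ell)^{\Omega(C)}\poly(\ell)\|x\|^2_{[-1,1]}$. Integrating over an interval of length $1/C$ gives $\le\frac{\delta}{10}\|x\|_{[-1,1]}^2$.

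On $[1+1/C,2]$, I use Property~\ref{item:polynomial_bound_outside_interval}, which gives $|x|\le (3e)^{\ell}\sup|x|$, against $\pi^{-\Omega(C\ell^2\log\ell)}$.

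On $[2,\infty)$, the growth $(e(t+1))^{2\ell}$ is polynomial in $t$ of degree $2\ell$, while the filter decays like $|C\pi t|^{-\Omega(C\ell^2\log\ell)}$. For $C$ large this exponent is at least $2\ell+2$, so the tail integral converges and is bounded by $(\delta/\ell)^{\Omega(C)}\|x\|^2_{[-1,1]}$. Summing the three ranges and doubling for $t\le-1$ gives \eqref{eq:outside-goal}.

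The main obstacle is the near-boundary range $[1,1+1/C]$. Only there is the exponential rate matching delicate: the filter's multiscale construction yields decay rate $C\ell^2\log\ell$, and this must dominate the $O(\ell^2\log\ell)$ growth rate in Property~\ref{item:exponential_bound_outside_interval} uniformly in $t$ right up to $t=1$. The polynomial bound would lose a factor $e^{\ell}$ there that the filter cannot absorb. I would therefore check carefully that the bound \eqref{eq:bound on Ht (Appendix)} holds with its $\Omega(C)$ constant linear in $C$, while the constant in Property~\ref{item:exponential_bound_outside_interval} is absolute. If this fails, I would shrink the window and use the $(\delta/\ell)^{\Omega(C)}$ prefactor to cover $t-1\le 1/(\ell^2\log\ell)$, where the growth factor is $O(1)$.
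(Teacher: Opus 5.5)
Your proposal is correct and follows essentially the same route as the paper. You prove the inside bound by restricting to the flat region $I$ and controlling the two boundary layers of length $\frac{2\delta}{C\ell^2}$ with Property~\ref{item:uniform_bound_on_interval}, and you prove the outside bound by pairing the near-boundary filter decay with Property~\ref{item:exponential_bound_outside_interval} and the far-tail decay with Property~\ref{item:polynomial_bound_outside_interval}. The differences are cosmetic: you split the far tail into $[1+1/C,2]$ and $[2,\infty)$ where the paper treats them together, and your fallback for the near-boundary range is unnecessary because Claim~\ref{claim:H-bounds}(4) already gives a decay rate linear in $C$.
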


\begin{proof}
For convenience, denote $H_{\ell, \delta}$ by $H$.

Let $I:=\left[-1 + \frac{2\delta}{C \ell^2}, 1 - \frac{2\delta}{C \ell^2}\right]$.
By Claim~\ref{claim:H-bounds}, $|1-H(t)| \le \left(\frac{\delta}{\ell}\right)^{\Omega(C)}$ on $I$.  
Hence, $\int_I|H(t)x(t)|^2 dt \ge (1 - \left(\delta/\ell\right)^{\Omega(C)})^2 \int_I|x(t)|^2 dt$.
Since $[-1,1]\setminus I$ has length $\frac{4\delta}{C \ell^2}$,
\begin{align*}
    \int_{[-1,1]\setminus I}|x(t)|^2 dt
    \le \frac{4\delta}{C \ell^2} \sup_{|t|\le1}|x(t)|^2
    \le O\left(\frac{\delta}{C}\right) \int_{-1}^{1}|x(t)|^2 dt
\end{align*}
where the last inequality is by Property \ref{item:uniform_bound_on_interval} of Lemma~\ref{lemma:bounds_Fourier_sparse_signals}
Therefore, $\int_{-1}^{1}|H(t)x(t)|^2 dt \ge (1- \delta^2) \int_I|x(t)|^2 dt \ge (1-\delta)\int_{-1}^{1}|x(t)|^2 dt$
after increasing the constant $C$ in the definition of $H$. 

The remaining part is to bound the outside energy. 
For $1 \le |t| \le 1 + 1/C$, Property~\ref{item:exponential_bound_outside_interval} of Lemma~\ref{lemma:bounds_Fourier_sparse_signals} shows that
\begin{align*}
    |x(t)|^2 
    \le \poly(\ell)\|x\|_{[-1,1]}^2 \exp\Big(O(\ell^2 \log \ell)(|t|-1)\Big).
\end{align*}
Hence, for the near-boundary range, Claim~\ref{claim:H-bounds} implies
\begin{align*}
    |H(t)x(t)|^2
    & \le \poly(\ell)\left(\frac{\delta}{\ell}\right)^{\Omega(C)} 
    \exp(-\Omega(C \ell^2 \log \ell) (|t|-1)) \|x\|_{[-1,1]}^2.
\end{align*}
By the far-tail range estimation in Claim~\ref{claim:H-bounds} and Property~\ref{item:polynomial_bound_outside_interval} of Lemma~\ref{lemma:bounds_Fourier_sparse_signals}, 
\begin{align*}
    |H(t)x(t)|^2
    \le \poly(\ell)\left(\frac{\delta}{\ell}\right)^{\Omega(C)}
    |\pi t / 2|^{-\Omega(C \ell^2 \log \ell)} \|x\|_{[-1,1]}^2.
\end{align*}
Increasing $C$ makes the tail energy at most $\delta \cdot \|x\|_{[-1,1]}^2$.  
This proves \eqref{eq:outside-goal}. 
\end{proof}

\section{Filters and Orthogonality}

\subsection{Proof of Claim~\ref{clm:almost_orthogonal_clusters}}\label{sec:proof_clm_almost_orthogonal}

Given $1 \le \ell \le r$ and $0<\delta<1/2$, let $C=O(1), \alpha_M := 1 - \frac{\delta^2}{C\ell^2}$ and
\begin{align*}
    g_M(t):=\sinc\left(\frac{C\ell^2}{\delta^2}t\right)^{ C\left(r+\log\frac{1}{\delta}\right)}.
\end{align*}
We define the localizing filter
\begin{equation}\label{eq:L-def}
    M_{\ell, r, \delta}(t) := s_M\cdot (g_M*\rect_{2 \alpha_M})(t),
\end{equation}
where $s_M>0$ is chosen so that $M_{\ell, r, \delta}(0)=1$.

\begin{claim}\label{claim:L-bounds}
For $1 \le \ell \le r$ and $0 < \delta < 1/2$, the filter $M = M_{\ell, r, \delta}$ satisfies the following properties:
\begin{enumerate}
    \item $|1 - M(t)| \le \delta^{\Omega(C)}$ for $|t|\le 1 - 2 \delta^2 / C \ell^2$.
    \item $|M(t)|\le O(1)$ for $1 - 2 \delta^2 / C \ell^2 \le |t|\le 1$.
    \item $|M(t)| \le \delta^{\Omega(C)} \pi^{-\Omega(Cr)}$ for $1 \le |t| \le 2$.
    \item $|M(t)| \le \delta^{\Omega(C)} |C t|^{-\Omega(Cr)}$ for $|t| \ge 2$.
    \item $\supp (\wh{M}) \subseteq [-\Delta_L, \Delta_L]$ with $\Delta_L := C^2 \left(\frac{\ell^2}{\delta^2}\left(r + \log\frac1\delta\right)\right)$.
\end{enumerate}
\end{claim}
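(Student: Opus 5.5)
The plan is to mirror the proof of Claim~\ref{claim:H-bounds}, which becomes simpler here because $g_M$ is a single $\sinc$ power rather than a multiscale product. Write $p:=\frac{C\ell^2}{\delta^2}$ and $q:=C(r+\log\frac1\delta)$, rounded up to an even integer. Then $g_M(t)=\sinc(pt)^q\ge 0$ and $M(t)=s_M\int_{t-\alpha_M}^{t+\alpha_M} g_M(v)\,\mathrm{d}v$, up to the normalization of $\rect$.

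\textbf{Normalization and the interior.} First, Corollary~\ref{cor:sinc_bounds_plus} (part 1) gives $\int_{|v|\le a/p} g_M = \Theta(\frac{1}{p\sqrt q})$. Part 2, applied with a threshold $t_0=\delta^2/(C\ell^2)=1/p$, bounds the tail: $\int_{|v|\ge 1/p} g_M \le O(\frac1p \pi^{-q+1})$. Since $q\ge C\log\frac1\delta$, this tail is a $\delta^{\Omega(C)}$ fraction of the total mass, so $s_M=\Theta(p\sqrt q)$ is well defined. For $|t|\le 1-2\delta^2/(C\ell^2)$, the window $[t-\alpha_M,t+\alpha_M]$ contains $[-1/p,1/p]$. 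Hence $|1-M(t)|\le 2s_M\int_{|v|\ge1/p}g_M\le O(\sqrt q)\,\pi^{-q+1}\le \delta^{\Omega(C)}$, because the factor $\sqrt q$ is absorbed by $\pi^{-\Omega(q)}$. This proves item 1. Item 2 is immediate from $0\le M(t)\le s_M\int g_M = O(1)$.

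\textbf{Tails.} By symmetry take $t\ge 1$. Then $M(t)\le s_M\int_{v\ge t-\alpha_M} g_M(v)\,\mathrm{d}v$ with $t-\alpha_M\ge 1-\alpha_M=\delta^2/(C\ell^2)=1/p$. Applying part 2 of Corollary~\ref{cor:sinc_bounds_plus} with threshold $t-\alpha_M$ gives
\[
M(t)\le O(\sqrt q)\,\bigl(\pi p(t-\alpha_M)\bigr)^{-q+1}.
\]
For $1\le t\le 2$ we have $p(t-\alpha_M)\ge 1$, so $M(t)\le O(\sqrt q)\pi^{-q+1}=\delta^{\Omega(C)}\pi^{-\Omega(Cr)}$. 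Here we split $\pi^{-q}=\pi^{-Cr}\pi^{-C\log(1/\delta)}$, which yields item 3. For $t\ge 2$ we have $t-\alpha_M\ge t/2$ and $p\ge C$, so $\pi p(t-\alpha_M)\ge C t$, which gives item 4 in the same way.

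\textbf{Fourier support.} We have $\wh{M}=s_M\,\wh{g_M}\cdot\sinc(2\alpha_M f)$, and $\wh{g_M}$ is the $q$-fold convolution of $\rect_p$. Its support is therefore $[-pq/2,pq/2]$, which lies inside $[-\Delta_L,\Delta_L]$ with $\Delta_L=C^2\frac{\ell^2}{\delta^2}(r+\log\frac1\delta)$. This proves item 5.

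\textbf{Expected obstacle.} The one delicate point is the near-boundary regime $t$ slightly above $1$, where the threshold $t-\alpha_M$ is only about $1/p$. One must check that the constant in Fact~\ref{fact:sinc_bounds} still applies there; this requires $1/p\ge a/p$, which holds since $a<1$. Apart from that, the $\sqrt q$ normalization loss must be absorbed into $\delta^{\Omega(C)}$, which works once $C$ is large. Everything else is routine bookkeeping.
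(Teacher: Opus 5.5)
Your proposal is correct and follows the paper's proof essentially step for step. Both compute the normalization $s_M=\Theta(p\sqrt q)$ from Corollary~\ref{cor:sinc_bounds_plus}, get the interior bound because the window contains $[-1/p,1/p]$, and get the tails by applying part 2 at threshold $|t|-\alpha_M$ in the two regimes $\pi p(|t|-\alpha_M)\ge\pi$ and $\ge C|t|$; the Fourier support comes from the $q$-fold convolution of boxes.
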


\begin{proof}
Corollary~\ref{cor:sinc_bounds_plus} implies
\begin{equation}\label{eq:L-normalization}
    \int_{|v| \le \delta^2/(C\ell^2)} g_M(v) dv
    = \Theta\left(\frac{\delta^2}{\ell^2\sqrt{r+\log(1/\delta)}}\right).
\end{equation}
\begin{align*}
    \int_{|v| \ge \delta^2/(C\ell^2)}g_M(v) dv
    \le
    O\left(\frac{\delta^2}{\ell^2}\right)
    \exp\left(-\Omega\left(r+\log\frac1\delta\right)\right).
\end{align*}
Thus $s_M=\Theta\left(\ell^2\sqrt{r+\log(1/\delta)}/\delta^2 \right)$ and $|M(t)|\le s_M\int_\R g_M(v)dv=O(1)$ for all $t$.

If $|t|\le 1-2\delta^2/(C\ell^2)$, the interval $[t-\alpha_M,t+\alpha_M]$ contains $[-\delta^2/(C\ell^2),\delta^2/(C\ell^2)]$, and hence
\[
    |M(t)-1| \le O(s_M)\int_{|v|\ge \delta^2/(C\ell^2)}g_M(v) dv \le \delta^{\Omega(C)}.
\]

For $|t| \ge 1$, Corollary~\ref{cor:sinc_bounds_plus} applied from
$|t| - \alpha_M = |t| - 1 + \delta^2/(C\ell^2)$ gives
\begin{align*}
    M(t)
    \le O\left(\sqrt{r+\log\frac1\delta}\right)
    \left(\pi\frac{C\ell^2}{\delta^2}\left(|t|-1+\frac{\delta^2}{C\ell^2}\right)\right)^{-\Omega(C(r+\log(1/\delta)))}.
\end{align*}
For $1\le |t|\le2$, the expression $\pi\frac{C\ell^2}{\delta^2}\left(|t|-1+\frac{\delta^2}{C\ell^2}\right)$ is at least $\pi$, so $M(t) \le \pi^{-\Omega(C(r+\log(1/\delta)))} \le \delta^{\Omega(C)} \pi^{-\Omega(Cr)}$.
For $|t| \ge 2$, the term $|t|-1+\delta^2/(C\ell^2)$ is at least $|t|/2$. Thus $\pi\frac{C\ell^2}{\delta^2}\left(|t|-1+\frac{\delta^2}{C\ell^2}\right) \ge C|t|$ and $M(t) \le (C|t|)^{-\Omega(C(r+\log(1/\delta)))} \le \delta^{\Omega(C)} (C|t|)^{-\Omega(Cr)}$.

The Fourier support bound follows by summing the widths of the box functions $C^2 \left(\frac{\ell^2}{\delta^2}\left(r + \log\frac1\delta\right)\right)$.
\end{proof}

\begin{lemma} \label{lem:orthogonality_by_localizer}
For $0<\delta<1/2$ and two signals of Fourier sparsity $\ell$ and $r$
separately with $\ell \le r$,
\begin{align*}
    w(t):=\sum_{j=1}^{\ell}\alpha_j e^{2\pi\bi f'_jt} \qquad \text{ and }
    \qquad
    z(t):=\sum_{j=1}^{r}\beta_j e^{2\pi\bi f_jt},
\end{align*}
if the distance between their frequencies $\min_{j,j'} |f_j-f'_{j'}| \ge C_H\frac{\ell^2 (r + \log 1/\delta)}{\delta^2}$ for some constant $C_H$, then
\begin{align*}
    |\langle w, z\rangle_{[-1, 1]}| \le \delta \cdot \|w\|_{[-1,1]} \cdot \|z\|_{[-1,1]}..
\end{align*}
\end{lemma}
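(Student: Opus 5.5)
We use the localizing filter $M:=M_{\ell,r,\delta'}$ of Claim~\ref{claim:L-bounds}, with $\delta'=\delta/C'$ for a suitable constant $C'$. The idea is that $M\cdot w$ and $M\cdot z$ have disjoint Fourier supports, so they are exactly orthogonal on $\mathbb{R}$. We then show that replacing $\langle w,z\rangle_{[-1,1]}$ by $\langle Mw,Mz\rangle$ costs only $O(\delta')\|w\|_{[-1,1]}\|z\|_{[-1,1]}$.

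\textbf{Disjoint supports.} By property 5 of Claim~\ref{claim:L-bounds}, $\wh{M}$ is supported in $[-\Delta_L,\Delta_L]$ with $\Delta_L=C^2\frac{\ell^2}{\delta'^2}(r+\log\frac1{\delta'})$. Hence $\wh{Mw}$ is supported within $\Delta_L$ of $\{f'_j\}$, and $\wh{Mz}$ within $\Delta_L$ of $\{f_j\}$. If $C_H\ge 2C^2C'^2+O(1)$, the separation hypothesis gives $\min|f_j-f'_{j'}|>2\Delta_L$. Plancherel (Theorem~\ref{thm:Plancherel_Parseval}) then yields $\langle Mw,Mz\rangle=\langle\wh{Mw},\wh{Mz}\rangle=0$.

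\textbf{Splitting the inner product.} Write
\[
\langle w,z\rangle_{[-1,1]}=\langle Mw,Mz\rangle+\langle (1-|M|^2)w,z\rangle_{[-1,1]}-\langle Mw,Mz\rangle_{\mathbb{R}\setminus[-1,1]},
\]
where the first term vanishes. We bound the remaining two terms using Cauchy--Schwarz in two pieces.

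\textbf{Inside $[-1,1]$.} Let $I:=[-1+2\delta'^2/(C\ell^2),\,1-2\delta'^2/(C\ell^2)]$. On $I$, property 1 gives $|1-|M|^2|\le\delta'^{\Omega(C)}$, which contributes $\delta'^{\Omega(C)}\|w\|\|z\|$. On the boundary layer $[-1,1]\setminus I$, of length $O(\delta'^2/\ell^2)$, we have $|M|=O(1)$ by property 2. Here we must be careful, because only $w$ is $\ell$-sparse. We bound
\[
\Big(\int_{\rm layer}|w|^2\Big)^{1/2}\le O\Big(\frac{\delta'}{\ell}\Big)\sup_{[-1,1]}|w|\le O(\delta')\|w\|_{[-1,1]},
\]
using Property~\ref{item:uniform_bound_on_interval} of Lemma~\ref{lemma:bounds_Fourier_sparse_signals} with sparsity $\ell$. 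For $z$ we use the trivial bound $\int_{\rm layer}|z|^2\le\|z\|_{[-1,1]}^2$. The boundary contribution is then $O(\delta')\|w\|\|z\|$. This asymmetric treatment is exactly why the filter width depends on $\ell^2/\delta^2$ and not on $r^2$.

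\textbf{Outside $[-1,1]$.} By Cauchy--Schwarz it suffices to show $\|Mw\|_{\mathbb{R}\setminus[-1,1]}\le\delta'\|w\|_{[-1,1]}$ and $\|Mz\|_{\mathbb{R}\setminus[-1,1]}\le O(1)\|z\|_{[-1,1]}$; in fact both hold with $\delta'$. We combine properties 3 and 4 of Claim~\ref{claim:L-bounds} with Property~\ref{item:polynomial_bound_outside_interval} of Lemma~\ref{lemma:bounds_Fourier_sparse_signals}. For $|t|\le2$, $|x(t)|\le(3e)^r\sup_{[-1,1]}|x|$ and $\sup_{[-1,1]}|x|\le \frac{\pi r}{2}\|x\|_{[-1,1]}$. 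Against these, $|M(t)|\le\delta'^{\Omega(C)}\pi^{-\Omega(Cr)}$, which beats $(3e)^r\cdot r$ once $C$ is large. For $|t|\ge2$, $|M(t)|\le\delta'^{\Omega(C)}(C|t|)^{-\Omega(Cr)}$ beats $(e(|t|+1))^r$ and is integrable in $t$.

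\textbf{Main obstacle.} The delicate point is this tail estimate. The decay exponent $\Omega(Cr)$ of $g_M$ must dominate the $e^{O(r)}$ growth of the $r$-sparse signal $z$ just outside the window, without any $r^2$ factor in the support width. We therefore use only the crude growth bound of Property~\ref{item:polynomial_bound_outside_interval}, and never the $e^{O(k^2)(|t|-1)}$ bound, which the filter $M$ is too narrow to suppress near $|t|=1$. Finally, summing the three error terms gives $|\langle w,z\rangle_{[-1,1]}|\le O(\delta')\|w\|\|z\|\le\delta\|w\|\|z\|$ for $C'$ large.
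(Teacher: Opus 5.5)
Your proposal is correct and follows the paper's proof essentially step for step. It uses the same localizer from Claim~\ref{claim:L-bounds}, exact orthogonality from the compact Fourier support of $\wh{M}$, the split of $[-1,1]$ into the flat region plus a boundary layer of width $O(\delta^2/\ell^2)$ (using only the $\ell$-sparse uniform bound on $w$ and the trivial bound on $z$), and the tail outside $[-1,1]$ controlled by Property~\ref{item:polynomial_bound_outside_interval} of Lemma~\ref{lemma:bounds_Fourier_sparse_signals} against the filter's tail decay. The only cosmetic difference is how the main term is killed: you use disjoint supports of $\wh{Mw}$ and $\wh{Mz}$ to make $\langle Mw,Mz\rangle$ vanish (needing separation $2\Delta_L$), while the paper makes $\int_{\mathbb{R}} M w\bar z$ vanish directly because $\wh{M}$ is zero at all frequency differences (needing only $\Delta_L$), which is immaterial since $C_H$ is a free constant.
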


\begin{proof}
Let $M:=M_{\ell, r, \delta}$ be the localizing filter from \eqref{eq:L-def}.
We decompose the inner product $\langle w,z\rangle$ into a filtered term and a flatness-error term:
\begin{align}
    \langle w, z\rangle_{[-1, 1]} = \langle M w, z\rangle_{[-1, 1]} + \langle (1-M) w, z\rangle_{[-1, 1]}.
    \label{eq:filter_decomposition}
\end{align}

We begin with bounding the first term in \eqref{eq:filter_decomposition}. By Parseval's identity,
\begin{align*}
    \langle M w, z\rangle_{[-1, 1]} = & \langle M w, z\rangle_{[-\infty, +\infty]} - \langle M w, z\rangle_{[-\infty, +\infty] \setminus [-1, 1]} \\
    = & \langle \wh{M w}, \wh{z}\rangle_{[-\infty, +\infty]} - \langle M w, z\rangle_{[-\infty, +\infty] \setminus [-1, 1]}
\end{align*}
The distribution $\widehat{w\overline{z}}$ is supported on the frequency differences $f'_{j'}-f_j$.
By Claim~\ref{claim:L-bounds}, $\langle \wh{M w}, \wh{z}\rangle_{[-\infty, +\infty]}$ therefore vanishes whenever the separation constant $C_H$ is sufficiently large.

Applying Property~\ref{item:polynomial_bound_outside_interval} of Lemma~\ref{lemma:bounds_Fourier_sparse_signals} to $w$ and $z$ implies that, for $|t| \ge 1$
\begin{align*}
    \frac{|w(t)\overline{z(t)}|}{\|w\|_{[-1,1]}\|z\|_{[-1,1]}} \le (e(|t|+1))^{r+\ell} \le \min \left\{ (3e)^{r+\ell}, (\tfrac{3}{2} e |t|)^{r+\ell} \right\}.
\end{align*}
With the tail bound in Claim~\ref{claim:L-bounds}, we have $|M(t)| \le \delta^{\Omega(C)} \cdot \max\{\pi^{-\Omega(Cr)}, |C t|^{-\Omega(Cr)}\}$.
Hence, 
\begin{align*}
    \left|\int_{\R\setminus[-1,1]}M(t)w(t)\overline{z(t)} \mathrm{d}t\right| \le & \|w\|_{[-1,1]}\|z\|_{[-1,1]} \cdot \left|\int_{|t| \ge 1} M(t) \cdot \min \left\{ (3e)^{r+\ell}, (\tfrac{3}{2} e |t|)^{r+\ell} \right\} \mathrm{d}t\right| \\
    \le & \delta^{\Omega(C)} \|w\|_{[-1,1]}\|z\|_{[-1,1]} \cdot \left[ \left|\int_{1 \le |t| \le 2} \pi^{-\Omega(Cr)} (3e)^{r+\ell} \mathrm{d}t\right| + \left|\int_{|t| \ge 2} |C t|^{-\Omega(Cr)} (\tfrac{3}{2} e |t|)^{r+\ell} \mathrm{d}t\right| \right]
\end{align*}
Thus the filtered term $|\langle M w, z\rangle_{[-1, 1]}|$ is at most $(\delta/2)\|w\|_{[-1,1]}\|z\|_{[-1,1]}$ for a large $C$.

We next bound the second term in \eqref{eq:filter_decomposition}. By Cauchy-Schwarz inequality,
\begin{equation}
    \label{eq:filter_decomposition_part2_cauchy_schwarz}
    \left|\int_{-1}^{1}[1-M(t)]w(t)\overline{z(t)} \mathrm{d}t\right|
    \le \|z\|_{[-1,1]} \cdot \left(\int_{-1}^{1}[1-M(t)]^2|w(t)|^2 \mathrm{d}t\right)^{1/2}.
\end{equation}

Let $I=[-1+2\delta^2/(C\ell^2),1-2\delta^2/(C\ell^2)]$.
For $t\in I$, Claim~\ref{claim:L-bounds} shows that $|1 - M(t)| \le \delta^{\Omega(C)}$. On the two boundary intervals, Property~\ref{item:uniform_bound_on_interval} in Lemma~\ref{lemma:bounds_Fourier_sparse_signals} gives $|w(t)|^2 \le O(\ell^2) \|w\|_{[-1,1]}^2$.
Consequently,
\begin{equation}
    \label{eq:filter_decomposition_part2_property1}
    \int_{-1}^{1}[1-M(t)]^2|w(t)|^2 \mathrm{d}t
    \le \delta^{\Omega(C)}\|w\|_{[-1,1]}^2 +O\left(\frac{\delta^2}{C\ell^2}\right) O(\ell^2)\|w\|_{[-1,1]}^2
    \le O(\delta^2)\|w\|_{[-1,1]}^2.
\end{equation}
So \eqref{eq:filter_decomposition_part2_cauchy_schwarz} and \eqref{eq:filter_decomposition_part2_property1} gives
\begin{align*}
    |\langle (1-M) w, z\rangle_{[-1, 1]}| = \left|\int_{-1}^{1}[1-M(t)]w(t)\overline{z(t)} \mathrm{d}t\right|
    \le \frac{\delta}{2}\|w\|_{[-1,1]}\|z\|_{[-1,1]}.
\end{align*}
Combining the above bounds proves the claim.
\end{proof}

\begin{lemma}
\label{lem:filtered_orthogonality_by_localizer}
For $\ell \le r\le k$ and two signals of Fourier sparsity $\ell$ and $r$ separately,
\begin{align*}
    w(t):=\sum_{j=1}^{\ell}\alpha_j e^{2\pi\bi f'_jt} \qquad \text{ and }
    \qquad
    z(t):=\sum_{j=1}^{r}\beta_j e^{2\pi\bi f_jt},
\end{align*}
if the distance between their frequencies $\min_{j,j'} |f_j-f'_{j'}| \ge \min\left\{ C_H\frac{\ell^2 (r + \log 1/\delta) \log^2 k}{\delta^2}, 2\Delta_k \right\}$ for some constant $C_H$, then
\begin{align*}
    |\langle H_k w,H_k z\rangle|
    \le \delta \cdot \|H_k w\|_2 \cdot \|H_k z\|_2.
\end{align*}
\end{lemma}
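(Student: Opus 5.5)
We split into two cases according to which term attains the minimum in the separation hypothesis.

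\emph{Case 1: the separation is at least $2\Delta_k$.} By Lemma~\ref{lemm:construction_H}, $\supp(\wh{H_k})\subseteq[-\Delta_k,\Delta_k]$. Hence $\wh{H_k w}$ is supported in $\bigcup_{j'}[f'_{j'}-\Delta_k,f'_{j'}+\Delta_k]$ and $\wh{H_k z}$ in $\bigcup_j[f_j-\Delta_k,f_j+\Delta_k]$. These unions meet in at most a measure-zero set, so Parseval (Theorem~\ref{thm:Plancherel_Parseval}) gives $\langle H_k w,H_k z\rangle=\langle\wh{H_kw},\wh{H_kz}\rangle=0$, and the claimed bound holds trivially.

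\emph{Case 2: the separation is $\tilde d:=C_H\ell^2(r+\log 1/\delta)\log^2 k/\delta^2<2\Delta_k$.} Write $\langle H_kw,H_kz\rangle=\int_{\R}|H_k|^2 w\bar z$ and set $G:=|H_k|^2=H_k^2$, using that $H_k$ is real and even.

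\textbf{Reduce to the window.} Split the integral into $[-1,1]$ and its complement. Outside $[-1,1]$, Cauchy--Schwarz together with Lemma~\ref{lemma:accuracy-tunable-localization} applied to $H_{k,\delta'}$ with $\delta'=(\delta/k)^{C}$ bounds the contribution by $\delta'\|w\|_{[-1,1]}\|z\|_{[-1,1]}$. This uses that $w,z$ are at most $k$-sparse, and it works with $H_k=H_{k,\epsilon}$ directly because the tail bounds in Claim~\ref{claim:H-bounds} decay like $(\epsilon/k)^{\Omega(C)}e^{-\Omega(Ck^2\log k(|t|-1))}$.

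\textbf{Flat interior.} On $I=[-1+2\epsilon/(Ck^2),\,1-2\epsilon/(Ck^2)]$ we have $G=1\pm(\epsilon/k)^{\Omega(C)}$.

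\textbf{Boundary layer.} The main obstacle is the layer $[-1,1]\setminus I$, of length $O(\epsilon/k^2)$, where $G$ varies between $0$ and $1$. Here $\int G\,w\bar z$ is not simply $\langle w,z\rangle_{[-1,1]}$, and Property~\ref{item:uniform_bound_on_interval} only gives $|w|^2\le O(\ell^2)\|w\|^2$ and $|z|^2\le O(r^2)\|z\|^2$. That yields an error $O(\epsilon\ell r/k^2)$, which is not small relative to $\delta$ in general. I would therefore avoid a pointwise estimate and instead prove a variant of Lemma~\ref{lem:orthogonality_by_localizer} for the weighted inner product $\int G\,w\bar z$, reusing the localizer trick. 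The plan is to write
\[
\int G\,w\bar z=\int G\,(Mw)\bar z+\int G\,(1-M)w\bar z,
\]
with $M=M_{\ell,r,\delta/\log k}$ from Claim~\ref{claim:L-bounds}.

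\textbf{First term.} $\wh{GMw}$ has support within $\Delta_L+2\Delta_k$ of the $f'_{j'}$, so it is not automatically disjoint from the support of $\wh z$. Instead I would take $M$ narrower, using the $\log^2 k$ slack in $\tilde d$, and control $\int_{\R} G\,(Mw)\bar z$ by expanding $G$ only where needed. If this fails, the fallback is to replace $G$ by its product structure and bound the commutator $\|(1-M)Gw\|$.

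\textbf{Second term.} Use Cauchy--Schwarz with $G\le O(1)$ and bound
\[
\int_{\R} G\,|1-M|^2|w|^2\le O(\delta^2/\log^2k)\,\|w\|^2_{[-1,1]},
\]
exactly as in \eqref{eq:filter_decomposition_part2_property1}. The boundary layer of $M$ has length $\delta^2/(C\ell^2\log^2k)$, and there $|w|^2\le O(\ell^2)\|w\|^2$. Outside $[-1,1]$ the weight $G$ is tiny, so that region contributes negligibly.

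\textbf{Normalization.} Finally, convert from $\|w\|_{[-1,1]}\|z\|_{[-1,1]}$ to $\|H_kw\|_2\|H_kz\|_2$ using property~2 of Lemma~\ref{lemm:construction_H}, i.e. $\|H_kw\|_2^2=(1\pm\epsilon)\|w\|^2_{[-1,1]}$, which costs only a constant factor absorbed into $C_H$.
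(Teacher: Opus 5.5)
There is a genuine gap in Case 2: the step you single out as the main obstacle is not one, and the replacement argument is unfinished at its crucial point. You never extract the quantitative content of being in Case 2. Since $C_H\ell^2 r\log^2 k/\delta^2\le\tilde d<2\Delta_k=O(k^2\log^2(k/\epsilon)/\epsilon)$, one gets $\delta^2\ge\Omega(C_H\epsilon\ell^2 r/k^2)$; the paper writes this as $\delta^2\ge C_H\epsilon\ell^2r/(2C^2k^2)$. With this bound, your own crude boundary-layer estimate already suffices, because $\epsilon\ell r/k^2=(\sqrt{\epsilon r}\,\ell/k)\cdot(\sqrt{\epsilon r}/k)\le O(\delta)$ using $r\le k$. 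The paper's version is simpler still. Cauchy--Schwarz on the layer, using only $|w|\le O(\ell)\|w\|_{[-1,1]}$ and $\|z\|_{[-1,1]\setminus I}\le\|z\|_{[-1,1]}$, gives a contribution of $O(\sqrt{\epsilon}\,\ell/k)\|w\|_{[-1,1]}\|z\|_{[-1,1]}\le O(\delta)\|w\|_{[-1,1]}\|z\|_{[-1,1]}$. The same lower bound on $\delta$ is what absorbs the interior error $(\epsilon/k)^{\Omega(C)}$ and the exterior error $\poly(k)(\epsilon/k)^{\Omega(C)}$ into $O(\delta)$. Your proposal states those errors but never compares them to $\delta$. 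Also, the $(\delta/k)^{C}$ exterior bound you quote is not available for the fixed filter $H_k=H_{k,\epsilon}$; Lemma~\ref{lemma:accuracy-tunable-localization} with parameter $\delta'$ concerns the different filter $H_{k,\delta'}$.

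Having discarded the easy estimate, your replacement does not close. In $\int G\,(Mw)\bar z$, the transform $\wh{GMw}$ is supported within $\Delta_L+2\Delta_k$ of the $f'_{j'}$. That radius exceeds the separation $\tilde d<2\Delta_k$, so there is no Fourier-support cancellation. ``Take $M$ narrower and expand $G$ only where needed'' and the commutator fallback are not arguments. Any concrete attempt, for example writing $G=1+(G-1)$ and using $\int_{\R}Mw\bar z=0$, brings you back to bounding $(G-1)w\bar z$ on the boundary layer, which is exactly the pointwise estimate you rejected.

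The missing idea is therefore the lower bound on $\delta$. With it, the proof is the paper's. First show $|\langle H_kw,H_kz\rangle-\langle w,z\rangle_{[-1,1]}|\le O(\delta)\|w\|_{[-1,1]}\|z\|_{[-1,1]}$ via the interior/layer/exterior split. Then bound $\langle w,z\rangle_{[-1,1]}$ with the unweighted Lemma~\ref{lem:orthogonality_by_localizer}, whose separation hypothesis is implied by $\tilde d$. Finally normalize as you do.
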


\begin{proof}
Let $d:=\min_{j,j'}|f_j-f'_{j'}|$.  
If $d \ge 2 \Delta_k$, then $\widehat{H_k w}$ and $\widehat{H_k z}$ have disjoint supports, so Parseval's identity gives $\langle H_kw,H_kz\rangle=0$.
It remains to consider the case $d < 2\Delta_k$. 
Recall that $\Delta_k = C^2 (\frac{k^2 \log k/\epsilon}{\epsilon} + k^2 \log k \cdot \log (k^2 \log k) )$.
In this case, $d \ge C_H\frac{\ell^2r}{\delta^2}\log^2 k$ implies
\begin{align}
    \delta^2 \ge \frac{C_H \epsilon\ell^2r}{2 C^2 k^2}.
\end{align}

We first compare $\langle H_k w,H_k z\rangle$ and $\langle w, z\rangle$ in $[-1, 1]$.
Let $I = \left[-1 + \frac{2 \epsilon}{C k^2}, 1 - \frac{2 \epsilon}{C k^2}\right]$
One can decompose the difference into two parts:
\begin{align}
    |\langle H_k w,H_k z\rangle_{[-1, 1]} - \langle w, z\rangle_{[-1, 1]}|
    = & |\langle (H_k^2 - 1) w, z\rangle_{[-1, 1]}| \notag \\
    \le & |\langle (H_k^2 - 1) w, z\rangle_{I}| + |\langle (H_k^2 - 1) w, z\rangle_{[-1, 1] \setminus I}| \notag \\
    \le & \| (H_k^2 - 1) w \|_{I} \| z \|_{I} + \| (H_k^2 - 1) w \|_{[-1, 1] \setminus I} \| z \|_{[-1, 1] \setminus I}, \label{eq:H_k_inner_decomposition}
\end{align}
where the last inequality is by Cauchy-Schwarz inequality.

By Claim~\ref{claim:H-bounds}, $|H_k(t)^2 - 1| = (\epsilon / k)^{\Omega(C)} \le O(\delta^2)$ for $t \in I$. Thus the first term of $\eqref{eq:H_k_inner_decomposition}$ is at most $O(\delta) \| w \|_{[-1, 1]} \| z \|_{[-1, 1]}$. 
As the property~\ref{item:uniform_bound_on_interval} of Lemma~\ref{lemma:bounds_Fourier_sparse_signals} shows that $w(t) \le O(\ell) \| w \|_{[-1, 1]}$ for $t \in [-1, 1]$, we bound $\| (H_k^2 - 1) w \|_{[-1, 1] \setminus I}^2$ by $\frac{4 \epsilon}{C k^2} \cdot O(\ell^2) \| w \|_{[-1, 1]}^2 \le O(\delta^2) \| w \|_{[-1, 1]}^2$. So the later term of \eqref{eq:H_k_inner_decomposition} is at most $O(\delta) \| w \|_{[-1, 1]} \| z \|_{[-1, 1]}$. Thus,
\begin{equation}
    |\langle H_k w,H_k z\rangle_{[-1, 1]} - \langle w, z\rangle_{[-1, 1]}| \le O(\delta) \| w \|_{[-1, 1]} \| z \|_{[-1, 1]}. \label{eq:wz_inner_product_H_k_difference}
\end{equation}

Tail bounds in Lemma~\ref{lemma:bounds_Fourier_sparse_signals} and Claim~\ref{claim:H-bounds} also imply that $\|H_k w\|_{(-\infty,\infty) \setminus [-1, 1]}^2 \le \poly(\ell) (\epsilon / k)^{\Omega(C)} \| w \|_{[-1, 1]}^2$ and $\|H_k z\|_{(-\infty,\infty) \setminus [-1, 1]}^2 \le \poly(r) (\epsilon / k)^{\Omega(C)} \| z \|_{[-1, 1]}^2$. 
So by the Cauchy--Schwarz inequality,
\begin{equation}
    |\langle H_k w,H_k z\rangle_{(-\infty,\infty) \setminus [-1, 1]}| \le \poly(\ell) \poly(r) (\epsilon / k)^{\Omega(C)} \cdot \| w \|_{[-1, 1]} \| z \|_{[-1, 1]} \le O(\delta) \cdot \| w \|_{[-1, 1]} \| z \|_{[-1, 1]}. \label{eq:H_k_wz_inner_product_outside_bound}
\end{equation}

Therefore,
\begin{align*}
    & |\langle H_k w,H_k z\rangle_{(-\infty,\infty)}| \\
    \le & |\langle H_k w,H_k z\rangle_{[-1, 1]}| + |\langle H_k w,H_k z\rangle_{(-\infty,\infty) \setminus [-1, 1]}| \\
    \le & |\langle H_k w,H_k z\rangle_{[-1, 1]} - \langle w,z\rangle_{[-1, 1]}| + |\langle w,z\rangle_{[-1, 1]}| + |\langle H_k w,H_k z\rangle_{(-\infty,\infty) \setminus [-1, 1]}| \\
    \le & O(\delta) \| w \|_{[-1, 1]} \| z \|_{[-1, 1]} + |\langle w,z\rangle_{[-1, 1]}| \tag{\ref{eq:wz_inner_product_H_k_difference} and \ref{eq:H_k_wz_inner_product_outside_bound}} \\
    \le &  O(\delta) \| w \|_{[-1, 1]} \| z \|_{[-1, 1]}. \tag{Lemma~\ref{lem:orthogonality_by_localizer}} 
\end{align*}

Applying Lemma~\ref{lemma:accuracy-tunable-localization} to $w$ and $z$, we have
\begin{align*}
    \|w\|_{[-1,1]} \le O(1)\|H_k w\|_2,
    \qquad
    \|z\|_{[-1,1]} \le O(1)\|H_k z\|_2.
\end{align*}
That proves the lemma.
\end{proof}

\subsection{Proof of Claim~\ref{clm:almost_orthogonal_clusters_under_conjecture}} \label{sec:almost_orthogonal_clusters_under_conjecture}
Given the sparsity $1 \le \ell \le r$ and error $\delta < 1 / 2$, let $C=O(1)$, $\alpha_{M'} := 1 - \frac{\delta^2}{C \ell^2}$ and
\begin{align*}
    g_{M'}(t)
    := \sinc \left(\frac{C \ell^2}{\delta^2}t\right)^{C\log(r/\delta)}
    \prod_{i=0}^{\lceil\log r\rceil}
    \sinc \left( \frac{Ct}{\delta^2/\ell^2+4^i/r^2} \right)^{C 2^i}
\end{align*}
The square-root localizer is
\begin{equation}\label{eq:sqrt-localizer-def}
    M'_{\ell, r, \delta}(t)  := s_{M'} \cdot \bigl(g_{M'}*\rect_{2\alpha_{M'}}\bigr)(t),
\end{equation}
	where $s_{M'} > 0$ is chosen so that $M'_{\ell, r, \delta}(0) = 1$.

\begin{claim}
\label{claim:sqrt-localizer-bounds}
For $M' = M'_{\ell, r, \delta}$, the following
properties hold:
\begin{enumerate}
    \item $|1-M'(t)| \le (\delta / r)^{\Omega(C)}$ for $|t| \le 1 - \frac{2\delta^2}{C \ell^2}$.
    \item For $1 \le |t| \le 1 + 1 / C$, $M'(t) \le (\delta / r)^{\Omega(C)} \cdot \exp\left(-\Omega (C r \sqrt{|t| - 1})\right)$.
    \item For $ |t| \ge 1 + 1 / C$, $M'(t) \le (\delta / r)^{\Omega(C)} \cdot |\pi t / 2|^{-\Omega(C r)}$.
    \item $\supp(\widehat{M'}) \subseteq [-\Delta_{M'}, \Delta_{M'}]$ with $\Delta_{M'} := C^2 \left(\frac{4 \ell r}{\delta} + \frac{\ell^2 \log(r/\delta)}{\delta^2}\right) $. 
\end{enumerate}
\end{claim}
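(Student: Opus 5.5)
We will prove the four properties of Claim~\ref{claim:sqrt-localizer-bounds} in the same way as Claim~\ref{claim:H-bounds} and Claim~\ref{claim:L-bounds}. There are three tools: Corollary~\ref{cor:sinc_bounds_plus} for the normalization and the tails of $g_{M'}$, the identity $M'(t)=s_{M'}\int_{t-\alpha_{M'}}^{t+\alpha_{M'}}g_{M'}(v)\,\mathrm{d}v$, and the support of the Fourier transform of a product of $\sinc$ powers.

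\textbf{Support (property 4).} The Fourier transform of $\sinc(pt)^q$ is supported in $[-pq/2,pq/2]$. Convolving with $\rect_{2\alpha_{M'}}$ only multiplies $\wh{g_{M'}}$ by a $\sinc$, so the support does not grow. Hence we sum the widths over all factors. The first factor contributes $\frac{C\ell^2}{\delta^2}\cdot C\log(r/\delta)$. The $i$-th factor contributes $\frac{C\cdot C2^i}{\delta^2/\ell^2+4^i/r^2}\le \min\{\frac{C^2 2^i\ell^2}{\delta^2},\frac{C^2 r^2}{2^i}\}$. Summing a geometric series on each side of the crossover $2^i\approx r\delta/\ell$ gives $O(C^2\ell r/\delta)$. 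Together these give $\Delta_{M'}$.

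\textbf{Normalization and flatness (properties 1).} All factors are at most $1$. The first factor alone has a core of width $\delta^2/(C\ell^2)$, and the others equal $1-o(1)$ on that core, since there $t\cdot C/(\delta^2/\ell^2)\le 1$ up to constants. So $\int g_{M'}=\Theta(\frac{\delta^2}{\ell^2\sqrt{\log(r/\delta)}})$. By Corollary~\ref{cor:sinc_bounds_plus}, the first factor gives a tail mass beyond $\delta^2/(C\ell^2)$ of relative size $(\delta/r)^{\Omega(C)}$. Therefore $|1-M'(t)|\le(\delta/r)^{\Omega(C)}$ whenever $[t-\alpha_{M'},t+\alpha_{M'}]$ covers the core, i.e.\ for $|t|\le 1-2\delta^2/(C\ell^2)$. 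The same calculation shows $M'=O(1)$ everywhere.

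\textbf{Tails (properties 2 and 3), the main obstacle.} For $t\ge1$ we have $M'(t)\le 2s_{M'}\int_{t-1}^\infty g_{M'}(v)\,\mathrm{d}v$. We need $g_{M'}(v)\lesssim (\delta/r)^{\Omega(C)}e^{-\Omega(Cr\sqrt v)}$ for $v\le 1/C$. For $v\ge\delta^2/(C\ell^2)$, the first factor supplies the $(\delta/r)^{\Omega(C)}$. Now take $v\in[4^i/r^2,4^{i+1}/r^2]$ with $4^i/r^2\gtrsim\delta^2/\ell^2$. The argument of the $i$-th factor is $\frac{Cv}{\delta^2/\ell^2+4^i/r^2}\gtrsim C\ge a$. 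Fact~\ref{fact:sinc_bounds} then gives the $i$-th factor a bound $(\pi\cdot\Omega(C))^{-C2^i}=e^{-\Omega(C2^i)}=e^{-\Omega(Cr\sqrt v)}$, using $2^i\asymp r\sqrt v$. For $v$ below the crossover scale $\delta^2/\ell^2$, we have $r\sqrt v\lesssim r\delta/\ell$. This exponent is absorbed as in the proof of Claim~\ref{claim:H-bounds}. Concretely, the first factor's polynomial decay $(\ell^2v/\delta^2)^{-C\log(r/\delta)}$ together with $i=0$ covers it, and if needed we use the slack that $\exp(-\Omega(Cr\sqrt{v}))$ is only demanded up to the constant in $\Omega$. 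This case analysis near the crossover, and the need to verify that $\sinc$ arguments stay $\ge a$ for each active factor, is where I expect the care to be needed. Integrating $e^{-\Omega(Cr\sqrt v)}$ from $t-1$ and multiplying by $s_{M'}=\mathrm{poly}(\ell,r,1/\delta)$ costs only a factor absorbed into $(\delta/r)^{\Omega(C)}$, which yields property 2. For $|t|\ge1+1/C$, all $\lceil\log r\rceil$ factors are active with argument $\ge C v\, r^2/4^{i+1}\cdot\Omega(1)$. Their product has total exponent $\sum_i C2^i=\Theta(Cr)$ and per-factor decay at least $(\pi v/2)^{-1}$. Combined with Corollary~\ref{cor:sinc_bounds_plus} item 2 for the integral, this gives $|\pi t/2|^{-\Omega(Cr)}$ and property 3.
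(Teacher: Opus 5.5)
\textbf{Verdict.} Your support computation (property 4) matches the paper's, and property 3 is fine. Property 2, however, has a genuine gap, in exactly the region you flagged, $1\le t\le 1+O(\delta^2/\ell^2)$.

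\textbf{The gap in property 2.} There are two problems.

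First, the integral must start at $t-\alpha_{M'}=t-1+\frac{\delta^2}{C\ell^2}$, not at $t-1$. With your lower limit the range contains the core of $g_{M'}$, and at $t=1$ your bound $2s_{M'}\int_0^\infty g_{M'}$ is only $O(1)$.

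Second, and more seriously, the proposed absorption below the crossover fails. For $\frac{\delta^2}{C\ell^2}\le v\le\frac{\delta^2}{\ell^2}$, your target $e^{-\Omega(Cr\sqrt v)}$ can be as small as $e^{-\Omega(Cr\delta/\ell)}$. The first factor and the $i=0$ factor together are at best polynomially small in $r/\delta$, of order $(\delta/r)^{O(C\log C)}$. When $r\delta/\ell\gg\log(r/\delta)$ (e.g.\ $\ell=1$ and $r\delta\gg\log(r/\delta)$), the ratio of exponents is unbounded, so no choice of constant in the $\Omega$ closes it. The analogy with Claim~\ref{claim:H-bounds} breaks down because there the deficit being absorbed is only $O(\log\ell)$ in the exponent.

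\textbf{The missing idea.} The offset $\delta^2/\ell^2$ in the dyadic denominators is matched to $1-\alpha_{M'}=\delta^2/(C\ell^2)$. For every $v\ge t-\alpha_{M'}$ you have $Cv-\delta^2/\ell^2\ge C(t-1)$. Hence for every $i$ with $4^i/r^2\le C(t-1)$, the $i$-th factor has argument $\frac{Cv}{\delta^2/\ell^2+4^i/r^2}\ge 1$ and is at most $\pi^{-C2^i}$.

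These active factors include the ones whose denominators are dominated by $\delta^2/\ell^2$, which your case analysis never uses. Already for $v\ge\delta^2/(C\ell^2)$, the factor with $2^i\approx r\delta/\ell$ has argument at least $1/2$ and by itself gives $e^{-\Omega(Cr\delta/\ell)}$.

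\textbf{How the fix completes property 2.} Take the largest active $i$. Then:
\begin{itemize}
\item the dyadic factors contribute $e^{-\Omega(Cr\sqrt{C(t-1)})}$ uniformly on $[t-\alpha_{M'},\infty)$;
\item the first factor, whose argument there is at least $1$, contributes mass $O\bigl(\frac{\delta^2}{C\ell^2}\pi^{-C\log(r/\delta)+1}\bigr)$ by Corollary~\ref{cor:sinc_bounds_plus};
\item the normalization $s_{M'}=\mathrm{poly}(r/\delta)$ is absorbed into $(\delta/r)^{\Omega(C)}$;
\item when $C(t-1)<1/r^2$, the target is $e^{-O(\sqrt C)}$ and the first factor alone suffices.
\end{itemize}
This is essentially the paper's argument.

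\textbf{A related slip in property 1.} Your normalization $\int g_{M'}=\Theta\bigl(\frac{\delta^2}{\ell^2\sqrt{\log(r/\delta)}}\bigr)$ is false when $r\delta/\ell\gg\log(r/\delta)$. The same sub-crossover factors carry total exponent $\Theta(Cr\delta/\ell)$ at scale $\ell^2/\delta^2$, so they are not $1-o(1)$ on the first factor's core. An argument $\lesssim 1$ is not enough once it is raised to the power $C2^i$.

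This slip is harmless, because any polynomial lower bound on $\int g_{M'}$ suffices. The paper gets one from $g_{M'}(t)\ge e^{-O(\Delta_{M'}^2t^2)}$ for $|t|\lesssim 1/\Delta_{M'}$, which gives $s_{M'}=O(\Delta_{M'})$.
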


\begin{proof}
For the Fourier support, $\widehat{M'}$ is supported in an interval whose radius is at most
\begin{align*}
    C^2 \left( \frac{\ell^2}{\delta^2} \cdot \log\frac{r}{\delta} + \sum_{i=0}^{\lceil\log r\rceil} \frac{2^i}{\delta^2/\ell^2+4^i/r^2} \right).
\end{align*}
For the latter summation, by $\frac{1}{a + b} \le \min\{\frac{1}{a}, \frac{1}{b}\}$ for $a, b > 0$,
\begin{align*}
    \sum_{i=0}^{\lceil\log r\rceil} \frac{2^i}{\delta^2/\ell^2+4^i/r^2} \le \sum_{i=0}^{\lceil\log r\rceil} \min \left\{ \frac{2^i \ell^2}{\delta^2}, \frac{r^2}{2^i} \right\}. 
\end{align*}
Let $j$ be the largest integer such that $\frac{2^j \ell^2}{\delta^2} \le \frac{r^2}{2^j}$. If $j < 0$, then $\frac{\ell}{\delta} \ge r$ and
\begin{align*}
    \sum_{i=0}^{\lceil\log r\rceil} \min \left\{ \frac{2^i \ell^2}{\delta^2}, \frac{r^2}{2^i} \right\} = \sum_{i=0}^{\lceil\log r\rceil} \frac{r^2}{2^i} \le 2 r^2 \le \frac{2 \ell r}{\delta}.
\end{align*}
Otherwise, since $2^j \le \frac{\delta r}{\ell} \le 2^{j+1}$,
\begin{align*}
    \sum_{i=0}^{\lceil\log r\rceil} \min \left\{ \frac{2^i \ell^2}{\delta^2}, \frac{r^2}{2^i} \right\} = \sum_{i=0}^{j} \frac{2^i \ell^2}{\delta^2} + \sum_{i= j+1}^{\lceil\log r\rceil} \frac{r^2}{2^i} \le \frac{2^{j+1} \ell^2}{\delta^2} + \frac{r^2}{2^j} \le \frac{4 \ell r}{\delta}.
\end{align*}
Combining the two cases proves the stated support bound.


By Corollary~\ref{cor:sinc_bounds_plus}, $\int_{|v|\ge \delta^2/(2\ell^2)} g_{M'}(v)dv \le \left(\delta / r\right)^{\Omega(C)}$. Moreover, Fact~\ref{fact:sinc_bounds} implies that 
\begin{align*}
    g_{M'}(t) \ge \exp \left[ - O(1) \cdot C^4 \left(  \frac{\ell^2\log(r/\delta)}{\delta^2} +\sum_{i=0}^{\lceil\log r\rceil} \frac{2^i}{\delta^2/\ell^2+4^i/r^2} \right)^2 t^2 \right] \ge \exp \Big[ - O(\Delta_{M'}^2 t^2) \Big],
\end{align*}
for $|t| = O(1/\Delta_{M'})$.
Hence, $\int_{\R} g_{M'}(v)dv \ge \int_{-1/\Delta_{M'}}^{1/\Delta_{M'}} g_{M'}(v)dv = \Omega(1 / \Delta_{M'})$ and $s_{M'} = O(\Delta_{M'}) = C^2 \cdot (r/\delta)^{O(1)}$.

If $|t| \le 1 - 2\delta^2 / C\ell^2$, then $[t-\alpha_{M'},t+\alpha_{M'}]$ contains $[-\delta^2/C\ell^2,\delta^2/C \ell^2]$, and hence $|1-M'(t)| \le (\delta / r)^{\Omega(C)}$.
The normalization guarantees $|M'(t)|\le s_{M'}\int_\R g_{M'}(v)dv=O(1)$ for all $t$.

Next we check the outside tail. 
By symmetry take $t \ge 1$.  
From \eqref{eq:sqrt-localizer-def},
\begin{align*}
    M'(t) \le s_{M'}\int_{t-\alpha_{M'}}^{\infty} g_{M'}(v) dv.
\end{align*}
For the $i$ satisfying $\delta^2/\ell^2 + 4^i/r^2 \le C v \le \delta^2/\ell^2 + 4^{i + 1}/r^2$, the $i$-th dyadic factor is active and contributes $\exp(-\Omega(C 2^i)) = \exp(-\Omega(C r \sqrt{C v - \delta^2 / \ell^2}))$. 
Hence, for $1 \le t \le 1 + 1/C$,
\begin{align*}
    M'(t) \le (\delta / r)^{\Omega(C)} \exp(-\Omega(C r \sqrt{C (t - \alpha_{M'}) - \delta^2 / \ell^2})) \le (\delta / r)^{\Omega(C)} \exp(-\Omega(C r \sqrt{t - 1})).
\end{align*}
For $|t| \ge 1 + 1/C$, all dyadic factors are active and a similar calculation gives
\begin{align*}
    M'(t) \le (\delta / r)^{\Omega(C)} \cdot |\pi t / 2|^{-\Omega(C r)}.
\end{align*}
\end{proof}







\begin{lemma} \label{lem:orthogonality_by_localizer_under_conjecture}
Assume that Conjecture~\ref{conj:growth_rate} holds.
For two signals of Fourier sparsity $\ell$ and $r$ separately with $\ell \le r$,
\begin{align*}
    w(t):=\sum_{j=1}^{\ell}\alpha_j e^{2\pi\bi f'_jt} \qquad \text{ and }
    \qquad
    z(t):=\sum_{j=1}^{r}\beta_j e^{2\pi\bi f_jt},
\end{align*}
if the distance between their frequencies $\min_{j,j'} |f_j-f'_{j'}| \ge C_H \left(\frac{4 \ell r}{\delta} + \frac{\ell^2 \log(r/\delta)}{\delta^2}\right)$ for some constant $C_H$, then
\begin{align*}
    |\langle w, z\rangle_{[-1, 1]}| \le \delta \cdot \|w\|_{[-1,1]} \cdot \|z\|_{[-1,1]}..
\end{align*}
\end{lemma}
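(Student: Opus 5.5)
The plan is to copy the proof of Lemma~\ref{lem:orthogonality_by_localizer}, with the square-root localizer $M:=M'_{\ell,r,\delta}$ from \eqref{eq:sqrt-localizer-def} in place of $M_{\ell,r,\delta}$, and with Conjecture~\ref{conj:growth_rate} in place of the polynomial growth bound. We write
\[
\langle w,z\rangle_{[-1,1]}=\langle Mw,z\rangle_{[-1,1]}+\langle (1-M)w,z\rangle_{[-1,1]}
\]
and show that each term is at most $\frac{\delta}{2}\|w\|_{[-1,1]}\|z\|_{[-1,1]}$.

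\textbf{Second term (flatness error).} This is verbatim from the earlier proof. We use Cauchy--Schwarz and split at $I=[-1+2\delta^2/(C\ell^2),\,1-2\delta^2/(C\ell^2)]$.
\begin{itemize}
\item On $I$, Claim~\ref{claim:sqrt-localizer-bounds} gives $|1-M|\le(\delta/r)^{\Omega(C)}$.
\item On the boundary layer, which has length $O(\delta^2/(C\ell^2))$, we use $|M|=O(1)$ and Property~\ref{item:uniform_bound_on_interval} of Lemma~\ref{lemma:bounds_Fourier_sparse_signals}, i.e.\ $|w|^2\le O(\ell^2)\|w\|_{[-1,1]}^2$.
\end{itemize}
Together these give $\int_{-1}^1|1-M|^2|w|^2\le O(\delta^2/C)\|w\|^2_{[-1,1]}$, which is at most $(\delta/2)^2\|w\|^2_{[-1,1]}$ once $C$ is large.

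\textbf{First term, Fourier part.} We write $\langle Mw,z\rangle_{[-1,1]}=\langle \wh{Mw},\wh z\rangle-\langle Mw,z\rangle_{\R\setminus[-1,1]}$.
\begin{itemize}
\item $\wh{Mw}$ is supported within distance $\Delta_{M'}=C^2(\frac{4\ell r}{\delta}+\frac{\ell^2\log(r/\delta)}{\delta^2})$ of the $f'_j$.
\item $\wh z$ is a sum of deltas at the $f_j$.
\item Taking $C_H\ge C^2$ (the hypothesis already has the factor 4) makes the Parseval term vanish.
\end{itemize}

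\textbf{First term, tail part (main obstacle).} Here we need to beat the growth of $w\bar z$ outside $[-1,1]$.
\begin{itemize}
\item We use $\sup_{[-1,1]}|x|\le O(k)\|x\|_{[-1,1]}$ (Property~\ref{item:uniform_bound_on_interval}).
\item For $1\le|t|\le 1+1/C$, Conjecture~\ref{conj:growth_rate} applied to $w$ and $z$ gives $|w(t)z(t)|\le \poly(r)\,e^{O(r)\sqrt{|t|-1}}\|w\|_{[-1,1]}\|z\|_{[-1,1]}$, using $\ell\le r$.
\item On the same range, Claim~\ref{claim:sqrt-localizer-bounds}(2) gives $|M(t)|\le(\delta/r)^{\Omega(C)}e^{-\Omega(Cr\sqrt{|t|-1})}$.
\end{itemize}
For $C$ large, the exponent $-\Omega(Cr)+O(r)$ is negative, so the integrand is at most $(\delta/r)^{\Omega(C)}\poly(r)$ over an interval of length $O(1)$.

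For $|t|\ge 1+1/C$, we combine Property~\ref{item:polynomial_bound_outside_interval} of Lemma~\ref{lemma:bounds_Fourier_sparse_signals}, which gives $|w\bar z|\le(e(|t|+1))^{2r}\sup|w|\sup|z|$, with Claim~\ref{claim:sqrt-localizer-bounds}(3), which gives $|M|\le(\delta/r)^{\Omega(C)}|\pi t/2|^{-\Omega(Cr)}$. The resulting integrand is integrable and bounded by $(\delta/r)^{\Omega(C)}\poly(r)$. Note that $(e(|t|+1))^{2r}$ is only $2^{O(r)}$-comparable to $|\pi t/2|^{2r}$ at moderate $t$, but the factor $\pi^{-\Omega(Cr)}$ absorbs the constant base.

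\textbf{Matching the two factors.} The point to check carefully is that the constant hidden in Conjecture~\ref{conj:growth_rate}'s $O(k)$ exponent is dominated by the $\Omega(Cr)$ decay of $M'$. This works because the conjecture's constant is absolute and $C$ is ours to choose. The localizer's $\sqrt{|t|-1}$ profile exactly matches the conjecture's growth rate, and that match is what allows the improved separation.

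Summing the two ranges gives $|\langle Mw,z\rangle_{\R\setminus[-1,1]}|\le\frac{\delta}{2}\|w\|_{[-1,1]}\|z\|_{[-1,1]}$. Combining this with the bound on the second term finishes the lemma.
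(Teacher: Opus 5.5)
Your proposal is correct and follows the paper's proof essentially step for step. You use the same decomposition with the square-root localizer $M'_{\ell,r,\delta}$, and the full-line integral $\int_{\mathbb{R}} M' w\bar z$ vanishes by the support bound $\Delta_{M'}$. The tail matches the conjectured $e^{O(r)\sqrt{|t|-1}}$ growth against the $e^{-\Omega(Cr\sqrt{|t|-1})}$ decay near the boundary, and the polynomial growth bound against $|\pi t/2|^{-\Omega(Cr)}$ farther out; the flatness-error term is the identical Cauchy--Schwarz estimate. (One cosmetic slip: on $|t|\ge 1+1/C$ the decay factor is $(\pi/2)^{-\Omega(Cr)}$ rather than $\pi^{-\Omega(Cr)}$, which is still exponentially small in $Cr$ and changes nothing.)
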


\begin{proof}
Let $M':=M'_{\ell,r,\delta}$.  We use the same decomposition as in
\eqref{eq:filter_decomposition}:
\begin{align*}
    \langle w,z\rangle_{[-1,1]} =\langle M'w,z\rangle_{[-1,1]} +\langle(1-M')w,z\rangle_{[-1,1]}.
\end{align*}
By Property~4 of Claim~\ref{claim:sqrt-localizer-bounds} and the separation hypothesis, $\int_{\R}M'(t)w(t)\overline{z(t)}\,dt=0$.
Thus the absolute value of the filtered term in $[-1,1]$ equals that of its tail in $\R\setminus[-1,1]$.

Under Conjecture~\ref{conj:growth_rate}, Property~\ref{item:uniform_bound_on_interval} and \ref{item:polynomial_bound_outside_interval} of Lemma~\ref{lemma:bounds_Fourier_sparse_signals} implies, for $|t| \ge 1$,
\begin{align*}
    |w(t)z(t)|
    \le \poly(r)\|w\|_{[-1,1]}\|z\|_{[-1,1]}
    \begin{cases}
       \exp(O(r\sqrt{|t|-1})),&1\le |t|\le1+1/C,\\
       |e(t+1)|^{O(r)},&|t|\ge1+1/C.
    \end{cases}
\end{align*}

By Properties~2 and~3 of Claim~\ref{claim:sqrt-localizer-bounds}, we have
\begin{align*}
    |\langle M'w,z\rangle_{[-1,1]}|
    &\le \poly(r)\left(\frac\delta r\right)^{\Omega(C)} \|w\|_{[-1,1]}\|z\|_{[-1,1]}\\
    & \quad\cdot\left( \int_1^{1+1/C} e^{-\Omega(Cr\sqrt{t-1})+O(r\sqrt{t-1})}\,dt +\int_{1+1/C}^{\infty}t^{-\Omega(Cr)+O(r)}\,dt \right)\\
    & \le \frac{\delta}{2}\|w\|_{[-1,1]}\|z\|_{[-1,1]}.
\end{align*}

For the flatness-error term, let $I=\left[-1+\frac{2\delta^2}{C\ell^2}, 1-\frac{2\delta^2}{C\ell^2}\right]$.
Property~\ref{item:uniform_bound_on_interval}
of Lemma~\ref{lemma:bounds_Fourier_sparse_signals} then yields
\begin{align*}
    \int_{-1}^{1}|1-M'(t)|^2|w(t)|^2\,dt
    \le \left(\frac\delta r\right)^{\Omega(C)} \|w\|_{[-1,1]}^2 + O\left(\frac{\delta^2}{C\ell^2}\right) O(\ell^2)\|w\|_{[-1,1]}^2
    \le \frac{\delta^2}{4}\|w\|_{[-1,1]}^2.
\end{align*}
Cauchy--Schwarz bounds the flatness-error term by $(\delta/2)\|w\|_{[-1,1]}\|z\|_{[-1,1]}$.  Combining the two terms proves the lemma.
\end{proof}

\begin{lemma}
\label{lem:filtered_orthogonality_by_localizer_under_conjecture}
Assume that Conjecture~\ref{conj:growth_rate} holds.
For $\ell \le r\le k$ and two signals of Fourier sparsity $\ell$ and $r$ separately,
\begin{align*}
    w(t):=\sum_{j=1}^{\ell}\alpha_j e^{2\pi\bi f'_jt} \qquad \text{ and }
    \qquad
    z(t):=\sum_{j=1}^{r}\beta_j e^{2\pi\bi f_jt},
\end{align*}
if the distance between their frequencies $\min_{j,j'} |f_j-f'_{j'}| \ge \min\left\{ C_H \left(\frac{4 \ell r}{\delta} + \frac{\ell^2 \log (r/\delta)}{\delta^2}\right) \log^2 k, 2\Delta_k \right\}$ for some constant $C_H$, then
\begin{align*}
    |\langle H_k w,H_k z\rangle|
    \le \delta \cdot \|H_k w\|_2 \cdot \|H_k z\|_2.
\end{align*}
\end{lemma}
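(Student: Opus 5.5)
The proof will follow the template of Lemma~\ref{lem:filtered_orthogonality_by_localizer}, with Lemma~\ref{lem:orthogonality_by_localizer_under_conjecture} taking the place of Lemma~\ref{lem:orthogonality_by_localizer}. Let $d:=\min_{j,j'}|f_j-f'_{j'}|$. If $d\ge 2\Delta_k$, then $\wh{H_k w}$ and $\wh{H_k z}$ are supported in neighborhoods of radius $\Delta_k$ around the two frequency sets. These supports are disjoint, so Parseval gives $\langle H_k w,H_k z\rangle=0$.

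It remains to treat $d<2\Delta_k$. In this case $d\ge C_H\big(\frac{4\ell r}{\delta}+\frac{\ell^2\log(r/\delta)}{\delta^2}\big)\log^2 k$. I will first extract a lower bound on $\delta$ relative to $\epsilon/k^2$. Using only the term $\frac{\ell^2\log(r/\delta)}{\delta^2}$ and $\Delta_k=\tilde O(k^2/\epsilon)$, we get $\delta^2\ge \Omega\big(\frac{C_H\epsilon\ell^2}{C^2k^2}\big)$. This is exactly what is needed to absorb the boundary layer of $H_k$, which has length $O(\epsilon/(Ck^2))$.

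Next I split $\langle H_k w,H_k z\rangle$ into three pieces: (i) $\langle w,z\rangle_{[-1,1]}$; (ii) the difference $\langle (H_k^2-1)w,z\rangle_{[-1,1]}$; and (iii) the tail $\langle H_kw,H_kz\rangle_{\R\setminus[-1,1]}$.

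\begin{itemize}
\item[(i)] Since the separation hypothesis, even without the $\log^2k$ factor, meets the requirement of Lemma~\ref{lem:orthogonality_by_localizer_under_conjecture} with $C_H$ large enough, that lemma gives $|\langle w,z\rangle_{[-1,1]}|\le \delta\|w\|_{[-1,1]}\|z\|_{[-1,1]}$.
\item[(ii)] I reuse \eqref{eq:H_k_inner_decomposition} verbatim. On $I=[-1+\frac{2\epsilon}{Ck^2},1-\frac{2\epsilon}{Ck^2}]$, Claim~\ref{claim:H-bounds} gives $|H_k^2-1|\le(\epsilon/k)^{\Omega(C)}\le O(\delta)$. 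On the boundary layer, Property~\ref{item:uniform_bound_on_interval} of Lemma~\ref{lemma:bounds_Fourier_sparse_signals} together with the lower bound on $\delta^2$ gives $\frac{4\epsilon}{Ck^2}O(\ell^2)\le O(\delta^2)$. So (ii) is $O(\delta)\|w\|_{[-1,1]}\|z\|_{[-1,1]}$.
\item[(iii)] I combine Claim~\ref{claim:H-bounds} with Properties~\ref{item:polynomial_bound_outside_interval} and \ref{item:exponential_bound_outside_interval} of Lemma~\ref{lemma:bounds_Fourier_sparse_signals}, as in \eqref{eq:H_k_wz_inner_product_outside_bound}. This gives $\poly(r)(\epsilon/k)^{\Omega(C)}\|w\|_{[-1,1]}\|z\|_{[-1,1]}$. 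The conjecture is not even needed here, because $H_k$ is built for sparsity $k\ge r$.
\end{itemize}

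Summing the three pieces gives $|\langle H_kw,H_kz\rangle|\le O(\delta)\|w\|_{[-1,1]}\|z\|_{[-1,1]}$. I then convert to the filtered norms using Lemma~\ref{lemma:accuracy-tunable-localization} (with $\ell,r\le k$), which gives $\|w\|_{[-1,1]}\le O(1)\|H_kw\|_2$ and $\|z\|_{[-1,1]}\le O(1)\|H_kz\|_2$. The $O(1)$ constant is absorbed by running the argument with $\delta/C'$ in place of $\delta$; this costs only constants, which the $C_H$ and $\log^2k$ slack covers.

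The main obstacle is that (iii) must be $O(\delta)$. That would fail if $\delta$ were allowed to be smaller than $(\epsilon/k)^{\Omega(C)}$. The lower bound on $\delta^2$ from the case $d<2\Delta_k$ rules this out, since it shows $\delta\ge \epsilon^{O(1)}/k$.
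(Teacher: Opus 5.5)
Your proposal is correct and follows the paper's route. You dispose of $d\ge 2\Delta_k$ by disjoint Fourier supports. For $d<2\Delta_k$ you extract $\delta^2=\Omega(\epsilon\ell^2/k^2)$, invoke Lemma~\ref{lem:orthogonality_by_localizer_under_conjecture} for $\langle w,z\rangle_{[-1,1]}$, control $H_k^2-1$ on $I$, on the boundary layer, and outside $[-1,1]$ as in Lemma~\ref{lem:filtered_orthogonality_by_localizer}, and convert norms via Lemma~\ref{lemma:accuracy-tunable-localization}. The paper does exactly this; your remark that the $H_k$ tail needs no conjecture and your rescaling $\delta\mapsto\delta/C'$ to turn $O(\delta)$ into $\delta$ are small clarifications it leaves implicit.
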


\begin{proof}
Let $d:=\min_{j,j'}|f_j-f'_{j'}|$.  If $d\ge2\Delta_k$, then $\widehat{H_kw}$ and $\widehat{H_kz}$ have disjoint supports, and the claim follows from Parseval's identity.

Suppose that $d < 2\Delta_k$.  
Since $\Delta_k = O(k^2\log^2 k)$, the separation hypothesis implies
$\delta^2=\Omega\left(\frac{\epsilon \ell^2}{k^2}\right)$.
And Lemma~\ref{lem:orthogonality_by_localizer_under_conjecture} shows $|\langle w,z\rangle_{[-1,1]}| \le O(\delta)\|w\|_{[-1,1]}\|z\|_{[-1,1]}$.

Similarly to the proof of Lemma~\ref{lem:filtered_orthogonality_by_localizer},
we consider the same interval $I=[-1+2\epsilon/(Ck^2),1-2\epsilon/(Ck^2)]$.
Claim~\ref{claim:H-bounds} and Lemma~\ref{lemma:bounds_Fourier_sparse_signals} implies that
\begin{align*}
    |\langle H_kw,H_kz\rangle_{\R} -\langle w,z\rangle_{[-1,1]}|
    &\le O(\delta)\|w\|_{[-1,1]}\|z\|_{[-1,1]}.
\end{align*}
This estimate follows by splitting at $I$: on $I$,$|H_k^2-1|\le(\epsilon/k)^{\Omega(C)}\le O(\delta^2)$; on $[-1,1]\setminus I$, its $O(\epsilon/k^2)$ width and the uniform bound on $w$ give an $O(\delta^2)\|w\|_{[-1,1]}^2$ contribution; outside $[-1,1]$, Claim~\ref{claim:H-bounds} and Lemma~\ref{lemma:bounds_Fourier_sparse_signals} give an $O(\delta)\|w\|_{[-1,1]}\|z\|_{[-1,1]}$  contribution for sufficiently large $C$.

Finally, by Lemma~\ref{lemma:accuracy-tunable-localization}, we have
$\|w\|_{[-1,1]}\le O(1)\|H_kw\|_2$, and $\|z\|_{[-1,1]}\le O(1)\|H_kz\|_2$.

Combining the above bounds proves the lemma.
\end{proof}

\end{document}